\newtheorem{thm}{Theorem}
{\bfseries}{\itshape}
\newtheorem{fact}{Fact}{\bfseries}{\itshape}
\newtheorem{conjec}{Conjecture}
\newtheorem{modeling}{Modeling}
 \renewcommand{\leq}{\leqslant} 
\renewcommand{\geq}{\geqslant} 
\newcommand{\N}{\ensuremath{\mathbb{N}}}
\newcommand{\K}{\ensuremath{\mathbb{K}}}
\newcommand{\F}{\ensuremath{\mathbb{F}}}
\newcommand{\Fq}{\ensuremath{\mathbb{F}_q}}
\newcommand{\Fqm}{\ensuremath{\mathbb{F}_{q^m}}}
\newcommand{\mat}[1]{\ensuremath{\boldsymbol{#1}}}
\newcommand{\code}[1]{\ensuremath{\mathscr{#1}}}
\newcommand{\AC}{\code{A}}
\newcommand{\CC}{\code{C}}
\newcommand{\DC}{\code{D}}
\newcommand{\GC}{\code{G}}
\newcommand{\RC}{\code{R}}
\newcommand{\SC}{\code{S}}
\newcommand{\VC}{\code{V}}
\newcommand{\zerom}{\mat{0}}
\newcommand{\Am}{\mat{A}}
\newcommand{\Bm}{\mat{B}}
\newcommand{\Cm}{\mat{C}}
\newcommand{\Dm}{\mat{D}}
\newcommand{\Gm}{\mat{G}}
\newcommand{\Hm}{\mat{H}}
\renewcommand{\Im}{\mat{I}}
\newcommand{\Mm}{\mat{M}} 
\newcommand{\Nm}{\mat{N}}        
\newcommand{\Pm}{\mat{P}}
\newcommand{\Sm}{\mat{S}}
\newcommand{\Vm}{\mat{V}}
\newcommand{\Xm}{\mat{X}}
\newcommand{\Ym}{\mat{Y}}
\newcommand{\zerov}{\mat{0}}
\newcommand{\av}{\mat{a}}
\newcommand{\bv}{\mat{b}}
\newcommand{\cv}{\mat{c}}
\newcommand{\dv}{\mat{d}}
\newcommand{\mv}{\mat{m}}
\newcommand{\uv}{\mat{u}}
\newcommand{\vv}{\mat{v}}
\newcommand{\xv}{{\mat{x}}}
\newcommand{\yv}{{\mat{y}}}
\newcommand{\zv}{{\mat{z}}}
\newcommand{\cA}{\mathcal{A}}
\newcommand{\cB}{\mathcal{B}}
\newcommand{\cD}{\mathcal{D}}
\newcommand{\cI}{\mathcal{I}}
\newcommand{\cP}{\mathcal{P}}
\newcommand{\cR}{\mathcal{R}}
\newcommand{\cV}{\mathcal{V}}
\newcommand{\GRS}[3]{\text{\bf GRS}_{#1}(#2,#3)}
\newcommand{\Alt}[3]{\code{A}_{#1}(#2, #3)}
\newcommand{\Goppa}[2]{\code{G}(#1, #2)}
\newcommand{\trsp}[1]{{{#1}^{\intercal}}}
\newcommand{\Crel}{\code{C}_\text{rel}}
\newcommand{\Cmat}{\code{C}_\text{mat}}
\DeclareMathOperator{\im}{Im}
\newcommand{\starp}[2]{{#1} \star {#2}}
\newcommand{\sq}[1]{#1^{\star 2}}
\newcommand{\sqb}[1]{\left(#1\right)^{\star 2}}
\newcommand{\map}[4]{\left\{
\begin{array}{ccc}
#1 & \longrightarrow & #2 \\
#3 & \longmapsto     & #4
        \end{array}
      \right.}
\newcommand{\Mac}{\text{Mac}}
\newcommand{\scp}[2]{#1 \cdot #2}
\newcommand{\GL}{\mathbf{GL}}
\newcommand{\rank}{\mathbf{Rank}}
\newcommand{\eqdef}{\stackrel{\text{def}}{=}}
\newcommand{\ie}{\textit{i.e.}\,}
\newcommand{\Span}[2]{\left\langle \, #1 \, \right\rangle_{#2}}
\newcommand{\Fspan}[1]{\left\langle \, #1 \, \right\rangle_{\F}}
\newcommand{\Fqmspan}[1]{\left\langle \, #1 \, \right\rangle_{\Fqm}}
\newcommand{\floor}[1]{\left\lfloor #1 \right\rfloor}
\newcommand{\ceil}[1]{\left\lceil #1 \right\rceil}
\newcommand{\Iintv}[2]{\llbracket #1 , #2 \rrbracket}
\newcommand{\card}[1]{\lvert #1 \rvert}
\newcommand{\HS}{HS}
\newcommand{\HF}{HF}
\newcommand{\Sym}{\mathbf{Sym}}
\newcommand{\Skew}{\mathbf{Skew}}
\newcommand{\Minors}{\mathbf{Minors}}
\newcommand{\ext}[2]{#1_{#2}}
\newcommand{\dgv}{d_{\text{GV}}}
\newcommand{\ideal}[1]{\cI\left(#1\right)}
\newcommand{\dreg}{d_{\textrm{reg}}}
\newcommand{\Cmess}{\textrm{C}_{\textrm{mess}}}
\newcommand{\Ckey}{\textrm{C}_{\textrm{key}}}
\newcommand{\Cdist}{\textrm{C}_{\textrm{dist}}}
\newcommand{\bigO}[1]{\mathcal{O}\left(#1\right)}
\newcommand{\Th}[1]{\Theta\left(#1\right)}
\newcommand{\OO}[1]{\bigO{#1}}
\begin{document}
\title{A new approach based on quadratic forms to attack the McEliece cryptosystem}
\titlerunning{A new approach based on quadratic forms to attack McEliece}
\author{Alain Couvreur\inst{1} \and
Rocco Mora\inst{2} \and
Jean-Pierre Tillich\inst{2}}
\authorrunning{A. Couvreur, R. Mora, J.-P. Tillich}
\institute{Inria Saclay, LIX, CNRS UMR 7161, École Polytechnique, 1 rue Honoré d’Estienne d’Orves, 91120 Palaiseau Cedex \and Inria Paris, 2 rue Simone Iff, 75012 Paris, France \\
\email{\{alain.couvreur,rocco.mora,jean-pierre.tillich\}@inria.fr}}
\maketitle              \begin{abstract}
  We introduce a novel algebraic approach for attacking the
  McEliece cryptosystem which is currently at the $4$-th round of the
  NIST competition.  The contributions of the article are twofold.
  (1) We present a new distinguisher on alternant and Goppa codes
  working in a much broader range of parameters than \cite{FGOPT11}.  (2) With this approach we also
  provide a polynomial--time key recovery attack on
  alternant codes which are distinguishable with the distinguisher
  \cite{FGOPT11}.  
  
  These results are obtained by introducing a subspace of matrices
  representing quadratic forms. Those are associated with quadratic
  relations for the component-wise product in the dual of the Goppa
  (or alternant) code of the cryptosystem. It turns out that this subspace 
of matrices contains matrices of unusually small rank in the case of alternant or 
Goppa codes ($2$ or $3$ depending on the field characteristic)
revealing the secret polynomial structure
  of the code.
 MinRank solvers can then be used to recover the
  secret key of the scheme. We devise a dedicated algebraic modeling in
  characteristic $2$ where the Gröbner basis techniques to solve it can be analyzed.
This computation behaves differently
  when applied to the matrix space associated with a
  random code rather than with a Goppa or an alternant code. This
  gives a distinguisher of the latter code families, which contrarily 
to the one proposed in \cite{FGOPT11} working only in a tiny parameter regime is now
able to work for code rates above $\frac{2}{3}$. It applies to most of the 
instantiations of the McEliece cryptosystem in the literature. It coincides with the one of \cite{FGOPT11}
when the latter can be applied (and is therefore of polynomial complexity in this case). However, its complexity 
increases significantly when \cite{FGOPT11} does not apply anymore, but stays subexponential as long as the co-dimension of the code is sublinear in the length (with an asymptotic exponent which is below those of all known key recovery or message attacks). For the concrete parameters of the McEliece NIST submission \cite{ABCCGLMMMNPPPSSSTW20}, its 
complexity is way too complex to threaten the cryptosystem, but is smaller than known key recovery attacks for most 
of the parameters of the submission. This subspace of quadratic forms can also be used in a different manner
   to give a polynomial time attack of the McEliece cryptosystem
  based on generic alternant codes or Goppa codes provided that these codes are distinguishable by the method of
  \cite{FGOPT11}, and in the Goppa case we need the additional assumption that its degree is less than $q-1$, where $q$ is the
  alphabet size of the code.
\end{abstract}

\section{Introduction}
\subsection*{The McEliece Cryptosystem}
The McEliece encryption scheme \cite{M78}, which is only a few months
younger than RSA \cite{RSA78}, is a code-based cryptosystem built upon
the family of binary Goppa codes.  It is equipped with very fast
encryption and decryption algorithms and has very small ciphertexts
but large public key size. Contrarily to RSA which is broken by
quantum computers \cite{S94a}, it is also widely viewed as a viable
quantum-safe cryptosystem. A variation of this public key cryptosystem
intended to be IND-CCA secure and an associated key exchange protocol
\cite{ABCCGLMMMNPPPSSSTW20} is one of the three remaining code-based 
candidates in the fourth round of the NIST post-quantum competition on
post-quantum cryptography.  Its main selling point for being
standardized is that it is the oldest public key cryptosystem which
has resisted all possible attacks be they classical or quantum so far,
this despite very significant efforts to break it.

The consensus right now about this cryptosystem is that key-recovery
attacks that would be able to exploit the underlying algebraic
structure are way more expensive than message-recovery attacks that
use decoding algorithms for generic linear codes. Because of this
reason, the parameters of McEliece encryption scheme are chosen
according to the latest algorithms for decoding a linear code.  This
is also actually another selling point for this cryptosystem, since
despite significant efforts on improving the algorithms for decoding
linear codes, all the classical algorithms for performing this task
are of exponential complexity and this exponent has basically only
decreased by less than $20$ percent for most parameters of interest
after more than 60 years of research
\cite{P62,S88,D89,CC98,MMT11,BJMM12,MO15,BM17}.  The situation is even
more stable when it comes to quantum algorithms \cite{B10,KT17a}.

\subsection*{Key Recovery Attacks}
The best key recovery attack has not changed for many years. It was
given in \cite{LS01} and consists in checking all Goppa polynomials
and all possible supports with the help of \cite{S00}. Its complexity
is also exponential with an exponent which is much bigger than the one
obtained for message recovery attacks. There has been some progress on
this issue, not on the original McEliece cryptosystem, but on
variations of it. This concerns very high rate binary Goppa codes for
devising signature schemes \cite{CFS01}, non-binary Goppa codes over
large alphabets \cite{BLP10,BLP11a}, or more structured versions of
the McEliece system, based on quasi-cyclic alternant codes
\cite{BCGO09,BIGQUAKE} (a family of algebraic codes containing Goppa
codes retaining the essential algebraic features of Goppa codes) or on
quasi-dyadic Goppa codes such as \cite{MB09,BLM11,BBBCDGGHKNNPR17}.

The quasi-cyclic or quasi-dyadic alternant/Goppa codes have been
attacked in \cite{FOPT10,GL09,BC18} by providing a suitable algebraic
modeling for the secret key and then solving the algebraic system with
Gr\"obner bases techniques. This algebraic modeling tries
to recover the underlying polynomial structure of these codes coming
from the underlying generalized Reed-Solomon structure by using just
an arbitrary generator matrix of the alternant or Goppa code which is
given by the public key of the scheme. This is basically the secret
key of the scheme. It allows to decode the alternant or Goppa code and
therefore all possible ciphertexts. Recall that a generalized
Reed-Solomon code is defined by
\begin{definition}[Generalized Reed-Solomon (GRS) code
  ]\label{def:GRS}
  Let $\xv=(x_1,\dots,$ $x_n)\in\F^n$ be a vector of pairwise distinct
  entries and $\yv=(y_1,\dots,y_n)\in\F^n$ a vector of nonzero
  entries, where $\F$ is a finite field. The \emph{generalized
    Reed-Solomon (GRS) code} over $\F$ of dimension $k$ with
  \emph{support} $\xv$ and \emph{multiplier} $\yv$ is
	\[
      \GRS{k}{\xv}{\yv}\eqdef\{(y_1 P(x_1),\dots,y_n P(x_n)) \mid P
      \in \F[z], \deg P < k\}.
	\]
\end{definition}
Alternant codes are defined as subfield subcodes of GRS codes, meaning
that an alternant code $\AC$ of length $n$ is defined over some field
$\Fq$ whereas the underlying GRS code $\CC$ is defined over an
extension field $\Fqm$ of degree $m$.
The alternant code is defined in this case as the set of codewords of
the GRS code whose entries all belong to the subfield $\Fq$, {\em i.e}
$$
\AC = \CC \cap \Fq^n.
$$
Rather than trying to recover the polynomial structure of the
underlying GRS code, the algebraic attack in \cite{FOPT10}
actually recovers the polynomial structure of the {\em dual
  code}. Recall that the dual code of a linear code is defined by
\begin{definition}[dual code] The dual $\CC^\perp$ of a linear code
  $\CC$ of length $n$ over $\Fq$ is the subspace of $\Fq^n$ defined by
  $ \CC^\perp \eqdef \{\dv \in \Fq^n: \scp{\dv}{\cv}=0,\;\forall \cv
  \in \CC\}, $ where $\scp{\dv}{\cv} = \sum_{i=1}^n c_i d_i$ with
  $\cv=(c_i)_{1 \leq i \leq n}$ and $\dv=(d_i)_{1 \leq i \leq n}$.
\end{definition}
The dual code of an alternant code has also a polynomial structure
owing to the fact that the dual of a GRS code is actually a GRS code:
\begin{proposition}[{\cite[Theorem~4, p.~304]{MS86}\label{pr:dual_GRS}}]
  Let $\GRS{r}{\xv}{\yv}$ be a GRS code of length $n$. Its dual is
  also a GRS code. In particular
  $ \GRS{r}{\xv}{\yv}^\perp=\GRS{n-r}{\xv}{\yv^\perp}, $ with
  $
  \yv^\perp\eqdef\left(\frac{1}{\pi'_\xv(x_1)y_1},\dots,\frac{1}{\pi'_\xv(x_n)y_n}\right)$,
  where $\pi_\xv(z)\eqdef \prod_{i=1}^n (z-x_i)$ and $\pi'_\xv$ is its
  derivative.
\end{proposition}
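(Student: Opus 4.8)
The plan is to establish the equality $\GRS{r}{\xv}{\yv}^\perp = \GRS{n-r}{\xv}{\yv^\perp}$ in two steps: a dimension count that reduces the claim to a single inclusion, followed by a Lagrange interpolation argument for that inclusion. First I would note that the evaluation map $P \mapsto (y_1 P(x_1),\dots,y_n P(x_n))$ is injective on the space $\{P \in \F[z] : \deg P < r\}$, since a nonzero polynomial of degree $< r \leq n$ cannot vanish on the $n$ pairwise distinct points $x_1,\dots,x_n$ and the $y_i$ are nonzero; hence $\dim \GRS{r}{\xv}{\yv} = r$, so $\dim \GRS{r}{\xv}{\yv}^\perp = n - r = \dim \GRS{n-r}{\xv}{\yv^\perp}$. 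It therefore suffices to prove the inclusion $\GRS{n-r}{\xv}{\yv^\perp} \subseteq \GRS{r}{\xv}{\yv}^\perp$.

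For that inclusion, take $P, Q \in \F[z]$ with $\deg P < r$ and $\deg Q < n-r$. The scalar product of the associated codewords is $\sum_{i=1}^n (y_i P(x_i))\cdot \frac{Q(x_i)}{\pi'_\xv(x_i)\,y_i} = \sum_{i=1}^n \frac{(PQ)(x_i)}{\pi'_\xv(x_i)}$, so the statement reduces to the classical identity $\sum_{i=1}^n \frac{F(x_i)}{\pi'_\xv(x_i)} = 0$, valid for every $F \in \F[z]$ with $\deg F \leq n-2$. I would prove this identity by comparing coefficients in Lagrange's interpolation formula: since $\deg F < n$, the polynomial $F$ equals its interpolant $\sum_{i=1}^n F(x_i)\,\frac{\pi_\xv(z)}{(z-x_i)\,\pi'_\xv(x_i)}$ through the points $(x_i, F(x_i))$, where $\pi'_\xv(x_i) = \prod_{j \neq i}(x_i - x_j)$; the coefficient of $z^{n-1}$ on the right-hand side is exactly $\sum_{i=1}^n \frac{F(x_i)}{\pi'_\xv(x_i)}$, whereas on the left-hand side it is $0$ because $\deg F \leq n-2$. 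Applying this with $F = PQ$, which has degree at most $(r-1)+(n-r-1) = n-2$, gives the inclusion and hence the proposition.

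The computations here are entirely routine; the only point requiring a little care is matching degree ranges, namely checking that $PQ$ stays within the bound $n-2$ for which the interpolation identity holds, and correctly identifying the Lagrange basis polynomials $\frac{\pi_\xv(z)}{(z-x_i)\,\pi'_\xv(x_i)}$ so that the claimed multiplier $\yv^\perp$ appears exactly. An alternative to the coefficient-comparison step is to read the identity off the partial-fraction decomposition of $\frac{F(z)}{\pi_\xv(z)}$, or as a sum of residues, but the interpolation argument is the most self-contained.
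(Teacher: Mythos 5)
Your proof is correct: the dimension count reduces the claim to one inclusion, the scalar product collapses to $\sum_i (PQ)(x_i)/\pi'_\xv(x_i)$, and the vanishing of that sum for $\deg(PQ)\le n-2$ follows exactly as you say from comparing the $z^{n-1}$ coefficient in the Lagrange interpolation formula. The paper gives no proof of this proposition, citing only \cite{MS86}, and your argument is essentially the classical one found there, so there is nothing to reconcile.
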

It is actually the dual of the underlying GRS code which serves to
define the multiplier and the support of an alternant code as shown by
\begin{definition}[alternant code]
  Let $n\le q^m$, for some positive integer $m$. Let
  $\GRS{r}{\xv}{\yv}$ be the GRS code over $\Fqm$ of dimension $r$
  with support $\xv \in \Fqm^n$ and multiplier $\yv\in
  (\Fqm^*)^n$. The \emph{alternant code} with support $\xv$ and
  multiplier $\yv$, \emph{degree} $r$ over $\Fq$ is
	\[
      \Alt{r}{\xv}{\yv}\eqdef \GRS{r}{\xv}{\yv}^\perp \cap
      \F_q^n=\GRS{n-r}{\xv}{\yv^\perp} \cap \Fq^n.
	\]
	The integer $m$ is called \emph{extension degree} of the
    alternant code.
\end{definition}
It is much more convenient to recover with an algebraic modeling the
support and the multiplier of the dual of the underlying GRS code
because {\em any} codeword $\cv=(c_i)_{1 \leq i \leq n}$ of the
alternant code $\Alt{r}{\xv}{\yv}$ is readily seen to be orthogonal to
any codeword $\dv$ of $\GRS{r}{\xv}{\yv}$, {\em i.e.}
$\scp{\cv}{\dv}=0$.  The algebraic modeling of \cite{FOPT10} is based
on such equations where the unknowns are the entries of $\xv$ and
$\yv$.
Goppa codes can be recovered from this approach too, since they are
particular alternant codes:
\begin{definition}[Goppa code]
  Let $\xv\in\Fqm^n$ be a support vector and $\Gamma\in\Fqm[z]$ a
  polynomial of degree $r$ such that $\Gamma(x_i)\neq 0$ for all
  $i \in \{1,\dots,n\}$. The \emph{Goppa code} of degree $r$ with
  support $\xv$ and \emph{Goppa polynomial} $\Gamma$ is defined as
  $ \Goppa{\xv}{\Gamma}\eqdef\Alt{r}{\xv}{\yv},$ where
  $\yv\eqdef\left(\frac{1}{\Gamma(x_1)},\dots,\frac{1}{\Gamma(x_n)}\right).$
\end{definition}
The algebraic modeling approach of \cite{FOPT10} worked because the
quasi cyclic/ dyadic structure allowed to reduce drastically the number
of unknowns of the algebraic system when compared to the original
McEliece cryptosystem. A variant of this algebraic modeling was
introduced in \cite{FPP14} to attack certain parameters of the variant
of the McEliece cryptosystem \cite{BLP10,BLP11a} based on wild Goppa
codes/wild Goppa codes incognito. It only involves equations on the
multiplier $\yv$ of the Goppa code induced by the wild Goppa
structure. The McEliece cryptosystem based on plain binary Goppa codes
seems immune to both the approaches of \cite{FOPT10} and
\cite{FPP14}. The first one because the degree and the number of
variables of the resulting system are most certainly too big to make
such an approach likely to succeed if not at the cost of a very high
exponential complexity (but this has to be confirmed by a rigorous
analysis which is hard to perform because Gr\"obner bases techniques
perform here very differently from a generic system). The second one
because this modeling does not apply to binary Goppa codes. In
particular, it needs a very small extension degree and a code alphabet
size that are prime powers rather than prime.

It was also found that Gr\"obner bases techniques when applied to the
algebraic system \cite{FOPT10} behaved very differently when the
system corresponds to a Goppa code instead of a random linear code of
the same length and dimension. This approach led to \cite{FGOPT11}
that gave a way to distinguish high-rate Goppa codes from random
codes.  It is based on the kernel of a linear system related to the
aforementioned algebraic system.  It was shown there to have an
unexpectedly high dimension when instantiated with Goppa codes or the
more general family of alternant codes rather than with random linear
codes. Another interpretation was later on given to this distinguisher
in \cite{MP12}, where it was proved that the kernel dimension is
related to the dimension of the square of the dual of the Goppa
code. Very recently, \cite{MT22} revisited \cite{FGOPT11} and gave
rigorous bounds for the dimensions of the square codes of Goppa or
alternant codes and a better insight into the algebraic structure of
these squares. Recall here that the component-wise/Schur
product/square of codes is defined from the component-wise/Schur
product of vectors $\av=(a_i)_{1 \leq i \leq n}$ and
$\bv=(b_i)_{1 \leq i \leq n}$
$$\starp{\av}{\bv}\eqdef(a_1 b_1,\dots,a_n b_n)$$
by
\begin{definition}
  The \emph{component-wise product of codes} $\CC,\DC$ over $\F$
  with the same length $n$ is defined as
	\[
      \starp{\CC}{\DC}\eqdef \Span{\starp{\cv}{\dv} \mid \cv \in \CC,
        \dv \in \DC}{\F}.  \] If $\CC=\DC$, we call
    $\sq{\CC}\eqdef\starp{\CC}{\CC}$ the \emph{square code} of
    $\CC$.
\end{definition}

The reason why Goppa codes behave differently from random codes for
this product is essentially because the underlying GRS code behaves
very abnormally with respect to the component-wise product. Indeed,
\begin{proposition}[{\cite{CGGOT14}}] \label{pr: square_GRS}
  Let $\GRS{k}{\xv}{\yv}$ be a GRS code with support $\xv$, multiplier
  $\yv$ and dimension $k$. We have
  $\sq{\GRS{k}{\xv}{\yv}}=\GRS{2k-1}{\xv}{\starp{\yv}{\yv}}$.  Hence,
  if $k\le\frac{n+1}{2}$, $\dim_{\Fqm}\sq{(\GRS{k}{\xv}{\yv})}=2k-1$.
	\end{proposition}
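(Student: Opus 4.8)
The plan is to pass to the polynomial/evaluation description of GRS codewords and reduce the statement to an elementary fact about products of univariate polynomials. Write $\mathrm{ev}_{\xv}\colon\Fqm[z]\to\Fqm^n,\ P\mapsto(P(x_1),\dots,P(x_n))$ for the evaluation map, and let $\Fqm[z]_{<\ell}$ denote the space of polynomials of degree $<\ell$. By Definition~\ref{def:GRS}, a codeword of $\GRS{k}{\xv}{\yv}$ is $\starp{\yv}{\mathrm{ev}_{\xv}(P)}$ with $P\in\Fqm[z]_{<k}$, and the codewords attached to $P=1,z,\dots,z^{k-1}$ form a basis.

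First I would compute the Schur product of two generators. Coordinate-wise, $y_iP(x_i)\cdot y_iQ(x_i)=y_i^2(PQ)(x_i)$, i.e.
\[
\starp{\bigl(\starp{\yv}{\mathrm{ev}_{\xv}(P)}\bigr)}{\bigl(\starp{\yv}{\mathrm{ev}_{\xv}(Q)}\bigr)}=\starp{\bigl(\starp{\yv}{\yv}\bigr)}{\mathrm{ev}_{\xv}(PQ)}.
\]
By bilinearity of the Schur product, $\sq{\GRS{k}{\xv}{\yv}}$ — which by definition is the $\Fqm$-span of all $\starp{\cv}{\cv'}$ with $\cv,\cv'\in\GRS{k}{\xv}{\yv}$ — is already spanned by the products of pairs of generators, hence equals the set $\bigl\{\starp{(\starp{\yv}{\yv})}{\mathrm{ev}_{\xv}(R)}\mid R\in V\bigr\}$, where $V\eqdef\Fqmspan{PQ\mid P,Q\in\Fqm[z]_{<k}}$.

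Next I would identify $V$. The inclusion $V\subseteq\Fqm[z]_{<2k-1}$ is immediate since $\deg(PQ)\leq 2k-2$, and for the converse taking $P=z^a,\ Q=z^b$ with $0\leq a,b\leq k-1$ produces every monomial $z^{a+b}$, $0\leq a+b\leq 2k-2$; these span $\Fqm[z]_{<2k-1}$, so $V=\Fqm[z]_{<2k-1}$. Since $\starp{\yv}{\yv}$ has only nonzero entries, the resulting set is a genuine GRS code, giving $\sq{\GRS{k}{\xv}{\yv}}=\GRS{2k-1}{\xv}{\starp{\yv}{\yv}}$. For the dimension claim, note that $\mathrm{ev}_{\xv}$ is injective on $\Fqm[z]_{<2k-1}$ as soon as $2k-1\leq n$ — a nonzero polynomial of degree $<n$ cannot vanish at the $n$ pairwise distinct $x_i$ — and $k\leq\frac{n+1}{2}$ is exactly $2k-1\leq n$; multiplication by the fixed vector $\starp{\yv}{\yv}$ with nonzero coordinates preserves injectivity, so $\dim_{\Fqm}\sq{\GRS{k}{\xv}{\yv}}=\dim_{\Fqm}\Fqm[z]_{<2k-1}=2k-1$.

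I do not expect a genuine obstacle; the argument is just the translation through $\mathrm{ev}_{\xv}$ of the trivial fact that $\Fqm[z]_{<k}\cdot\Fqm[z]_{<k}$ \emph{spans} $\Fqm[z]_{<2k-1}$. The one point deserving an explicit (one-line) argument rather than a direct substitution is precisely this: an individual product $PQ$ ranges over a proper, non-linear subset of $\Fqm[z]_{<2k-1}$, yet the \emph{linear span} of these products is all of $\Fqm[z]_{<2k-1}$, as the monomial computation above shows.
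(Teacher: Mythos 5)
Your proof is correct. Note that the paper itself gives no proof of this proposition --- it is imported verbatim from \cite{CGGOT14} --- so there is nothing to compare against; your argument (pass to the polynomial description, observe that the Schur product of generators corresponds to multiplication of polynomials, and note that the monomial products $z^a z^b$ with $0\le a,b\le k-1$ \emph{span} $\Fqm[z]_{<2k-1}$ even though the individual products $PQ$ do not form a linear space) is exactly the standard one, and your handling of the dimension claim via injectivity of evaluation at $n$ distinct points when $2k-1\le n$ is also correct.
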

On the other hand, random linear codes behave very differently,
because they attain with probability close to $1$ \cite{CCMZ15} the
general upper bound on the dimension given by
$ \dim_{\F} \sq{\CC} \le \min\left(n,\binom{\dim_{\F}
    \CC+1}{2}\right).$ In other words, the dimension of the square of
a random linear code scales quadratically as long as the dimension is
$k = \OO{\sqrt{n}}$ and attains after this the full dimension $n$,
whereas the dimension of the square of a GRS code of dimension $k$
increases only linearly in $k$. This peculiar property of GRS codes
survives in an attenuated form in the square of
the dual of an alternant/Goppa code as shown by \cite{MT22}.

This tool was also instrumental in another breakthrough in this area,
namely that for the first time a polynomial attack \cite{COT14,COT17}
was found on the McEliece scheme when instantiated with Goppa
codes. This was done by using square code considerations. However,
this attack required very special parameters to be carried out: (i)
the extension degree should be $2$, (ii) the Goppa code should be a
wild Goppa code. It is insightful to remark that this attack exploits
the unusually low dimension of the square of wild Goppa codes when
their dimension is low enough whereas the distinguisher of
\cite{FGOPT11} actually uses the small dimension of the square of the
{\em dual} of a Goppa or alternant code. The dual of such codes has a
much more involved structure, in particular it loses a lot of the nice
polynomial structure of the Goppa code (this was essential in the
attack performed in \cite{COT14}).  This is probably the reason why
for a long time the distinguisher of \cite{FGOPT11} has not turned
into an actual attack. However, recently in \cite{BMT23} it has been
found out that in certain cases (i) very small field size $q=2$ or
$q=3$ over which the code is defined, (ii) being a {\em generic
  alternant code} rather than being in the special case of Goppa code,
(iii) being in the region of parameters where the distinguisher of
\cite{FGOPT11} applies, then this distinguisher can actually be turned
into a polynomial-time attack. Note that \cite{BMT23} also made some
crucial improvements in the algebraic modeling of \cite{FOPT10} (in
particular by adding low-degree equations that take into account that
the multiplier and support of the alternant/Goppa code should satisfy
certain constraints).

\subsection*{A new approach}

\subsubsection*{A first idea: non generic quadratic relations on the
  extended dual alternant/Goppa code.}
We devise in this work a radically new approach toward attacking the
McEliece cryptosystem when it is based on alternant or Goppa codes.
This leads to two new contributions : (1) a new distinguisher on
alternant and Goppa codes and (2) a polynomial time key-recovery
attack on the alternant and part of the Goppa codes that are
distinguishable by \cite{FGOPT11}.
Both exploit the structure of the extension over a
larger field of the dual of an alternant/Goppa code. The extension of
a code over a field extension is given by
\begin{definition}[Extension of a code over a field extension]
  Let $\CC$ be a linear code over $\Fq$. We denote by $\CC_{\Fqm}$ the
  $\Fqm$-linear span of $\CC$ in $\Fqm^n$.
\end{definition}

\begin{definition}[Image of a code by the Frobenius map]
  Let $\CC \subseteq \Fqm$ be a code, we define $\CC^{(q)}$ as
  \[
    \CC^{(q)} \eqdef \{(c_1^q, \dots, c_n^q) ~|~ (c_1, \dots, c_n) \in \CC\}.
  \]
\end{definition}
It turns out that the extension of the dual of an alternant code
actually contains GRS codes and their images by the Frobenius map:
\begin{proposition}[{\cite{BMT23}}] \label{prop:dual_alt_fqm} 
  Let $\Alt{r}{\xv}{\yv}$ be an alternant code over $\Fq$. Then \\
  $ \left(\Alt{r}{\xv}{\yv}^\perp\right)_{\Fqm} =\sum_{j=0}^{m-1}
  \GRS{r}{\xv}{\yv}^{(q^j)}= \sum_{j=0}^{m-1}
  \GRS{r}{\xv^{q^j}}{\yv^{q^j}}.$
	\end{proposition}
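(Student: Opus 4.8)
The plan is to establish the two claimed equalities in sequence, starting from the defining relation $\Alt{r}{\xv}{\yv} = \GRS{r}{\xv}{\yv}^\perp \cap \Fq^n$, hence $\Alt{r}{\xv}{\yv}^\perp = \left(\GRS{r}{\xv}{\yv}^\perp \cap \Fq^n\right)^\perp$. The first step is to recall the standard fact (Delsarte's theorem) that taking the dual of a subfield subcode yields the trace code: for a linear code $\CC$ over $\Fqm$, one has $\left(\CC^\perp \cap \Fq^n\right)^\perp = \Tr{\CC}$, where $\Tr{\CC} \eqdef \{(\Trm{c_1}, \dots, \Trm{c_n}) \mid \cv \in \CC\}$ and the trace is applied componentwise. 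Applying this with $\CC = \GRS{r}{\xv}{\yv}$ gives $\Alt{r}{\xv}{\yv}^\perp = \Tr{\GRS{r}{\xv}{\yv}}$.

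The second step is to extend scalars: I would show that for any $\Fq$-linear code $\DC$ of the form $\DC = \Tr{\CC}$ with $\CC$ an $\Fqm$-linear code, the $\Fqm$-span satisfies $\DC_{\Fqm} = \sum_{j=0}^{m-1} \CC^{(q^j)}$. The inclusion $\subseteq$ is immediate since $\Trm{c} = \sum_{j=0}^{m-1} c^{q^j}$, so every generator of $\Tr{\CC}$ already lies in $\sum_j \CC^{(q^j)}$, and the latter is an $\Fqm$-subspace. For the reverse inclusion, the key observation is that $\sum_{j=0}^{m-1} \CC^{(q^j)}$ is stable under the Frobenius map $\cv \mapsto \cv^{(q)}$ (it merely permutes the summands cyclically, since $\CC^{(q^m)} = \CC$), hence it is the extension $\DC'_{\Fqm}$ of some $\Fq$-linear code $\DC' \eqdef \left(\sum_j \CC^{(q^j)}\right) \cap \Fq^n$; and by the first inclusion together with $\left(\DC_{\Fqm}\right) \cap \Fq^n = \DC$ one checks $\DC \subseteq \DC'$, which after re-extending scalars forces equality. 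Alternatively, and perhaps more cleanly, one can argue by dimensions: $\dim_{\Fqm} \DC_{\Fqm} = \dim_{\Fq} \Tr{\CC}$, and a direct count of the $\Fqm$-dimension of $\sum_j \CC^{(q^j)}$ via the same Galois-descent argument matches this. Either way, applying it to $\CC = \GRS{r}{\xv}{\yv}$ yields $\left(\Alt{r}{\xv}{\yv}^\perp\right)_{\Fqm} = \sum_{j=0}^{m-1} \GRS{r}{\xv}{\yv}^{(q^j)}$, the first equality.

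The third step is the identification $\GRS{r}{\xv}{\yv}^{(q^j)} = \GRS{r}{\xv^{q^j}}{\yv^{q^j}}$, where $\xv^{q^j} \eqdef (x_1^{q^j}, \dots, x_n^{q^j})$ and similarly for $\yv$. This is a routine check directly from Definition~\ref{def:GRS}: raising a codeword $(y_1 P(x_1), \dots, y_n P(x_n))$ to the power $q^j$ gives $(y_1^{q^j} P(x_1)^{q^j}, \dots, y_n^{q^j} P(x_n)^{q^j}) = (y_1^{q^j} P^{(q^j)}(x_1^{q^j}), \dots)$, where $P^{(q^j)}$ is the polynomial obtained by raising each coefficient of $P$ to the power $q^j$ (using that $z \mapsto z^{q^j}$ is a ring homomorphism on $\Fqm$). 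As $P$ ranges over polynomials of degree $< r$, so does $P^{(q^j)}$, and the $x_i^{q^j}$ remain pairwise distinct (Frobenius is injective) while the $y_i^{q^j}$ remain nonzero; hence the image is exactly $\GRS{r}{\xv^{q^j}}{\yv^{q^j}}$.

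The only genuinely delicate point is the Galois-descent argument in step two: one must be careful that $\sum_{j=0}^{m-1} \CC^{(q^j)}$ is closed under Frobenius and that its fixed subcode over $\Fq$ re-extends to the whole sum, rather than to something strictly larger; this is where one invokes that for an $\Fqm$-linear code $\EC$ stable under Frobenius, $\left(\EC \cap \Fq^n\right)_{\Fqm} = \EC$. Everything else is bookkeeping with the definitions. I note that the statement is attributed to \cite{BMT23}, so the expected proof is precisely this short chain: Delsarte duality, Frobenius-stability / Galois descent, and the elementary behaviour of GRS codes under the Frobenius map.
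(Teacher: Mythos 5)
The paper itself gives no proof of this proposition: it is quoted verbatim from \cite{BMT23}. Your overall route --- Delsarte duality to get $\Alt{r}{\xv}{\yv}^\perp = \Trm{\GRS{r}{\xv}{\yv}}$ (trace applied componentwise), then identifying the $\Fqm$-span of a trace code with the sum of the Frobenius images of $\CC$, then the elementary identification $\GRS{r}{\xv}{\yv}^{(q^j)} = \GRS{r}{\xv^{q^j}}{\yv^{q^j}}$ --- is the standard one and is indeed what \cite{BMT23} carries out. Your first and third steps are correct as written.

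There is, however, a logical gap in your second step, precisely at the point you flag as delicate. Write $\DC \eqdef \Trm{\CC}$ and $\code{E} \eqdef \sum_{j=0}^{m-1}\CC^{(q^j)}$. You correctly obtain $\DC_{\Fqm} \subseteq \code{E}$, and by Frobenius-stability and Galois descent $\code{E} = (\DC')_{\Fqm}$ with $\DC' \eqdef \code{E}\cap\Fq^n$. But the chain you then run --- $\DC \subseteq \DC'$, hence $\DC_{\Fqm} \subseteq (\DC')_{\Fqm} = \code{E}$ --- merely reproduces the forward inclusion and does not ``force equality''; for that you need the reverse containment $\DC' \subseteq \DC$, i.e. that every $\Fq$-vector of $\code{E}$ is a trace of an element of $\CC$, and nothing you wrote supplies it. (Your suggested dimension count $\dim_{\Fqm}\code{E} = \dim_{\Fq}\Trm{\CC}$ is exactly the statement to be proved, not an independent input: a priori $\dim_{\Fqm}\code{E}$ could be anything between $\dim_{\Fqm}\CC$ and $m\dim_{\Fqm}\CC$.) The standard fix is direct: for $\alpha\in\Fqm$ and $\cv\in\CC$ one has $\Trm{\alpha\cv} = \sum_{j=0}^{m-1}\alpha^{q^j}\cv^{(q^j)} \in \DC$; letting $\alpha$ run over a basis $\alpha_1,\dots,\alpha_m$ of $\Fqm/\Fq$ and using that the Moore matrix $(\alpha_i^{q^j})_{i,j}$ is invertible, each $\cv^{(q^j)}$ is an $\Fqm$-linear combination of the $\Trm{\alpha_i\cv}$, whence $\CC^{(q^j)}\subseteq \DC_{\Fqm}$ for every $j$ and so $\code{E}\subseteq\DC_{\Fqm}$. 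Equivalently, one shows $\code{E}\cap\Fq^n = \Trm{\code{E}}$ for Frobenius-stable $\code{E}$ (take $\beta\in\Fqm$ with $\Trm{\beta}=1$, so that $\vv = \Trm{\beta\vv}$ for $\vv\in\code{E}\cap\Fq^n$), and then $\Trm{\code{E}} = \sum_j\Trm{\CC^{(q^j)}} = \Trm{\CC}$ because the trace is invariant under precomposition with the Frobenius. With either argument inserted, your proof is complete.
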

Observe now that a GRS code contains non-zero codewords $\cv_1$,
$\cv_2$, $\cv_3$ satisfying a very peculiar property, namely
\begin{equation}\label{eq:quadratic_relation}
	\starp{\cv_1}{\cv_3} = \sq{\cv_2}.
\end{equation}
This can be seen by choosing
$\cv_1 = \yv \xv^a = (y_i x_i^a)_{1 \leq i \leq n}$,
$\cv_2 = \yv \xv^b = (y_i x_i^b)_{1 \leq i \leq n}$ and
$ \cv_3 = \yv \xv^c = (y_i x_i^c)_{1 \leq i \leq n}$
for any $a,b,c$ in $\Iintv{0}{r-1}$ satisfying $b = \frac{a+c}{2}$.
Such a relation is unlikely to hold in a random linear code of
dimension $k$, unless it is of rate $k/n$ close to $1$. Therefore the
dual code of our alternant or Goppa code contains very peculiar
codewords. The issue is now how to find them?

\subsubsection*{A new concept: the code of quadratic relations.}
Equation~\eqref{eq:quadratic_relation} can be viewed as a quadratic
relation between codewords.  There is a natural object that can be
brought in that encodes in a natural way quadratic relations
\begin{definition}[Code of quadratic relations] \label{def: crel} Let
  $\CC$ be an $[n,k]$ linear code over $\F$ and let
  $\cV=\{\vv_1,\dots,\vv_k\}$ be a basis of $\CC$. The \textbf{code of
    relations between the Schur's products with respect to $\cV$}
  is
	\[
      \Crel(\cV)\eqdef\{\cv=(c_{i,j})_{1\le i\le j \le k} \mid \sum_{i
        \le j} c_{i,j} \starp{\vv_i}{\vv_j}=0\} \subseteq
      \F^{\binom{k+1}{2}}.
  \]
	\end{definition}
Such an element $\cv=(c_{i,j})_{1\le i\le j \le k} $ of $\Crel(\cV)$
defines a quadratic form as
$$
Q_{\cv}(x_1,\cdots,x_k) = \sum_{ i\le j } c_{i,j} x_i x_j.
$$
When a basis $\cV$ containing the aforementioned $\cv_i$'s is chosen,
there exists an element in $\Crel(\cV)$ whose
associated quadratic form is of the form $x_i x_j - x_\ell^2$ (for
$\vv_i=\cv_1$, $\vv_j=\cv_3$, $\vv_\ell = \cv_2$). In other words,
this quadratic form is of rank $3$ (in odd characteristic). To find
such non--generic elements in $\Crel(\cV)$, it is convenient to
represent the elements of $\Crel(\cV)$ as matrices corresponding to
the bilinear map given by the polar form of the quadratic form,
i.e. the matrix $\Mm_{\cv}$ corresponding to $\cv \in \Crel(\cV)$ that
satisfies for all $\xv$ and $\yv$ in $\Fqm^k$
\begin{equation}
\label{eq:polarform}
\xv \Mm_{\cv} \trsp{\yv} = Q_{\cv}(\xv+\yv) -Q_{\cv}(\xv)-Q_{\cv}(\yv).
\end{equation}
This definition allows to have a matrix definition of the quadratic
form which works both in odd characteristic and characteristic $2$ and
which satisfies the crucial relation \eqref{eq:basis_change} when the
basis is changed. Note that $\Mm_{\cv}$ is symmetric in odd
characteristic, whereas it is skew-symmetric in characteristic $2$.

\begin{remark}
  By {\em skew symmetric} matrices in characteristic $2$ we mean
  symmetric matrices with zero diagonal.
\end{remark}

\begin{definition}[Matrix code of relations] \label{def: cmat_odd}
  Let $\CC$ be an $[n,k]$ linear code over $\F$ and let
  $\cV=\{\vv_1,\dots,\vv_k\}$ be a basis of $\CC$. The \textbf{matrix
    code of relations between the Schur's products with respect to
    $\cV$} is
	\[
      \Cmat(\cV)\eqdef\{\Mm_{\cv}=(m_{i,j})_{\substack{1\le i\le k \\
          1\le j\le k}} \mid \cv=(c_{i,j})_{1\le i\le j \le k} \in
      \Crel(\cV) \} \subseteq \Sym(k, \F),
	\]
	where 
		$\Mm_{\cv}$ is defined as
	$
	\begin{cases}
		m_{i,j}\eqdef  m_{j,i} \eqdef c_{i,j},&\quad 1\le i< j\le k,\\
		m_{i,i} \eqdef  2c_{i,i},&\quad 1\le i\le k.
	\end{cases}
	$
	\end{definition}
The previous discussion shows that if $\cV$ contains the triple
$\cv_1$, $\cv_2$, $\cv_3$, then there exists a matrix of rank $3$ in
the matrix code of relations in odd characteristic.  Note that the
matrix is of rank $2$ in characteristic $2$ since the polar form
corresponding to the quadratic form $Q(\xv) = x_i x_j - x_{\ell}^2$ is
given by
$(x_i + y_i)(x_j+y_j)-(x_\ell+y_\ell)^2- x_i x_j + x_{\ell}^2 - y_i
y_j + y_{\ell}^2 = x_i y_j + x_j y_i$.

Now the point is that even if we do not have a basis containing the
$\cv_i$'s, there are still rank $3$ (or $2$) matrices in the matrix
code of relations. This holds because a change of basis basically
amounts to a congruent matrix code. Indeed if $\cA$ and $\cB$ are two
different bases of the same code, there exists (see
Proposition~\ref{prop: congr_odd}) an invertible
$\Pm \in \F^{k \times k}$ such that
\begin{equation}
\label{eq:basis_change}
\Cmat(\cA)=\trsp{\Pm} \Cmat(\cB) \Pm.
\end{equation}
Therefore for any choice of basis, there exists a rank $3$ matrix in
the corresponding matrix code of relations.  Finding such matrices
can be viewed as a MinRank problem for rank $3$ with symmetric
matrices
\begin{problem}[Symmetric MinRank problem for rank $r$]
	Let $\Mm_1,\cdots,\Mm_K$ be $K$ symmetric matrices in $\F^{N \times N}$. Find an $\Mm \in \Fspan{\Mm_1,\cdots,\Mm_K}$ of rank $r$.
\end{problem}

Of course, the dimension of the matrix code could be so large that
there are rank $3$ (or $2$) matrices which are here by chance and
which are not induced by these unusual quadratic relations between
codewords of the GRS code. We will study this problem and will give in
Section \ref{sec:low_rank} bounds on the parameters of the problem
which rule out this possibility. Basically, the parameters that we
will encounter for breaking McEliece-type systems will avoid this
phenomenon.

\subsubsection*{A dedicated algebraic approach for finding rank $2$
  elements in a skew-symmetric matrix code.}
There are many methods which can be used to solve the MinRank problem,
be they combinatorial \cite{GC00}, based on an algebraic modeling and
solving them with Gröbner basis or XL type techniques, such as
\cite{KS99,FLP08,FSS10,VBCPS19,BBCGPSTV20} or hybrid methods
\cite{BBBGT22}. Basically all of them can be adapted to the symmetric
MinRank problem. One of the most attractive methods for solving the
problem for the parameters we have is the Support Minors approach
introduced in \cite{BBCGPSTV20}. Unfortunately due to the symmetric or
skew-symmetric form of the matrix space, solving the corresponding
system with the proposed XL type approach behaves very differently
from a generic matrix space and its complexity seems very delicate to
predict.  For this reason, we have devised another way of solving the
corresponding MinRank problem in characteristic $2$. First, we took
advantage that the algebraic system describing the variety of
skew-symmetric matrices of rank $\leq 2$ has already been studied in
the literature and Gr\"obner bases are known. Next, we add to this
Gr\"obner basis the linear equations expressing that the
skew-symmetric matrix should also belong to the matrix code of
relations. This allows us to understand the complexity of solving the
corresponding algebraic system.  It turns out that the Gröbner basis
computation behaves very differently when applied to the
skew-symmetric matrix space associated with a random code rather than
with a Goppa or an alternant code.  This clearly yields a way to
distinguish a Goppa code or more generally an alternant code from a
random code. Contrarily to the distinguisher that has been devised in
\cite{FGOPT11} which works only for a very restricted set of
parameters, this new distinguisher basically works already for rates above $\frac{2}{3}$.
This concerns an overwhelming proportion of
code parameters that have been proposed (and all parameters of the NIST submission \cite{ABCCGLMMMNPPPSSSTW20}).
Interestingly enough, for the code parameters where \cite{FGOPT11} works, our new distinguisher
coincides with it. Despite the fact that its complexity 
increases significantly when \cite{FGOPT11} does not apply anymore, it stays subexponential as long as the co-dimension of the code is sublinear in the length. Interestingly enough in this regime, its asymptotic exponent is below those of all known key recovery or message attacks. For the concrete parameters of the McEliece NIST submission \cite{ABCCGLMMMNPPPSSSTW20}, its 
complexity is  too complex to threaten the cryptosystem, but is smaller than known key recovery attacks for most 
of the parameters of the submission.

\subsubsection*{A new attack exploiting rank defective matrices in the
  matrix code of relations.}
There is another way to exploit this matrix code which consists in
observing that for a restricted set of code parameters (i) the degree
$r$ of the alternant code is less than $q+1$ or $q-1$ in the Goppa
case, (ii) the code is distinguishable with the method of
\cite{FGOPT11}, a rank defective matrix in the matrix code of
relations leaks information on the secret polynomial structure of
the code. This can be used to mount a simple attack by just (i)
looking for such matrices by picking enough random elements in the
matrix code and verifying if they are rank defective (ii) and then
exploiting the information gathered here to recover the support and
multiplier of the alternant/Goppa code.

\subsubsection*{Summary of the contributions.}

In a nutshell, our contributions are
\begin{itemize}
\item We introduce a new concept, namely the matrix code of quadratic
  relations which can be derived from the extended dual of the
  Goppa/alternant code for which we want to recover the polynomial
  structure. This is a subspace of symmetric or skew-symmetric
  matrices depending on the field characteristic over which the code
  is defined which has the particular feature of containing very
  low-rank matrices (rank $3$ in odd characteristic, rank $2$ in
  characteristic $2$) which are related to the secret key of the
  corresponding McEliece cryptosystem.
\item We devise a dedicated algebraic approach for finding these
  low-rank matrices in characteristic $2$ when this subspace of
  matrices is formed by skew-symmetric matrices. It takes advantage of
  the fact that we know a Gröbner basis for the algebraic system
  expressing the fact that a skew-symmetric matrix is of rank $\leq 2$
  based on the nullity of all minors of size greater than $2$. This
  system can be solved with the help of Gröbner bases techniques. It
  turns out that the solving process behaves differently when applied
  to the matrix code of quadratic relations associated with a
  random linear code rather than with a Goppa or an alternant
  code. This gives a way to distinguish a Goppa code or more generally
  an alternant code from a random code which contrarily to the
  distinguisher of \cite{FGOPT11,FGOPT13} works for virtually all code
  parameters relevant to cryptography (recall that the latter works
  only for very high rate Goppa or alternant codes). Moreover, the
  complexity of this system solving can be analyzed and an upper bound
  on the complexity of the distinguisher can be given. It is
  polynomial in the same regime of parameters when the distinguisher
  of \cite{FGOPT11} works. Even if its complexity increases significantly outside this regime, it is less complex
than all known attacks in the sublinear co-dimension regime. For the concrete 
NIST submission parameters \cite{ABCCGLMMMNPPPSSSTW20} its complexity is very far away from representing a threat, but is 
below the known key attacks for most of these parameters. This can be considered as a breakthrough in this area.

\item Rank defective elements in this matrix space also reveal
  something about the hidden polynomial structure of the Goppa or
  alternant code in a certain parameter regime, namely when (i) the
  degree $r$ of the alternant code is less than $q+1$ or $q-1$ in the
  Goppa case, (ii) the code is distinguishable with the method of
  \cite{FGOPT11}. We use this to give a polynomial-time attack in such
  a case by just looking for rank defective elements with a random
  search. This complements nicely the polynomial attack which has been
  found in \cite{BMT23} which also needs that the code is
  distinguishable with \cite{FGOPT11}, but works in the reverse
  parameter regime $r \geq q+1$ (and has also additional restrictions,
  code alphabet size either binary or ternary and it does not work for
  Goppa codes).  Note that in conjunction with the filtration of
  \cite{BMT23}, this new attack works for {\em any} distinguishable
  generic alternant code. This gives yet another example of a case
  when the distinguisher of \cite{FGOPT11} turns into an actual attack
  of the scheme.
\end{itemize}

 \section{Notation and preliminaries}
\label{sec:preliminaries}

\subsection{Notation}

\subsubsection*{General notation}

$\Iintv{a}{b}$ indicates the closed integer interval between $a$ and
$b$. We will make use of two notations for finite fields, $\F_q$
denotes the finite field with $q$ elements, but sometimes we do not
indicate the size of it when it is not important to do so and simply
write $\F$. Instead, a general field (not necessarily finite) is
denoted by $\K$ and its algebraic closure by $\overline{\K}$.

\subsubsection*{Vector and matrix  notation.}
Vectors are indicated by lowercase bold letters $\xv$ and matrices by
uppercase bold letters $\Mm$. Given a function $f$ acting on $\F$ and
a vector $\xv=(x_i)_{1\le i \le n} \in \F$, the expression $f(\xv)$ is
the component-wise mapping of $f$ on $\xv$,
i.e. $f(\xv)=(f(x_i))_{1\le i \le n}$. We will even apply this with
functions $f$ acting on $\F \times \F$: for instance for two vectors
$\xv$ and $\yv$ in $\F^n$ and two positive integers $a$ and $b$ we
denote by $\xv^a\yv^b$ the vector $(x_i^a y_i^b)_{1 \leq i \leq n}$.
We will use the same operation over matrices, but in order to avoid
confusion with the matrix product, we use for a matrix
$\Am=(a_{i,j})_{i,j}$ the notation $\Am^{(q)}$ which stands for the
entries of $\Am$ all raised to the power $q$, {i.e.}  the entry
$(i,j)$ of $\Am^{(q)}$ is equal to $a_{i,j}^q$. The scalar product
between $\xv=(x_i)_{1 \leq i \leq n}\in \F^n$ and
$\yv=(y_i)_{1 \leq i \leq n}\in \F^n$ is denoted by $\xv \cdot \yv$
and is defined by $\xv\cdot \yv = \sum_{i=1}^n x_i y_i$.

\subsubsection*{Symmetric and skew-symmetric matrices.}
The set of $k \times k$ symmetric matrices over $\F$ is denoted by
$\Sym(k,\F)$, whereas the corresponding set of skew-symmetric matrices
is denoted by $\Skew(k,\Fq)$. 

\subsubsection*{Vector spaces.}
Vector spaces are indicated by $\CC$. For two vector spaces $\CC$ and
$\DC$, the notation $\CC\oplus \DC$ means that the two vector spaces
are in direct sum, i.e.  that $\CC\cap\DC=\{0\}$. The $\F$-linear
space generated by $\xv_1,\dots,\xv_m \in \F^n$ is denoted by
$\Fspan{\xv_1,\dots,\xv_m}$.

\subsubsection*{Codes.}
A linear code $\CC$ of length $n$ and dimension $k$ over $\F$ is a $k$ dimensional subspace of $\F^n$. We refer to it as an $[n,k]$-code.

\subsubsection*{Ideals.}
Ideals are indicated by calligraphic $\cI$. Given a sequence $S$ of
polynomials, $\cI(S)$ refers to the polynomial ideal generated by such
sequence. Given the polynomials $f_1,\dots,f_m$, we denote by
$\ideal{f_1,\dots,f_m}$ 
the ideal
generated by them. The variety associated with a polynomial ideal
$\cI\subseteq \K[x_1,\dots,x_n]$ is indicated by $\Vm(\cI)$ and
defined as
$\Vm(\cI)=\{\av \in \overline{\K}^n \mid \forall f \in \cI,\;
f(\av)=0\}$.

\subsection{Distinguishable Alternant or Goppa Code }

We will frequently use here the term {\em distinguishable
  alternant/Goppa} (in the sense of \cite{FGOPT11}) code.  They are
defined as
\begin{definition}[Square--distinguishable alternant/Goppa code]
  A (generic) alternant code $\Alt{r}{\xv}{\yv}$ of length $n$ over
  $\Fq$ and extension degree $m$ is said to be {\em
    square--distinguishable} if
\begin{equation}\label{eq:distinguishable_alternant}
  n > \binom{rm+1}{2}-\frac{m}{2}(r-1)\left((2e_{\AC}+1)r-2\frac{q^{e_{\AC}+1}-1}{q-1}\right)
\end{equation}
where $e_{\AC}\eqdef \max\{i \in \mathbb{N} \mid r\ge q^i+1\}=
\floor{\log_q(r-1)}$.\\
A Goppa code $\Goppa{\xv}{\Gamma}$ of the same parameters is said to
be {\em square--distinguishable} if
	\begin{align} 
      n > &  \binom{rm+1}{2}-\frac{m}{2}(r-1)(r-2),\quad &\text{if $r < q-1$} \label{eq:distinguishable_Goppa_e=0}\\
      n > &\binom{rm+1}{2}-\frac{m}{2}r\left((2e_{\GC}+1)r-2(q-1)q^{e_{\GC}-1}-1\right), \quad &\text{otherwise,}
		\label{eq:prediction_Goppa_e>0}
	\end{align}
    where
    $e_{\GC}\eqdef \min \{i \in \mathbb{N} \mid r \le (q-1)^2q^{i}\}
    +1 = \ceil{\log_q\left(\frac{r}{(q-1)^2}\right)}+1$.
  \end{definition}
  This definition is basically due to the fact that there is a way to
  distinguish such codes from random codes in this case
  \cite{FGOPT11}. For our purpose, it is better to use the point of
  view of \cite{MT22} and to notice that they are distinguishable
  because the computation of the dimension of the square of the dual
  code leads to a result which is different from $n$ and
  $\binom{rm+1}{2}$ (which is the expected dimension of the square of
  a dual code of dimension $rm$). This is shown by
\begin{thm}[{\cite{MT22}}] \label{thm: MT22}
	For an alternant code $\Fq$ of length $n$ and extension degree $m$ we have
	\begin{equation}
		\label{eq:prediction_alternant}
		\dim_{\Fq} \sq{(\Alt{r}{\xv}{\yv}^\perp)} \le \min\left\{n, \binom{rm+1}{2}-\frac{m}{2}(r-1)\left((2e_{\AC}+1)r-2\frac{q^{e_{\AC}+1}-1}{q-1}\right)\right\}.
	\end{equation} where $e_{\AC}\eqdef \max\{i \in \mathbb{N} \mid r\ge q^i+1\}=\floor{\log_q(r-1)}$.\\
	For a Goppa code $\Goppa{\xv}{\Gamma}$  of length $n$ over $\Fq$ with Goppa polynomial $\Gamma(X) \in \Fqm[X]$ of degree $r$ we have
	\begin{align} 
      \dim \sq{(\Goppa{\xv}{\Gamma}^\perp)}& \le  \min\left\{n,\binom{rm+1}{2}-\frac{m}{2}(r-1)(r-2)\right\},\;\;\text{if $r < q-1$} \label{eq:prediction_Goppa_e=0}\\
      \dim \sq{(\Goppa{\xv}{\Gamma}^\perp)}& \le  \min \left\{n,\binom{rm+1}{2}-\frac{m}{2}r\left((2e_{\GC}+1)r-2(q-1)q^{e_{\GC}-1}-1\right)\right\}, \;\;\text{otherwise,}
		\label{eq:prediction_Goppa_e>0}
	\end{align}
	where	$e_{\GC}\eqdef \min \{i \in \mathbb{N} \mid r \le (q-1)^2q^{i}\} +1 = \ceil{\log_q\left(\frac{r}{(q-1)^2}\right)}+1$.
\end{thm}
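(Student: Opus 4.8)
The statement to prove is Theorem "MT22", which gives upper bounds on the dimension of the square of the dual of an alternant code and of a Goppa code. The key structural input is Proposition~\ref{prop:dual_alt_fqm}, which identifies the $\Fqm$-extension of $\AC^\perp = \Alt{r}{\xv}{\yv}^\perp$ with $\sum_{j=0}^{m-1}\GRS{r}{\xv^{q^j}}{\yv^{q^j}}$, together with Proposition~\ref{pr: square_GRS} describing squares of GRS codes. Let me sketch how I would organize the argument.

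First I would pass to the field extension, since dimensions over $\Fq$ of $\sq{(\AC^\perp)}$ are controlled by dimensions over $\Fqm$ of $\sq{((\AC^\perp)_{\Fqm})}$: indeed $\dim_{\Fq}\sq{(\AC^\perp)} \le \dim_{\Fqm}\sq{((\AC^\perp)_{\Fqm})}$, because extension of scalars commutes with the Schur product (a spanning set of $\sq{(\AC^\perp)}$ over $\Fq$ spans $\sq{((\AC^\perp)_{\Fqm})}$ over $\Fqm$). So it suffices to bound $\dim_{\Fqm} \sq{\left(\sum_{j=0}^{m-1} \GRS{r}{\xv^{q^j}}{\yv^{q^j}}\right)}$. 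Writing $\GC_j \eqdef \GRS{r}{\xv^{q^j}}{\yv^{q^j}}$, bilinearity of $\star$ gives
\[
\sq{\left(\sum_{j=0}^{m-1}\GC_j\right)} = \sum_{0 \le j \le \ell \le m-1} \starp{\GC_j}{\GC_\ell}.
\]
The diagonal terms $\sq{\GC_j} = \GRS{2r-1}{\xv^{q^j}}{(\yv^{q^j})^{\star 2}}$ each have dimension $2r-1$ (assuming $2r-1\le n$) by Proposition~\ref{pr: square_GRS}. The cross terms $\starp{\GC_j}{\GC_\ell}$ are of the form $\Span{\xv^{aq^j + cq^\ell}\yv^{q^j+q^\ell} \mid 0\le a,c\le r-1}{\Fqm}$, a space spanned by "monomials" $\xv^e \yv^{q^j+q^\ell}$ with $e$ in the sumset $q^j\Iintv{0}{r-1} + q^\ell\Iintv{0}{r-1}$; this has dimension at most the cardinality of that sumset.

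The naive bound one gets by summing $\dim \starp{\GC_j}{\GC_\ell}$ over all pairs is essentially $\binom{rm+1}{2}$ (the "generic" value $\binom{\dim \AC^\perp + 1}{2}$ up to the $n$ truncation). The whole content of the theorem is the \emph{saving} term — the quantity subtracted — which comes from \emph{collisions and linear dependencies among these monomial generators across different pairs $(j,\ell)$}. Concretely, the monomial $\yv \xv^a$-type vectors live in a single GRS code and satisfy the quadratic relations $\starp{(\yv\xv^a)}{(\yv\xv^c)} = \sqb{\yv\xv^b}$ whenever $b = (a+c)/2$; more importantly, across Frobenius twists one has identities like $\left(\yv\xv^a\right)^{q^j}$ coinciding with $\yv^{q^j}(\xv^{q^j})^a$, so that the various products $\starp{\GC_j}{\GC_\ell}$ overlap substantially inside the ambient monomial span $\Span{\xv^e \yv^f}{}$. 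Counting the dimension of the total span $\sum_{j\le\ell}\starp{\GC_j}{\GC_\ell}$ amounts to counting how many \emph{distinct} pairs $(e, f)$ with $f = q^j + q^\ell$ actually arise and how many $\Fqm$-linear dependencies hold among the corresponding vectors — and the parameter $e_{\AC} = \floor{\log_q(r-1)}$ is exactly the threshold that governs when twisted ranges $q^i\Iintv{0}{r-1}$ start to overlap with untwisted ones (i.e. when $r \ge q^i+1$), which is why it appears in the formula. The Goppa case is the same analysis with the extra arithmetic input that $\yv = 1/\Gamma(\xv)$ is itself a "low-degree" vector, giving additional collisions — and the split at $r < q-1$ versus $r \ge q-1$ reflects when the degree-$r$ Goppa polynomial can or cannot be absorbed, changing $e_{\GC}$.

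So the plan is: (1) reduce to $\Fqm$ and expand the square as a sum of $\starp{\GC_j}{\GC_\ell}$; (2) realize every summand as a subspace of the monomial space $\Span{\xv^e\yv^f}{\Fqm}$ and reduce the dimension estimate to a combinatorial count of the union of index sets $\bigcup_{j\le\ell}\bigl(q^j\Iintv{0}{r-1}+q^\ell\Iintv{0}{r-1}\bigr)$ paired with the exponent $q^j+q^\ell$ on $\yv$, \emph{minus} the number of provable linear relations; (3) carry out this count carefully, isolating the contribution of overlaps controlled by $e_{\AC}$ (resp. $e_{\GC}$), and (4) in the Goppa case feed in the extra structure $\yv = 1/\Gamma(\xv)$ and split according to $\deg\Gamma$ versus $q-1$. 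I expect the decisive obstacle to be step (3): proving the count is \emph{tight enough}, i.e. exhibiting sufficiently many genuinely independent monomials (a lower bound on the rank of the relevant Vandermonde-like generator matrix, which requires the $x_i$ to be distinct and enough of them, hence the $n > \cdots$ hypothesis) while simultaneously not \emph{over}counting — the bookkeeping of which exponent pairs coincide under the Frobenius action and how the arithmetic progressions $q^j\Iintv{0}{r-1}$ tile or overlap is where all the real work, and all the case distinctions on $e_{\AC}, e_{\GC}$, reside. Since the precise combinatorial identities here are exactly the technical heart of \cite{MT22}, I would at this point either invoke that reference or reproduce its monomial-counting lemmas verbatim rather than re-derive them.
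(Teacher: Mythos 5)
First, a point of reference: the paper does not prove this theorem at all --- it is imported verbatim from \cite{MT22} --- so there is no internal proof to compare yours against. Your outline does correctly reproduce the strategy of that source: pass to $\Fqm$ (where in fact $\dim_{\Fq}\sq{(\AC^\perp)}=\dim_{\Fqm}\sq{\bigl((\AC^\perp)_{\Fqm}\bigr)}$ with equality, since the Schur square commutes with extension of scalars), use Proposition~\ref{prop:dual_alt_fqm} to write the extended dual as $\sum_j \GC_j$ with $\GC_j=\GRS{r}{\xv^{q^j}}{\yv^{q^j}}$, expand the square bilinearly into the $\binom{m+1}{2}$ products $\starp{\GC_j}{\GC_\ell}$, bound the diagonal terms by $2r-1$ via Proposition~\ref{pr: square_GRS}, and attribute the deficit relative to $\binom{rm+1}{2}$ to collisions among the monomials $\xv^{aq^j+cq^\ell}\yv^{q^j+q^\ell}$. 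That is the right skeleton, and your identification of $e_{\AC}=\floor{\log_q(r-1)}$ as the threshold at which the twisted exponent ranges $q^i\Iintv{0}{r-1}$ begin to interact is also the right intuition.

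However, as a proof the proposal has a genuine gap, and you name it yourself: step (3) is never carried out, and step (3) is the entire content of the theorem. The statement is not ``the dimension is at most $\binom{rm+1}{2}$ minus something nonnegative''; it is a specific closed-form deficit, $\frac{m}{2}(r-1)\bigl((2e_{\AC}+1)r-2\frac{q^{e_{\AC}+1}-1}{q-1}\bigr)$ in the alternant case and the analogous expressions governed by $e_{\GC}$ in the Goppa case. Producing these requires (i) an exact count, for each pair $(j,\ell)$ with $\ell-j$ in the range controlled by $e_{\AC}$, of the number of independent linear relations inside $\starp{\GC_j}{\GC_\ell}$ (arising from coincidences $aq^j+bq^\ell=cq^j+dq^\ell$ and, in the Goppa case, from the extra identity $\Goppa{\xv}{\Gamma}=\Alt{2r}{\xv}{\yv^2}$ for square-free $\Gamma$, which is what produces the separate regime $r<q-1$), and (ii) a justification that the surviving monomials contribute independently, which is where the relation between $n$ and the bound matters. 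Neither count appears in your write-up, and without them none of the specific features of the formula --- the factor $\frac{m}{2}$, the term $(2e_{\AC}+1)r$, the geometric sum $\frac{q^{e_{\AC}+1}-1}{q-1}$, or the Goppa case split --- can be recovered. Deferring to \cite{MT22} for exactly this step is a legitimate citation practice, but it means your text is an accurate account of where the theorem comes from rather than a proof of it.
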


 \section{Invariants of the Matrix Code of Quadratic Relations}
\label{sec:quadratic_form}

\subsection{Changing the basis}
The fundamental objects that we have introduced, namely the code of relations $\Crel(\cV)$ and the corresponding matrix code $\Cmat(\cV)$ both depend on the basis $\cV$ which is chosen. 
However, all these matrix codes are isometric for the rank metric, namely the metric $d$ between matrices given by
\[
d(\Xm, \Ym)\eqdef \rank(\Xm-\Ym).
\]
This holds because of the following result:
\begin{restatable}{proposition}{propcongr}  \label{prop: congr_odd}
Let $\cA$ and $\cB$ be two bases of a same $[n,k]$ $\F$-linear code $\CC$, with $\F$. Then $\Cmat(\cA)$ and $\Cmat(\cB)$ are isometric matrix codes, \ie there exists $\Pm\in \GL_k(\F)$ such that
\begin{equation}
\Cmat(\cA)=\trsp{\Pm}\Cmat(\cB) \Pm.
\end{equation}
The matrix $\Pm$ coincides with the change of basis matrix between $\cA$ and $\cB$.
\end{restatable}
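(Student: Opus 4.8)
The plan is to reduce the stated congruence $\Cmat(\cA)=\trsp{\Pm}\Cmat(\cB)\Pm$ to a linear change of variables in the quadratic forms underlying the two matrix codes. Write $\cA=\{\av_1,\dots,\av_k\}$ and $\cB=\{\bv_1,\dots,\bv_k\}$, let $\Gm_\cA,\Gm_\cB\in\F^{k\times n}$ be the generator matrices whose rows are the corresponding basis vectors, and let $\Pm\in\GL_k(\F)$ be the change of basis matrix, i.e. $\bv_i=\sum_{j}p_{i,j}\av_j$, so that $\Gm_\cB=\Pm\Gm_\cA$. For a basis $\cV$ with generator matrix $\Gm_\cV$, let $\word{g}^{(\cV)}_1,\dots,\word{g}^{(\cV)}_n\in\F^k$ be its columns. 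Throughout I identify an element $\cv=(c_{i,j})_{i\le j}$ with its quadratic form $Q_\cv$ (legitimate, since the monomials $x_ix_j$, $i\le j$, are $\F$-linearly independent), and $\Mm_\cv$ always denotes the matrix of the polar form of $Q_\cv$ as fixed by \eqref{eq:polarform}, which is defined in every characteristic.

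First I would record the elementary identity that the $\ell$-th coordinate of $\sum_{i\le j}c_{i,j}\,\starp{\vv_i}{\vv_j}$ equals $Q_\cv(\word{g}^{(\cV)}_\ell)$, since $(\word{g}^{(\cV)}_\ell)_i=(\vv_i)_\ell$. By Definition~\ref{def: crel} this yields the characterisation
\[
\cv\in\Crel(\cV)\iff Q_\cv\ \text{vanishes on}\ \word{g}^{(\cV)}_1,\dots,\word{g}^{(\cV)}_n .
\]
Next, from $\Gm_\cB=\Pm\Gm_\cA$ one reads off column-wise $\word{g}^{(\cB)}_\ell=\Pm\,\word{g}^{(\cA)}_\ell$, hence, passing to row vectors, $\trsp{(\word{g}^{(\cA)}_\ell)}=\trsp{(\word{g}^{(\cB)}_\ell)}\,\trsp{(\Pm^{-1})}$. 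Consequently, for $\cv\in\Crel(\cA)$ the quadratic form $\word{x}\mapsto Q_\cv(\word{x}\,\trsp{(\Pm^{-1})})$ vanishes on all $\trsp{(\word{g}^{(\cB)}_\ell)}$; writing $\dv$ for the element with $Q_\dv(\word{x})=Q_\cv(\word{x}\,\trsp{(\Pm^{-1})})$, the characterisation gives $\dv\in\Crel(\cB)$. This defines a linear map $\Crel(\cA)\to\Crel(\cB)$, $\cv\mapsto\dv$, whose inverse is constructed identically with $\Pm$ replacing $\Pm^{-1}$; hence it is a bijection.

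It then remains to transport this bijection to the matrix level via the polar form. Fix $\word{x},\word{y}\in\F^k$ and put $\word{x}':=\word{x}\,\trsp{(\Pm^{-1})}$, $\word{y}':=\word{y}\,\trsp{(\Pm^{-1})}$. Substituting $Q_\dv(\word{z})=Q_\cv(\word{z}\,\trsp{(\Pm^{-1})})$ into \eqref{eq:polarform} and using bilinearity of the polar form gives
\[
\word{x}\,\Mm_\dv\,\trsp{\word{y}}=Q_\cv(\word{x}'+\word{y}')-Q_\cv(\word{x}')-Q_\cv(\word{y}')=\word{x}'\,\Mm_\cv\,\trsp{\word{y}'}=\word{x}\,\trsp{(\Pm^{-1})}\,\Mm_\cv\,\Pm^{-1}\,\trsp{\word{y}},
\]
and since this holds for all $\word{x},\word{y}$ we obtain $\Mm_\dv=\trsp{(\Pm^{-1})}\,\Mm_\cv\,\Pm^{-1}$. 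As $\cv\mapsto\dv$ is onto, $\Cmat(\cB)=\{\Mm_\dv\mid\dv\in\Crel(\cB)\}=\trsp{(\Pm^{-1})}\,\Cmat(\cA)\,\Pm^{-1}$, which rearranges into $\Cmat(\cA)=\trsp{\Pm}\,\Cmat(\cB)\,\Pm$. Finally, since $\rank$ is unchanged by multiplication by the invertible matrices $\trsp{\Pm}$ and $\Pm$, the map $\Xm\mapsto\trsp{(\Pm^{-1})}\,\Xm\,\Pm^{-1}$ is an isometry of $\Cmat(\cA)$ onto $\Cmat(\cB)$ for the rank metric, which is the claim.

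The step needing the most care is keeping the argument characteristic-free. The tempting shortcut ``$Q_\cv(\word{v})=\tfrac12\,\word{v}\,\Mm_\cv\,\trsp{\word{v}}$, so changing basis conjugates the $\Mm_\cv$'s'' collapses in characteristic $2$, where $\word{v}\,\Mm_\cv\,\trsp{\word{v}}\equiv 0$; this is why one must run the bijection at the level of the coefficient vectors $\cv$ (equivalently the forms $Q_\cv$) and only then translate through \eqref{eq:polarform}, which is bilinear — hence congruence-compatible — in every characteristic. A secondary subtlety is that in characteristic $2$ the assignment $\cv\mapsto\Mm_\cv$ forgets the diagonal of $\cv$ and is therefore not injective, but this is harmless: the conclusion is an identity of the matrix codes as \emph{sets}, obtained by taking images of the bijection $\cv\mapsto\dv$. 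The only purely mechanical point is the transpose bookkeeping when turning the row-wise relation $\Gm_\cB=\Pm\Gm_\cA$ into the column-wise relation between evaluation vectors.
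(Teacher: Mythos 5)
Your proof is correct, and it reaches the paper's conclusion by a somewhat different organization of the same underlying substitution idea. The paper's proof works directly with the Schur-product relation: it takes $\cv\in\Crel(\cB)$, expands $\bv_i=\sum_s p_{i,s}\av_s$ inside $\sum_{i\le j}c_{i,j}\starp{\bv_i}{\bv_j}=0$, reads off explicit coefficients $d_{s,t}$ for an element $\dv\in\Crel(\cA)$, and then verifies entry-wise, using the matrix decomposition $\Mm_{\cv}=\Cm+\trsp{\Cm}$ from Definition~\ref{def: cmat_odd}, that $\Mm_{\dv}=\trsp{\Pm}\Mm_{\cv}\Pm$; equality of the two matrix codes then follows from the inclusion plus invertibility of $\Pm$. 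You instead first characterize $\Crel(\cV)$ as the space of quadratic forms vanishing at the columns of the generator matrix, so that the change of basis becomes a linear substitution of variables $Q_{\cv}\mapsto Q_{\cv}(\cdot\,\trsp{(\Pm^{-1})})$, and you obtain the congruence $\Mm_{\dv}=\trsp{(\Pm^{-1})}\Mm_{\cv}\Pm^{-1}$ purely from the defining identity \eqref{eq:polarform} together with the uniqueness of the matrix of a bilinear form, with no coefficient bookkeeping; the explicit inverse substitution also gives the two-sided equality directly rather than via the inclusion-plus-invertibility step. Both arguments are characteristic-free; what your route buys is conceptual economy (the evaluation characterization and the congruence-equivariance of polarization do all the work), while the paper's buys a completely explicit description of the image relation $\dv$, which is occasionally convenient elsewhere. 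Your closing remarks on why the naive $\tfrac12\,\vv\Mm\trsp{\vv}$ shortcut fails in characteristic $2$, and why non-injectivity of $\cv\mapsto\Mm_{\cv}$ there is harmless, are exactly the right cautions and match the paper's intent.
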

This Proposition is proved in Appendix~\ref{sec:app-quadratic}.
This result implies that there are several fundamental quantities which stay invariant when considering different bases, such as for instance
\begin{itemize}
\item the distribution of ranks $\{n_i, 0 \leq i \leq k\}$ where $n_i$ is the number of matrices in $\Cmat(\cV)$ of rank $i$;
\item the dimension of $\Cmat(\cV)$.
\end{itemize}

We will sometime avoid specifying the basis, and simply write $\Cmat$, when referring to invariants for the code.

\subsection{Dimension}
We can be a little bit more specific concerning the dimension.
In general, two different bases of a same code provide different codes of relations. The corresponding dimension, instead, is an invariant:
\begin{proposition}\label{prop:dimension}
	Let $\CC\subseteq \F^n$ be an $[n,k]$ linear code with ordered basis $\cV$. Then
\begin{eqnarray*}
\dim_{\F} \Crel(\cV)&=&\binom{k+1}{2}-\dim_{\F} \sq{\CC} \\
\dim_{\F} \Cmat(\cV)& =&\dim_{\F} \Crel(\cV).\end{eqnarray*}
\end{proposition}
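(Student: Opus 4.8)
The plan is to prove the two equalities separately, starting with the first. Consider the $\F$-linear evaluation-type map
\[
\varphi : \F^{\binom{k+1}{2}} \longrightarrow \F^n, \qquad \cv = (c_{i,j})_{1\le i\le j\le k} \longmapsto \sum_{i\le j} c_{i,j}\, \starp{\vv_i}{\vv_j}.
\]
By the very definition of $\Crel(\cV)$ (Definition~\ref{def: crel}), we have $\Crel(\cV) = \ker \varphi$. On the other hand, the image of $\varphi$ is exactly the span of all products $\starp{\vv_i}{\vv_j}$ with $i\le j$; since $\cV$ is a basis of $\CC$, bilinearity and symmetry of the Schur product show that these products span $\sq{\CC}$, so $\im \varphi = \sq{\CC}$. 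The rank–nullity theorem then gives
\[
\binom{k+1}{2} = \dim_{\F}\ker\varphi + \dim_{\F}\im\varphi = \dim_{\F}\Crel(\cV) + \dim_{\F}\sq{\CC},
\]
which is the first identity. The one point to be careful about here is that the $\binom{k+1}{2}$ products $\starp{\vv_i}{\vv_j}$ need not be linearly independent — but that is precisely the content captured by $\dim_{\F}\sq{\CC} = \dim_{\F}\im\varphi$, so no independence assumption is needed.

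For the second identity, I would argue that the assignment $\cv \mapsto \Mm_{\cv}$ of Definition~\ref{def: cmat_odd} is an $\F$-linear bijection from $\Crel(\cV)$ onto $\Cmat(\cV)$. Linearity is immediate from the formula defining the entries of $\Mm_{\cv}$ (each $m_{i,j}$ is an $\F$-linear function of the $c_{i,j}$). Surjectivity onto $\Cmat(\cV)$ holds by definition of $\Cmat(\cV)$ as the image of this map. For injectivity one inverts the map: off-diagonal entries recover $c_{i,j} = m_{i,j}$ for $i<j$ directly, and the diagonal is recovered via $c_{i,i} = m_{i,i}/2$. This recovery step works verbatim in odd characteristic. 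In characteristic $2$, where the matrices are skew-symmetric with zero diagonal, one should instead recall (as the paper does, via the polar form \eqref{eq:polarform} and the convention that the relevant matrix code lives in $\Skew(k,\F)$) that the appropriate matrix model in that case is the one attached to the polar form, and the same counting — off-diagonal entries of $\cv$ are in bijection with off-diagonal entries of $\Mm_{\cv}$, the diagonal entries of $\cv$ contributing squares which behave linearly — still yields a bijection between $\Crel(\cV)$ and $\Cmat(\cV)$. Either way, a linear bijection preserves dimension, giving $\dim_{\F}\Cmat(\cV) = \dim_{\F}\Crel(\cV)$.

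The only genuine subtlety — and the step I would flag as the main obstacle — is making the second part fully uniform across characteristics: the displayed Definition~\ref{def: cmat_odd} uses the factor $2$ on the diagonal, which is invertible only in odd characteristic, whereas in characteristic $2$ the paper quietly switches to the polar-form model with zero diagonal. So the clean statement is that in both cases $\cv \mapsto \Mm_{\cv}$ is an injective $\F$-linear map with image $\Cmat(\cV)$; the proof must just check injectivity with the correct model in each characteristic, which amounts to the elementary observation that the data of $\cv\in\Crel(\cV)$ and of $\Mm_{\cv}$ carry the same number of free $\F$-coordinates. Everything else is routine linear algebra, and the first identity is a one-line rank–nullity argument once $\im\varphi = \sq{\CC}$ is noted.
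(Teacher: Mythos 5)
Your first identity and the odd-characteristic half of the second are exactly the paper's argument: your map $\varphi$ is the paper's map $T$, rank--nullity gives the first formula, and in odd characteristic $\cv\mapsto\Mm_{\cv}$ is injective because $c_{i,j}=m_{i,j}$ and $c_{i,i}=m_{i,i}/2$ can be read off. (One side remark: there is no ``quiet switch'' of models in the paper; Definition~\ref{def: cmat_odd} with $m_{i,i}=2c_{i,i}$ already yields zero-diagonal, skew-symmetric matrices when the characteristic is $2$, consistently with the polar form \eqref{eq:polarform}.)

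The gap is the characteristic-$2$ case, which is the only nontrivial point of the proposition and which you settle by assertion. In characteristic $2$ the diagonal of $\Mm_{\cv}$ is $2c_{i,i}=0$, so the map $\cv\mapsto\Mm_{\cv}$ from $\F^{\binom{k+1}{2}}$ to $\Skew(k,\F)$ kills the $k$ diagonal coordinates outright; in particular your closing claim that injectivity ``amounts to the elementary observation that the data of $\cv\in\Crel(\cV)$ and of $\Mm_{\cv}$ carry the same number of free $\F$-coordinates'' is not an argument (the ambient spaces have dimensions $\binom{k+1}{2}$ and $\binom{k}{2}$, which differ by $k$). What must be shown is that the $k$-dimensional kernel of this coordinate map meets $\Crel(\cV)$ trivially, i.e.\ that there is no nonzero purely diagonal relation $\sum_{i=1}^{k} c_{i,i}\,\starp{\vv_i}{\vv_i}=0$. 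This is where the paper's actual proof lives: in characteristic $2$ one has $\starp{\vv_i}{\vv_i}=\vv_i^{(2)}$, the componentwise Frobenius image of $\vv_i$, and since the Frobenius is additive and injective,
\[
0=\sum_{i=1}^{k} c_{i,i}\,\vv_i^{(2)}=\Bigl(\sum_{i=1}^{k} c_{i,i}^{1/2}\,\vv_i\Bigr)^{(2)}
\quad\Longrightarrow\quad
\sum_{i=1}^{k} c_{i,i}^{1/2}\,\vv_i=0,
\]
which forces $c_{i,i}=0$ for all $i$ because the $\vv_i$ form a basis. Your phrase about the diagonal entries ``contributing squares which behave linearly'' gestures at this but never draws the conclusion; without the Frobenius-plus-linear-independence step, injectivity on $\Crel(\cV)$ in characteristic $2$ — and hence the equality $\dim_{\F}\Cmat(\cV)=\dim_{\F}\Crel(\cV)$ in that case — is unproved.
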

\begin{proof}
	The first point directly follows by applying the rank-nullity theorem with respect to the map $T\colon \F^{\binom{k+1}{2}}\rightarrow \F^n, \; T(\cv)=\sum_{i \le j} c_{i,j} \starp{\vv_i}{\vv_j}$:
	\[
      \binom{k+1}{2}=\dim_{\F}
      \F^{\binom{k+1}{2}}=\dim_{\F}\im(T)+\dim_{\F}\ker(T)=\dim_{\F}
      \sq{\CC}+\dim_{\F} \Crel(\cV).\]
    For the second point,
    consider the linear map 
    \[
        \map{\Crel(\cV)}{\Cmat(\cV)}{\cv}{\Mm_{\cv},}
      \]
      where $\Mm_{\cv}$ is defined in Definition~\ref{def: cmat_odd}.
      Note that $\Cmat(\cV)$ is defined as the image of the above
      map, hence the map is surjective by design.
      Let us prove that it is injective.
      In odd characteristic, it is straightforward to see that
      the kernel of this map is zero. 
      In even characteristic the kernel of this map is composed of
      ``diagonal'' relations, {\em i.e.}, relations of the form
      \begin{equation}\label{eq:relation_vi}
        \sum_{i=1}^k c_{i,i} \vv_i \star \vv_i = 0.
      \end{equation}
      Note that writing $\vv_i = (v_{i1}, \dots, v_{in})$ we have
      $\vv_i \star \vv_i = (v_{i1}^2, \dots, v_{in}^2)$ which is
      nothing but the vector $\vv_i^{(2)}$ obtained by applying the
      componentwise Frobenius map on the entries of $\vv_i$. Next, by
      the additivity of the Frobenius map,
      relation~(\ref{eq:relation_vi}) becomes
      \[
        \left(\sum_{i=1}^n c_{i,i}^{1/2} \vv_i \right)^{(2)} = 0
      \quad \text{and\ hence,} \quad
        \sum_{i=1}^n c_{i,i}^{1/2} \vv_i  = 0.
      \]
      The latter identity is a linear relation between the $\vv_i$'s
      which form a basis of $\cV$, hence, we deduce that $c_{i,i} = 0$
      for all $i$.  Thus, the kernel of the map is also zero in
      characteristic 2.  \qed
\end{proof}

 \section{Low-rank matrices in $\Cmat$}
\label{sec:low_rank}

\subsection{Low-rank matrices from quadratic relations in \cite{FGOPT13}}
\label{ss:low-rank}
By Proposition \ref{prop: congr_odd}, all the matrix codes
$\Cmat(\cB)$ are isometric for any choice of basis $\cB$. We will be
interested here in showing that the matrix code of quadratic relations
associated to the extension over $\Fqm$ of the dual of an alternant
code $\Alt{r}{\xv}{\yv}$ defined over $\Fq$ contains many low rank
matrices. This is due to the fact that this code contains the GRS
codes $\GRS{r}{\xv^{q^{i}}}{\yv^{q^{i}}}$ for all
$i \in \Iintv{0}{m-1}$ (Proposition \ref{prop:dual_alt_fqm}). This
will be clear if we choose the basis appropriately. We can namely
choose the ordered basis
\begin{equation}\label{eq:basis_choice}
\cA=(\yv,\xv\yv,\dots,\xv^{r-1}\yv,\dots,\yv^{q^{m-1}},(\xv\yv)^{q^{m-1}},\dots,(\xv^{r-1}\yv)^{q^{m-1}}).
\end{equation}
We call this the {\em canonical basis}. It will be convenient to 
denote the $r$ first basis elements by $\av_0\eqdef \yv$, $\av_1\eqdef \xv\yv,\dots,\av_{r-1}\eqdef \xv^{r-1}\yv$ and view the basis as
$$
\cA=(\av_0,\cdots,\av_{r-1},\av_0^q,\cdots,\av_{r-1}^{q},\cdots,\av_0^{q^{m-1}},\cdots,\av_{r-1}^{q^{m-1}}).
$$
There are simple quadratic relations between the $\av_i^{q^j}$ owing to the trivial algebraic relations introduced in \cite{FGOPT13}:
$\starp{(\xv^a \yv)^{q^l}}{(\xv^b \yv)^{q^u}}=\starp{(\xv^c \yv)^{q^l}}{(\xv^d \yv)^{q^u}}$ if $aq^l+bq^u=cq^l+dq^u$. This amounts to the 
quadratic relation between the basis elements
\begin{equation} \label{eq: alt_rel}
\starp{\av_a^{q^l}}{\av_b^{q^u}}-\starp{\av_c^{q^l}}{\av_d^{q^u}}=0.
\end{equation}
It is readily seen that matrix of $\Cmat(\cB)$ corresponding to this
quadratic relation is of rank $4$ with the exception of the case $c=d$
and $l=u$ where it is of rank $3$ (odd characteristic) or rank $2$
(characteristic $2$). Indeed, if we reorder the basis $\cB$ such that
it starts with $\av_a^{q^l}$, $\av_b^{q^l}$, $\av_c^{q^l}$, then it is
readily seen that the matrix $\Mm \in \Cmat(\cB)$ corresponding to
\eqref{eq: alt_rel} has only zeros with the exception of the first
$3\times 3$ block $\Mm'$ which is given by
$$
\Mm' = \begin{bmatrix} 0 & 1 & 0\\1 & 0 & 0 \\ 0 & 0 & -2 \end{bmatrix}\;\;\text{(odd characteristic),} \quad
\Mm' = \begin{bmatrix} 0 & 1 & 0\\1 & 0 & 0 \\ 0 & 0 & 0 \end{bmatrix}\;\;\text{(characteristic $2$).}
$$
This leads to the following fact
\begin{fact}\label{fact:simple}
Consider the alternant code $\Alt{r}{\xv}{\yv}$ of extension degree $m$ and let $\Cmat(\cA)$ be the corresponding matrix code associated to the basis choice \eqref{eq:basis_choice}. 
Let $l \in \Iintv{0}{m-1}$ and $a,b,c$ in $\Iintv{0}{r-1}$ be such that $a+b=2c$. Then the matrix of $\Cmat(\cA)$ corresponding to 
the quadratic relation $\starp{\av_a^{q^l}}{\av_b^{q^l}}-\sqb{\av_c^{q^l}}=0$ is of rank $3$ in odd characteristic and of rank $2$ in characteristic $2$.
\end{fact}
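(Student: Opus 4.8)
The plan is to write down explicitly the element of $\Crel(\cA)$ that encodes the relation $\starp{\av_a^{q^l}}{\av_b^{q^l}}-\sqb{\av_c^{q^l}}=0$, to pass to its image matrix in $\Cmat(\cA)$ through Definition~\ref{def: cmat_odd}, and then to read off the rank separately in odd characteristic and in characteristic $2$. First I would check that the displayed identity genuinely is a relation, \ie that it lies in $\Crel(\cA)$: writing $\av_s^{q^l}=\xv^{sq^l}\yv^{q^l}$ componentwise gives $\starp{\av_a^{q^l}}{\av_b^{q^l}}=\xv^{(a+b)q^l}\yv^{2q^l}$ and $\sqb{\av_c^{q^l}}=\xv^{2cq^l}\yv^{2q^l}$, so the hypothesis $a+b=2c$ makes these two vectors of $\Fqm^n$ equal. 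Since $a,b,c\in\Iintv{0}{r-1}$ and $l\in\Iintv{0}{m-1}$, the vectors $\av_a^{q^l}$, $\av_b^{q^l}$, $\av_c^{q^l}$ all occur among the elements of the basis $\cA$ of \eqref{eq:basis_choice}; write $i,j,\ell$ for their respective positions. Discarding the degenerate case $a=b$ (in which the relation is the trivial $0=0$), we have $a\neq b$, hence $i\neq j$, and $a+b=2c$ together with $a\neq b$ forces $c\notin\{a,b\}$, so $i,j,\ell$ are pairwise distinct. Thus the relation corresponds to the vector $\cv\in\Crel(\cA)$ whose only nonzero coordinates are $c_{i,j}=1$ and $c_{\ell,\ell}=-1$, equivalently to the quadratic form $Q_{\cv}=x_ix_j-x_\ell^2$.

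Next I would compute $\Mm_{\cv}\in\Cmat(\cA)$. By Definition~\ref{def: cmat_odd}, or equivalently by computing the polar form via \eqref{eq:polarform}, the matrix $\Mm_{\cv}$ has $m_{i,j}=m_{j,i}=1$ and $m_{\ell,\ell}=2c_{\ell,\ell}=-2$, all remaining entries being zero, so all its nonzero entries lie in the three distinct rows and columns indexed by $i,j,\ell$. Following the phrasing of the discussion preceding the statement, one reorders $\cA$ so that $\av_a^{q^l},\av_b^{q^l},\av_c^{q^l}$ come first; this replaces $\Cmat(\cA)$ by a matrix code congruent to it via a permutation matrix, hence, by Proposition~\ref{prop: congr_odd}, preserves the rank, and turns $\Mm_{\cv}$ into a block matrix all of whose entries vanish except for its leading $3\times 3$ block
\[
\Mm' = \begin{bmatrix} 0 & 1 & 0 \\ 1 & 0 & 0 \\ 0 & 0 & -2 \end{bmatrix}.
\]
Hence $\rank\Mm_{\cv}=\rank\Mm'$. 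In odd characteristic $\det\Mm'=2\neq 0$, so $\Mm'$ is invertible and $\rank\Mm_{\cv}=3$. In characteristic $2$ the entry $-2$ vanishes, $\Mm'$ becomes the rank-$2$ matrix obtained by setting that entry to $0$, and therefore $\rank\Mm_{\cv}=2$; this is consistent with $\Cmat(\cA)\subseteq\Skew(rm,\Fqm)$, where the doubling $m_{\ell,\ell}=2c_{\ell,\ell}$ in Definition~\ref{def: cmat_odd} forces the diagonal entry to be $0$.

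I do not expect a genuine obstacle here: once the explicit relation and its matrix are in hand, the statement is an elementary rank computation resting on the monomial identity $a+b=2c$. The only two points that require a little care are (i) checking that the three basis positions $i,j,\ell$ are pairwise distinct, which is exactly why the trivial case $a=b$ must be excluded, and (ii) keeping track of the convention $m_{\ell,\ell}=2c_{\ell,\ell}$ in Definition~\ref{def: cmat_odd}, which is precisely the reason the rank drops from $3$ to $2$ upon passing to characteristic $2$.
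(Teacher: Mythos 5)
Your proof is correct and follows essentially the same route as the paper: the discussion preceding the fact likewise verifies the relation from $a+b=2c$, reorders the basis so that $\av_a^{q^l},\av_b^{q^l},\av_c^{q^l}$ come first, and reads the rank off the same $3\times 3$ block $\Mm'$, whose diagonal entry $-2$ vanishes in characteristic $2$. Your explicit exclusion of the degenerate case $a=b$ (where the relation is trivial) is a small extra precision not spelled out in the paper.
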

This already shows that there are many rank $2$ or $3$ matrices in
$\Cmat$ corresponding to an alternant code. But it will turn out some
subsets of the set of rank $\leq 2$ matrices of $\Cmat$ form a
vector space of matrices. Moreover, depending on the fact that the
alternant code has a Goppa structure we will have even more low rank
matrices as we show below. We namely have in characteristic $2$
\begin{restatable}{proposition}{propranktwosubspace} \label{prop: rank2_subspaces}
	Let $\Alt{r}{\xv}{\yv}$ be an alternant code of extension degree $m$ and order $r$ over a field of characteristic $2$. Then $\Cmat$ contains $\floor{\frac{r-1}{2}}$-dimensional subspaces of rank-($\le 2$) matrices. If $\Alt{r}{\xv}{\yv}$ is a binary Goppa code with a square-free Goppa polynomial, then $\Cmat$ contains $(r-1)$-dimensional subspaces of rank-($\le 2$) matrices.
\end{restatable}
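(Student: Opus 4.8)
The strategy is to build explicit subspaces of $\Crel(\cA)$ (for the canonical basis $\cA$) all of whose nonzero elements give rank-$\le 2$ matrices, and to count their dimension. By Proposition~\ref{prop: congr_odd}, working with one convenient basis suffices, since the rank distribution is a basis-independent invariant. The key observation is that a skew-symmetric matrix $\Mm_{\cv}$ has rank $\le 2$ exactly when the associated polar form $\xv\Mm_{\cv}\trsp{\yv}$ factors as $(\ell_1(\xv)\ell_2(\yv)+\ell_2(\xv)\ell_1(\yv))$ for two linear forms $\ell_1,\ell_2$; equivalently, in terms of codewords, the quadratic relation is of the shape $\starp{\bv}{\cv}=\sq{\dv}$ for some codewords $\bv,\cv,\dv$ in $\CC=(\Alt{r}{\xv}{\yv}^\perp)_{\Fqm}$, since in characteristic $2$ this produces the polar form $\xv\mapsto \bv\star\cv$ evaluated bilinearly, whose matrix in the coordinates picking out the support of $\bv,\cv$ is exactly the rank-$2$ block displayed above Fact~\ref{fact:simple}. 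So the problem reduces to: find a large family of codewords of $\CC$ whose pairwise (or indexed) Schur products are simultaneously squares, in a linearly parametrized way.

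For the alternant case I would fix $l=0$ and look at relations of the form $\starp{\av_a}{\av_b}-\sq{\av_c}=0$ with $a+b=2c$, $a,b,c\in\Iintv{0}{r-1}$, which exist by Fact~\ref{fact:simple}. The natural guess for a subspace is to span the matrices coming from $\starp{\av_{c-t}}{\av_{c+t}}=\sq{\av_c}$ as $t$ varies, but these individually have fixed "center" $\av_c$; instead one should take, for a fixed parity class, the span over a sliding window. Concretely, consider the $\Fqm$-span of $\{\Mm_{(a,b)} : a+b \text{ fixed even value } 2c\}$ — for a single $c$ these already form a space because $\starp{\av_a}{\av_b}$ depends only on $a+b$, so any linear combination $\sum_t \lambda_t(\starp{\av_{c-t}}{\av_{c+t}}-\sq{\av_c})$ is again $\left(\sum_t\lambda_t\right)(\starp{\text{something}}{\cdot})$-type... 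The cleaner route: the polar forms here all lie in the pencil spanned by $\xv\mapsto \av_i\star\av_j$ with $i+j$ in a fixed residue, and inside the vector space $\Fqmspan{\av_0\star\av_j : j}$ every nonzero element is $\starp{\av_0}{(\text{a GRS codeword})}$, hence a product of the distinguished codeword $\av_0=\yv$ with another, which gives rank $\le 2$ whenever that other codeword is itself a square — and $\av_0\star\av_{2c}=\sq{(\xv^c\yv)}$ is. Counting admissible indices $c$ with $2c\le r-1$ and $c\ge 0$, one gets $\floor{\frac{r-1}{2}}$ independent such matrices, and one must check (via the explicit support structure of the $\av_i$) that they are linearly independent in $\Cmat$.

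For the binary Goppa case with square-free $\Gamma$ of degree $r$, the gain comes from the classical fact (used already in \cite{COT14,BMT23}) that a binary Goppa code with square-free Goppa polynomial of degree $r$ equals the Goppa code of degree $2r-1$ associated with $\Gamma^2$ — concretely $\Goppa{\xv}{\Gamma}=\Goppa{\xv}{\Gamma^2}$ over $\F_2$, so its dual contains the larger GRS code $\GRS{2r-1}{\xv}{\tilde\yv}$. Re-running the argument with this wider range $a,b,c\in\Iintv{0}{2r-2}$ of exponents gives roughly twice as many square-product relations $\starp{\av_a}{\av_b}=\sq{\av_c}$, and one should obtain $\floor{\frac{2r-2}{2}}=r-1$ independent rank-$\le 2$ matrices. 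The main obstacle, and the part needing genuine care rather than bookkeeping, is the \emph{linear independence} claim: one has many rank-$2$ matrices but must exhibit a subset of the asserted size whose span still consists entirely of rank-$\le 2$ matrices — this is not automatic since a sum of two rank-$2$ skew-symmetric matrices is generically rank $4$. The resolution is to choose the generators so their polar forms share a common linear factor (here the factor coming from $\av_0=\yv$, i.e. all of the form $\starp{\yv}{\wv}$ with $\wv$ ranging over an $\F_{q^m}$-subspace of squares), so that any linear combination is again $\starp{\yv}{(\text{linear combination of squares})}=\starp{\yv}{\sq{\uv}}$ — using additivity of squaring in characteristic $2$ — hence still rank $\le 2$; then independence of the resulting matrices follows from independence of the underlying Schur products $\starp{\yv}{\sq{(\xv^c\yv)}}$, which holds because the $\xv$-exponents $2c$ are distinct and $\yv$ has no zero coordinates. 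I would isolate this "common-factor pencil" lemma first and then apply it twice, once with exponent range $r$ and once (Goppa, via $\Gamma\mapsto\Gamma^2$) with range $2r-1$.
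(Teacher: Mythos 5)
Your proposal is correct and follows essentially the same route as the paper's proof: both arguments span a pencil of relations $\starp{\av_j}{\av_b}=\sq{(\text{codeword})}$ sharing a fixed pivot index $j$, so that every matrix in the span is supported on a single row/column pair (equivalently, the polar forms share the common linear factor $x_j$) and hence has rank $\le 2$, with the square-free Goppa identity $\Goppa{\xv}{\Gamma}=\Goppa{\xv}{\Gamma^2}=\Alt{2r}{\xv}{\yv^2}$ supplying the extra relations needed when $a+b$ is odd. The only blemishes are bookkeeping ones — $\Gamma^2$ has degree $2r$, not $2r-1$; in the Goppa case the indices $a,b$ of the basis elements still range in $\Iintv{0}{r-1}$ (only the exponent $a+b$ of the codeword whose square appears goes up to $2r-2$); and the degenerate center $c=0$ must be excluded from the alternant count — none of which affects the conclusion.
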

This proposition is proved in Appendix \S\ref{ss:app-low-rank}.
We can also give a lower bound on the number of such matrices as shown by

\begin{restatable}{proposition}{propcountlowrankalt}\label{prop:count-low-rank-alt}
Let $\Alt{r}{\xv}{\yv}$ be an alternant code in characteristic $2$ and extension degree $m$. 
The matrix code of quadratic relationships $\Cmat$ contains at least $\Omega(m(q^{m(r-2)})$
matrices of rank 2.
\end{restatable}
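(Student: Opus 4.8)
The plan is to count, up to scalars, the rank-$2$ matrices exhibited concretely via Fact~\ref{fact:simple} together with their images under the basis isometry of Proposition~\ref{prop: congr_odd}, but in fact a cleaner route is to count directly the quadratic relations of the shape $\starp{(\xv^a\yv)^{q^l}}{(\xv^b\yv)^{q^u}}=\starp{(\xv^c\yv)^{q^l}}{(\xv^d\yv)^{q^u}}$ that reduce to rank $2$, and then enlarge this family by a change-of-variable trick. First I would recall from the discussion preceding Fact~\ref{fact:simple} that in characteristic $2$ a quadratic relation between the canonical basis vectors gives a rank-$2$ matrix precisely in the degenerate case $l=u$ and $c=d$, i.e. $\starp{\av_a^{q^l}}{\av_b^{q^l}}=\sqb{\av_c^{q^l}}$ with $a+b=2c$; since we are in characteristic $2$, $a+b=2c$ forces $a=b=c$ over the integers only if we insist $a,b\in\Iintv{0}{r-1}$, so instead the right count comes from observing that squaring is additive: $\sqb{\av_c^{q^l}}$ is itself $\av_c^{q^{l+1}}\cdot(\text{shift})$, which lets many more products collapse.

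The key combinatorial input is the following. Work with the monomials $\xv^i\yv$, $0\le i\le r-1$, raised to Frobenius powers; a product $\starp{(\xv^i\yv)^{q^{l_1}}}{(\xv^j\yv)^{q^{l_2}}}$ depends only on the formal exponent $i\,q^{l_1}+j\,q^{l_2}$ attached to $\xv$ together with the exponent $q^{l_1}+q^{l_2}$ attached to $\yv$. So two such products are equal iff both these integer data agree. Fixing the $\yv$-part means fixing the multiset $\{l_1,l_2\}$; for the $\xv$-part one needs $i\,q^{l_1}+j\,q^{l_2}=i'\,q^{l_1}+j'\,q^{l_2}$. When $l_1=l_2=l$ this is $i+j=i'+j'$, a one-parameter family of collisions, and the matrix attached to the relation $\starp{\av_i^{q^l}}{\av_j^{q^l}}-\starp{\av_{i'}^{q^l}}{\av_{j'}^{q^l}}$ has rank $\le 4$, dropping to rank $2$ exactly when one of the two pairs is "diagonal" in the Frobenius-twisted sense. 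Counting: for each of the $m$ choices of $l$, and for each target sum $s$, the number of ways to write $s$ as an ordered sum from $\Iintv{0}{r-1}$ scales like $r$; this already yields $\Omega(mr^2)$ independent rank-$2$ matrices, but that is far weaker than the claimed $\Omega(m\,q^{m(r-2)})$.

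To reach the exponential bound $\Omega(m\,q^{m(r-2)})$ the plan is to exploit that any $\Fqm$-linear combination of the canonical basis vectors which happens to be expressible as $\starp{\av}{\av'}$ for suitable $\av,\av'$ in the GRS factor produces further rank-$2$ matrices, and such decomposable vectors are abundant: a generic product $\starp{(\yv P(\xv))^{q^l}}{(\yv \widetilde P(\xv))^{q^l}}$ with $\deg P+\deg\widetilde P\le r-1$ lies in the code and, when written in the canonical basis, furnishes a rank-$\le 2$ matrix; the free parameters are the coefficients of $P$ (about $q^{m}$ choices per coefficient, $r-1$ coefficients jointly constrained, whence roughly $q^{m(r-2)}$ genuinely distinct matrices) times the $m$ choices of $l$. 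Concretely I would: (i) fix $l$ and a degree split $d_1+d_2=r-1$; (ii) parametrize pairs $(P,\widetilde P)$ of that bidegree, reduce $\starp{\yv P(\xv)}{\yv\widetilde P(\xv)}$ modulo the canonical basis to read off the corresponding element of $\Crel(\cA)$; (iii) check its matrix has rank $2$ by the same $3\times 3$-block argument as in Fact~\ref{fact:simple} after reordering; (iv) bound below the number of distinct such matrices by a dimension/fiber count — the map $(P,\widetilde P)\mapsto$ matrix has fibers of bounded size, so the image has size $\Omega(q^{m(r-2)})$; (v) multiply by $m$ for the Frobenius index and check these families are essentially disjoint across $l$. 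The main obstacle will be step (iv): ensuring that the $\Omega(q^{m(r-2)})$ parameter choices really give that many \emph{distinct} matrices rather than collapsing — this requires a careful injectivity-up-to-bounded-fiber argument, controlling when $\starp{\yv P}{\yv\widetilde P}$ and $\starp{\yv R}{\yv\widetilde R}$ yield congruent (indeed equal, in a fixed basis) relations, which reduces to a factorization-uniqueness statement for polynomials of degree $<r$ evaluated at $n>\!\!>r$ points and is where the hypothesis on $n$ (implicitly, $n$ large enough that evaluation is injective on low-degree polynomials) gets used.
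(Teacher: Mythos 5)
Your exponential count hinges on the claim in steps (ii)--(iii) that a \emph{generic} product $\starp{(\yv P(\xv))^{q^l}}{(\yv\widetilde{P}(\xv))^{q^l}}$ with $\deg P+\deg\widetilde{P}\le r-1$ yields a rank-$\le 2$ element of $\Cmat(\cA)$, and this claim is false as stated. The product equals $\yv^{2q^l}(P\widetilde{P})(\xv)^{q^l}$, and the only basis products that leave no trace in the skew-symmetric matrix are the diagonal ones $\sqb{\av_c^{q^{l'}}}$; since squaring is additive in characteristic $2$, their span consists exactly of the vectors $\sqb{(\yv F(\xv))^{q^{l'}}}=\yv^{2q^{l'}}F(\xv)^{2q^{l'}}$, i.e.\ of products whose polynomial part is a perfect square. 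Hence for a generic pair $(P,\widetilde{P})$ the rewriting of the product in the canonical basis necessarily involves off-diagonal products $\starp{\av_i^{q^l}}{\av_j^{q^l}}$ with $i\neq j$, the resulting quadratic form has the shape $\ell_1\ell_2+\ell_3\ell_4$, and its polar form has rank $4$ --- exactly as for the relations \eqref{eq: alt_rel} with $c\neq d$. The rank drops to $2$ precisely when $P\widetilde{P}$ is the square of a polynomial of degree at most $r-1$, a condition you never impose; without it, the family you parametrize consists mostly of rank-$4$ matrices and the fiber count of step (iv) does not give the claimed $\Omega(m\,q^{m(r-2)})$. (The aside that $a+b=2c$ ``forces $a=b=c$'' is also incorrect, but inessential.)

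The route can be repaired, and then it is genuinely different from the paper's argument. Restrict to $P=SG^{2}$, $\widetilde{P}=SH^{2}$ with $S$ square-free and $\deg S+2\deg G\le r-1$: then $P\widetilde{P}=(SGH)^{2}$, the identity $\starp{(\yv P)^{q^l}}{(\yv\widetilde{P})^{q^l}}=\sqb{(\yv (SGH)(\xv))^{q^l}}$ needs only diagonal coefficients, and the associated matrix is the rank-$2$ polar form of $\ell_P\ell_{\widetilde P}$ supported on block $l$. In characteristic $2$ every element of $\langle P,\widetilde{P}\rangle_{\Fqm}$ is $S$ times a square, so the matrix determines, up to the $q^m-1$ scalar multiples, the plane $\{S\,W^{2}\,:\,W\in\langle G,H\rangle_{\Fqm}\}$; counting square-free $S$ and planes $\langle G,H\rangle$ with $\deg S+2\deg G=r-1$ (or $r-2$) gives on the order of $q^{m(r-3)}$ planes, hence $\Omega(q^{m(r-2)})$ distinct rank-$2$ matrices per Frobenius index and $\Omega(m\,q^{m(r-2)})$ in total for $r\ge 3$. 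The paper proceeds otherwise: it uses the explicit $\binom{r-1}{2}$-dimensional space of relations supported on a single diagonal block (known from the analysis of \cite{FGOPT13,MT22}), exhibits inside it after specialization a copy of the full space of skew-symmetric matrices of size roughly $r/2$ (odd-indexed rows and columns for $r$ odd, a one-parameter $2\times 2$ blow-up for $r$ even), and concludes with the count $N_0(\cdot,2)$ of Proposition~\ref{prop: n_sym_givenrank_ch2}. Your (repaired) construction is arguably more intrinsic, but as written the missing perfect-square condition is a genuine gap.
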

In the particular case of binary Goppa codes associated to a square-free polynomial (i.e. the standard choice in a McEliece cryptosystem) we have
\begin{restatable}{proposition}{propcountlowrankGoppa}
Let $\Goppa{\xv}{\Gamma}$ be a binary Goppa code of extension degree $m$ with $\Gamma$ a square-free polynomial of degree $r$. Then $\Cmat$ contains at least 
\[m \frac{(q^{mr}-1)(q^{m(r-1)}-1)}{q^{2m}-1}\]
matrices of rank 2.
\end{restatable}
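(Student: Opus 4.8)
The strategy is to refine Fact~\ref{fact:simple} and Proposition~\ref{prop: rank2_subspaces}: the matrix code $\Cmat$ of $\Goppa{\xv}{\Gamma}$ contains, for each $l \in \Iintv{0}{m-1}$, a whole family of rank-$2$ matrices indexed by quadratic relations of the form $\starp{\av_a^{q^l}}{\av_b^{q^l}} - \sqb{\av_c^{q^l}} = 0$, and more generally by $\Fqm$-linear combinations of the Frobenius-twisted codewords that witness relation~\eqref{eq:quadratic_relation}. First I would recall that, by Proposition~\ref{prop:dual_alt_fqm}, $(\Goppa{\xv}{\Gamma}^\perp)_{\Fqm}$ contains $\GRS{r}{\xv^{q^l}}{\yv^{q^l}}$ for each $l$, and that in a GRS code of dimension $r$ the pairs $(\cv_1,\cv_2) = (\yv\xv^{a}, \yv\xv^{b})$ with $0 \le a \le b \le r-1$ satisfy $\starp{\cv_1}{\cv_2} = \sqb{\yv \xv^{(a+b)/2}}$ \emph{whenever $a+b$ is even}, and more generally any two codewords $P(\xv)\yv$, $R(\xv)\yv$ whose product is a perfect square $S(\xv)^2\yv^2$ with $\deg S \le r-1$ give such a relation. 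The point is that each such relation contributes a rank-$2$ skew-symmetric matrix to $\Cmat(\cA)$ (the $3\times 3$ block computed in \S\ref{ss:low-rank}), and by Proposition~\ref{prop: congr_odd} the count of rank-$2$ matrices is basis-independent, so we may count in $\cA$.

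The heart of the argument is a counting of \emph{distinct} such rank-$2$ matrices, per Frobenius level $l$, together with a check that different levels contribute (essentially) disjoint sets so that we can multiply by $m$. For a fixed $l$, a rank-$2$ skew-symmetric matrix in $\Cmat$ arising this way is, up to scalar, determined by the $2$-dimensional subspace $\Fqm$-spanned by $\{\cv_1, \cv_2\}$ inside the GRS code (equivalently, by the rank-$1$ \emph{quadratic form} $x_i x_j$ in suitable coordinates, i.e.\ by an unordered pair of points in the relevant projective space), and the relation $\starp{\cv_1}{\cv_2} = \sqb{\cv_3}$ forces the third codeword. So I would parametrize: $\cv_1 = P(\xv)\yv$ and $\cv_2 = R(\xv)\yv$ with $PR = S^2$ for some $S$ of degree $\le r-1$; equivalently pick a polynomial $S$ of degree $\le r-1$ and factor $S^2 = PR$. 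Here the \textbf{square-free Goppa hypothesis} enters crucially, exactly as in Proposition~\ref{prop: rank2_subspaces}: for a binary Goppa code $\Goppa{\xv}{\Gamma}$ with $\Gamma$ square-free one has $\Goppa{\xv}{\Gamma} = \Goppa{\xv}{\Gamma^2}$, which enlarges the effective degree and therefore the pool of admissible polynomials $P, R, S$; this is what replaces the bound $\floor{\frac{r-1}{2}}$ valid for a generic alternant code by something of order $r-1$, and is the source of the Gaussian-binomial-looking expression $\frac{(q^{mr}-1)(q^{m(r-1)}-1)}{q^{2m}-1}$, which is precisely the number of $2$-dimensional $\Fqm$-subspaces of an $r$-dimensional space divided by the number of such subspaces in a $2$-dimensional space — i.e.\ $\binom{r}{2}_{q^m}$ rewritten, or rather the count of rank-$2$ points, matching $\frac{(q^{mr}-1)(q^{m(r-1)}-1)}{(q^{2m}-1)(q^m-1)}\cdot(q^m-1)$. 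I would make the identification precise by showing each rank-$2$ matrix corresponds to a distinct such subspace (injectivity of the assignment relation $\mapsto$ subspace), giving the per-level count, and then multiply by $m$ after arguing the levels $l = 0, \dots, m-1$ give distinct matrices (their supports, i.e.\ the Frobenius-twisted GRS codes, are in general position for the parameter ranges considered, or one simply counts a disjoint union and the stated bound is a lower bound anyway).

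\textbf{Main obstacle.} The delicate step is the bijection/injectivity bookkeeping: showing that inequivalent choices of the witnessing pair $(\cv_1,\cv_2)$ — or of the $2$-dimensional subspace — really do yield distinct rank-$2$ elements of $\Cmat$, and conversely that every rank-$2$ matrix we wish to count is hit exactly once, so that the raw count of subspaces is not an over- or under-count. One must be careful that the map from polynomials $S$ (or pairs $(P,R)$ with $PR$ a square) to matrices is well-defined modulo the ambiguities $P \leftrightarrow R$, scaling, and the freedom $S \mapsto \lambda S$, and that the square-free condition is exactly what guarantees the relevant codewords lie in $(\Goppa{\xv}{\Gamma}^\perp)_{\Fqm}$ (via $\Gamma = \Gamma^2$) with no collisions coming from $\Gamma$ having repeated roots. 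Since the proposition only claims a lower bound, I would be content to exhibit an explicitly-indexed injective family of rank-$2$ matrices of the stated cardinality, deferring the finer question of whether this is the exact count; the counting identity itself then reduces to the standard formula $\binom{r}{2}_{q^m} = \frac{(q^{mr}-1)(q^{m(r-1)}-1)}{(q^{2m}-1)(q^m-1)}$ for Gaussian binomials, cleared of the $(q^m-1)$ that accounts for nonzero scalars, yielding $m\cdot\frac{(q^{mr}-1)(q^{m(r-1)}-1)}{q^{2m}-1}$.
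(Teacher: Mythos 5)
Your bookkeeping at the end (rank-$2$ alternating matrices over $\Fqm$ correspond to a $2$-dimensional subspace together with a nonzero scalar, giving $N_0(r,2)=\frac{(q^{mr}-1)(q^{m(r-1)}-1)}{q^{2m}-1}$ per Frobenius level, times $m$ disjoint levels) is the right target, but the family you propose to exhibit cannot reach it, because you use the wrong mechanism for producing the relations. You only allow witnesses of the form $\starp{\cv_1}{\cv_2}=\sqb{\cv_3}$ with $\cv_1=(P(\xv)\yv)^{q^l}$, $\cv_2=(R(\xv)\yv)^{q^l}$ and $\cv_3=(S(\xv)\yv)^{q^l}$ in the \emph{same} GRS block, i.e.\ you require $PR=S^2$ with $\deg S\le r-1$. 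That is exactly the generic-alternant mechanism of Fact~\ref{fact:simple} and Proposition~\ref{prop: rank2_subspaces}, and it only produces on the order of $q^{m(r-2)}$ rank-$2$ matrices per level (this is the content of Proposition~\ref{prop:count-low-rank-alt}), whereas the claimed bound needs about $q^{m(2r-3)}$ per level: the overwhelming majority of $2$-dimensional subspaces $\langle P\yv,R\yv\rangle$ have $PR$ not a perfect square, so your injective family misses almost all of the $N_0(r,2)$ matrices you intend to count. The remark that square-freeness ``enlarges the pool of admissible $P,R,S$'' does not repair this: $\Goppa{\xv}{\Gamma}=\Goppa{\xv}{\Gamma^2}$ does not turn more products $PR$ into squares of polynomials of degree $\le r-1$.

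What square-freeness actually buys — and what the paper's proof uses — is Patterson's identity $\Goppa{\xv}{\Gamma}=\Goppa{\xv}{\Gamma^2}=\Alt{2r}{\xv}{\yv^2}$, so that the extended dual contains $\GRS{2r}{\xv}{\yv^2}$ and all its Frobenius twists. Consequently, for \emph{every} pair $a\neq b$ in $\Iintv{0}{r-1}$ (no parity or squareness condition), the product $\starp{\av_a^{2^l}}{\av_b^{2^l}}=(\xv^{a+b}\yv^2)^{2^l}$ is the componentwise square of the extended-dual codeword $(\xv^{a+b}\yv^2)^{2^{l-1}}$; expanding that codeword in the basis $\cA$ and using additivity of the Frobenius, the resulting relation has as matrix the elementary skew-symmetric matrix supported on $\{(a,b),(b,a)\}$ inside the $l$-th diagonal block. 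These elementary matrices span the whole space of block-diagonal skew-symmetric matrices with $r\times r$ blocks, so that entire space lies in $\Cmat(\cA)$ (Proposition~\ref{prop: rel_binGoppa}); a rank-$2$ element of it has exactly one nonzero block, of rank $2$, whence the count $m\,N_0(r,2)=m\frac{(q^{mr}-1)(q^{m(r-1)}-1)}{q^{2m}-1}$ by Proposition~\ref{prop: n_sym_givenrank_ch2} applied over $\Fqm$. To salvage your argument you must either prove this containment of the full block-diagonal space, or at least let the third codeword $\cv_3$ range over the whole extended dual (multiplier $\yv^2$, degree up to $2r-1$, Frobenius-shifted) rather than over the single GRS block; with the restriction $PR=S^2$, $\deg S\le r-1$, the stated lower bound is unreachable.
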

These propositions are proved in Appendix \S\ref{ss:app-low-rank}.
It also turns out that for the ``canonical'' choice mentioned above (namely when choosing the basis $\cA$ given in \eqref{eq:basis_choice}) under certain circumstances, $\Cmat$ contains the subspace of block diagonal skew symmetric matrices with blocks of size $r$
\begin{restatable}{proposition}{proprelbinGoppa} \label{prop: rel_binGoppa}
Let $\Goppa{\xv}{\Gamma}$ be a binary $[n,n-rm]$ Goppa code with $\Gamma$ a square-free polynomial of degree $r$ and let $\cA$ be the canonical basis of $\ext{\Goppa{\xv}{\Gamma}^\perp}{\Fqm}$ given in \eqref{eq:basis_choice} with $\yv = \frac{1}{\Gamma(\xv)}$. Then $\Cmat(\cA)$ contains the space of block-diagonal skew-symmetric matrices with $r\times r$ blocks.
\end{restatable}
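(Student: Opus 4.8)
The plan is to exploit the special structure that the square-free Goppa polynomial gives to the multiplier $\yv=\frac{1}{\Gamma(\xv)}$. Recall that a binary Goppa code with square-free $\Gamma$ of degree $r$ satisfies $\Goppa{\xv}{\Gamma}=\Goppa{\xv}{\Gamma^2}=\Alt{2r}{\xv}{\yv^2}$ — this is the classical "doubling" phenomenon for binary Goppa codes. The key consequence is that the dual $\Goppa{\xv}{\Gamma}^\perp$ contains not only $\GRS{r}{\xv}{\yv}$ but the larger code $\GRS{2r-1}{\xv}{\yv^2}$, and crucially $\sq{\GRS{r}{\xv}{\yv}}=\GRS{2r-1}{\xv}{\sq{\yv}}$ by Proposition~\ref{pr: square_GRS}. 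So I would first write down, for a fixed block index $l\in\Iintv{0}{m-1}$, that for every pair $0\le a,b\le r-1$ the product $\starp{\av_a^{q^l}}{\av_b^{q^l}}=(\xv^{a+b}\sq{\yv})^{q^l}$ lies in $\GRS{2r-1}{\xv}{\sq{\yv}}^{(q^l)}$, and that this GRS code has dimension exactly $2r-1$ and is spanned by the $2r-1$ vectors $(\xv^{t}\sq{\yv})^{q^l}$, $t\in\Iintv{0}{2r-2}$.

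**Counting the relations within one block.**

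Within the $l$-th diagonal block, the span of the $\binom{r+1}{2}$ products $\starp{\av_a^{q^l}}{\av_b^{q^l}}$ has dimension at most $2r-1$ (it is contained in the $(2r-1)$-dimensional GRS code above), so the space of linear relations among them has dimension at least $\binom{r+1}{2}-(2r-1)=\binom{r-1}{2}$. I need to argue this is an equality and, more importantly, that these relations are "purely internal" to the block — i.e. they do not force any interaction with the off-diagonal entries or the other blocks. For this I would invoke Proposition~\ref{prop:dual_alt_fqm}/the direct-sum type decomposition: the products $\starp{\av_a^{q^l}}{\av_b^{q^l}}$ for different $l$ land in $\GRS{2r-1}{\xv^{q^l}}{\sq{\yv}^{q^l}}$, and for generic support these summands (together with the cross-block products) are in sufficiently general position that a relation supported on block $l$ only involves block-$l$ products. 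Concretely, a block-diagonal skew-symmetric matrix $\Mm=\mathrm{diag}(\Mm^{(0)},\dots,\Mm^{(m-1)})$ lies in $\Cmat(\cA)$ iff for each $l$ the entries of $\Mm^{(l)}$ (read as coefficients $c^{(l)}_{a,b}$) satisfy $\sum_{a,b}c^{(l)}_{a,b}\,\starp{\av_a^{q^l}}{\av_b^{q^l}}=0$; the skew-symmetric block matrices with zero diagonal that satisfy this form a space of dimension $\binom{r}{2}-((2r-1)-r)=\binom{r}{2}-(r-1)$ per block... and here I need to be careful about whether I want \emph{all} skew-symmetric blocks or a subspace. The statement claims \emph{all} block-diagonal skew-symmetric matrices with $r\times r$ blocks, so I must show that \emph{every} skew-symmetric $r\times r$ matrix $\Mm^{(l)}$ gives a valid relation, i.e. $\sum_{a<b}m_{a,b}(\xv^{a+b}\sq{\yv})^{q^l}$ can be forced to vanish — which is \emph{not} true in general! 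So I re-read: the claim must be using a finer fact.

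**Identifying the real mechanism and the main obstacle.**

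Re-examining: the correct mechanism is that for a \emph{binary} Goppa code, $\av_a\star\av_b = \xv^{a+b}\sq{\yv}$ and $\sq{\yv}=\frac{1}{\Gamma(\xv)^2}$, and since we are in characteristic $2$, $\Gamma(\xv)^2 = \tilde\Gamma(\xv^2)$ where $\tilde\Gamma(z)=\Gamma$ with squared coefficients, i.e. $\Gamma(\xv)^2$ is a polynomial in $\xv^2$ of degree $r$ — so $\xv^{a+b}\sq{\yv} = \xv^{a+b}/\tilde\Gamma(\xv^2)$. When $a+b$ is \emph{even}, say $a+b=2s$, this equals $(\xv^2)^s/\tilde\Gamma(\xv^2)$, which together with the fact that $\{\av_a\star\av_b: a+b=2s\}$ all coincide — recovering Fact~\ref{fact:simple}. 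The block-diagonal skew-symmetric subspace is then spanned by: (i) for each block $l$, the rank-$2$ matrices of Fact~\ref{fact:simple} with $a+b=2c$, giving relations $E_{a,b}-E_{c,c}$-type elements within diagonal entries — but skew-symmetric has zero diagonal, so actually the relevant relations are $\starp{\av_a}{\av_b}=\starp{\av_{a'}}{\av_{b'}}$ whenever $a+b=a'+b'$, \emph{including} odd sums. The span of $\{\xv^{a+b}\sq\yv : 0\le a,b\le r-1\}$ is $\Fqmspan{\xv^t\sq\yv : 0\le t\le 2r-2}$ which is $(2r-1)$-dimensional, so relations among the $\binom r2$ \emph{off-diagonal} pairs (grouped by their sum $t=a+b$, each value $t\in\{1,\dots,2r-3\}$ hit $\lfloor\cdot\rfloor$ times): I would show the off-diagonal relations alone already span the full $\binom r2$-dimensional skew-symmetric block space because the only constraint "$\sum c_{a,b}\xv^{a+b}\sq\yv=0$" is automatically satisfied when we take differences within a fixed sum-level AND the diagonal squares $\sq{\av_c}$ fill in the rest. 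I will carry this out by: (1) invoking Proposition~\ref{prop: rel_binGoppa}'s companion Proposition~\ref{prop: rank2_subspaces} giving $(r-1)$-dimensional rank-$\le2$ subspaces per block; (2) showing the span over all choices of these, plus the Frobenius/additivity argument from Proposition~\ref{prop:dimension}'s proof, exhausts $\Skew(r,\Fqm)$ in each block; (3) checking block-diagonality is preserved because cross-block products $\starp{\av_a^{q^l}}{\av_b^{q^u}}$ with $l\ne u$ are linearly independent from the diagonal ones for generic $\xv$. \textbf{The main obstacle} will be step (2)–(3): proving that \emph{all} skew-symmetric block matrices — not just a proper subspace — arise, which forces showing that the map sending a skew-symmetric block-diagonal matrix to $\sum_{l}\sum_{a<b}m^{(l)}_{a,b}(\xv^{a+b}\sq\yv)^{q^l}$ is identically zero; equivalently that for binary Goppa codes the products $\starp{\av_a}{\av_b}$ with $a\ne b$ all collapse so drastically (because $\sq\yv$ depends only on $\xv^2$ and in characteristic $2$ odd powers $\xv^{a+b}$ with $a+b$ odd give $\xv\cdot(\xv^2)^s/\tilde\Gamma(\xv^2)$, and these for the \emph{dual}-containment reason sit inside $\Goppa{\xv}{\Gamma}^\perp$ which has been squared) that \emph{every} antisymmetric combination is a relation. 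I expect this to reduce, via Proposition~\ref{pr:dual_GRS} and the square-free doubling $\Goppa{\xv}{\Gamma}=\Goppa{\xv}{\Gamma^2}$, to a clean statement about $\dim\sq{(\Goppa{\xv}{\Gamma}^\perp)}$ that is already controlled by Theorem~\ref{thm: MT22}, and I would close the proof by matching dimensions: the $\binom r2 \cdot m$ block-skew-symmetric matrices inject into $\Cmat(\cA)$ precisely because the dimension count from Proposition~\ref{prop:dimension} leaves exactly that much room in the Goppa (as opposed to generic alternant) case.
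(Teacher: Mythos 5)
You have gathered the right ingredients (the Patterson doubling $\Goppa{\xv}{\Gamma}=\Goppa{\xv}{\Gamma^2}=\Alt{2r}{\xv}{\yv^2}$ and the characteristic-$2$ behaviour of squares), but the argument has a genuine gap precisely at the point you flag as ``the main obstacle'', and you mis-state what has to be proved there. Membership of a skew-symmetric matrix $(m_{i,j})$ in $\Cmat(\cA)$ does \emph{not} require $\sum_{i<j}m_{i,j}\starp{\vv_i}{\vv_j}=0$: an element of $\Crel(\cA)$ may carry diagonal coefficients $c_{i,i}$, and in characteristic $2$ these vanish in the polar-form matrix (recall $m_{i,i}=2c_{i,i}=0$). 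So it suffices that $\sum_{i<j}m_{i,j}\starp{\vv_i}{\vv_j}$ lie in the span of the squares $\sq{\vv_i}$ of the basis vectors, since that part can be absorbed into the diagonal. This is exactly where the doubling enters: for $0\le a,b\le r-1$ one has $\starp{\av_a^{2^l}}{\av_b^{2^l}}=(\xv^{a+b}\yv^2)^{2^l}$, and since $a+b\le 2r-2<2r$ the vector $(\xv^{a+b}\yv^2)^{2^{l-1}}$ (Frobenius exponents taken modulo $m$) belongs to $\ext{\Goppa{\xv}{\Gamma}^\perp}{\Fqm}$, being a Frobenius image of an element of $\GRS{2r}{\xv}{\yv^2}$. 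Writing it as $\sum_i\lambda_i\vv_i$ and using additivity of the Frobenius, $(\xv^{a+b}\yv^2)^{2^l}=\sum_i\lambda_i^2\sq{\vv_i}$, so that $\starp{\av_a^{2^l}}{\av_b^{2^l}}+\sum_i\lambda_i^2\sqb{\vv_i}=0$ is a relation whose associated matrix is the elementary skew-symmetric matrix supported at positions $(a,b),(b,a)$ of the $l$-th diagonal block, for \emph{every} $a\ne b$ and every $l$; these matrices span the whole block-diagonal skew-symmetric space. This is the paper's proof, and it needs no genericity of the support, no independence of cross-block products, and no counting of relations per block.

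The fallback you propose cannot repair this gap. ``Matching dimensions'' via Theorem~\ref{thm: MT22} only bounds $\dim \sq{(\Goppa{\xv}{\Gamma}^\perp)}$, i.e. bounds $\dim\Cmat$ from below, and no dimension count can certify that one \emph{specific} $m\binom{r}{2}$-dimensional subspace of $\Skew(rm,\Fqm)$ is contained in $\Cmat(\cA)$. Likewise, invoking Proposition~\ref{prop: rank2_subspaces} as a black box does not help: its statement only asserts the existence of $(r-1)$-dimensional rank-$\le 2$ subspaces somewhere in $\Cmat$, with no information on their position relative to the canonical blocks, and the paper's proof of that proposition in the Goppa case is itself derived from the very relation described above, so this route is circular. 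Your initial per-block count ($\binom{r+1}{2}-(2r-1)=\binom{r-1}{2}$ relations) correctly shows that relations with vanishing off-diagonal data alone are too few, which is precisely why the diagonal-absorption mechanism, not a vanishing condition or a dimension argument, is the missing idea.
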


\subsection{The random case}
We have described in the previous subsection a family of matrices in $\Cmat(\cA)$ with a small rank. In particular, we found rank 3 matrices for odd characteristic and rank 2 matrices for even characteristic. In the case of binary Goppa codes with square-free Goppa polynomial, the subspace generated by such rank 2 matrices is even bigger. Since the two codes $\Cmat(\cA)$ and $\Cmat(\cB)$ have the same weight distribution, the same number of low-rank matrices must exist for $\Cmat(\cB)$ as well. We may wonder if such low-rank matrices exist in the matrix code of relationships $\Cmat(\cR)$ of an $[n,rm]$ random $\Fqm$-linear code $\RC$ with basis $\cR$. This can be determined by computing the Gilbert-Varshamov distance $\dgv$ for spaces of symmetric (resp. skew-symmetric) matrices, which is the smallest $d$ such that
\begin{equation} \label{eq: GV_Sym}
\card{\Cmat(\cR)}\card{B_d^{(\Sym)}}\ge\card{\Sym(rm,\Fqm)},
\end{equation}
\begin{equation} \label{eq: GV_Skew}
\card{\Cmat(\cR)}\card{B_d^{(\Skew)}}\ge\card{\Skew(rm,\Fqm)},
\end{equation}
where $B_d^{(\Sym)}$ (resp. $B_d^{(\Skew)}$) is the ball of radius $d$
(with respect to the rank metric) of the space of symmetric
(resp. skew-symmetric) matrices.  The rationale of this definition is
that it can be proved that for a random linear code $\CC$ the
probability of having a non zero matrix of rank $\leq d$ in $\CC$ is
upper-bounded by the ratio
$\frac{\card{\CC}\card{B_d^{(\Sym)}}}{\card{\Sym(rm,\Fqm)}}$ in the
symmetric case. A similar bound holds in the skew-symmetric case.  In
a low dimension scenario, more precisely when $\binom{rm+1}{2}\le n$,
the code $\Cmat(\cR)$ is expected to be trivial. This corresponds
indeed to the square distinguishable regime. We will then assume
$\binom{rm+1}{2}>n$.
\begin{restatable}{proposition}{propGV} \label{prop: GV} Let
  $\RC\subset\Fqm^n$ be a random code of dimension $rm$ with basis
  $\cR$ and let $\binom{rm+1}{2}>n$. Under the assumption that
  $\Cmat(\cR)$ has the same the rank weight distribution as a random linear
  matrix code, it contains matrices of rank $\le d$ with non-negligible
  probability iff
\begin{align*}
  &n\le drm-\binom{d}{2} &\quad &\text{(symmetric case)}\\
  &n\le (d+1)rm-\binom{d+1}{2} &\quad &\text{(skew-symmetric case)}
 \end{align*}
\end{restatable}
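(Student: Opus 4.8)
The plan is to run the Gilbert--Varshamov computation defined by \eqref{eq: GV_Sym}--\eqref{eq: GV_Skew} for the matrix code $\Cmat(\cR)$: under the stated assumption it behaves like a uniformly random $\Fqm$-subspace of $\Sym(rm,\Fqm)$ (symmetric case) or of $\Skew(rm,\Fqm)$ (skew-symmetric case), and for such a random subspace the dichotomy ``contains a nonzero matrix of rank $\le d$ with non-negligible probability'' is governed by whether $\card{\Cmat(\cR)}\cdot\card{B_d}$ is above or below the size of the ambient matrix space. So the proof reduces to evaluating the two quantities appearing there, namely $\card{\Cmat(\cR)}$ and the size of the rank ball.

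First I would compute $\dim_{\Fqm}\Cmat(\cR)$. By Proposition~\ref{prop:dimension} it equals $\binom{rm+1}{2}-\dim_{\Fqm}\sq{\RC}$, and since $\RC$ is a random $[n,rm]$ code with $\binom{rm+1}{2}>n$, its square fills the ambient space: $\dim_{\Fqm}\sq{\RC}=n$ with probability $1-o(1)$ (the generic behaviour recalled after Proposition~\ref{pr: square_GRS}, see \cite{CCMZ15}), whence $\card{\Cmat(\cR)}=q^{m\left(\binom{rm+1}{2}-n\right)}$. Using $\card{\Sym(rm,\Fqm)}=q^{m\binom{rm+1}{2}}$, $\card{\Skew(rm,\Fqm)}=q^{m\binom{rm}{2}}$ and $\binom{rm+1}{2}-\binom{rm}{2}=rm$, the inequalities \eqref{eq: GV_Sym}, \eqref{eq: GV_Skew} reduce to $\card{B_d^{(\Sym)}}\ge q^{mn}$, resp.\ $\card{B_d^{(\Skew)}}\ge q^{m(n-rm)}$.

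It remains to estimate $\card{B_d}$ via the classical enumeration of symmetric, resp.\ alternating, matrices over a finite field by rank. A symmetric form of rank $s$ on $\Fqm^{rm}$ is the datum of its radical (an $(rm-s)$-dimensional subspace, $\Th{q^{ms(rm-s)}}$ choices) together with a non-degenerate symmetric form on the $s$-dimensional quotient ($\Th{q^{m\binom{s+1}{2}}}$ of them), so the number of such matrices has leading order $q^{m\left(s\cdot rm-\binom{s}{2}\right)}$; the same count for alternating forms, whose rank must be even, say $2t$, gives leading order $q^{m\left(2t\cdot rm-\binom{2t+1}{2}\right)}$. These exponents are increasing in $s$, resp.\ $t$, over the relevant range, so the ball is dominated by its top-rank shell: $\card{B_d^{(\Sym)}}=\Th{q^{m\left(d\cdot rm-\binom{d}{2}\right)}}$ and, for even $d$, $\card{B_d^{(\Skew)}}=\Th{q^{m\left(d\cdot rm-\binom{d+1}{2}\right)}}$ (for odd $d$ one gets the same with $d$ replaced by $d-1$). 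Plugging these into the inequalities of the previous paragraph and clearing the common factor $q^m$ gives $d\cdot rm-\binom{d}{2}\ge n$ in the symmetric case and $(d+1)\cdot rm-\binom{d+1}{2}\ge n$ in the skew-symmetric case, which is the claim.

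The substantive points are (a) the leading terms of the two rank-distribution counts: these are classical but one must be careful that in characteristic~$2$ the ambient space is the \emph{alternating} one and that the top-rank shell dominates the ball; and (b) the heuristic step turning ``expected number of rank-$(\le d)$ elements of $\Cmat(\cR)$ is $\Th{\card{B_d}\,q^{-m\cdot\mathrm{codim}}}$'' into a genuine ``iff'', which requires a second-moment estimate. The latter is where the real work lies, but it is clean here since every quantity in play is a power of $q^m$, so the Gilbert--Varshamov threshold $\dgv$ is sharp up to an additive $O(1)$ and the dichotomy ``$d<\dgv$ vs.\ $d\ge\dgv$'' is exactly the one in the statement.
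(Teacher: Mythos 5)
Your proposal is correct and follows the same overall route as the paper: compute $\dim_{\Fqm}\Cmat(\cR)=\binom{rm+1}{2}-n$ from Proposition~\ref{prop:dimension} together with the generic behaviour of $\sq{\RC}$ for a random code, estimate the rank balls, and read off the Gilbert--Varshamov inequalities \eqref{eq: GV_Sym}--\eqref{eq: GV_Skew}. The one step you handle differently is the ball size: the paper invokes MacWilliams' exact enumeration of symmetric and zero-diagonal symmetric matrices of given rank (\cite{M69a}) and extracts the leading term $\card{B_d^{(\Sym)}}\sim (q^m)^{drm-\binom{d}{2}}$, resp.\ $\card{B_d^{(\Skew)}}\sim (q^m)^{2\floor{d/2}rm-\binom{2\floor{d/2}+1}{2}}$, whereas you re-derive the same exponents via the radical/quotient decomposition (choice of an $(rm-s)$-codimensional radical times a non-degenerate form on the quotient), which is more self-contained and makes the monotonicity of the exponent in the rank transparent. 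Your closing remark about the ``iff'' requiring a second-moment argument is fair; the paper does not carry this out either and relies on the explicit hypothesis that $\Cmat(\cR)$ has the rank weight distribution of a random matrix code, so you are not missing anything the paper supplies.
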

This proposition is proved in \S \ref{ss:proof_prop: GV}.
In particular, we expect rank-3 symmetric matrices in $\Cmat(\cR)$ for
\begin{equation} \label{eq: n le 3rm-3}
n\le 3rm-3
\end{equation}
and rank-2 skew-symmetric matrices in $\Cmat(\cR)$ in characteristic 2 for
\[
n\le (2+1)rm-\binom{2+1}{2}=3rm-3
\]
as well. 
We observe that for all security levels of Classic McEliece \cite{ABCCGLMMMNPPPSSSTW20}, the code rate is such that $n=\alpha rm$ with $\alpha\in(3.5,5)$. This means that any algorithm that finds low-rank matrices in $\Cmat(\cR)$ represents a distinguisher between Goppa codes (and more in general alternant codes) and random linear codes for Classic McEliece rates.
 \section{A New Distinguisher of Alternant and Goppa Codes in Characteristic $2$}
\label{sec:modeling}
 
We are going to focus here on the particular case of characteristic
$2$ where we want to find rank $2$ matrices in the matrix code of
quadratic relations. We are going to consider a particular algebraic
modeling for finding matrices of this kind for which we can estimate
the running time of Gröbner bases algorithms for solving it. We will
show that the behavior of the Gröbner basis computation is quite
different when applied to the matrix code corresponding to an
alternant (or a Goppa) code rather than to the matrix code
corresponding to a random code of the same dimension and length as the
alternant/Goppa code. This provides clearly a distinguisher of an
alternant or Goppa code whose complexity can be estimated. Interestingly enough,
it coincides with the square distinguisher of \cite{FGOPT11} for the parameters
where the latter applies, but it also permits to distinguish other parameters and can distinguish 
Goppa or alternant codes of rate in the range $[\frac{2}{3},1]$, contrarily to the former
which works only for rate extremely close to $1$.

\subsection{A modeling coming from the Pfaffian ideal.} We are first
going to give an algebraic modelling expressing that a skew-symmetric
matrix $\Mm$ with arbitrary entries is of rank $\leq 2$. To do so, we
express the fact that all minors of size $4$ should be zero. This
implies that $\Mm$ should be of rank $\leq 2$, because any
skew-symmetric matrix is of even rank and therefore cannot have rank
$3$. In other words, let us consider the generic skew-symmetric matrix
$\Mm=(m_{i,j})_{i,j}\in \Skew(s, \Fqm)$, whose entries $m_{i,j}$ with
$1 \le i < j \le s$ are independent variables. Let
$\mv=(m_{i,j})_{1\le i<j\le s}$. We will write sometimes $m_{j,i}$
with $i<j$, this must just be seen as an alias for $m_{i,j}$ and not
as another variable. We denote by $\Minors(\Mm, d)$ the set of all
minors of $\Mm$ of size $d$. The set of specializations of $\Mm$ that
provide rank 2 matrices is the variety of the determinantal ideal
generated by $\Minors(\Mm, 3)$.  We refer the reader to
\cite[\S~15.1]{MS05} Since there do not exist rank 3 matrices in
$\Sym(s, \Fqm)$, the ideal generated by each possible $4\times 4$
minor of $\Mm$ leads to the same variety:
\[
\Vm(\cI( \Minors(\Mm, 3)))=\Vm(\cI(\Minors(\Mm, 4))).
\]
The homogeneous ideal $\cI(\Minors(\Mm, 2l))$ is not radical. The
determinant of a generic skew-symmetric matrix of size $2l\times 2l$
is the square of a polynomial of degree $l$, called \textit{Pfaffian}
\cite[\S~1.1]{W12}. It is well-known that the corresponding radical
ideal is generated by the square roots of a subset of minors, namely
those corresponding to a submatrix with the same subset for row and
column indexes. Note that such matrices are skew-symmetric as well,
and thus their determinant is the square of a Pfaffian polynomial. In
particular, we define
\begin{definition}[Pfaffian ideal for rank 2]
The \textit{Pfaffian ideal} of rank 2 for $\Mm$ in characteristic 2 is
\begin{equation} \label{eq: pfaffian_ideal}
\cP_2(\Mm) \eqdef \ideal{ m_{i,j}m_{k,l}+m_{i,k}m_{j,l}+m_{i,l}m_{j,k} \mid 1\le i<j<k<l\le s },
\end{equation}
\end{definition}
\begin{remark}
  Note that in the definition of the Pfaffian ideal \eqref{eq:
    pfaffian_ideal}, the 4-tuple $(i,j,k,l)$ is given by distinct
  values. Indeed, if two indexes are equal then the following expression
  \[m_{i,j}m_{k,l}+m_{i,k}m_{j,l}+m_{i,l}m_{j,k}\] vanishes
  identically. Thus these equations do not have to be considered.
\end{remark}
We have
\begin{proposition}[{\cite[Theorem 5.1]{HT92}}]
The basis $\{m_{i,j}m_{k,l}+m_{i,k}m_{j,l}+m_{i,l}m_{j,k} \mid 1\le i<j<k<l\le s \}$ is a Gr\"obner basis of $\cP_2(\Mm)$ with respect to a suitable order.
\end{proposition}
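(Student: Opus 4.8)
The plan is to recognise $\cP_2(\Mm)$ as a classical object in disguise. Under the correspondence sending the variable $m_{a,b}$ (with $a<b$) to the Plücker coordinate $p_{a,b}$, the generators $m_{i,j}m_{k,l}+m_{i,k}m_{j,l}+m_{i,l}m_{j,k}$, $1\le i<j<k<l\le s$, are exactly the characteristic-$2$ reductions of the Grassmann--Plücker quadrics of the Grassmannian $\mathbf{Gr}(2,s)$ in its Plücker embedding (equivalently, the principal $4\times 4$ Pfaffians of $\Mm$), and $\cP_2(\Mm)$ is the homogeneous ideal they generate; its variety is the affine cone over $\mathbf{Gr}(2,s)$, that is, the set of skew-symmetric $s\times s$ matrices of rank $\le 2$. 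Hence the statement amounts to: the Plücker relations of $\mathbf{Gr}(2,s)$ form a Gröbner basis for a suitable term order. I would take the order $\prec$ refining the weight $w_{a,b}\eqdef C-ab$ attached to $m_{a,b}$, with $C$ a constant large enough to keep all weights positive (so that $\prec$ is a genuine monomial order). A one-line computation shows that for $i<j<k<l$ one has $il+jk<ik+jl<ij+kl$, so among the three monomials of the generator indexed by $i<j<k<l$ the ``nested'' monomial $m_{i,l}\,m_{j,k}$ has strictly largest weight and is therefore the leading term.

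Next I would prove that these leading terms already generate the whole initial ideal, which is precisely the Gröbner basis property. Write $S\eqdef\Fqm[m_{i,j}\mid 1\le i<j\le s]$ and $J\eqdef\langle m_{i,l}m_{j,k}\mid 1\le i<j<k<l\le s\rangle$. One always has $J\subseteq\operatorname{in}_{\prec}(\cP_2(\Mm))$, hence $\HF(S/J)\ge\HF(S/\operatorname{in}_{\prec}(\cP_2(\Mm)))=\HF(S/\cP_2(\Mm))$ in every degree, so it suffices to establish the opposite inequality. The ideal $J$ is the Stanley--Reisner ideal of the simplicial complex of ``non-nesting'' sets of pairs, so the standard monomials of $S/J$ of degree $d$ are the products of $d$ pairs $\{a,b\}$ (repetitions allowed) no two of which are strictly nested; sorting such a multiset of pairs by first coordinate, ties broken by second coordinate, yields a two-row semistandard Young tableau of shape $(d,d)$ with entries in $\{1,\dots,s\}$, and one checks this map is a bijection. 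Therefore $\HF_d(S/J)$ equals the number of such tableaux, which by standard monomial theory for $\mathbf{Gr}(2,s)$ is the dimension over $\Fqm$ of the degree-$d$ component of the homogeneous coordinate ring of $\mathbf{Gr}(2,s)$, i.e. $\HF_d(S/\cP_2(\Mm))$. This forces $J=\operatorname{in}_{\prec}(\cP_2(\Mm))$, so the given generators are a Gröbner basis. Everything in this argument is characteristic-free, so the passage to characteristic $2$ (where the Pfaffian generators lose their signs) costs nothing.

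A self-contained alternative avoids Hilbert functions and applies Buchberger's criterion directly: if two generators have coprime leading monomials, their S-polynomial reduces to $0$ automatically; otherwise their index $4$-subsets $A,B$ meet in at least two elements, hence lie in a common set of at most six indices, on which the Plücker-type syzygies among the $4$-Pfaffians furnish explicit reductions of the S-polynomial to zero without raising the leading monomial --- a finite check over the overlap patterns $|A\cap B|\in\{2,3,4\}$. The main obstacle is, in either approach, essentially bookkeeping: in the first route, pinning down the term order together with the non-nesting/tableau bijection and importing the Hilbert series of $\mathbf{Gr}(2,s)$ from standard monomial theory; in the second route, making sure every overlap pattern is handled and that each S-polynomial reduction is a genuine standard representation. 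I would favour the first route, since all of its ingredients are classical and valid over an arbitrary field.
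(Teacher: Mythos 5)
The paper gives no proof of this proposition at all: it is imported wholesale from Herzog--Trung \cite[Theorem 5.1]{HT92}. Your argument is therefore not competing with an in-paper proof but reconstructing the cited one, and it does so correctly and along essentially the classical lines. The weight computation is right: for $i<j<k<l$ one has $il+jk<ik+jl<ij+kl$ (since $(ik+jl)-(il+jk)=(j-i)(l-k)>0$ and $(ij+kl)-(ik+jl)=(l-i)(k-j)>0$), so under any monomial order refining the weight $m_{a,b}\mapsto C-ab$ the leading term of each generator is the nested product $m_{i,l}m_{j,k}$; the monomial ideal $J$ these generate is the Stanley--Reisner ideal of the complex of pairwise comparable (equivalently, pairwise non-nested) pairs, and its degree-$d$ standard monomials do biject with two-row semistandard tableaux with entries in $\{1,\dots,s\}$. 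The one genuinely load-bearing import is the inequality $\HF_d(S/\cP_2(\Mm))\ge \#\mathrm{SSYT}$: your sandwich $\#\mathrm{SSYT}=\HF_d(S/J)\ge \HF_d(S/\operatorname{in}_{\prec}\cP_2(\Mm))=\HF_d(S/\cP_2(\Mm))\ge \HF_d\bigl(S/I(\mathrm{Gr}(2,s))\bigr)\ge \#\mathrm{SSYT}$ closes only once one knows that the standard monomials are linearly independent as functions on the cone over $\mathrm{Gr}(2,s)$ (equivalently, on the rank-$\le 2$ skew-symmetric matrices); this is Hodge's characteristic-free basis theorem, which you correctly flag as the classical ingredient being imported, and granting it you simultaneously obtain $J=\operatorname{in}_{\prec}\cP_2(\Mm)$ and, as a byproduct, that $\cP_2(\Mm)$ is the full homogeneous ideal of the Grassmannian cone. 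This is in substance the same mechanism as Herzog--Trung's straightening-law proof (nested product as initial term, standard monomials as a basis), so nothing is lost relative to the citation; the Buchberger-criterion alternative you mention would also work but is only sketched, so the tableau route is the one to keep.
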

Another straightforward result is that
\begin{proposition}
We have $\Vm(\cP_2(\Mm))=\Vm( \cI(\Minors(\Mm, 4)))$.
\end{proposition}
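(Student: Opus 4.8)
The plan is to show that each side of the claimed identity describes exactly the locus of skew-symmetric matrices of rank at most $2$ inside the affine space of skew-symmetric matrices over $\overline{\Fqm}$, and then to split the proof into the two inclusions of varieties.

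First I would record an elementary identity: for a $4$-element subset $\{i,j,k,l\}\subseteq\Iintv{1}{s}$, the determinant of the principal submatrix $\Mm_{\{i,j,k,l\}}$ is one of the $4\times 4$ minors of $\Mm$, and it equals the square of the Pfaffian $m_{i,j}m_{k,l}+m_{i,k}m_{j,l}+m_{i,l}m_{j,k}$ (in characteristic $2$ all the signs in the classical Pfaffian expansion disappear). Hence every generator of $\cP_2(\Mm)$ squares to an element of $\cI(\Minors(\Mm,4))$, so $\cP_2(\Mm)\subseteq\sqrt{\cI(\Minors(\Mm,4))}$, and therefore $\Vm(\cP_2(\Mm))\supseteq\Vm(\cI(\Minors(\Mm,4)))$. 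Equivalently, and more concretely: a point of $\Vm(\cI(\Minors(\Mm,4)))$ is a skew-symmetric matrix of rank $\le 3$, hence of rank $\le 2$ since skew-symmetric matrices have even rank; all its principal $4\times 4$ minors then vanish, hence so do all the Pfaffian generators, i.e.\ it lies in $\Vm(\cP_2(\Mm))$.

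The reverse inclusion $\Vm(\cP_2(\Mm))\subseteq\Vm(\cI(\Minors(\Mm,4)))$ is the substantial part. Let $\Mm_0$ be a skew-symmetric matrix over $\overline{\Fqm}$ at which all Pfaffian generators of $\cP_2(\Mm)$ vanish; I must show all $4\times 4$ minors of $\Mm_0$ vanish, i.e.\ $\rank\Mm_0\le 3$, i.e.\ (again by evenness of the rank) $\rank\Mm_0\le 2$. I would argue by contradiction, using the lemma: \emph{any skew-symmetric matrix $\Nm$ of rank $\ge 4$ has a nonsingular principal $4\times 4$ submatrix.} To prove the lemma, choose $a<b$ with $n_{a,b}\neq 0$ (possible since $\Nm\neq 0$); then $\Nm_{\{a,b\}}$ is invertible, so one may form its Schur complement $\Nm'$ inside $\Nm$, which is again skew-symmetric and satisfies $\rank\Nm=2+\rank\Nm'$; as $\rank\Nm\ge 4$ we get $\Nm'\neq 0$, so there are $c,d\notin\{a,b\}$ with $n'_{c,d}\neq 0$. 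Since the Schur complement of $\Nm_{\{a,b\}}$ inside the $4\times 4$ block $\Nm_{\{a,b,c,d\}}$ is the $\{c,d\}$-principal block of $\Nm'$, the multiplicativity of the determinant under Schur complementation gives $\det\Nm_{\{a,b,c,d\}}=\det\Nm_{\{a,b\}}\cdot\det\Nm'_{\{c,d\}}=(n_{a,b})^2\,(n'_{c,d})^2\neq 0$, proving the lemma. Applying it to $\Nm=\Mm_0$ (assumed of rank $\ge 4$) produces a $4$-tuple $\{i,j,k,l\}$ with $\det(\Mm_0)_{\{i,j,k,l\}}\neq 0$, hence with the Pfaffian $m_{i,j}m_{k,l}+m_{i,k}m_{j,l}+m_{i,l}m_{j,k}$ not vanishing at $\Mm_0$ --- contradicting the choice of $\Mm_0$. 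So $\rank\Mm_0\le 2$, which gives the inclusion; combined with the first step this yields the equality.

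The only genuine difficulty I anticipate is the lemma, specifically the need to produce a nonsingular \emph{principal} $4\times 4$ submatrix rather than an arbitrary nonsingular $4\times 4$ submatrix --- this is exactly where skew-symmetry is essential, through the evenness of the rank and the fact that forming the Schur complement of the $2\times 2$ principal block $\Nm_{\{a,b\}}$ carries a $2\times 2$ principal minor of $\Nm'$ to a $4\times 4$ principal minor of $\Nm$. Alternatively, one can bypass the lemma altogether by quoting the classical description of the rank-$2$ Pfaffian ideal, namely that $\cP_2(\Mm)$ is prime and cuts out the variety of skew-symmetric matrices of rank $\le 2$ (see e.g.\ \cite{HT92}); combined with the already-noted equality $\Vm(\cI(\Minors(\Mm,3)))=\Vm(\cI(\Minors(\Mm,4)))$, which is that same variety, this gives the statement at once.
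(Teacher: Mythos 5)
Your proof is correct, and for one of the two inclusions it takes a genuinely different route from the paper. Both arguments handle the inclusion $\Vm(\cP_2(\Mm))\supseteq\Vm(\cI(\Minors(\Mm,4)))$ the same way, by noting that each Pfaffian generator squares to a principal $4\times 4$ minor and hence lies in $\sqrt{\cI(\Minors(\Mm,4))}$. For the reverse inclusion the paper asserts the \emph{ideal} containment $\Minors(\Mm,4)\subseteq\cP_2(\Mm)$ (``one can verify that for any $f\in\Minors(\Mm,4)$, $f\in\cP_2(\Mm)$'') and then invokes the Nullstellensatz; this is a classical but non-immediate algebraic identity expressing every $4\times 4$ minor of a skew-symmetric matrix in terms of sub-Pfaffians. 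You instead prove the containment of varieties directly, via the lemma that a skew-symmetric matrix of rank at least $4$ admits a \emph{nonsingular principal} $4\times 4$ submatrix, established by a Schur-complement argument (the complement of an invertible $2\times 2$ principal block is again alternating in characteristic $2$, rank is additive, and Schur's determinant formula localizes to the $4\times 4$ principal block). Your route is more elementary and fully self-contained --- every step is checkable without citing the structure theory of Pfaffian ideals --- at the cost of being purely geometric: it gives equality of varieties but not the stronger ideal-theoretic fact $\cI(\Minors(\Mm,4))\subseteq\cP_2(\Mm)$ that the paper's verification would yield. Since the proposition only claims equality of varieties, this loss is immaterial here, and your closing remark (that one could alternatively quote \cite{HT92} for the description of $\Vm(\cP_2(\Mm))$ as the rank-$\le 2$ locus) is also a valid shortcut.
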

\begin{proof}
One can verify that for any $f\in  \Minors(\Mm, 4)$, $f\in \cP_2(\Mm)$ and for any $f$ in the basis of $\cP_2(\Mm)$, $f\in\sqrt{ \cI(\Minors(\Mm, 4))}$. By Hilbert's Nullstellensatz, the thesis follows. \qed
\end{proof}

Our modeling takes advantage of the deep knowledge we have about this
ideal. We express now the fact that a matrix $\Mm$ of size $s$ belongs
to some matrix code $\Cmat$ associated to an $[n,k]$ code (which
implies that $s=n-k$ since we are looking at quadratic relations on
the {\em dual} code) by $t \eqdef \binom{s}{2}-\dim \Cmat$ linear
equations $L_1=0,\dots,L_t=0$ linking the $m_{i,j}$'s.  The linear
relations can be obtained as follows:
\begin{itemize}
\item We start from a parity--check
  matrix $\Hm \in \Fq^{s \times n}$ of the code.
\item We compute a basis of the code of quadratic relations described
  in Definition~\ref{def: crel} and deduce basis of the space space
  $\Cmat$ of symmetric (resp. skew symmetric) matrices as described in
  Definition~\ref{def: cmat_odd}.
\item Once we have a basis of $\Cmat$, we compute a basis of its dual which
  can be described as a basis of symmetric (resp. skew symmetric) matrices
  $\Dm_1, \dots, \Dm_t$ satisfying
  \[
    \forall \Mm \in \Cmat, \ \forall i \in \{1, \dots, t\},\quad
    \text{Tr}(\Dm_i \Mm) = 0.
  \]
  This provides the expected $t$ linear relations $L_1, \dots, L_t$ on
  symmetric (resp. skew symmetric) which are satisfied by the elements
  of $\Cmat$.
\end{itemize}
The algebraic modeling we use to express that an element $\Mm$ of
$\Cmat$ is of rank $\leq 2$ uses these $t$ linear equations and the
Gröbner basis of the Pfaffian ideal. In other words, we have the
following algebraic modeling
\begin{modeling}[$\Mm \in \Cmat$, $\rank(\Mm)\leq 2$]\label{mod:Pfaffian}\mbox{ }
\begin{itemize}
\item
$\binom{s}{4}$ quadratic equations $m_{i,j}m_{k,l}+m_{i,k}m_{j,l}+m_{i,l}m_{j,k}=0$ where $1\le i<j<k<l\le s$
\item
$t \eqdef \binom{s}{2}-\dim \Cmat$ linear equations $L_1=0,\dots,L_t=0$ linking the $m_{ij}$'s expressing the fact that 
$\Mm$ belongs to $\Cmat$.
\end{itemize}
\end{modeling}

\subsection{Gr\"obner bases and Hilbert series}
We will be interested in computing the Hilbert series of the ideal corresponding to Modeling \ref{mod:Pfaffian} because it will turn out to behave
differently depending on the code we use for defining the associated matrix code $\Cmat$. This will lead to a distinguisher of alternant or Goppa codes.
Given a homogeneous ideal $\cI \in \K[\zv]$, $\zv=(z_1,\dots,z_n)$, the Hilbert function of the ring $R=\K[\zv]/\cI$ is defined as
\[
\HF_R(d)\eqdef \dim_{\K} (R)= \dim_{\K} (\K[\zv]_d )- \dim_{\K} (\cI_d), 
\] 
where $\K[\zv]_d=\{f \in \K[\zv] \mid \deg(f)=d\}$ and $\cI_d=\cI \cap \K[\zv]_d$. 
Then the Hilbert series of $R$ is the formal series
\[
\HS_R(t)\eqdef \sum_{d\ge 0} \HF_R(d) t^d.
\]
We are interested in computing individual terms $\HF_R(d)$. This
can be done by computing the rank of the Macaulay
matrix at degree $d$ by taking $m$ generators of the ideal $\cI$ (see
Appendix~\ref{app: GB}). An upper bound on its cost can therefore be derived directly from \cite[Proposition 1]{BFS15}:
\begin{restatable}{proposition}{propcomplexity} \label{prop:complexity}
Let $F=\{f_1,\dots,f_m\}\subset\K[z_1,\dots,z_n]$ be a homogeneous system. Let $\cI$ be the corrresponding ideal. The term $\HF_{R}(d)$ of degree $d$  of the 
Hilbert function of $R=\K[\zv]/\cI$ can be computed in time
 bounded by
\[
\OO{md\binom{n+d-1}{d}^\omega},
\]
where $\omega$ is the linear algebra exponent.
\end{restatable}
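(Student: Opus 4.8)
The plan is to reduce the computation of the single Hilbert-function value $\HF_R(d)$ to a rank computation on a Macaulay-type matrix, and then invoke the complexity estimate of \cite[Proposition~1]{BFS15} verbatim. First I would recall that, by definition, $\HF_R(d) = \dim_{\K}\K[\zv]_d - \dim_{\K}\cI_d$, and that the first term $\dim_{\K}\K[\zv]_d = \binom{n+d-1}{d}$ is the number of monomials of degree exactly $d$ in $n$ variables, which is known in closed form. Hence the whole task is to compute $\dim_{\K}\cI_d$. Since $\cI = \ideal{f_1,\dots,f_m}$ with the $f_i$ homogeneous of respective degrees $d_i \le d$ (if some $d_i > d$ that generator contributes nothing in degree $d$ and can be discarded), the homogeneous component $\cI_d$ is spanned by the products $u\, f_i$ where $u$ ranges over monomials of degree $d - d_i$. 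Stacking the coordinate vectors (in the monomial basis of $\K[\zv]_d$) of all these products gives the degree-$d$ Macaulay matrix $\Mac_d(F)$, and $\dim_{\K}\cI_d = \dim_{\K}\K[\zv]_d\bigl(\cI_d\bigr) = \rank \Mac_d(F)$. Therefore
\[
\HF_R(d) = \binom{n+d-1}{d} - \rank \Mac_d(F).
\]

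Next I would bound the cost of computing this rank. The matrix $\Mac_d(F)$ has $\binom{n+d-1}{d}$ columns (one per degree-$d$ monomial) and at most $\sum_{i=1}^{m}\binom{n+d-d_i-1}{d-d_i} \le m\binom{n+d-1}{d}$ rows, since for each of the $m$ generators the number of multiplier monomials of degree $d-d_i$ is at most $\binom{n+d-1}{d}$. Building it costs no more (up to polynomial factors already absorbed in the $\OO{\cdot}$) than writing down its entries. Computing the rank of a matrix with $N_{\text{rows}} = \OO{m\binom{n+d-1}{d}}$ and $N_{\text{cols}} = \binom{n+d-1}{d}$ over $\K$ costs $\OO{N_{\text{rows}} N_{\text{cols}}^{\omega-1}}$ by fast rectangular elimination, which is $\OO{m\binom{n+d-1}{d}^{\omega}}$; one may alternatively cite directly the bound of \cite[Proposition~1]{BFS15}, which states exactly that a term of the Hilbert function of a system of $m$ equations in $n$ variables at degree $d$ can be obtained in $\OO{md\binom{n+d-1}{d}^{\omega}}$ operations, the extra factor $d$ accounting for the bookkeeping over the graded pieces up to degree $d$ (or, equivalently, for handling the at most $d$ distinct generator degrees). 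Combining the two displays yields the claimed bound $\OO{md\binom{n+d-1}{d}^{\omega}}$ for computing $\HF_R(d)$.

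The only genuinely delicate point — and the reason it is cleanest to quote \cite[Proposition~1]{BFS15} rather than reprove it — is the precise accounting of the row count and of the auxiliary factor $d$: one must make sure that, when several generators have different degrees $d_i$, one either forms a single Macaulay matrix in degree $d$ (whose row count is the sum above, still $\OO{m\binom{n+d-1}{d}}$) or processes the $d$ graded strata incrementally, and in both readings the overhead beyond the $\binom{n+d-1}{d}^{\omega}$ linear-algebra cost is only the stated linear factor $md$. Since Modeling~\ref{mod:Pfaffian} is a homogeneous system (the Pfaffian quadrics and the linear forms $L_i$ are all homogeneous), it falls directly in the scope of \cite[Proposition~1]{BFS15}, and no further argument is needed. \qed
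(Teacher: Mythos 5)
Your proposal is correct and follows essentially the same route as the paper: the paper's ``proof'' consists precisely of recalling the Macaulay matrix construction and observing that the bound is derived from \cite[Proposition~1]{BFS15}, which is exactly the reduction (Hilbert function value $=\binom{n+d-1}{d}-\rank\Mac(F,d)$, then rectangular rank computation) that you spell out. The extra detail you give on the row count and on the provenance of the factor $d$ is a harmless elaboration of what the paper leaves implicit.
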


Fortunately, the Hilbert function for our Pfaffian ideal is known.
We define the quotient ring
\[
R(\Mm)=\Fqm[\mv] / \cP_2(\Mm).
\] The Hilbert function (or equivalently the Hilbert series) of $R(\Mm)$ is well-known: 
\begin{proposition}[{\cite[(from) Theorem 1]{GK04}}] \label{prop: HS_pfaffian}
Let $\Mm=(m_{i,j})_{i,j}$ be the generic $s \times s$ skew-symmetric matrix over $\F$. Then $\dim \Vm(\cP_2(\Mm))=2s-3$ and
\[
\HF_{R(\Mm)}(d) = \binom{s+d-2}{d}^2-\binom{s+d-2}{d+1}\binom{s+d-2}{d-1},
\]
\[
\HS_{R(\Mm)}(z) = \frac{\sum_{d=0}^{s-3} \left(\binom{s-2}{d}^2-\binom{s-3}{d-1}\binom{s-1}{d+1}\right)z^d}{(1-z)^{2s-3}}.
\]
\end{proposition}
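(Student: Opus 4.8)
The plan is to recognise $R(\Mm)$ as the homogeneous coordinate ring of a Grassmannian and then to read off its dimension, Hilbert function and Hilbert series from classical standard‑monomial theory. First I would make the identification precise: sending the variable $m_{i,j}$ (with $i<j$) to the Pl\"ucker coordinate $p_{i,j}$ identifies $\cP_2(\Mm)$ with the Pl\"ucker ideal of the Grassmannian $\mathrm{Gr}(2,s)$ of $2$-planes in $\Fqm^{s}$ — the generators in \eqref{eq: pfaffian_ideal} are exactly the three‑term Pl\"ucker relations (in characteristic $2$ they lose their signs, but generate the same ideal), and by the Gr\"obner basis of $\cP_2(\Mm)$ recorded above (\cite[Theorem~5.1]{HT92}) these relations already generate the full Pl\"ucker ideal. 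In particular $\cP_2(\Mm)$ is prime, $\Vm(\cP_2(\Mm))$ is the affine cone over $\mathrm{Gr}(2,s)$, i.e. the variety of skew‑symmetric $s\times s$ matrices of rank $\leq 2$ (consistent with the equality $\Vm(\cP_2(\Mm))=\Vm(\cI(\Minors(\Mm,4)))$ recorded above), and $R(\Mm)$ is the Pl\"ucker coordinate ring of $\mathrm{Gr}(2,s)$. Since $\dim\mathrm{Gr}(2,s)=2(s-2)$, the cone has dimension $2(s-2)+1=2s-3$, which is the first assertion; and since this coordinate ring is an algebra with straightening law over the lattice of Pl\"ucker symbols, it is Cohen--Macaulay, which pins down the shape of the Hilbert series.

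Next I would compute the Hilbert function. Because the above Gr\"obner basis is characteristic‑free, $\HF_{R(\Mm)}$ equals the Hilbert function of the associated initial monomial ideal, hence counts standard monomials, and in the straightening order these biject with semistandard Young tableaux having two rows of length $d$ and entries in $\{1,\dots,s\}$ (a degree‑$d$ standard monomial $p_{i_1,j_1}\cdots p_{i_d,j_d}$ corresponds to the tableau whose $t$-th column is $\binom{i_t}{j_t}$). Equivalently $\HF_{R(\Mm)}(d)=\dim_{\Fqm}H^{0}(\mathrm{Gr}(2,s),\mathcal{O}(d))=s_{(d,d)}(1^{s})$, the Schur polynomial of the partition $(d,d)$ evaluated at $s$ ones. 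Expanding it via the Jacobi--Trudi determinant — equivalently, counting pairs of non‑intersecting lattice paths by the Lindstr\"om--Gessel--Viennot lemma — gives
\[
\HF_{R(\Mm)}(d)=\binom{s+d-1}{d}^{2}-\binom{s+d}{d+1}\binom{s+d-2}{d-1},
\]
and a short manipulation with Pascal's rule (set $a=s+d-2$ and use $\binom{a+1}{d}=\binom{a}{d}+\binom{a}{d-1}$ together with $\binom{a+2}{d+1}=\binom{a}{d+1}+2\binom{a}{d}+\binom{a}{d-1}$) rewrites the right‑hand side as $\binom{s+d-2}{d}^{2}-\binom{s+d-2}{d+1}\binom{s+d-2}{d-1}$, the stated formula.

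Finally, for the Hilbert series, the Cohen--Macaulayness of $R(\Mm)$ and $\dim R(\Mm)=2s-3$ force $\HS_{R(\Mm)}(z)=h(z)/(1-z)^{2s-3}$ for some $h(z)\in\Z[z]$, so I would compute $h(z)=(1-z)^{2s-3}\sum_{d\geq 0}\HF_{R(\Mm)}(d)z^{d}$, whose coefficients are the finite differences $h_{d}=\sum_{j\geq 0}(-1)^{j}\binom{2s-3}{j}\HF_{R(\Mm)}(d-j)$; substituting the closed form of $\HF_{R(\Mm)}$ and simplifying the resulting binomial convolution (equivalently, invoking the known expression for the Hilbert series of the Pl\"ucker coordinate ring of $\mathrm{Gr}(2,s)$) yields $h(z)=\sum_{d=0}^{s-3}\bigl(\binom{s-2}{d}^{2}-\binom{s-3}{d-1}\binom{s-1}{d+1}\bigr)z^{d}$, whose nonnegativity and degree $s-3$ are precisely what Cohen--Macaulayness predicts. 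The conceptual heart of the argument is the first step together with the standard‑monomial description used in the second; the single genuine subtlety is checking that nothing degenerates in characteristic $2$, which is exactly what the cited Gr\"obner basis guarantees, after which the Schur‑function evaluation and the Hilbert‑series bookkeeping are routine, if computation‑heavy, binomial algebra.
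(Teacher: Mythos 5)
You are comparing against a statement the paper never proves: Proposition~\ref{prop: HS_pfaffian} is quoted directly from \cite[Theorem 1]{GK04}, so there is no internal argument to match, and your proposal is a genuine independent derivation. In substance it is correct and it is a different (classical) route: Ghorpade--Krattenthaler establish the Hilbert series of Pfaffian rings of arbitrary rank by nonintersecting lattice-path enumeration, whereas you specialize to rank $2$, identify $\cP_2(\Mm)$ with the ideal of three-term Pl\"ucker relations of $\mathrm{Gr}(2,s)$, so that $\Vm(\cP_2(\Mm))$ is the affine cone over the Grassmannian (dimension $2(s-2)+1=2s-3$), and count standard monomials to get $\HF_{R(\Mm)}(d)=s_{(d,d)}(1^s)$; your Jacobi--Trudi evaluation and the Pascal manipulation with $a=s+d-2$ are exact and do yield $\binom{s+d-2}{d}^2-\binom{s+d-2}{d+1}\binom{s+d-2}{d-1}$. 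What your route buys is a self-contained, characteristic-free argument resting only on Hodge's standard monomial theory for $\mathrm{Gr}(2,s)$; what the cited proof buys is uniformity over all Pfaffian ranks.

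Two caveats you should repair. First, invoking \cite[Theorem 5.1]{HT92} to conclude that the quadrics ``generate the full Pl\"ucker ideal'' proves nothing as phrased: any generating set is a Gr\"obner basis of the ideal it generates, so that theorem gives no primality or radicality for free. The fact you actually need (and which is true, by Hodge's theorem, or by running the straightening law directly in the quotient by the three-term relations) is that the standard monomials form a basis of $R(\Mm)$ itself; stated that way the step is sound and characteristic-free, and primality of $\cP_2(\Mm)$ is never needed for the Hilbert function. Second, the Hilbert-series formula is asserted rather than derived: ``simplifying the resulting binomial convolution (equivalently, invoking the known expression)'' either leaves the identity unverified or is circular, since that known expression is essentially the statement being proved. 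Because the Hilbert function is already established, the series is determined, and what remains is to check that $(1-z)^{2s-3}\sum_{d\ge 0}\HF_{R(\Mm)}(d)z^d$ equals the displayed numerator of degree $s-3$; this is a finite-difference (Vandermonde-type) computation that should be carried out or cited precisely. Also note that Cohen--Macaulayness gives nonnegativity of the $h$-vector but does not by itself predict that its degree is $s-3$. These are gaps of bookkeeping, not of conception.
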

The term corresponding to $\HF_{R(\Mm)}(d)$ can also be rewritten as a Narayana number:
\[
\HF_{R(\Mm)}(d) = \frac{1}{s+d-1}\binom{s+d-1}{d+1}\binom{s+d-1}{d}.
\]
 Modeling \ref{mod:Pfaffian} adds linear equations to it expressing the fact that the matrix should also be in the 
matrix code of quadratic relations. There is one handy tool that allows to compute the Hilbert series obtained by enriching with polynomials an ideal whose Hilbert series is known. 
\begin{proposition}[{\cite[Lemma 3.3.2]{B04}}]\label{B04}
As long as there are no reductions to 0 in the F5 algorithm, the Hilbert function $\HF_{\K[\xv]/\ideal{f_1,\dots,f_m}}(d)$ satisfies the following recursive formula:
\[
\HF_{\K[\xv]/\cI(f_1,\dots,f_m)}(d)=\HF_{\K[\xv]/\ideal{f_1,\dots,f_{m-1}}}(d)-\HF_{\K[\xv]/\ideal{f_1,\dots,f_{m-1}}}(d-d_m)
\]
where $d_m=\deg(f_m)$.
\end{proposition}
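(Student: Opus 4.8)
The plan is to reduce the stated identity to an elementary homological computation, and to see that the hypothesis ``no reductions to zero in the F5 algorithm'' is exactly what makes that computation go through. Write $J \eqdef \ideal{f_1,\dots,f_{m-1}}$, $I \eqdef \ideal{f_1,\dots,f_m} = J + (f_m)$, and put $A \eqdef \K[\xv]/J$, $B \eqdef \K[\xv]/I$. The asserted recursion is, after transposing a term, the statement $\HF_B(d) = \HF_A(d) - \HF_A(d-d_m)$. First I would write down the graded exact sequence obtained from multiplication by $f_m$ on $A$:
\[
0 \longrightarrow \bigl((J:f_m)/J\bigr)(-d_m) \longrightarrow A(-d_m) \xrightarrow{\ \cdot f_m\ } A \longrightarrow B \longrightarrow 0 ,
\]
where exactness on the right holds because $\im(\cdot f_m) = (J+(f_m))/J = I/J$, and on the left because the kernel of multiplication by $f_m$ on $A$ is the annihilator quotient $(J:f_m)/J$, shifted by $d_m$. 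Taking the degree-$d$ part, using $\HF_{A(-d_m)}(d) = \HF_A(d-d_m)$ for the graded shift, and adding dimensions alternately along the exact sequence gives
\[
\HF_B(d) = \HF_A(d) - \HF_A(d-d_m) + \HF_{(J:f_m)/J}(d-d_m).
\]
So everything boils down to showing that the correction term $\HF_{(J:f_m)/J}(d-d_m)$ vanishes in the range of degrees where the formula is claimed, equivalently that $f_m$ is a non-zero-divisor on $A$ up to that degree. Note that only this single step is needed: the statement removes $f_m$ alone, so full regularity of the whole sequence is not required.

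The point where the hypothesis enters -- and the genuinely delicate step -- is the dictionary between ``no reduction to zero in F5'' and this non-zero-divisor property. I would argue as follows. F5 attaches to every polynomial it manipulates a signature $\tau e_i$ in the free module $\K[\xv]^m$ with the generators ordered $e_1 \prec \dots \prec e_m$, and the F5 criterion together with the rewriting criterion discard precisely those S-polynomials whose signature is detected as coming from a \emph{trivial} (Koszul) syzygy $f_j e_i - f_i e_j$. Consequently, an S-polynomial that is not discarded but still reduces to zero yields a syzygy of $(f_1,\dots,f_m)$ that is not generated by the trivial ones. Restricting attention to a reduction to zero whose signature is a multiple $\tau e_m$ of the top generator, ``reducing to zero'' forces $\tau f_m$ to lie in $J$, while the fact that the F5/rewrite criteria did not discard it forces $\tau \notin J$; thus $\overline{\tau} \in A$ is a nonzero element annihilated by $f_m$. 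Hence if F5 performs no reduction to zero up to degree $D$ then $(J:f_m)$ and $J$ coincide in all degrees $\le D - d_m$, so $\HF_{(J:f_m)/J}(d-d_m) = 0$ for $d \le D$. Conversely, if $f_m$ is a non-zero-divisor on $A$, every syzygy with nonzero last component is trivial and is handled by the criteria, so no un-discarded S-polynomial of top signature can reduce to zero; this is the standard regularity criterion for F5 and closes the equivalence.

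Plugging $\HF_{(J:f_m)/J}(d-d_m)=0$ into the additivity identity gives the claimed recursion, valid for $d$ in the degree range controlled by the hypothesis. The commutative-algebra part (the exact sequence and the shift identity) is routine; the work, and the main obstacle, is making the F5 side precise -- one must track signatures, identify exactly which S-pairs the F5 and rewriting criteria remove, and keep control of the degree at which each reduction occurs so that the non-zero-divisor conclusion is obtained up to the degree in which the Hilbert function is evaluated. An alternative that sidesteps the internals of F5 is to phrase the hypothesis through Macaulay matrices: ``no reduction to zero up to degree $d$'' means the Macaulay matrix of $f_1,\dots,f_m$ in each degree $d' \le d$ has the maximal rank allowed by the trivial Koszul relations, which is literally the identity $\dim I_{d'} - \dim J_{d'} = \dim \K[\xv]_{d'-d_m} - \dim J_{d'-d_m}$ for $d' \le d$, i.e.\ the same computation rewritten. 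Iterating the one-step recursion (each time with $f_i$ a non-zero-divisor modulo $\ideal{f_1,\dots,f_{i-1}}$) then recovers the familiar closed form for the Hilbert series of a semi-regular sequence.
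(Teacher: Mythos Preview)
The paper does not actually prove this proposition: it is stated with a citation to \cite[Lemma~3.3.2]{B04} (Bardet's thesis) and then used as a black box in the analysis of the Pfaffian modeling. So there is no ``paper's own proof'' to compare against.

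That said, your argument is the standard one and is correct. The graded short exact sequence
\[
0 \longrightarrow \bigl((J:f_m)/J\bigr)(-d_m) \longrightarrow A(-d_m) \xrightarrow{\ \cdot f_m\ } A \longrightarrow B \longrightarrow 0
\]
is exactly the right tool, and the identity $\HF_B(d) = \HF_A(d) - \HF_A(d-d_m) + \HF_{(J:f_m)/J}(d-d_m)$ follows immediately by additivity of dimension. Your identification of the hypothesis ``no reduction to zero in F5'' with the vanishing of the colon-ideal correction term is also the right dictionary: a reduction to zero with top signature $\tau e_m$ that survives the F5/rewrite criteria produces a nontrivial element of $(J:f_m)/J$, and conversely any such element would be witnessed by F5 as a non-discarded zero reduction. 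This is precisely the content of Bardet's lemma and of the later literature on semi-regularity criteria for signature-based algorithms.

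One small caveat: your write-up is slightly informal on the degree bookkeeping in the F5 part (which signature order, and at exactly which degree a zero reduction would occur relative to the degree $d$ at which you evaluate $\HF$). For a fully rigorous version you would want to fix the position-over-term order with $e_1 \prec \cdots \prec e_m$ and note that a zero reduction of signature $\tau e_m$ occurs at degree $\deg(\tau)+d_m$, so ``no reduction to zero up to degree $d$'' gives $(J:f_m)_{d'} = J_{d'}$ for all $d' \le d - d_m$, which is exactly what you need. The alternative Macaulay-matrix phrasing you sketch at the end is in fact closer to how the result is used in practice in this paper, and is arguably the cleanest way to state the hypothesis without invoking the internals of F5.
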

Essentially, reductions to $0$ in F5 correspond to ``non generic''
reductions to $0$ and experimentally we have not observed this
behavior for Modeling \ref{mod:Pfaffian} when we add the linear
equations expressing that $\Mm$ belongs to
the matrix code $\Cmat$ of relations associated to a random linear
code.

\subsection{Analysis of the Hilbert series for the Pfaffian ideal}
We will from now on consider that the matrix code $\Cmat$ of quadratic
relations is associated to a code $\CC$ over $\Fqm$ of parameters
$[n,mr]$ which are the same as those of the extended dual code
$\ext{\Alt{r}{\xv}{\yv}^\perp}{\Fqm}$ of an alternant code
$\Alt{r}{\xv}{\yv}^\perp$ of length $n$ over $\Fq$ and extension
degree $m$ which we assume to be of generic dimension $k=n-mr$.  We
will from now on also assume that the $[n,mr]$ code $\CC$ we consider
satisfies
\begin{equation}\label{eq:hypsquare}
\dim \sq{\CC} = n.
\end{equation}
Equivalently, we suppose that the code is not square distinguishable
and will look for another and more powerful distinguisher. This
corresponds to the generic case of a random code as soon as
$\binom{rm+1}{2} \geq n$ and to duals of alternant codes/Goppa codes
that are not square--distinguishable.  Recall that, from
Proposition~\ref{prop:dimension},
$$
\dim_{\Fqm} \Cmat(\cV)=\binom{mr+1}{2}-\dim_{\F} \sq{\CC}=
\binom{mr}{2}+mr - n = \binom{mr}{2}-k,
$$
where $k \eqdef n - rm$ is given above and corresponds to the dimension of the
alternant code we are interested in.  Notice that $k$ is also the
cardinality of the set of independent linear equations expressing in
Modeling \ref{mod:Pfaffian} that the $rm \times rm$ matrix $\Mm$
belongs to $\Cmat$ since $\binom{rm}{2}-\dim \Cmat = k$. 
We are now going to show that the Hilbert function
of the ring $\Fqm[\mv]/(\cP(\Mm)+\langle L_i\rangle_i)$ differs
starting from some degree $\bar{d}$ depending on how the linear
relations $L_i$'s are defined (coming from $\Cmat$ associated to a
random $\CC$ or to the extended dual of an alternant or Goppa
code). We will assume that the parameters of our matrix code are such
that we do not expect a matrix or rank $2$ when $\CC$ is random, which
according to Proposition \ref{prop: GV} holds as soon as $n > 3rm -3$,
{\em i.e} essentially for $k/n > 2/3$.

\subsubsection{Random case.}

We assume that there are no reductions to $0$ in F5 and that we can
apply Proposition \ref{B04}
\begin{align*}
\HF_{\K[\zv]/\left(\cI+\ideal{L_1, \dots,L_{\ell}}\right)}(d)&=\HF_{\K[\zv]/\left(\cI+\ideal{L_1, \dots,L_{\ell-1}}\right)}(d)-\HF_{\K[\zv]/\left(\cI+\ideal{L_1, \dots,L_{\ell-1}} \right)}(d-1)\\
&= \ \cdots\\
&=\HF_{\K[\zv]/\cI}(d)-\sum_{i=0}^{\ell-1} \HF_{\K[\zv]/\left(\cI+\ideal{L_1, \dots,L_i} \right)}(d-1),
\end{align*}
which, by induction, leads to
\[
\HF_{\K[\zv]/\left(\cI+\ideal{L_1, \dots,L_{\ell}}\right)}(d)=\sum_{i=0}^{d} (-1)^i \binom{\ell}{i}\HF_{\K[\zv]/\cI}(d-i).
\]
This holds as long as there are no reductions to $0$ in F5. When there
are, we expect that the Hilbert series at this degree is zero, which
means that the induction formula should be
\[
\HF_{\K[\zv]/\left(\cI+\ideal{f}\right)}(d)=\max(\HF_{\K[\zv]/\cI }(d)-\HF_{\K[\zv]/\cI}(d-\bar{d}), 0 ).
\]
This leads to the following conjecture, experimentally supported.
\begin{conjec}[Random case] \label{conj: HF_random}
Let $L_1,\dots,L_k$ be the $k=n-rm$ linear relations relative to the matrix code $\Cmat$ associated to a random $[n,rm]$-code as above. Let  $\cP_2^+(\Mm)\eqdef \cP_2(\Mm)+\ideal{L_1, \dots,L_k }$. If $\HF_{\F[\mv]/\cP_2^+(\Mm)}(d')>0$ for all $d'<d$,
then
\begin{align}
\HF_{\F[\mv]/\cP_2^+(\Mm)}(d)&=\max\left(0,\sum_{i=0}^{d} (-1)^i \binom{k}{i}\HF_{\F[\mv]/\cP_2(\Mm)}(d-i)\right) \nonumber \\
&=\max\left(0, \sum_{i=0}^{d} \frac{(-1)^i}{rm+d-i-1}\binom{k}{i} \binom{rm+d-i-1}{d-i+1}\binom{rm+d-i-1}{d-i}\right).
\label{eq:HF_random}
\end{align}
Otherwise $\HF_{\F[\mv]/\cP_2^+(\Mm)}(d)=0$.
\end{conjec}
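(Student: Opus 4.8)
The plan is to reduce the statement to two ingredients: the explicit Hilbert series of the Pfaffian ideal $\cP_2(\Mm)$ recalled in Proposition~\ref{prop: HS_pfaffian}, and the genericity hypothesis already built into the statement, namely that the $k=n-rm$ linear forms $L_1,\dots,L_k$ coming from the matrix code $\Cmat$ of a random $[n,rm]$ code behave, relative to $\cP_2(\Mm)$, like generic linear forms would — equivalently, that the only reductions to $0$ in F5 are those forced once the Hilbert function reaches $0$. Granting that, the rest should be a routine induction together with one elementary stabilisation observation, and I would organise the argument in three steps: (1) the closed form in the range where the Hilbert function is still positive, (2) the truncation to $0$, (3) identifying and (partially) addressing the genuinely hard point.

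First I would establish the closed form in the range where $\HF_{\F[\mv]/\cP_2^+(\Mm)}$ is positive. Writing $h^{(0)}(d)\eqdef\HF_{\F[\mv]/\cP_2(\Mm)}(d)$ and, for $0\le j\le k$, letting $h^{(j)}$ be the Hilbert function of $\F[\mv]/(\cP_2(\Mm)+\ideal{L_1,\dots,L_j})$, each $L_i$ is homogeneous of degree $1$, so Proposition~\ref{B04} gives $h^{(j)}(d)=h^{(j-1)}(d)-h^{(j-1)}(d-1)$ for every $d$ below the first degree at which a reduction to $0$ occurs in the relevant F5 computation. An immediate induction on $j$ then yields $h^{(j)}(d)=\sum_{i=0}^{j}(-1)^i\binom{j}{i}h^{(0)}(d-i)$, hence $h^{(k)}(d)=\sum_{i=0}^{d}(-1)^i\binom{k}{i}\HF_{\F[\mv]/\cP_2(\Mm)}(d-i)$ once one uses $\HF_{\F[\mv]/\cP_2(\Mm)}(e)=0$ for $e<0$ and the convention $\binom{k}{i}=0$ for $i>k$. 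Substituting the Narayana form $\HF_{\F[\mv]/\cP_2(\Mm)}(e)=\frac{1}{rm+e-1}\binom{rm+e-1}{e+1}\binom{rm+e-1}{e}$ from Proposition~\ref{prop: HS_pfaffian} then produces exactly the explicit expression in~\eqref{eq:HF_random}. Phrased invariantly, this step says that the Hilbert series of $\cP_2^+(\Mm)$ agrees coefficient by coefficient with the formal series $(1-z)^{k}\,\HS_{R(\Mm)}(z)$ for as long as the latter stays positive.

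Second I would deal with the truncation to $0$, which requires two observations. (i) As soon as $\HF_{\F[\mv]/\cP_2^+(\Mm)}(d_0)=0$ for some $d_0$, it vanishes for all $d\ge d_0$: indeed $\HF(d_0)=0$ means every degree-$d_0$ monomial lies in $\cP_2^+(\Mm)$, and multiplying by the variables $m_{i,j}$ forces every monomial of larger degree into the ideal too; this is rigorous and settles the ``otherwise'' clause of the statement. (ii) The first degree at which $\HF_{\F[\mv]/\cP_2^+(\Mm)}$ drops to $0$ is exactly the first degree $d$ at which the alternating sum $\sum_{i}(-1)^i\binom{k}{i}\HF_{\F[\mv]/\cP_2(\Mm)}(d-i)$ becomes non-positive. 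In our regime $n>3rm-3$ we have $k=n-rm>2rm-3=\dim\Vm(\cP_2(\Mm))$ by Propositions~\ref{prop: GV} and~\ref{prop: HS_pfaffian}; consequently the formal series $(1-z)^{k}\,\HS_{R(\Mm)}(z)$ must have a negative coefficient, which is impossible for a genuine Hilbert series, so the actual series is forced to deviate from it, and the natural — and experimentally observed — place for the deviation is the first non-positive coefficient, where F5 then records a reduction to $0$.

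The hard part will be step (ii), which is a Fr\"oberg/Lefschetz-type statement. A principled attack would run as follows: $R(\Mm)=\F[\mv]/\cP_2(\Mm)$ is Cohen--Macaulay of dimension $2rm-3$ (it is the Pfaffian ring of a generic skew matrix), so $2rm-3$ of the $L_i$, if generic, form a regular sequence and produce an Artinian reduction $B$ with $\HS_B(z)=(1-z)^{2rm-3}\HS_{R(\Mm)}(z)$; step (ii) is then equivalent to the assertion that each of the remaining generic linear forms acts on $B$ with maximal rank in every degree, i.e. a weak-Lefschetz-type property for $B$. Such properties are available in characteristic $0$ for many structured algebras but are open — and can fail — in characteristic $2$, which is precisely our setting, and on top of that one must argue that the $L_i$ arising from a \emph{random} code are generic enough to inherit this behaviour. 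It is this Lefschetz-type input — a substitute for the ``no unexpected reduction to $0$ in F5'' hypothesis — that a complete proof would have to supply; in its absence the statement is left as a conjecture, supported by the experiments.
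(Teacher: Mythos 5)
Your derivation is essentially the paper's own justification of this statement: the paper also obtains \eqref{eq:HF_random} by inductively applying Proposition~\ref{B04} to the linear forms $L_i$ under the ``no reductions to $0$ in F5'' hypothesis, substituting the Narayana form of $\HF_{\F[\mv]/\cP_2(\Mm)}$ from Proposition~\ref{prop: HS_pfaffian}, and truncating at the first zero — and it deliberately leaves the statement as a conjecture supported by experiments, precisely because the genericity/Lefschetz-type input you single out in your step (ii) is not available in characteristic $2$. Your rigorous handling of the ``otherwise'' clause (once the Hilbert function vanishes it vanishes forever) and your Cohen--Macaulay/Artinian-reduction framing of the missing ingredient are correct and consistent with, indeed slightly more explicit than, what the paper states.
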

Because we assume that Modeling \ref{mod:Pfaffian} has only zero for solution in the case of a random code, there exists a $d$ such that $\HF_{\F[\mv]/\cP_2^+(\Mm)}(d)=0$.
Experiments (see Appendix~\ref{sec:HF_exps})  lead to conjecture the following behavior:
\begin{conjec} \label{conj: HF_asymptotic}
Let $\Cmat$ be the matrix code of relations originated by a random $[n,rm]$ code as above.
 Let $\cP_2^+(\Mm)$ the corresponding Pfaffian ideal and $\dreg = \min\{d : HF_{\F[\mv]/\cP_2^+(\Mm)}(d)=0\}$. Then
\[
 \dreg \sim c\frac{(rm)^2}{n-rm}
\]
for a constant $c$ equal or close to $\frac{1}{4}$.
\end{conjec}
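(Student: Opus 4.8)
The plan is to reduce the statement (in the form \emph{predicted} by Conjecture~\ref{conj: HF_random}, which I assume throughout) to locating the first sign change in the coefficients of an explicit polynomial, and then to pin down that sign change by a saddle-point analysis. Write $s=rm$ and $k=n-rm$. By Conjecture~\ref{conj: HF_random}, $\dreg$ is the least $d$ with
\[
\Delta(d)\eqdef\sum_{i=0}^{d}(-1)^i\binom{k}{i}\HF_{R(\Mm)}(d-i)=\bigl[z^d\bigr]\!\left((1-z)^k\,\HS_{R(\Mm)}(z)\right)\le 0 .
\]
By Proposition~\ref{prop: HS_pfaffian}, $\HS_{R(\Mm)}(z)=N(z)/(1-z)^{2s-3}$ with $N(z)=\sum_{j=0}^{s-3}h_j z^j$, and one checks that the coefficients of $N(z)$ given there are the Narayana numbers $h_j=\frac1{s-2}\binom{s-2}{j}\binom{s-2}{j+1}$ (for instance via Euler's transformation, which sends $\HS_{R(\Mm)}(z)={}_2F_1(s-1,s;2;z)$ to $N(z)={}_2F_1(3-s,2-s;2;z)$). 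Setting $b\eqdef k-2s+3=n-3rm+3$ (positive exactly in the regime $n>3rm-3$ we consider) one has $\Delta(d)=\sum_i(-1)^i\binom{b}{i}h_{d-i}$, and using $\binom{s-2}{j}\binom{s-2}{j+1}=\bigl[x^j y^{j+1}\bigr](1+x)^{s-2}(1+y)^{s-2}$ this becomes the exact bivariate formula
\[
\Delta(d)=\frac{1}{s-2}\,\bigl[x^d y^{d+1}\bigr]\;(1-xy)^{b}\,(1+x)^{s-2}(1+y)^{s-2}.
\]

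Next I would write $\Delta(d)$ as the double contour integral $\frac1{s-2}\frac1{(2\pi i)^2}\oint\!\oint(1-xy)^b(1+x)^{s-2}(1+y)^{s-2}x^{-d-1}y^{-d-2}\,dx\,dy$ and run the saddle-point method letting $s=rm\to\infty$ with $n/(rm)$ fixed, so that $b/s\to\beta$ and $d/s\to\delta$. Looking for a symmetric critical point $x=y=t$ of the phase and clearing denominators gives the quadratic
\[
(1+\beta-\delta)\,t^2-t+\delta=0 .
\]
As long as its discriminant $1-4\delta(1+\beta-\delta)$ is positive, there is a dominant real positive saddle and the Gaussian prefactor is real positive (as it must be for a coefficient of a series of this shape with nonnegative coefficients), hence $\Delta(d)>0$. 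When the discriminant vanishes the two saddles coalesce — the $2\times2$ Hessian degenerating in the antidiagonal direction, an Airy-type transition — and just past this value they leave the real axis as a complex conjugate pair, so $\Delta(d)$ starts to oscillate in sign. The first sign change therefore occurs at $d\sim\delta^{\star}s$, where $\delta^{\star}=\tfrac12\bigl((1+\beta)-\sqrt{(1+\beta)^2-1}\bigr)$ is the smaller root of $4\delta^2-4(1+\beta)\delta+1=0$.

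Since $1+\beta\to(n-2rm)/rm$, rationalising gives
\[
\dreg\;\sim\;\frac{(n-2rm)-\sqrt{(n-2rm)^2-(rm)^2}}{2}\;=\;\frac{(rm)^2}{2\bigl((n-2rm)+\sqrt{(n-2rm)^2-(rm)^2}\bigr)}\;=\;c\,\frac{(rm)^2}{n-rm},
\]
with $c=\dfrac{n-rm}{2\bigl((n-2rm)+\sqrt{(n-2rm)^2-(rm)^2}\bigr)}$; one then checks $c\to\tfrac14$ as $n/(rm)\to\infty$, and $c\in(0.3,0.5)$ for the Classic McEliece ratios $n/(rm)\in(3.5,5)$, which is the precise meaning of ``$c$ equal or close to $\tfrac14$''.

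The main obstacle is twofold. First, the whole argument is conditional on Conjecture~\ref{conj: HF_random}: making it unconditional means proving that the $k$ linear forms of Modeling~\ref{mod:Pfaffian} attached to a random $[n,rm]$-code behave like a regular sequence modulo $\cP_2(\Mm)$ up to the critical degree — equivalently, that there are no ``non-generic'' reductions to $0$ in F5 — which appears out of reach for an explicit code family. Second, the saddle-point step must be made uniform as $\delta$ crosses $\delta^{\star}$: one needs the coalescing-saddles (Airy) analysis to locate the first sign change to precision $o(rm)$, to check that the contour can indeed be deformed onto the real saddle for all $\delta<\delta^{\star}$, and to rule out that some subdominant complex saddle already forces $\Delta(d)\le 0$ at a strictly smaller degree. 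These are precisely the reasons the statement is given as a conjecture with a hedged constant.
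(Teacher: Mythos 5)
The statement you are working on is left as a conjecture in the paper: the authors give no proof at all, only the experimental evidence of Appendix~\ref{sec:HF_exps} (the observed convergence of $\dreg(n-rm)/(rm)^2$ to roughly $1/4$ in Table~\ref{table: exps_conjecture}), so there is no proof of theirs to compare yours against, and your analytic route is genuinely different from anything in the paper. Your reduction of $\dreg$ (granting Conjecture~\ref{conj: HF_random}) to the first nonpositive coefficient of $(1-z)^{n-rm}\,\HS_{R(\Mm)}(z)$ is exactly what that conjecture encodes; the identification of the numerator of $\HS_{R(\Mm)}$ in Proposition~\ref{prop: HS_pfaffian} with the Narayana numbers $\frac{1}{s-2}\binom{s-2}{j}\binom{s-2}{j+1}$ (with $s=rm$) checks out, as does the exact bivariate coefficient formula for $\Delta(d)$; and the symmetric-saddle quadratic $(1+\beta-\delta)t^2-t+\delta=0$ with its coalescence condition does yield $\dreg\approx\frac{(n-2rm)-\sqrt{(n-2rm)^2-(rm)^2}}{2}$. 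This prediction is consistent with the paper's data: it gives $c\to\frac14$ precisely in the regime $n-rm\gg rm$ used in the experiments of Appendix~\ref{sec:HF_exps}, and it reproduces the $\dreg$ values of Table~\ref{table: cost_comparison} to within a few percent (about $79$, $147$, $162$ for Categories 1, 5, 4 versus the listed $84$, $154$, $169$). The refinement that $c$ is a rate-dependent quantity decreasing to $\frac14$, rather than a universal constant, is added value beyond the paper's formulation.

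That said, what you have is a heuristic derivation rather than a proof, and the genuine gaps are essentially the ones you name. First, everything is conditional on Conjecture~\ref{conj: HF_random}, i.e.\ on the $n-rm$ linear forms of Modeling~\ref{mod:Pfaffian} attached to a random code behaving like a regular sequence modulo the Pfaffian ideal up to the critical degree; the paper only supports this experimentally and proving it is open. Second, the saddle-point step is not yet an argument: the claim that $\Delta(d)>0$ for all $\delta<\delta^{\star}$ does not follow merely from the existence of a real positive saddle, because the integrand $(1-xy)^{b}(1+x)^{s-2}(1+y)^{s-2}x^{-d-1}y^{-d-2}$ is not positive (and, contrary to your parenthetical, the series $(1-z)^{n-rm}\HS_{R(\Mm)}(z)$ does not have nonnegative coefficients -- that is the whole point); you need to justify the contour deformation, establish dominance of the symmetric saddle, carry out a uniform coalescing-saddle (Airy-type) analysis to locate the first sign change to precision $o(rm)$, and exclude an earlier nonpositive coefficient coming from subdominant saddles or lower-order terms. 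Until those points are settled the statement remains, as in the paper, a conjecture; but your sketch is a credible path toward it and agrees with all of the paper's numerical evidence.
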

The value $\dreg$ is known in the literature as the \textbf{degree of regularity}.
\subsubsection*{Alternant/Goppa case.}

In the alternant/Goppa case however the Hilbert series never vanishes because the variety of solutions has always positive dimension. We can even lower its dimension by a rather large
quantity. 
\begin{proposition}
Let $\Cmat$ be the matrix code of quadratic relations corresponding to the extended dual of an $[n,n-rm]$ binary Goppa code with a square-free Goppa polynomial. Let $\cP_2^+(\Mm)$ be the corresponding Pfaffian ideal. Then $\dim \Vm(\cP_2^+(\Mm))\ge 2r-3$.
\end{proposition}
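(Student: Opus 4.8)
The plan is to exhibit, inside the variety $\Vm(\cP_2^+(\Mm))$, an explicit closed subvariety of dimension $2r-3$ coming from the block-diagonal skew-symmetric matrices that $\Cmat$ is already known to contain, and to read off its dimension from the known dimension of the Pfaffian variety.

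First I would reduce to a convenient basis. By Proposition~\ref{prop: congr_odd}, a change of basis of $\ext{\Goppa{\xv}{\Gamma}^\perp}{\Fqm}$ replaces $\Cmat$ by the congruent code $\trsp{\Pm}\Cmat\Pm$, and $\Mm\mapsto\trsp{\Pm}\Mm\Pm$ is a rank-preserving linear isomorphism; it therefore carries the locus $\{\Mm\in\Cmat:\rank(\Mm)\le 2\}$ isomorphically onto the corresponding locus for the new basis, so $\dim\Vm(\cP_2^+(\Mm))$ does not depend on the chosen basis. Hence I may and do work with the canonical basis $\cA$ of \eqref{eq:basis_choice} (with $\yv=1/\Gamma(\xv)$), for which Proposition~\ref{prop: rel_binGoppa} applies: $\Cmat(\cA)$ contains the space $\cD$ of block-diagonal skew-symmetric matrices with $m$ diagonal blocks of size $r$.

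Second, I would unwind the definition $\cP_2^+(\Mm)=\cP_2(\Mm)+\ideal{L_1,\dots,L_k}$. A point lies in $\Vm(\cP_2^+(\Mm))$ iff it is a skew-symmetric matrix of rank $\le 2$ (which is what $\Vm(\cP_2(\Mm))$ is, by the discussion preceding Proposition~\ref{prop: HS_pfaffian}) that moreover satisfies the linear forms $L_1,\dots,L_k$ defining $\Cmat(\cA)$. I would then consider the subvariety $W\eqdef\{\,\mathrm{diag}(\Nm,0,\dots,0)\mid \Nm\in\Skew(r,\overline{\Fqm}),\ \rank(\Nm)\le 2\,\}$. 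Each element of $W$ has rank $\rank(\Nm)\le 2$ and lies in $\cD\otimes_{\Fqm}\overline{\Fqm}$; since $\cD\subseteq\Cmat(\cA)$ and the $L_i$ are $\Fqm$-linear forms vanishing on $\Cmat(\cA)$, they vanish on $\cD\otimes\overline{\Fqm}$, hence on $W$. Thus $W\subseteq\Vm(\cP_2^+(\Mm))$. Finally, the linear closed immersion $\Skew(r,\overline{\Fqm})\hookrightarrow\Skew(rm,\overline{\Fqm})$, $\Nm\mapsto\mathrm{diag}(\Nm,0,\dots,0)$, identifies $W$ with $\Vm(\cP_2(\Nm))$ for the generic $r\times r$ skew-symmetric matrix $\Nm$, whose dimension is $2r-3$ by Proposition~\ref{prop: HS_pfaffian} applied with $s=r$. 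Therefore $\dim\Vm(\cP_2^+(\Mm))\ge\dim W=2r-3$.

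I do not expect a genuine obstacle: the statement is essentially a corollary of Proposition~\ref{prop: rel_binGoppa} (whose proof, in the appendix, carries the real content) together with the classical dimension count for Pfaffian varieties. The only points that need a little care are (i) the basis-independence remark, which relies on the congruence of Proposition~\ref{prop: congr_odd}, and (ii) the harmless observation that, for a lower bound, it suffices to restrict to matrices supported on a single diagonal block — a block-diagonal skew-symmetric matrix with two nonzero blocks has rank $\ge 4$ and hence does not belong to $\Vm(\cP_2(\Mm))$ — so that one need not describe all rank-$(\le 2)$ elements of $\cD$, only produce the subvariety $W$.
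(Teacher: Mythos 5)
Your proposal is correct and follows essentially the same route as the paper: restrict to the space of skew-symmetric matrices supported on the top-left $r\times r$ block (contained in $\Cmat(\cA)$ by Proposition~\ref{prop: rel_binGoppa}), identify the rank-$\le 2$ locus there with the Pfaffian variety $\Vm(\cP_2(\Nm))$ for a generic $r\times r$ skew-symmetric matrix, and invoke Proposition~\ref{prop: HS_pfaffian} for the dimension $2r-3$. The only additions are the explicit basis-independence remark via Proposition~\ref{prop: congr_odd}, which the paper leaves implicit; the substance is identical.
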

\begin{proof}
Consider the matrix space $\DC$ of all skew-symmetric matrices that are 0 outside the top-left $r\times r $ diagonal block and let $\Mm'$ be the generic matrix in this space. We have that $\dim \Vm(\cP_2^+(\Mm'))=\dim \Vm(\cP_2(\Nm))$, where $\Nm$ is the generic skew-symmetric matrix of size $r\times r$. We recall from Proposition~\ref{prop: HS_pfaffian} that the dimension of the variety of the generic Pfaffian ideal $\cP_2(\Nm)$ is $2r-3$. Proposition \ref{prop: rel_binGoppa} states that $\Cmat(\cA)$ contains the subspace of block-diagonal skew-symmetric matrices (with $r\times r$ blocks). This implies $\DC\subseteq \Cmat(\cA)$ and thus
\[
\dim\Vm(\cP_2^+(\Mm))\ge \dim \Vm(\cP_2(\Mm'))=\dim \Vm(\cP_2(\Nm))=2r-3.
\]
\qed
\end{proof}
More in general, we can upper bound the dimension of the variety using the following proposition, whose proof is given in Appendix~\ref{sec:dim_var}.
\begin{restatable}{proposition}{dimvar} \label{prop : dimvar}
Let $\Cmat$ be the matrix code of quadratic relations corresponding to the extended dual of an $[n,n-rm]$ alternant code over a field of even characteristic. Let $\cP_2^+(\Mm)$ be the corresponding Pfaffian ideal. Then $\dim \Vm(\cP_2^+(\Mm))\ge r-2$.
\end{restatable}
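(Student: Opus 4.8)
The plan is to exhibit, inside the matrix code $\Cmat$ and in fact inside its locus of rank-$(\le2)$ matrices — which is exactly $\Vm(\cP_2(\Mm))\cap\{\Mm\in\Cmat\}=\Vm(\cP_2^+(\Mm))$ — an explicit irreducible family of dimension $r-2$. The family comes from the quadratic relations of type~\eqref{eq:quadratic_relation} carried by the GRS code contained in the extended dual, refining Fact~\ref{fact:simple}.

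First I would work with the canonical basis $\cA$ of~\eqref{eq:basis_choice}, so that $\Cmat=\Cmat(\cA)$ and the first $r$ basis vectors $\av_0=\yv,\dots,\av_{r-1}=\xv^{r-1}\yv$ span the GRS code $\GRS{r}{\xv}{\yv}\subseteq\ext{\Alt{r}{\xv}{\yv}^\perp}{\Fqm}$ (Proposition~\ref{prop:dual_alt_fqm}). A quadratic relation involving only $\av_0,\dots,\av_{r-1}$ extends, by padding its coefficient vector with zeros, to an element of $\Crel(\cA)$, so the associated skew-symmetric matrix — supported on the top-left $r\times r$ block — lies in $\Cmat(\cA)$.

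Next I would build the family. For $W\in\overline{\Fqm}[z]$ with $\deg W\le r-3$, the codewords $\cv_1=\yv W(\xv)$, $\cv_2=\xv\yv W(\xv)$, $\cv_3=\xv^2\yv W(\xv)$ all belong to $\GRS{r}{\xv}{\yv}$ and, in characteristic $2$, satisfy $\starp{\cv_1}{\cv_3}=\yv^2\xv^2W(\xv)^2=\sq{\cv_2}$, a relation of the form~\eqref{eq:quadratic_relation}. Expanding $\starp{\cv_1}{\cv_3}-\sq{\cv_2}=0$ in the basis $(\av_0,\dots,\av_{r-1})$ and taking the associated matrix (Definition~\ref{def: cmat_odd}), a short computation, in which all cross terms cancel in characteristic $2$, produces a matrix $\Mm(W)=\trsp{\uv}\vv+\trsp{\vv}\uv\in\Cmat(\cA)$, where $\uv=(w_0,\dots,w_{r-3},0,0)$ is the coefficient vector of $W$ and $\vv=(0,0,w_0,\dots,w_{r-3})$ that of $z^2W$. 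In particular $\rank\Mm(W)\le2$, so $\Mm(W)$ kills all $4\times4$ Pfaffians and lies in $\Vm(\cP_2(\Mm))$; since it is also in $\Cmat$, it satisfies the linear equations of Modeling~\ref{mod:Pfaffian}, and therefore $\Mm(W)\in\Vm(\cP_2^+(\Mm))$.

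Finally I would check that $W\mapsto\Mm(W)$ is injective on $\overline{\Fqm}[z]_{<r-2}\cong\overline{\Fqm}^{\,r-2}$, which forces its image to be an $(r-2)$-dimensional irreducible subvariety of $\Vm(\cP_2^+(\Mm))$, so that $\dim\Vm(\cP_2^+(\Mm))\ge r-2$. Indeed, $\Mm(W)=0$ forces $z^2W$ proportional to $W$, hence $W=0$ by degree; and if $\Mm(W)=\Mm(W')\ne0$, the two nonzero decomposable bivectors span the same plane $\langle W,z^2W\rangle=\langle W',z^2W'\rangle$, so writing $W'=\alpha W+\beta z^2W$ and imposing $z^2W'\in\langle W,z^2W\rangle$ yields $\beta z^4W\in\langle W,z^2W\rangle$, hence $\beta=0$ by degree, so $W'=\alpha W$; comparing the bivectors then gives $\alpha^2=1$, i.e.\ $\alpha=1$. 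The main difficulty is precisely this choice of family: a \emph{linear} subspace of rank-$(\le2)$ matrices in $\Cmat$ has dimension only $\floor{\frac{r-1}{2}}$ by Proposition~\ref{prop: rank2_subspaces} — too small for $r\ge4$ — so one genuinely needs the above non-linear family; and one must shift $W$ by $z^2$ rather than by $z$ (the product $W\cdot zW=zW^2$ is not a square, so the $z$-shift produces no relation), while still ensuring that the matrices remain in $\Cmat$ and that the parametrization has finite — in fact trivial — fibres.
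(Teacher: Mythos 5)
Your proof is correct, but it takes a genuinely different route from the paper's. The paper proves the bound by locating, inside the block $\Cmat(\cA)$ restricted to a single $r\times r$ diagonal block, a specialization that is isomorphic to the \emph{full} space of generic skew-symmetric matrices of about half the size ($\frac{r+1}{2}$ for $r$ odd via the odd-indexed submatrix; $\frac{r}{2}$ for $r$ even via a $2\times 2$ ``blow-up'' of each entry with an extra free parameter $\lambda$), and then invokes the known dimension $2s-3$ of the generic rank-$2$ Pfaffian variety from Proposition~\ref{prop: HS_pfaffian} to get $r-2$ in both cases. You instead exhibit a direct $(r-2)$-parameter family of rank-$(\le 2)$ matrices $\Mm(W)=\trsp{\uv}\vv+\trsp{\vv}\uv$ coming from the identities $\starp{\yv W(\xv)}{\xv^2\yv W(\xv)}=\sqb{\xv\yv W(\xv)}$ for $\deg W\le r-3$, and prove the parametrization has trivial fibres (the bivector $\uv\wedge\vv$ recovers the plane $\langle W, z^2W\rangle$, and the degree argument kills the $z^2W$-component, with $\alpha^2=1\Rightarrow\alpha=1$ in characteristic $2$). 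Your construction is uniform in the parity of $r$, does not rely on the Hilbert-series machinery for the generic Pfaffian ideal, and makes the link with the underlying polynomial structure of the GRS code explicit; the cost is that you must verify injectivity by hand, whereas the paper gets the dimension for free from the known formula $\dim\Vm(\cP_2(\Nm))=2s-3$ at the price of a case distinction and a slightly delicate identification of ideals in the $r$ even case. Both arguments correctly reduce to exhibiting an $(r-2)$-dimensional subvariety supported on one diagonal block, and both implicitly use that membership in $\Cmat$ (hence satisfaction of the linear equations $L_i$) is preserved because the exhibited matrices are $\overline{\Fqm}$-combinations of the elementary relations \eqref{eq: alt_rel}.
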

\begin{remark}
  Equalities in the two previous propositions were met in the
  experiments we performed. Note that, comparing with
  Proposition~\ref{prop: rank2_subspaces}, the Pfaffian ideal contains
  subspaces of dimension roughly half the dimension of the variety.
\end{remark}
Now, as a consequence of the variety not being trivial, we have
\begin{proposition}\label{prop: HF>0}
		Let $\Cmat$ be the matrix code of quadratic relations corresponding to the extended dual of an $[n,n-rm]$ alternant code. Let $\cP_2^+(\Mm)$ be the corresponding Pfaffian ideal. For all $d\in \N$,
	$\HF_{\F[\mv]/\cP_2^+(\Mm)}(d)>0$.
\end{proposition}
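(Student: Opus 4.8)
The plan is to reduce the claim to the positive–dimensionality of the variety $\Vm(\cP_2^+(\Mm))$, which is exactly what Proposition~\ref{prop : dimvar} provides, and then to invoke an elementary property of Hilbert functions of homogeneous ideals. First note that $\cP_2^+(\Mm)=\cP_2(\Mm)+\ideal{L_1,\dots,L_k}$ is a homogeneous ideal of $S\eqdef\Fqm[\mv]$, being generated by the quadratic Pfaffian polynomials of Modeling~\ref{mod:Pfaffian} together with the linear forms $L_i$. For any homogeneous ideal $\cI\subseteq S$, the Hilbert function $d\mapsto\HF_{S/\cI}(d)$ satisfies the \emph{``eventually zero, or never zero''} dichotomy: if $\HF_{S/\cI}(d_0)=0$, i.e.\ $\cI_{d_0}=S_{d_0}$, then every monomial of degree $d\ge d_0$ factors as a monomial of degree $d-d_0$ times a monomial of degree $d_0$, the latter lying in $S_{d_0}=\cI_{d_0}\subseteq\cI$; hence $\cI_d=S_d$ for all $d\ge d_0$, equivalently $\cI$ contains a power of the irrelevant maximal ideal $\mathfrak m$ generated by the variables $m_{i,j}$.

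First I would assume, for contradiction, that $\HF_{\Fqm[\mv]/\cP_2^+(\Mm)}(d_0)=0$ for some $d_0\in\N$; by the above, $\cP_2^+(\Mm)\supseteq\mathfrak m^{d_0}$. Passing to varieties this forces $\Vm(\cP_2^+(\Mm))\subseteq\Vm(\mathfrak m^{d_0})=\Vm(\mathfrak m)=\{\mathbf{0}\}$, hence $\dim\Vm(\cP_2^+(\Mm))\le 0$. This contradicts Proposition~\ref{prop : dimvar}, which gives $\dim\Vm(\cP_2^+(\Mm))\ge r-2\ge 1$ (and for binary Goppa codes the sharper bound $\ge 2r-3$ obtained just above makes the conclusion valid already for $r\ge 2$). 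Therefore no such $d_0$ exists and $\HF_{\Fqm[\mv]/\cP_2^+(\Mm)}(d)>0$ for all $d\in\N$.

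If one prefers to be self-contained and not re-invoke Proposition~\ref{prop : dimvar}, the positivity of $\dim\Vm(\cP_2^+(\Mm))$ can be seen directly from an explicit low-rank matrix: by Fact~\ref{fact:simple}, picking $a,b,c\in\Iintv{0}{r-1}$ pairwise distinct with $a+b=2c$ (possible as soon as $r\ge 3$), the code $\Cmat$ contains a nonzero skew-symmetric matrix $\Mm_0$ of rank $2$; since $\rank(\Mm_0)\le 2$ every $4\times4$ minor of $\Mm_0$ vanishes, so $\Mm_0\in\Vm(\cP_2(\Mm))$, and since $\Mm_0\in\Cmat$ it satisfies $L_1=\dots=L_k=0$, whence $\mathbf{0}\neq\Mm_0\in\Vm(\cP_2^+(\Mm))$. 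Homogeneity of the ideal then puts the whole line $\overline{\Fqm}\,\Mm_0$ inside the variety, so $\dim\Vm(\cP_2^+(\Mm))\ge 1$ and one concludes as before. I expect the only point requiring care to be the bookkeeping: checking that the modeling is genuinely homogeneous (so its affine variety is a cone) and phrasing the ``zero Hilbert value propagates upward'' step correctly; the low-degree edge cases $r\le 2$, which are of no cryptographic relevance, are the only ones where the argument as written does not directly apply and would need to be excluded or handled by hand.
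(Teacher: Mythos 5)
Your proof is correct and follows essentially the same route as the paper: assume $\HF_{\F[\mv]/\cP_2^+(\Mm)}(d)=0$, deduce that all degree-$d$ monomials (equivalently $\mathfrak m^{d}$) lie in the ideal so the variety reduces to the zero matrix, and contradict this with the existence of nonzero rank-$2$ matrices in $\Cmat$ (whether quoted directly via Fact~\ref{fact:simple} or repackaged through the dimension bound of Proposition~\ref{prop : dimvar} is immaterial, since the latter rests on the same construction). Your remark that the argument implicitly needs $r\ge 3$ for such matrices to exist is a fair observation that the paper leaves tacit.
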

\begin{proof}
	Assume by contradiction that $\exists d\in \N$ such that $\HF_{\F[\mv]/\cP_2^+(\Mm)}(d)=0$. Therefore
	\[
	\dim_{\Fqm} (\cP_2^+(\Mm))_d= \dim_{\Fqm} \Fqm[\mv]_d,
	\]
	\ie all the monomials of degree $d$ belong to $\cP_2^+(\Mm)$, in
    particular all the monomials $m_{i,j}^d$. This implies that the
    only element in the variety of $\cP_2^+(\Mm)$ is the zero matrix
    (with some multiplicity). This is in
    contradiction with the existence of rank 2 matrices in $\Cmat$
    that must therefore be solutions of the Pfaffian system. \qed
	\end{proof}
Computing the Hilbert function up to some degree $d$ provides a distinguisher as soon as it assumes a different value depending on whether it refers to random or alternant/Goppa codes. Thanks to Proposition~\ref{prop: HF>0}, this will happen at the latest at the degree of regularity $\dreg$ corresponding to a random code. 

\subsubsection{An extension of the distinguisher of \cite{FGOPT11}.}
All these considerations lead to a very simple distinguisher of alternant or more specifically of Goppa codes, we compute for a code $\HF_{\Fqm[\mv]/\cP_2^+(\Mm)}(d)$ at a certain degree (where $\cP_2^+(\Mm)$ is the associated Pfaffian ideal), and say that it does not behave like a random code if this Hilbert function evaluated  at degree $d$ does not coincide with the formula we expect from a random code which is given in Conjecture \ref{conj: HF_random}. This leads us to the following definition
\begin{definition}[$d$-distinguishable] An $[n,rm]$ $\Fqm$-linear code $\CC$ is said to be $d$-distinguishable from a generic $[n,rm]$ linear code over $\Fqm$ when the following holds
$$\HF_{\Fqm[\mv]/\cP_2^+(\Mm)}(d) \neq \max\left(0, \sum_{i=0}^{d} \frac{(-1)^i}{rm+d-i-1}\binom{n-rm}{i} \binom{rm+d-i-1}{d-i+1}\binom{rm+d-i-1}{d-i}\right)$$
where $\cP_2^+(\Mm)$ is the Pfaffian ideal associated to $\CC$.
\end{definition}
Note that in general
\[\HF_{\Fqm[\mv]/\cP_2^+(\Mm)}(1)=\dim_{\Fqm} \Cmat(\cB).
\]
Hence, a different evaluation of the Hilbert function in degree 1 witnesses an unusually large dimension of $\Cmat(\cB)$ and consequently an atypically small dimension of the square code. Indeed, this corresponds to the square distinguisher from \cite{FGOPT11}.
Being $1$-distinguishable is therefore being square-distinguishable. In this sense, this new distinguisher generalizes the square-distinguisher of \cite{FGOPT11}. 

We can readily find examples of codes which are not square-distinguishable (or what is the same $1$-distinguishable), but are distinguishable for higher values of $d$. For instance, Table \ref{table: HF(2)_example} gives examples of generic alternant codes which are not $1$-distinguishable for the lengths $n \leq 124$ but which are $2$-distinguishable in the range $n \in \Iintv{76}{256}$.
Goppa codes are for the same parameters not $1$-distinguishable as soon as $n \leq 96$, but are distinguishable in the range  
$n \in \Iintv{75}{256}$. Note that in the same range we can even distinguish a generic alternant code from a Goppa code.
As was the case for the square distinguisher of \cite{FGOPT11}, Goppa codes are easier to distinguish from random codes 
than generic alternant codes. This also holds for our new distinguisher. We give in Table \ref{table: HF(2)_example_binary} an example of this kind. The binary Goppa codes in this table are $2$-distinguishable in the length range 
$n \in \Iintv{59}{64}$, whereas the generic alternant codes are not distinguishable at all. Note that none of the examples
in this table are square-distinguishable.

\begin{table}
  \center
	\begin{tabular}{|c||c|c|c|c|c|c|}
	\hline
		$\HF_{\Fqm[\mv]/\cP_2^+(\Mm)}(2)$ & $256\ge n \ge 77$ & $n=76$ &  $n=75$ & $n=74$ & $n=73$ & $\dots$ \\ \hline \hline
		Random code & 0 & 10 & 71 & 133 & 196 & $\dots$ \\ \hline		
		Alternant code & \textbf{20} & \textbf{20} & 71 & 133 & 196 & $\dots$\\ \hline
		Goppa code & \textbf{80} & \textbf{80} & \textbf{80} & 133 & 196 & $\dots$\\ \hline
	\end{tabular} 
	\caption{Hilbert function at degree 2 with respect to random, alternant and Goppa codes with parameters $q=4, m=4, r=4$. The evaluations in bold correspond to distinguishable lengths. } \label{table: HF(2)_example}
\end{table}
\begin{table}
  \center
	\begin{tabular}{|c||c|c|c|c|c|c|c|c|}
	\hline
		$\HF_{\Fqm[\mv]/\cP_2^+(\Mm)}(2)$ & $n=64$ & $n=63$ &  $n=62$ & $n=61$ & $n=60$ & $n=59$ & $n=58$ &  $\dots$ \\ \hline \hline
		Random code & 2718 & 2826 & 2935 & 3045 & 3156 & 3268 & 3381 & $\dots$ \\ \hline		
		Alternant code & 2718 & 2826 & 2935 & 3045 & 3156 & 3268 & 3381 & $\dots$ \\ \hline	
		Goppa code & \textbf{2971} & \textbf{2971} & \textbf{2971} & \textbf{3048} & \textbf{3158} & \textbf{3269} & 3381 & $\dots$\\ \hline
	\end{tabular}
	\caption{Hilbert function at degree 2 with respect to random, alternant and Goppa codes with parameters $q=2, m=6, r=3$. The evaluations in bold correspond to distinguishable lengths.}  \label{table: HF(2)_example_binary}
\end{table}

For the time being, we have only a limited understanding of how 
 $\HF_{\F[\mv]/\cP_2^+(\Mm)}(d)$ behaves for alternant/Goppa codes. However in the case of binary square-free Goppa code, \ie those used in McEliece's schemes, we can significantly improve upon 
 the $\HF_{\F[\mv]/\cP_2^+(\Mm)}(d)>0$ lower bound as shown by

\begin{restatable}{thm}{lowerboundHFBG} \label{thm : lowerbound_HF_binGoppa}
Let $\Goppa{\xv}{\Gamma}$ be a non distinguishable binary $[n,k=n-rm]$ Goppa code with $\Gamma$ a square-free polynomial of degree $r$ and extension degree $m$. Let $\cP_2^+(\Mm)$ be the corresponding Pfaffian ideal.
  Then, for all $d>0$,
	\[
	\HF_{\F_{2^m}[\mv]/\cP_2^+(\Mm)}(d) \ge m\left( \binom{r+d-2}{d}^2-\binom{r+d-2}{d+1}\binom{r+d-2}{d-1}\right).
	\]
\end{restatable}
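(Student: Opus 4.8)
The plan is to reduce the desired bound to the Hilbert function of a single generic Pfaffian ideal of size $r$ (known by Proposition~\ref{prop: HS_pfaffian}) by exhibiting $m$ pairwise ``disjoint'' copies of the rank-$\le 2$ Pfaffian variety inside $\Vm(\cP_2^+(\Mm))$. Set $s\eqdef rm$ and let $\cA$ be the canonical basis of $\ext{\Goppa{\xv}{\Gamma}^\perp}{\Fqm}$ from \eqref{eq:basis_choice}; by construction its indices are split into $m$ consecutive blocks of $r$ indices, the $b$-th block being $\{(b-1)r+1,\dots,br\}$. By Proposition~\ref{prop: rel_binGoppa}, $\Cmat(\cA)$ contains the whole space of block-diagonal skew-symmetric matrices with $r\times r$ diagonal blocks. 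For $b\in\Iintv{1}{m}$ let $V_b$ be the set of skew-symmetric $s\times s$ matrices that vanish outside the $b$-th diagonal block and whose $b$-th block has rank $\le 2$. Every $\Mm\in V_b$ is block-diagonal, hence lies in $\Cmat(\cA)$ and satisfies the linear equations $L_1,\dots,L_k$ of Modeling~\ref{mod:Pfaffian}; moreover it has rank $\le 2$, hence satisfies all Pfaffian equations. Thus $V_b\subseteq\Vm(\cP_2^+(\Mm))$, and setting $V\eqdef\bigcup_{b=1}^m V_b$ we get $V\subseteq\Vm(\cP_2^+(\Mm))$.

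First I would pass from the ideal to its variety. Since $\cP_2^+(\Mm)\subseteq\sqrt{\cP_2^+(\Mm)}=I(\Vm(\cP_2^+(\Mm)))\subseteq I(V)$, and quotients by larger ideals have smaller Hilbert functions, we get $\HF_{\F_{2^m}[\mv]/\cP_2^+(\Mm)}(d)\ge\HF_{\F_{2^m}[\mv]/I(V)}(d)$ for every $d$. It remains to compute $\HF_{\F_{2^m}[\mv]/I(V)}(d)$. For a single block, $I(V_b)$ is generated by $\cP_2(\Nm_b)$ (the Pfaffian ideal on the $\binom r2$ block-$b$ variables, with $\Nm_b$ the corresponding generic $r\times r$ skew-symmetric matrix) together with all variables $m_{i,j}$ whose index pair $\{i,j\}$ is not contained in block $b$; here I use that $\cP_2(\Nm_b)$ is the full radical (indeed prime) ideal of the Pfaffian variety~\cite{HT92}. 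Hence $\F_{2^m}[\mv]/I(V_b)\cong R(\Nm)$ with $R(\Nm)$ as in Proposition~\ref{prop: HS_pfaffian} taken with $s=r$, so $\HF_{\F_{2^m}[\mv]/I(V_b)}(d)=\binom{r+d-2}{d}^2-\binom{r+d-2}{d+1}\binom{r+d-2}{d-1}$.

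Then I would assemble the $m$ components by a Mayer--Vietoris argument. For any two ideals $\Ical,\Jcal$ there is a graded exact sequence $0\to\F_{2^m}[\mv]/(\Ical\cap\Jcal)\to\F_{2^m}[\mv]/\Ical\oplus\F_{2^m}[\mv]/\Jcal\to\F_{2^m}[\mv]/(\Ical+\Jcal)\to 0$, whence $\HF_{\F_{2^m}[\mv]/(\Ical\cap\Jcal)}(d)=\HF_{\F_{2^m}[\mv]/\Ical}(d)+\HF_{\F_{2^m}[\mv]/\Jcal}(d)-\HF_{\F_{2^m}[\mv]/(\Ical+\Jcal)}(d)$. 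The key point is that for $b\ne b'$ the sum $I(V_b)+I(V_{b'})$ contains all variables outside block $b$ and all variables outside block $b'$, and since every variable lies outside at least one of the two blocks, this sum is the irrelevant maximal ideal $\mathfrak m$; the same holds for $I(V_1\cup\cdots\cup V_j)+I(V_{j+1})$ for all $j$. As $\HF_{\F_{2^m}[\mv]/\mathfrak m}(d)=0$ for $d>0$, iterating the exact sequence over $b=1,\dots,m$ with $I(V)=\bigcap_{b=1}^m I(V_b)$ gives $\HF_{\F_{2^m}[\mv]/I(V)}(d)=\sum_{b=1}^m\HF_{\F_{2^m}[\mv]/I(V_b)}(d)=m\big(\binom{r+d-2}{d}^2-\binom{r+d-2}{d+1}\binom{r+d-2}{d-1}\big)$ for all $d>0$. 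Combined with the previous paragraph, this yields the claimed bound.

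The main obstacle I anticipate is the bookkeeping in this last step: one has to check carefully that the pairwise (and iterated) ideal sums really collapse to $\mathfrak m$, so that all correction terms in the exact sequence vanish in positive degree, and to justify that $\cP_2(\Nm_b)$ is the radical defining ideal of the Pfaffian variety so that $\F_{2^m}[\mv]/I(V_b)$ has exactly the Hilbert function of Proposition~\ref{prop: HS_pfaffian}. The ``non-distinguishable'' hypothesis is used only to fix the setting ($s=n-k=rm$ and $\dim\Cmat$ as in Proposition~\ref{prop:dimension}); the inclusion $V_b\subseteq\Vm(\cP_2^+(\Mm))$, which is the heart of the argument, relies solely on Proposition~\ref{prop: rel_binGoppa}.
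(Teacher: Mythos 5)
Your proposal is correct and follows essentially the same route as the paper's proof: both reduce the bound to the $m$ block-diagonal copies of the generic $r\times r$ Pfaffian supplied by Proposition~\ref{prop: rel_binGoppa}, and both then sum the $m$ individual Hilbert functions via the exact sequence for a quotient by $\Ical\cap\Jcal$, using that the pairwise and iterated ideal sums collapse to the irrelevant maximal ideal so that all correction terms vanish in positive degree. The only divergence is that you derive the containment of $\cP_2^+(\Mm)$ in the intersection of the $m$ block ideals geometrically, through the variety and the radicality (primality) of the Pfaffian ideal, whereas the paper establishes the same ideal inclusion by a direct generator-by-generator check (every sub-Pfaffian whose four indices do not lie in a single block has all its monomials in $\cI^{(out)}+\cI^{(quad)}$, and the linear forms $L_i$ can be chosen inside $\cI^{(out)}$), thereby avoiding any appeal to radicality.
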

The proof is given in Appendix~\ref{sec:app_modeling}.
Theorem~\ref{thm : lowerbound_HF_binGoppa} has some theoretical interest, because it shows that the distinguisher can be further improved by analyzing the matrix code of relations obtained from a Goppa code. 
\subsection{Complexity of computing the distinguisher and comparison with known key and message attacks}

\subsubsection*{Complexity of computing the distinguisher.}
The complexity of computing the distinguisher is upper-bounded by using Proposition \ref{prop:complexity}
\begin{proposition}
The computation of $HF_{\Fqm[\mv]/\cP_2^+(\Mm)}(d)$ for the Pfaffian ideal associated to an $[n,mr]$-code has complexity
\[
\OO{d \left(n-rm+\binom{rm}{4}\right)\binom{\binom{rm}{2}+d-1}{d}^\omega},
\]
where $\omega$ is the linear algebra exponent.
\end{proposition}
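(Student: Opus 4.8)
## Proof Proposal

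The plan is to reduce the complexity statement directly to Proposition~\ref{prop:complexity}, which bounds the cost of computing a single term $\HF_R(d)$ of the Hilbert function of $R = \K[\zv]/\cI$ by $\OO{m' d \binom{n'+d-1}{d}^\omega}$ where $m'$ is the number of generators of the ideal, $n'$ the number of variables, and $d$ the target degree. So the whole argument amounts to identifying these three parameters for the specific ideal $\cP_2^+(\Mm)$ arising from Modeling~\ref{mod:Pfaffian} associated to an $[n, mr]$-code, and then substituting.

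First I would count the variables. The generic skew-symmetric matrix $\Mm$ has size $s \times s$ with $s = n - mr$ (the dimension of the dual code, as noted right after Modeling~\ref{mod:Pfaffian}), so the number of independent entries $m_{i,j}$ with $1 \le i < j \le s$ is $\binom{s}{2} = \binom{rm}{2}$. Wait --- one must be careful: the paper writes the matrix size as $rm$ in the complexity statement, so here $s$ should be read as $rm$, i.e. we are in the setting where the relevant skew-symmetric matrices live in $\Skew(rm, \Fqm)$. Thus the number of variables is $n' = \binom{rm}{2}$, which explains the $\binom{\binom{rm}{2}+d-1}{d}^\omega$ factor in the claimed bound. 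Second, I would count the generators of the ideal $\cP_2^+(\Mm) = \cP_2(\Mm) + \ideal{L_1,\dots,L_t}$: by the definition of Modeling~\ref{mod:Pfaffian}, there are $\binom{s}{4} = \binom{rm}{4}$ quadratic Pfaffian generators plus $t = \binom{rm}{2} - \dim\Cmat = n - rm$ linear generators (using $\dim\Cmat = \binom{rm}{2} - k$ with $k = n - rm$). Hence the total generator count is $m' = \binom{rm}{4} + (n - rm)$, matching the factor $\left(n - rm + \binom{rm}{4}\right)$ in the statement.

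Finally I would simply plug $n' = \binom{rm}{2}$, $m' = \binom{rm}{4} + n - rm$, and the target degree $d$ into the bound of Proposition~\ref{prop:complexity}, obtaining
\[
\OO{d \left(n - rm + \binom{rm}{4}\right)\binom{\binom{rm}{2}+d-1}{d}^\omega},
\]
which is exactly the claimed complexity.

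The only genuinely delicate point --- and the one I would want to state explicitly --- is that Proposition~\ref{prop:complexity} gives the cost of computing one Hilbert function value for a homogeneous system via the rank of the Macaulay matrix in degree $d$; we must observe that the generators of $\cP_2^+(\Mm)$ are indeed homogeneous (the Pfaffian relations are homogeneous quadratic and the $L_i$ are homogeneous linear), so the proposition applies verbatim. There is no further hidden obstacle: the argument is a bookkeeping substitution, and the "main work" is merely confirming that the variable count and generator count are as above. I would present it in one or two lines, essentially: \emph{Apply Proposition~\ref{prop:complexity} with $n' = \binom{rm}{2}$ variables and $m' = \binom{rm}{4} + (n-rm)$ homogeneous generators.}
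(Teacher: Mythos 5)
Your proposal is correct and is essentially identical to the paper's own proof, which likewise just invokes Proposition~\ref{prop:complexity} with $\binom{rm}{2}$ variables, $n-rm$ linear equations and $\binom{rm}{4}$ quadratic Pfaffian equations. The brief hesitation over the matrix size is resolved correctly (the skew-symmetric matrix has size $rm$, the dimension of the extended dual code), so nothing is missing.
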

\begin{proof}
This proposition follows on the spot from Proposition \ref{prop:complexity}, since the number of variables is $\binom{rm}{2}$ (the number of independent entries in a skew symmetric of size $rm$), the number of independent linear equations is $n-rm$ and the
number of quadratic equations is $\binom{rm}{4}$. \qed
\end{proof}

However, in the case at hand, we can use Wiedemann's algorithm, because 
 (i) we know the Hilbert function for the Pfaffian ideal associated to an $[n,mr]$ random code, and know when it is equal to $0$, namely for $d=\dreg$
(ii) we only have to check whether at degree $d=\dreg$ the Macaulay matrix  $\Mac(F,\dreg)$ has a non zero kernel, (iii) this Macaulay matrix is sparse, since
the Pfaffian equations contain only 3 quadratic monomials, and therefore the number of entries in a row of $\Mac(F,\dreg)$ is upper-bounded with the number of nonzero entries of the polynomial $\mv^\alpha L(\mv)$, where $\mv$ is the variable vector of the matrix entries, $L=0$ is one of the $k$ linear equations and $\alpha$ is a multi-index exponent of multi-degree $\dreg-1$. This quantity clearly coincides with the number of nonzero entries of $L$ itself and can be upper bounded by $\binom{rm}{2}-k+1$ thanks to Gaussian elimination. Therefore the complexity of the sparse linear algebra approach becomes by using Wiedemann's algorithm
\begin{proposition}\label{prop:comp_wiedemann}
Checking whether a code is an alternant code or a generic linear code can be performed with a complexity upper-bounded by
\[
\OO{\left(\binom{rm}{2}-k+1\right)\binom{\binom{rm}{2}+\dreg-1}{\dreg}^2}.
\]
\end{proposition}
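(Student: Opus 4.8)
The plan is to reduce the distinguishing task to a single sparse-linear-algebra computation and then invoke Wiedemann's algorithm. Recall from the preceding development that, for the degree of regularity $\dreg$ attached to a random $[n,rm]$-code — which, by Conjecture~\ref{conj: HF_random} together with $\dreg = \min\{d:\HF_{\F[\mv]/\cP_2^+(\Mm)}(d)=0\}$, is known explicitly from $n$, $r$, $m$ via a closed binomial formula — one has $\HF_{\Fqm[\mv]/\cP_2^+(\Mm)}(\dreg)=0$ in the random case, whereas $\HF_{\Fqm[\mv]/\cP_2^+(\Mm)}(d)>0$ for every $d$ in the alternant (or Goppa) case by Proposition~\ref{prop: HF>0}. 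Hence, to decide which family we are dealing with, it is enough to test whether $\HF_{\Fqm[\mv]/\cP_2^+(\Mm)}(\dreg)$ vanishes.

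First I would translate this vanishing into a statement about the Macaulay matrix $\Mac(F,\dreg)$ of the system $F$ of Modeling~\ref{mod:Pfaffian} in degree $\dreg$. Its rows (the coefficient vectors of the $\mv^\alpha f_i$ of degree $\dreg$) span the degree-$\dreg$ part of the ideal $\cP_2^+(\Mm)$, so its rank equals $\dim(\cP_2^+(\Mm))_{\dreg}$, and therefore its right kernel has dimension exactly $\HF_{\Fqm[\mv]/\cP_2^+(\Mm)}(\dreg)$. Thus the entire test amounts to deciding whether the sparse matrix $\Mac(F,\dreg)$ over $\Fqm$ has full column rank.

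Next I would bound the sparsity. Each of the $\binom{rm}{4}$ Pfaffian generators has exactly three monomials, and after a Gaussian elimination on the $k=n-rm$ linear forms $L_1,\dots,L_k$ we may assume they are in reduced row echelon form, so each has at most $\binom{rm}{2}-k+1$ nonzero coefficients; multiplying a polynomial by a monomial does not change the number of its nonzero coefficients, so every row of $\Mac(F,\dreg)$ has at most $w\eqdef\binom{rm}{2}-k+1$ nonzero entries. The matrix has $N\eqdef\binom{\binom{rm}{2}+\dreg-1}{\dreg}$ columns, one per degree-$\dreg$ monomial in the $\binom{rm}{2}$ entry-variables. Running Wiedemann's algorithm (applied to $\trsp{\Mac(F,\dreg)}\Mac(F,\dreg)$, with the customary random preconditioning to avoid self-orthogonal vectors over $\Fqm$, over a small extension if needed) uses $\OO{N}$ matrix–vector products, each costing $\OO{Nw}$ field operations by the row-sparsity; this gives the announced bound $\OO{\left(\binom{rm}{2}-k+1\right)\binom{\binom{rm}{2}+\dreg-1}{\dreg}^2}$.

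The main obstacle in making this fully rigorous is the mismatch between the rectangular Macaulay matrix and the square system that Wiedemann expects: one must argue that restricting to $N$ randomly chosen rows preserves the rank with high probability (equivalently, that the preconditioned Gram matrix has the same rank), and check that the row count stays within a constant factor of $N$ so that a matrix–vector product genuinely costs $\OO{Nw}$ rather than $\OO{Mw}$ with $M\gg N$. A secondary point is that the reduction itself is conditional: it relies on Conjecture~\ref{conj: HF_random} (so that $\dreg$ for the random case is the predicted value, with no reductions to zero in F5) and on Proposition~\ref{prop: HF>0} for positivity in the structured case; granting these, the remaining steps are routine.
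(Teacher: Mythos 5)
Your proposal matches the paper's argument: the paper likewise reduces the test to checking whether the Macaulay matrix $\Mac(F,\dreg)$ has a nontrivial kernel, bounds the row weight by the number of nonzero entries of a linear form $L$ after Gaussian elimination (namely $\binom{rm}{2}-k+1$, since multiplying by a monomial $\mv^\alpha$ preserves the support size), and then invokes the XL--Wiedemann cost $3 n_r \binom{n+d-1}{d}^2$ from \cite[Proposition 3]{DY09}. The technical caveats you raise (rectangular versus square systems, preconditioning, reliance on Conjecture~\ref{conj: HF_random}) are real but are absorbed by the paper's direct citation of that cost formula rather than argued explicitly.
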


\subsubsection*{Complexity of the standard approach for key recovery.}

Recall that it consists in guessing the irreducible Goppa polynomial $\Gamma$ and the support set of coordinates. After that, the Support Splitting Algorithm (SSA) \cite{S00} checks whether the public code is permutation equivalent to the guessed Goppa code. The cost of the SSA on the Goppa code $\CC$ has been estimated with
\[
\OO{n^3+q^h n^2 \log(n)},
\]
where $n$ is the code length and $h\eqdef \dim(\CC \cap \CC^\perp)$. Despite being exponential in the hull dimension, the latter is typically trivial or it has a very small dimension. Therefore the permutation equivalence code verification usually boils down to a polynomial-time subroutine and we ignore its cost in the comparison. The number of possible support coordinate sets is $\binom{q^m}{n}$, while the number of irreducible degree-$r$ polynomials over $\Fqm$ is given by 
\[
\frac{1}{r}\sum_{a\mid r} \mu(a)(q^m)^{\frac{r}{a}},
\]
where $\mu$ is the M\"obius function. Therefore the total complexity of this approach can be estimated as
\begin{equation}\label{eq:keyattack}
\OO{\frac{\binom{q^m}{n}}{r}\sum_{a\mid r} \mu(a)(q^m)^{\frac{r}{a}}}.
\end{equation}

\subsubsection*{Comparison of distinguisher with the key-attack.}
The comparison of all the methods we have just presented is given in Table~\ref{table: cost_comparison} with respect to Classic McEliece parameters. We remark that, using sparse linear algebra, we can improve upon the classical method for all parameters except those for category 5. Note that in this case the Goppa code is full-support and therefore the support coordinates do not need to be guessed, leading to a big improvement upon non-full support instances. However, our distinguisher suffers less than the standard key-recovery algorithm from taking instances that are not full support. Indeed, if we consider the same $r$ and $m$ used in Category 5, but a smaller length $n$, then our distinguisher approach outperforms the previous one. In fact, this can be seen directly from Category 3, which shares the same $r$ and $m$ with Category 5, but it is not full support.

\begin{table} 
	\hspace{-3.5cm} \begin{tabular}{|c||c|c|c|c|c|c|c|c|}
		\hline
		Category & $n$ & $r$ & $m$ & $\dreg$ & $R$ &  \parbox[t]{4cm}{classical key-recovery \\ $\mathbb{C}=\frac{\binom{2^m}{n}}{r}\sum_{a\mid r} \mu(a)(2^m)^{\frac{r}{a}}$} & \parbox[t]{5cm}{ dense linear algebra\\ $\mathbb{C}=\binom{rm}{4}\dreg\binom{\binom{rm}{2}-k+\dreg-1}{ \dreg}^\omega$ }  &  \parbox[t]{5cm}{sparse linear algebra\\ $\mathbb{C}=3(\binom{rm}{2}-k+1)\binom{\binom{rm}{2}+\dreg-1}{\dreg}^2$ } \\ \hline \hline
		1 & 3488 & 64 & 12 & 84 & 0.7798 & $2^{2476}\cdot 2^{762}=2^{3238}$ & $2^{3141}$ & $2^{2231}$ \\ \hline
		2 & 4608 & 96 & 13 & 212 & 0.7292 & $2^{8093}\cdot 2^{1241}=2^{9334}$ & $2^{7931}$ & $2^{5643}$ \\ \hline
		3 & 6688 & 128 & 13 & 229 & 0.7512 & $2^{5629}\cdot 2^{1657}=2^{7286}$ & $2^{9030}$ & $2^{6425}$ \\ \hline
		4 & 6960 & 119 & 13 & 169 & 0.7777 & $2^{4997}\cdot 2^{1540}=2^{6537}$ & $2^{6779}$ & $2^{4822}$ \\ \hline
		5 & 8192 & 128 & 13 & 154 & 0.7969 & $2^{0}\cdot 2^{1657}=2^{1657}$ & $2^{6329}$ & $2^{4501}$ \\ \hline
	\end{tabular}
	\caption{Computational cost comparison between \textit{this} distinguisher and retrieving the permutation equivalence} \label{table: cost_comparison}
\end{table}

We also remark that our distinguishing modeling works for any alternant code, while the classical key-recovery procedure described here is specific for Goppa codes. Indeed, guessing a valid pair of support $\xv$ and multiplier $\yv$ for a generic alternant code is dramatically more costly for two reasons. First of all, the $n$ multiplier coordinates $y_i$'s are independent and do not have a compact representation through a degree-$r$ polynomial. Moreover, in order to guess a correct code permutation, the support and multiplier coordinate indexes must correspond.

In Figure~\ref{figure: r_grows} we show the growth of the degree of regularity $\dreg$ for a random $[n=2^m, n-rm]$ code, for fixed $m$. The graph is defined on the integer interval whose endpoints are given by the smallest value of $r$ for which \cite{FGOPT11} is not able to distinguish a binary Goppa code and the largest value for which this new modeling is able to distinguish respectively. Note that in this case the rate is decreasing. On the other hand, Figure~\ref{figure: fixed_R} provides the degree of regularity $\dreg$ and the complexity estimate using sparse linear algebra, for $m$ fixed, $r$ growing and $n=5rm$, \ie for the fixed rate $R=4/5$. The domain of the graph is computed in the same way as for Figure~\ref{figure: r_grows}.

\begin{figure}[h]
	\centering
	\subfloat[][\centering $\dreg$]{{\includegraphics[width=5.5cm]{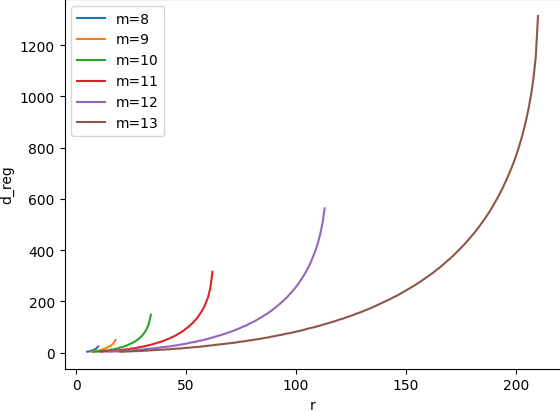} }}	\qquad
	\subfloat[][\centering complexity (logarithmic scale)]{{\includegraphics[width=5.5cm]{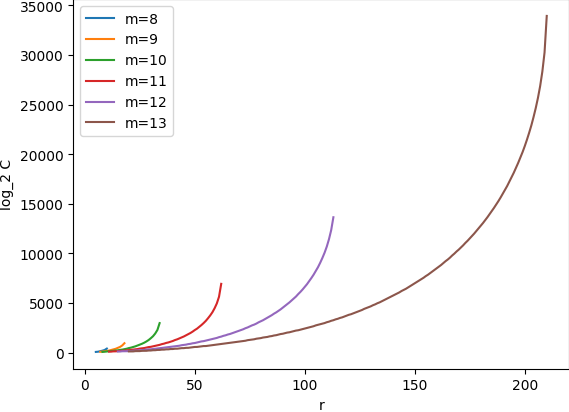} }}	\caption{Growth of the degree regularity in function of $r$ for fixed $m$} \label{figure: r_grows}
\end{figure}

\begin{figure}[h]	\centering
	\subfloat[][\centering $\dreg$]{{\includegraphics[width=5.5cm]{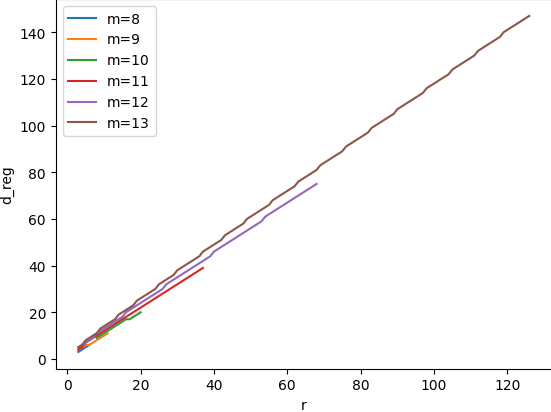} }}	\qquad
	\subfloat[][\centering complexity (logarithmic scale)]{{\includegraphics[width=5.5cm]{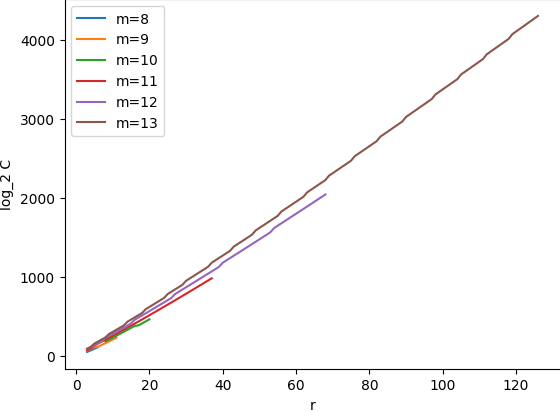} }}	\caption{Degree of regularity and complexity cost with respect to sparse linear algebra for the fixed rate $R=4/5$}	\label{figure: fixed_R}\end{figure}

\subsubsection*{Sublinear regime.}
 It is insightful to study the asymptotic complexity of distinguishing an $[n,rm]$-code in the sublinear regime, when the dimension $rm$
 is sublinear in the codelength $n$ and to compare it with key and message attacks. Assume that $rm = \Th{n^\alpha}$ where $\alpha \in [\frac{1}{2},1)$. 
We will also be interested in the case where the code is a binary  Goppa code. 
To simplify a little bit the discussion and to minimize the complexity of the known key attack, we will assume 
that we have a Goppa code of full support, i.e. $n=2^m$. 

A binary Goppa code of length $n$, extension degree $m$ and degree $r$ allows to correct $r$ errors. 
Because the number of errors to decode is sublinear in the codelength, the complexity $\Cmess$ of message attacks for binary $[n,n-mr]$ Goppa codes (namely that of decoding $r$ errors in an $[n,n-mr]$ code)  is of the form $2^{-r \log_2(1-R)(1+o(1))}$ for the best known generic decoding algorithms by \cite{CS16} where $R$ is the code rate, i.e. $R=\frac{n-mr}{n}$. 
We clearly have $\log_2(\Cmess)=(1-\alpha)rm(1+o(1))$ since $-\log_2(1-R)= -\log_2\left( \frac{rm}{n}\right)=(1-\alpha)\log n(1+o(1))$. 

On the other hand, the complexity $\Ckey$ of key attacks is of the form $\OO{2^{rm(1+o(1))}}$ in the full support case. Here we have $\log_2(\Ckey)=rm(1+o(1))$. Our distinguisher has complexity 
$\Cdist$ which can be estimated through Proposition \ref{prop:comp_wiedemann} and $\dreg$ by Conjecture \ref{conj: HF_asymptotic}, from which we readily obtain that 
$\log_2(\Cdist) = 4 \alpha c \frac{(rm)^{2}}{n} \log n(1+o(1))$, where $c$ is the constant appearing in Conjecture \ref{conj: HF_asymptotic}. This whole discussion is summarized in Table 
\ref{tab:complexity_sublinear}. The complexity of key attacks is bigger than the complexity of message attacks, however now asymptotically the complexity of the distinguisher is {\em significantly lower} 
than both attacks: message attacks gain a constant factor $1- \alpha$ in the exponent when compared to key attacks, whereas the distinguisher gains a {\em polynomial factor}
$\Th{\frac{rm}{n} \log n} =o(1)$ in the exponent with respect to both key and message attacks.

\begin{table}
\centering
\begin{tabular}{|c|c|c|c|}
\hline 
type & Key attack& Message attack&  distinguisher \\
\hline
$\log_2 C$&  $rm(1+o(1))$ & $(1-\alpha)rm(1+o(1)) $  & $4 \alpha c \frac{(rm)^{2}}{n} \log n(1+o(1))$ \\
\hline
\end{tabular}
\caption{Logarithm of the complexity $C$ of different attacks for full support $n=2^m$ binary $[n,n-mr]$ Goppa codes in the sublinear codimension regime
$rm = \Th{n^\alpha}$, where  $\alpha \in [\frac{1}{2},1)$. \label{tab:complexity_sublinear}   }
\end{table}

 \section{An attack on distinguishable random alternant codes, without the use of Gr\"obner bases}
\label{sec:attack}

We are going to present now a polynomial time attack on
square-distinguishable generic alternant codes defined over $\Fq$ as
soon as the degree $r$ satisfies $r < q+1$ by using this new notion of
the matrix code of quadratic relations. We also recall that a
square-distinguishable alternant code must have degree $r\geq 3$
\cite{FGOPT11}. If we combine this together with the filtration
technique of \cite{BMT23} which allows to compute from a
square-distinguishable alternant code of degree $r$ satisfying
$r \geq q+1$ an alternant code with the same support but of degree
$r-1$ we obtain an attack on all square-distinguishable generic
alternant codes. This is a big improvement on the attack presented in
\cite{BMT23} which needed two conditions to hold (1) a
square-distinguishable alternant code (2) $q$ is either $2$ or
$3$. Moreover \cite{BMT23} could not handle the subcase where the
alternant code is actually a Goppa code, whereas our new attack is
able to treat this case at least in the case $r < q-1$. We present in
Table \ref{table:summary} a summary of the attacks. In other words,
all square-distinguishable generic alternant codes can now be
attacked. The reason why for the time being the square-distinguishable
Goppa codes are out of reach, is that the filtration technique of
\cite{BMT23} for reducing the degree of the code does not work for the
special case of Goppa codes.

\begin{table}
  \center
\begin{tabular}{|c|c|c|c|}
\hline 
code & technique/paper & $r(\geq 3)$ & $q$ \\
\hline 
(generic) square-distinguishable alternant code & \cite{BMT23} & any & $\in \{2,3\}$ \\
  \hline
(generic) square-distinguishable alternant code & this paper & $< q+1$ & any \\
\hline
(generic) square-distinguishable alternant code & this paper + filtration techn. of \cite{BMT23} & any & any\\
\hline
square-distinguishable Goppa codes & this paper & $<q-1$ & any  \\
\hline
\end{tabular}
\caption{Summary of the attacks against square-distinguishable codes \label{table:summary}. The column $q$ corresponds to the restrictions on $q$ for the attack to work and the column $r$ has the same meaning for the parameter $r$.}
\end{table}

Thus, from now on, we will consider an alternant code
$\Alt{r}{\xv}{\yv}\subseteq \Fq^n$ of extension degree $m$ which is
such that $r<q+1$. For generic alternant codes, this corresponds to
the square-distinguisher case with $e=0$. If instead the alternant
code is also Goppa, then we restrict ourselves to the case of
$r<q-1$. We will show now how to recover $\xv$ and $\yv$ from the
knowledge of a generator matrix of this code by making use of the
matrix code of quadratic relations associated to the extended dual
code over $\Fqm$.

\subsection*{The idea}
We first present the underlying idea by picking the canonical basis $\cA$ \eqref{eq:basis_choice} and the parity-check matrix $\Hm_{\cA}$ of $\ext{\Alt{r}{\xv}{\yv}}{\Fqm}$ whose rows correspond to the elements of $\cA$ in that same order. Recall that this basis can be written as
$$
\cA = (\av_1,\cdots,\av_{r},\av_1^{q},\cdots,\av_r^q,\cdots,\av_1^{q^{m-1}},\cdots,\av_r^{q^{m-1}}).
$$ 
We also assume $q$ is odd for now. 
The crucial point is that, with the assumption of a square-distinguishable generic alternant code (resp. Goppa code) with $r<q+1$ (resp. $r<q-1$), the analysis provided in \cite{FGOPT11} implies that the matrix code is generated by \textit{all and only} relations of the kind
$$
\starp{\yv^{q^l} \xv^{aq^l}}{\yv^{q^l} \xv^{bq^l}}=\starp{\yv^{q^l} \xv^{cq^l}}{\yv^{q^l} \xv^{dq^l}}  
$$ 
where $l$ is arbitrary in $\Iintv{0}{m-1}$ and $a,b,c,d$ in $\Iintv{0}{r-1}$ such that $a+b=c+d$.  This corresponds to the quadratic relation 
$$\starp{\av_{a+1}^{q^l}}{\av_{b+1}^{q^l}}-\starp{\av_{c+1}^{q^l}}{\av_{d+1}^{q^l}}=0.$$
 The related code of relations $\Cmat(\cA)$ has therefore a block diagonal structure with blocks of size $r$, \ie, for each element in $\Cmat(\cA)$, the entries outside the $m$ diagonal blocks of size $r\times r$ are 0. Thus, an element $\Am$ of $\Cmat(\cA)$ has the following block shape:

\begin{equation} \label{eq: A_diagblock}  
\Am= \begin{bmatrix}
\Am_{0,0} &  &  & \\
	 & \Am_{1,1} & & \zerov\\
	 \zerov & & \ddots & \\
	  &  &  & \Am_{m-1,m-1} \\
\end{bmatrix}
\end{equation}
where the diagonal blocks $\Am_{i,i}$ are symmetric and of size $r$. Clearly $\rank(\Am_{i,i})\le r$ and, because of the block diagonal shape, $\rank(\Am)=\sum_i \rank(\Am_{i,i})$. Now assume that $\Am$ happens to be minimally rank defective, i.e.
\[
\rank(\Am) =rm-1.\]
 It means that for exactly one index $j\in \Iintv{0}{m-1}$, $\rank(\Am_{j,j})=r-1$, and for all $i \in \Iintv{0}{m-1}\setminus \{j\}$, $\rank(\Am_{i,i})=r$. We consider the left kernel of (the map corresponding to) the matrix $\Am$, simply denoted by $\ker(\Am)$. Note that, if we identify row vectors with column vectors, left and right kernels are the same in this case, as $\Am$ is symmetric. Since $\rank(\Am)=rm-1$, we have $\dim(\ker(\Am))=1$. Let $\vv=(\vv_0,\dots,\vv_{m-1}) \in \Fqm^{rm}$ be a generator of $\ker(\Am)$, with $\vv_i \in \Fqm^r$. Because of the block diagonal structure of $\Am$, $\vv$ must satisfy
\[
\vv=(\zerov_r,\dots,\zerov_r,\vv_j,\zerov_r,\dots,\zerov_r).
\] 
In other words, the computation of this nullspace provides information
about the position of the vectors generating a single GRS code
$\GRS{r}{\xv^{q^j}}{\yv^{q^j}}$. The key idea is that if enough of
such vectors are found, a basis of the corresponding GRS code can be
retrieved.

\subsection{Choosing $\cB$ with a special shape}

Consider an ordered basis
\begin{equation} \label{eq: AltBasisFrobenius}
  \mathcal{B} = (\bv_1, \dots, \bv_r, \bv_1^q, \dots, \bv_r^q, \dots, \bv_1^{q^{m-1}},\dots, \bv_r^{q^{m-1}})
\end{equation}
of $\ext{\Alt{r}{\xv}{\yv}^\perp}{ \Fqm}$. Such a basis can be computed by drawing $\bv_1, \dots, \bv_r \in \ext{\Alt{r}{\xv}{\yv}^\perp}{\Fqm}$ at
random, applying the Frobenius map $m-1$ times and checking if the obtained family generates $\ext{\Alt{r}{\xv}{\yv}^\perp}{\Fqm}$, or equivalently if its dimension is $rm$.
If not, draw another $r$-tuple $\bv_1, \dots, \bv_r$ at random until the
construction provides a basis. We remark that even sampling a basis as in \eqref{eq: AltBasisFrobenius} does not provide a basis with the same properties of $\cA$, \ie $(\bv_1,\dots,\bv_r)$ is not an ordered basis of $\GRS{r}{\xv}{\yv}$, except with negligible probability.

When $\cB$ is chosen as in \eqref{eq: AltBasisFrobenius}, the transition matrix $\Pm$ has a special shape.
\begin{lemma}
  The matrix $\Pm$ is blockwise Dickson. That is to say, there
  exist $\Pm_0, \dots, \Pm_{m-1} \in \Fqm^{r \times r}$ such that
  \begin{equation} \label{eq: matDickson}
    \Pm =
    \begin{pmatrix}
      \Pm_0 & \Pm_1 & \cdots & \Pm_{m-1} \\
      \Pm_{m-1}^{(q)} & \Pm_0^{(q)} & \cdots & \Pm_{m-2}^{(q)}\\
      \vdots & \vdots & \ddots & \vdots \\
       \Pm_{1}^{(q^{m-1})} & \Pm_2^{(q^{m-1})} & \cdots & \Pm_{0}^{(q^{m-1})}\\
    \end{pmatrix}.
  \end{equation}
\end{lemma}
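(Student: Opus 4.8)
The plan is to write out the defining relation of the transition matrix $\Pm$ explicitly and then exploit the semilinearity of the Frobenius map. I index the rows of $\Pm$ by the ordered basis $\cB$ and its columns by the ordered basis $\cA$, grouping each basis into $m$ consecutive blocks of $r$ vectors: block $j$ of $\cB$ is $(\bv_1^{q^j},\dots,\bv_r^{q^j})$ and block $k$ of $\cA$ is $(\av_1^{q^k},\dots,\av_r^{q^k})$ (the opposite convention, expressing $\cA$ in terms of $\cB$, is handled below and gives the same conclusion). Writing $\Pm^{(j,k)}\in\Fqm^{r\times r}$ for the block of $\Pm$ in block-row $j$ and block-column $k$, the first block-row defines matrices $\Pm_k\eqdef\Pm^{(0,k)}$ through
\[
\bv_i=\sum_{k=0}^{m-1}\sum_{\ell=1}^{r}\bigl(\Pm_k\bigr)_{i,\ell}\,\av_\ell^{q^k},\qquad i=1,\dots,r .
\]
The whole lemma then reduces to proving $\Pm^{(j,k)}=\Pm_{(k-j)\bmod m}^{(q^j)}$ for every $j,k$, which is exactly the matrix displayed in \eqref{eq: matDickson}.

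First I would recall the relevant properties of the coordinatewise Frobenius $\sigma\colon\zv\mapsto\zv^{(q)}$ acting on $\Fqm^n$: it is additive, it is \emph{semilinear} over $\Fqm$ in the sense that $\sigma(\lambda\zv)=\lambda^{q}\,\zv^{(q)}$ for $\lambda\in\Fqm$, and $\sigma^{m}=\mathrm{id}$ because all entries lie in $\Fqm$. Applying $\sigma^{j}$ to the displayed identity and distributing it termwise gives
\[
\bv_i^{q^j}=\sum_{k=0}^{m-1}\sum_{\ell=1}^{r}\bigl(\Pm_k\bigr)_{i,\ell}^{\,q^j}\,\av_\ell^{\,q^{k+j}} .
\]
Reindexing with $k'\eqdef(k+j)\bmod m$ and using $\av_\ell^{q^{k+j}}=\av_\ell^{q^{k'}}$ (a consequence of $\sigma^m=\mathrm{id}$), the coefficient of $\av_\ell^{q^{k'}}$ in $\bv_i^{q^j}$ is $\bigl(\Pm_{(k'-j)\bmod m}\bigr)_{i,\ell}^{\,q^j}$, i.e.\ the $(i,\ell)$ entry of $\Pm_{(k'-j)\bmod m}^{(q^j)}$. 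Hence $\Pm^{(j,k')}=\Pm_{(k'-j)\bmod m}^{(q^j)}$; reading off $j=0,1,\dots,m-1$ reproduces the successive block-rows $(\Pm_0,\Pm_1,\dots)$, $(\Pm_{m-1}^{(q)},\Pm_0^{(q)},\dots)$, \dots\ of \eqref{eq: matDickson}.

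The only points needing a word of justification are the direction convention and a routine stability remark. Proposition~\ref{prop: congr_odd} fixes $\Pm$ only up to taking the change of basis in one direction or the other; running the computation above with the roles of $\cA$ and $\cB$ exchanged shows that \emph{both} transition matrices are blockwise Dickson, so the statement is insensitive to the choice. Alternatively, one checks directly that blockwise Dickson matrices are closed under multiplication and inversion — they form the ring of $r\times r$ matrices over the twisted group algebra $\bigoplus_{j=0}^{m-1}\Fqm\,\sigma^j$ — so the inverse of one such matrix is again of this form. I expect no genuine obstacle here beyond keeping the index arithmetic modulo $m$ straight and not forgetting that applying $\sigma^j$ raises the scalar coefficients, which lie in $\Fqm$ rather than $\Fq$, to the power $q^j$; this is precisely the mechanism that generates the Frobenius twists $(q),(q^2),\dots$ going down the block-rows.
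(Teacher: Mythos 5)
Your proof is correct and is exactly the argument the paper has in mind: the paper dismisses the lemma with ``direct consequence of the structure of the bases $\cA$ and $\cB$,'' and what you have written is the full expansion of that one-liner (apply $\sigma^j$ to the first block-row, use semilinearity and $\sigma^m=\mathrm{id}$, and read off the coefficients against the basis $\cA$). The closing remark that blockwise Dickson matrices are closed under multiplication and inversion is a useful bonus, since it disposes of the direction-of-transition ambiguity cleanly.
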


\begin{proof}
  This is a direct consequence of the structure of the bases $\mathcal A$
  and $\mathcal B$. \qed
\end{proof}
Let $\Sm\in \mathbf{GL}_{mr}(\Fqm)$ be the right $r$-cyclic shift matrix, \ie
\begin{equation} \label{eq: matF}
  \Sm \eqdef
  \begin{pmatrix}
   & \mat{I}_r &  & & \\
    && \mat{I}_r & &\zerov \\
    &&\zerov & \ddots & \\
    && & & \mat{I}_r\\
     \mat{I}_r &  &  && \\
  \end{pmatrix}.
\end{equation}
Note that $\Sm^{-1}=\trsp{\Sm}$ is the left $r$-cyclic shift matrix. The block-wise Dickson structure of $\Pm$ can be re-interpreted
as follows:
\begin{proposition} \label{prop: P^q}
Let $\Sm$ be defined as in \eqref{eq: matF} and $\Pm$ satisfy the blockwise Dickson structure of \eqref{eq: matDickson}. Then
$  \Pm = \trsp{\Sm} \Pm^{(q)} \Sm$.
\end{proposition}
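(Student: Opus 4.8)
The proposition is a pure identity relating the block shapes of $\Pm$ and $\Sm$, so the plan is to verify it by a direct block-wise computation. Index the $m$ row- and column-blocks (each of size $r\times r$) by $0,\dots,m-1$; then the blockwise Dickson form \eqref{eq: matDickson} says precisely that the $(i,j)$ block of $\Pm$ equals $\Pm_{\,(j-i)\bmod m}^{\,(q^{i})}$, while \eqref{eq: matF} says that the $(i,j)$ block of $\Sm$ is $\Im_r$ when $j\equiv i+1\pmod m$ and $\zerov$ otherwise (so the $(i,j)$ block of $\trsp{\Sm}$ is $\Im_r$ when $j\equiv i-1\pmod m$ and $\zerov$ otherwise).

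With this notation the computation is mechanical: right-multiplication by $\Sm$ cyclically shifts the block-columns, left-multiplication by $\trsp{\Sm}$ cyclically shifts the block-rows, and the entrywise $q$-th power sends $\Pm_{t}^{(q^{i})}$ to $\Pm_{t}^{(q^{i+1})}$. Composing these, the $(i,j)$ block of $\trsp{\Sm}\,\Pm^{(q)}\,\Sm$ comes out as $\Pm_{\,(j-i)\bmod m}^{\,(q^{(i-1)+1})}$, which is $\Pm_{\,(j-i)\bmod m}^{\,(q^{i})}$, exactly the $(i,j)$ block of $\Pm$. The only subtlety is the wrap-around row $i=0$, for which $(i-1)\bmod m=m-1$ and the exponent that appears is $q^{m}$ rather than $q^{0}$; here one invokes that every entry of $\Pm$ lies in $\Fqm$, so $x^{q^{m}}=x$ and $\Pm_{t}^{(q^{m})}=\Pm_{t}$. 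That closes the argument.

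There is also a cleaner coordinate-free derivation that I would mention as a remark. Both $\cA$ and $\cB$ have the ``Frobenius staircase'' shape in which each block of $r$ basis vectors is the entrywise $q$-th power of the previous one, and the $q$-th power of the last block is the first block again, the last point because the $\av_i$ (resp. $\bv_i$) have all their coordinates in $\Fqm$, so iterating the entrywise Frobenius $m$ times is the identity. If $\Hm_{\cA},\Hm_{\cB}$ denote the matrices whose rows are the elements of $\cA,\cB$ in order, this staircase property reads $\Hm_{\cA}^{(q)}=\Sm\,\Hm_{\cA}$ and $\Hm_{\cB}^{(q)}=\Sm\,\Hm_{\cB}$. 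Applying the entrywise $q$-th power to the change-of-basis relation $\Hm_{\cB}=\Pm\,\Hm_{\cA}$ and using that the entrywise Frobenius commutes with matrix multiplication gives $\Sm\,\Hm_{\cB}=\Pm^{(q)}\,\Sm\,\Hm_{\cA}$, i.e.\ $\Sm\,\Pm\,\Hm_{\cA}=\Pm^{(q)}\,\Sm\,\Hm_{\cA}$; since $\Hm_{\cA}$ has full row rank $rm$ it admits a right inverse over $\Fqm$, so $\Sm\,\Pm=\Pm^{(q)}\,\Sm$, and multiplying on the left by $\Sm^{-1}=\trsp{\Sm}$ yields the claim. In either route the only obstacle is bookkeeping: keeping the cyclic indices straight and, in both cases, remembering to use $x^{q^{m}}=x$ on $\Fqm$ to glue the Frobenius twists into a cycle. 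The precise change-of-basis convention is immaterial, since by the cyclic symmetry one gets $\Sm\,\Pm=\Pm^{(q)}\,\Sm$ whether one writes $\Hm_{\cB}=\Pm\,\Hm_{\cA}$, $\Hm_{\cA}=\Pm\,\Hm_{\cB}$, or either transposed version.
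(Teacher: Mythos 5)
Your proposal is correct and follows the same route as the paper, whose proof is literally the one-line ``direct computation'' that you carry out in full: the block-index bookkeeping, the wrap-around at row $i=0$, and the use of $x^{q^m}=x$ on $\Fqm$ are all handled properly. The coordinate-free derivation via $\Hm_{\cA}^{(q)}=\Sm\,\Hm_{\cA}$ and $\Hm_{\cB}=\Pm\,\Hm_{\cA}$ is a nice additional remark, but your main argument is exactly the paper's intended computation, just written out.
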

\begin{proof}
Direct computation. \qed
\end{proof}
The following result will also be used frequently in what follows

\begin{restatable}{proposition}{propstablefrobenius} \label{prop: stable_frobenius}
Whenever a basis $\cB$ has the form given in \eqref{eq: AltBasisFrobenius}, 
  $\Cmat (\mathcal B)$ is stable
  by the operation
  \[
    \Mm \longmapsto \trsp{\Sm} \Mm^{(q)} \Sm.
  \]
\end{restatable}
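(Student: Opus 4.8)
The plan is to exploit the single structural fact underlying the whole statement: when $\cB$ has the Frobenius-orbit shape \eqref{eq: AltBasisFrobenius}, the componentwise $q$-th power map $F\colon \cv\mapsto\cv^{(q)}$ does essentially nothing to $\cB$, it merely permutes it. Write $\DC\eqdef\ext{\Alt{r}{\xv}{\yv}^\perp}{\Fqm}$. Since $x\mapsto x^q$ is a bijection of $\Fqm$, $F$ is a (semilinear) bijection of $\Fqm^n$, and $F(\DC)=\Fqmspan{F(\bv_i^{q^l})\mid l,i}=\Fqmspan{\bv_i^{q^{l+1\bmod m}}\mid l,i}=\DC$ because $\bv_i\in\Fqm^n$ gives $\bv_i^{q^m}=\bv_i$, so the index set $\{l+1\bmod m\}$ is again $\Iintv{0}{m-1}$ (one may equally invoke Proposition~\ref{prop:dual_alt_fqm}). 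Hence $F$ restricts to a bijection of $\DC$ that sends the basis vector $\bv_i^{q^l}$ of $\cB$ to $\bv_i^{q^{l+1\bmod m}}$, another element of $\cB$; in other words, the ordered tuple $F(\cB)$ obtained by applying $F$ entrywise is exactly $\cB$ reordered by the $r$-block cyclic shift, and in particular is again an ordered basis of $\DC$.

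First I would record two descriptions of $\Cmat(F(\cB))$. On the one hand, $F$ is additive, multiplicative for the Schur product, and acts as $\lambda\mapsto\lambda^q$ on scalars, so applying it to a relation $\sum_{i\le j}c_{i,j}\starp{\vv_i}{\vv_j}=0$ among the Schur products of $\cB$ yields a relation $\sum_{i\le j}c_{i,j}^q\starp{F(\vv_i)}{F(\vv_j)}=0$ among those of $F(\cB)$, and $F^{-1}$ (also semilinear) inverts this; hence $\Crel(F(\cB))=\{\cv^{(q)}\mid \cv\in\Crel(\cB)\}$. Because $2=1+1$ lies in the prime field and is therefore fixed by $\lambda\mapsto\lambda^q$, Definition~\ref{def: cmat_odd} gives $\Mm_{\cv^{(q)}}=\Mm_{\cv}^{(q)}$, so $\Cmat(F(\cB))=\{\Mm^{(q)}\mid\Mm\in\Cmat(\cB)\}$. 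On the other hand, $F(\cB)$ and $\cB$ are two ordered bases of the same code $\DC$, so Proposition~\ref{prop: congr_odd} gives $\Cmat(F(\cB))=\trsp{\Pm}\,\Cmat(\cB)\,\Pm$ with $\Pm$ the change-of-basis permutation matrix realizing the $r$-block cyclic shift; inspecting \eqref{eq: matF} (in the spirit of Proposition~\ref{prop: P^q}) identifies it as $\Pm=\trsp{\Sm}$, whence $\Cmat(F(\cB))=\Sm\,\Cmat(\cB)\,\trsp{\Sm}$. Equating the two descriptions and using that $\Sm$ is orthogonal ($\trsp{\Sm}=\Sm^{-1}$): for any $\Mm\in\Cmat(\cB)$ we have $\Mm^{(q)}\in\Sm\,\Cmat(\cB)\,\trsp{\Sm}$, so $\trsp{\Sm}\,\Mm^{(q)}\,\Sm\in\Cmat(\cB)$, which is precisely the claimed stability.

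The only genuinely delicate points — the ones I expect to absorb most of the write-up — are bookkeeping rather than mathematics: pinning down that the relevant permutation is $\trsp{\Sm}$ and not $\Sm$ under the conventions in force, and tracking the reindexing of the pairs $(i,j)$ under the cyclic shift (the constraint $i\le j$ is not preserved, and in characteristic $2$ the map $\cv\mapsto\Mm_{\cv}$ has a nontrivial kernel). Both become transparent if one unwinds everything entrywise: indexing rows and columns by pairs $(l,i)\in\Iintv{0}{m-1}\times\Iintv{1}{r}$, one checks the identity $(\trsp{\Sm}\,\Xm\,\Sm)_{(l,i),(u,j)}=X_{(l-1\bmod m,\,i),(u-1\bmod m,\,j)}$ for every $rm\times rm$ matrix $\Xm$, and then verifies directly that the matrix attached to the Frobenius-transformed relation carries the entry $m_{(l-1,i),(u-1,j)}^q$ in position $((l,i),(u,j))$. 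No analytic input is needed: the entire content of the proposition is that a basis of the form \eqref{eq: AltBasisFrobenius} is a single Frobenius orbit over its $r$ "generators", permuted by $\Sm$.
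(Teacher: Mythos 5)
Your proof is correct and is essentially the paper's argument in a different packaging: the paper applies the componentwise Frobenius directly to the defining relation $\sum_{i\le j}c_{i,j}\starp{\bv_i}{\bv_j}=0$ and then shifts indices by $r$ by hand, whereas you obtain the same two facts as the identities $\Cmat(F(\cB))=\Cmat(\cB)^{(q)}$ and $\Cmat(F(\cB))=\Sm\,\Cmat(\cB)\,\trsp{\Sm}$, the latter by citing Proposition~\ref{prop: congr_odd} for the permutation of bases. Your identification of the change-of-basis matrix and of the entrywise action of $\Xm\mapsto\trsp{\Sm}\Xm\Sm$ checks out, so the only difference is that the reindexing is outsourced to the congruence proposition rather than redone explicitly.
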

The proof is given in Appendix~\ref{sec:app-attack}.
Note that $\Sm^{(q^i)}=\Sm$ for any $i$. Therefore, by applying $i$ times the map $\Mm \longmapsto \trsp{\Sm} \Mm^{(q)} \Sm$, we obtain
$\Mm \longmapsto   (\trsp{\Sm})^i \Mm^{(q^i)} (\Sm)^i.$
We say that $\Mm$ and $(\trsp{\Sm})^i \Mm^{(q^i)} (\Sm)^i$ are {\em blockwise Dickson shift} of each other.

\subsection{The full algorithm with respect to a public basis $\cB$}
Algorithm~\ref{alg: attack} provides a sketch of the attack in the
case of odd chacteristic field size. We will then justify why this
algorithm is supposed to work with non-negligible probability,
elaborate on some subroutines (as sampling matrices of rank $rm-1$)
and adapt it to the even characteristic case.
\begin{algorithm}[t]
    \hspace*{\algorithmicindent} \textbf{Input:} (a basis of) an alternant code $\Alt{r}{\xv}{\yv}$\\
    \hspace*{\algorithmicindent} \textbf{Output:} a pair $(\xv', \yv')$ of support and multiplier for $\Alt{r}{\xv}{\yv}$ 
\begin{algorithmic}[1]
\State Choose a basis $\cB=(\bv_1, \dots, \bv_r, \bv_1^q, \dots, \bv_r^q, \dots, \bv_1^{q^{m-1}},\dots, \bv_r^{q^{m-1}})$ for $\ext{\Alt{r}{\xv}{\yv}^\perp}{ \Fqm}$.
\State $\SC_{aux}\gets \{0\}$
 \Repeat 
\State Sample $\Bm\in \Cmat(\cB)$ of rank $rm-1$ at random
\State $\vv \gets \text{ generator of }\ker(\Bm)$ \label{row:v}
\State $\SC_{aux} \gets \SC_{aux}+\Fqmspan{\vv,\vv^{q}\Sm,\dots, \vv^{q^{m-1}}\Sm^{m-1}}$ \label{row:at_once}
\Until $\dim_{\Fqm} \SC_{aux}=(r-1)m$
\State Sample $\Bm_1\in \Cmat(\cB)$ of rank $rm-1$ at random 
\State $\uv_1 \gets \text{ generator of }\ker(\Bm_1)$ \label{row:u1}
\State $\VC \gets \langle \uv_1 \rangle$
\For{$j \in \Iintv{2}{r}$} 
\State Sample $\Bm_j\in \Cmat(\cB)$ of rank $rm-1$ at random 
\State $\uv_j \gets \text{ generator of }\ker(\Bm_j)$ \label{row:uj}
\Repeat 
\State $\uv_j\gets \uv_j^{q}\Sm$
\Until $\dim_{\Fqm} \SC_{aux}+\langle \uv_1,\uv_j \rangle=(r-1)m+1$
 \State $\VC \gets \VC+ \langle \uv_j \rangle$
 \EndFor \label{row:V}
 \State $\DC \gets \VC^\perp$
 \State $\GC \gets \DC$
 \For{$j \in \Iintv{1}{m-2}$} 
 \State $\DC \gets \DC^{(q)}\Sm$
 \State $\GC \gets \GC \cap \DC$
 \EndFor
 \State Apply the Sidelnikov-Shestakov attack \cite{SS92}
 on $\GC\cdot  \Hm_{\cB}$
  \State Return the support-multiplier pair $(\xv', \yv')$ found from Sidelnikov-Shestakov attack
   \end{algorithmic} \caption{Sketch of the attack in odd characteristic}\label{alg: attack}
\end{algorithm}
We now show the
structure of the attack. Starting form a public basis, compute a basis
as in (\ref{eq: AltBasisFrobenius}), {\em i.e.}, a basis of the
following form
\[
\mathcal{B} = (\bv_1, \dots, \bv_r, \bv_1^q, \dots, \bv_r^q, \dots, \bv_1^{q^{m-1}},\dots, \bv_r^{q^{m-1}}).
\]
How to compute such a basis has already been explained in a previous
section.  Similarly to $\Hm_{\cA}$, we  define
 $\Hm_{\cB}$ as the parity-check matrix of $\ext{\Alt{r}{\xv}{\yv}}{\Fqm}$ whose rows correspond to the elements of $\cB$ in that same order. 
 The correctness of the whole algorithm follows immediately
from the following propositions whose proofs can be found in
Appendix~\ref{sec:app_modeling}.  The first one explains why when we
have one kernel element in Algorithm \ref{alg: attack} at line
\ref{row:at_once} we can find $m-1$ other ones.
\begin{restatable}{proposition}{propatonce}\label{prop:at_once}
Let $\vv$ be in the kernel of a matrix $\Bm$ in $\Cmat(\cB)$ of rank $rm-1$. Then 
$\vv^{q}\Sm,\dots, \vv^{q^{m-1}}\Sm^{m-1}$ are $m-1$ elements that are also kernel elements of matrices in 
$\Cmat(\cB)$ of rank $rm-1$ which are respectively $\trsp{\Sm} \Bm^{(q)} \Sm,\cdots$, $(\trsp{\Sm})^{m-1} \Bm^{(q^{m-1})} \Sm^{m-1}$.
\end{restatable}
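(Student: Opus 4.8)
The plan is to deduce everything from Proposition~\ref{prop: stable_frobenius} together with the elementary fact that the entrywise Frobenius $\Mm \mapsto \Mm^{(q)}$ is a rank-preserving automorphism of the matrix algebra which is compatible with left multiplication by row vectors. First I would record the two properties of the shift matrix $\Sm$ from \eqref{eq: matF} that will be used repeatedly: $\Sm$ has entries in the prime field, hence $\Sm^{(q)} = \Sm$ and $\trsp{\Sm}^{(q)} = \trsp{\Sm}$, and $\Sm$ is orthogonal, i.e. $\Sm\,\trsp{\Sm} = \trsp{\Sm}\,\Sm = \Im_{rm}$, as noted right after \eqref{eq: matF}. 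Also, since $\Bm$ is symmetric in odd characteristic and skew-symmetric with zero diagonal in characteristic $2$, one has $\Bm = \trsp{\Bm}$, so its left and right kernels coincide and ``$\vv \in \ker(\Bm)$'' means $\vv\Bm = \zerov$.

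The core of the argument is a one-step computation. Put $\Bm_1 \eqdef \trsp{\Sm}\,\Bm^{(q)}\,\Sm$ and $\vv_1 \eqdef \vv^{q}\Sm$. Then
\[
\vv_1\,\Bm_1 \;=\; \vv^{q}\,\Sm\,\trsp{\Sm}\,\Bm^{(q)}\,\Sm \;=\; \vv^{q}\,\Bm^{(q)}\,\Sm \;=\; (\vv\Bm)^{q}\,\Sm \;=\; \zerov ,
\]
where the second equality uses $\Sm\,\trsp{\Sm} = \Im_{rm}$, the third uses that raising the identity $\vv\Bm = \zerov$ entrywise to the power $q$ yields $\vv^{q}\Bm^{(q)} = (\vv\Bm)^{q}$, and the last uses $\vv\Bm = \zerov$. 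Hence $\vv_1 \in \ker(\Bm_1)$. By Proposition~\ref{prop: stable_frobenius}, $\Bm_1 \in \Cmat(\cB)$, and since the entrywise Frobenius and left/right multiplication by the invertible matrices $\trsp{\Sm}, \Sm$ all preserve rank, $\rank(\Bm_1) = \rank(\Bm) = rm-1$.

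Finally I would iterate. Set $\Bm_i \eqdef (\trsp{\Sm})^{i}\,\Bm^{(q^i)}\,\Sm^{i}$ and $\vv_i \eqdef \vv^{q^i}\Sm^{i}$ for $i \in \Iintv{1}{m-1}$. A short induction, using $\Sm^{(q)} = \Sm$ and $\trsp{\Sm}^{(q)} = \trsp{\Sm}$ to simplify $\trsp{\Sm}\,\bigl((\trsp{\Sm})^{i-1}\Bm^{(q^{i-1})}\Sm^{i-1}\bigr)^{(q)}\,\Sm = (\trsp{\Sm})^{i}\Bm^{(q^i)}\Sm^{i}$, shows that $\Bm_i$ is obtained from $\Bm_{i-1}$ by one further application of the map $\Mm \mapsto \trsp{\Sm}\Mm^{(q)}\Sm$, and likewise $\vv_i = \vv_{i-1}^{q}\Sm$; so applying the previous paragraph to the pair $(\Bm_{i-1}, \vv_{i-1})$ gives $\Bm_i \in \Cmat(\cB)$, $\rank(\Bm_i) = rm-1$ and $\vv_i \in \ker(\Bm_i)$. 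This is exactly the claim. There is no serious obstacle here: the whole argument is bookkeeping around $\Sm\,\trsp{\Sm} = \Im_{rm}$, $\Sm^{(q)} = \Sm$, and the ring-homomorphism property of the entrywise Frobenius; the only point needing a little care is precisely the induction step above, where $\Sm^{(q)} = \Sm$ ensures that the $i$-fold application of $\Mm \mapsto \trsp{\Sm}\Mm^{(q)}\Sm$ produces exactly $(\trsp{\Sm})^{i}\Bm^{(q^i)}\Sm^{i}$ rather than a twisted variant.
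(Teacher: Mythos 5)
Your proposal is correct and follows essentially the same route as the paper: both rest on Proposition~\ref{prop: stable_frobenius} for membership of $(\trsp{\Sm})^i\Bm^{(q^i)}\Sm^i$ in $\Cmat(\cB)$, the cancellation $\Sm^i(\trsp{\Sm})^i=\Im_{rm}$, and the Frobenius compatibility $(\vv\Bm)^{(q^i)}=\vv^{q^i}\Bm^{(q^i)}$; the paper simply performs the kernel computation directly for general $i$ where you package it as a one-step lemma plus induction. The extra care you take with $\Sm^{(q)}=\Sm$ in the induction step is the same observation the paper makes in the remark following Proposition~\ref{prop: stable_frobenius}.
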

Then we are going to give a description of the space $\VC$ produced in
line \ref{row:V}. Basically this a vector space of elements that
correspond to a similar GRS code, in the following sense.
\begin{definition}
Let $\cA,\cB$ be the two bases introduced before and $\Pm$ the change of basis, \ie $\Hm_{\cB}=\Pm \Hm_{\cA}$. Let $\uv_1,\uv_2 \in \Fqm^{rm}$ be two vectors such that
\[
  \forall t \in \{1,2\}, \quad
\uv_t  \trsp{(\Pm^{-1})} \Pm^{-1} \Hm_{\cB} \in \GRS{r}{\xv}{\yv}^{q^{j_t}}
\]  
for some values $j_t\in \Iintv{0}{m-1}$. We say that $\uv_1$ and $\uv_2$ \textbf{correspond to the same GRS code with respect to the basis $\cB$} if and only if $j_1=j_2$.
\end{definition}
Two vectors $\uv_1$ and $\uv_2$ obtained by computing the nullspaces of rank $rm-1$ matrices may or may not correspond to the same GRS code. In any case, from them, we can easily exhibit two vectors corresponding to the same GRS code by choosing among their shifts $\uv_t^{q^i}\Sm^{i}$. More precisely, we have
\begin{restatable}{proposition}{propcorrGRS}\label{prop:corrGRS}
Let $\cA,\cB$ be the two bases introduced before and $\Pm$ the change of basis, \ie $\Hm_{\cB}=\Pm \Hm_{\cA}$. Let $\uv_1,\uv_2 \in \Fqm^{rm}$ be two vectors such that
\[
  \forall t \in \{1,2\},\quad
  \uv_t  \trsp{(\Pm^{-1})} \Pm^{-1} \Hm_{\cB} \in \GRS{r}{\xv}{\yv}^{(q^{j_t})}
\]  
for some values $j_t\in \Iintv{0}{m-1}$. There exists a unique $l\in \Iintv{0}{m-1}$ such that $\uv_1$ and $\uv_2^{q^l}\Sm^{l}$ correspond to the same GRS code.
\end{restatable}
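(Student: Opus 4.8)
The plan is to track how the two change-of-basis operations in the statement — the "untwisting" map $\Mm \mapsto \trsp{(\Pm^{-1})}\Pm^{-1}\Hm_{\cB}$ applied to a vector, and the blockwise Dickson shift $\uv \mapsto \uv^{q}\Sm$ — interact. First I would unwind the definition: saying $\uv_t \trsp{(\Pm^{-1})}\Pm^{-1}\Hm_{\cB} \in \GRS{r}{\xv}{\yv}^{(q^{j_t})}$ means that, after passing to the canonical basis $\cA$ via $\Hm_{\cB} = \Pm\Hm_{\cA}$, the vector $\uv_t$ selects (up to the symmetric twist $\trsp{(\Pm^{-1})}\Pm^{-1}$) exactly the $j_t$-th diagonal block of the block-diagonal structure \eqref{eq: A_diagblock}, i.e. it lands in the GRS summand $\GRS{r}{\xv^{q^{j_t}}}{\yv^{q^{j_t}}}$ of Proposition~\ref{prop:dual_alt_fqm}. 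So the index $j_t$ is really just "which of the $m$ Frobenius-twisted GRS blocks does $\uv_t$ point at."

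Next I would compute the effect of the shift $\uv \mapsto \uv^{q}\Sm$ on this index. Using Proposition~\ref{prop: P^q}, namely $\Pm = \trsp{\Sm}\Pm^{(q)}\Sm$ (equivalently $\Pm^{(q)} = \Sm\Pm\trsp{\Sm}$ since $\Sm^{-1} = \trsp{\Sm}$), one checks by direct substitution that if $\uv$ corresponds to the GRS code with index $j$, then $\uv^{q}\Sm$ corresponds to the GRS code with index $j+1 \bmod m$: raising to the $q$-th power advances the Frobenius twist by one, and the cyclic block shift $\Sm$ realigns the block structure accordingly, the two effects being compatible precisely because $\Pm$ is blockwise Dickson. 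Iterating, $\uv^{q^l}\Sm^{l}$ corresponds to index $j+l \bmod m$. Applied to $\uv_2$ (index $j_2$), the shifted vector $\uv_2^{q^l}\Sm^{l}$ has index $j_2 + l \bmod m$, and this equals $j_1$ for a unique $l \in \Iintv{0}{m-1}$, namely $l \equiv j_1 - j_2 \pmod m$. This is exactly the claimed statement, and in particular the shifted vector then corresponds to the same GRS code as $\uv_1$.

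The routine-but-careful part is verifying the index-shift formula: one must be precise about whether the twist matrix $\trsp{(\Pm^{-1})}\Pm^{-1}$ commutes appropriately with $\Sm$ and the Frobenius, and about the interaction of $(\cdot)^{(q)}$ with $\Pm^{-1}$ (using $(\Pm^{-1})^{(q)} = (\Pm^{(q)})^{-1}$ and Proposition~\ref{prop: P^q}). I expect the main obstacle to be bookkeeping rather than conceptual: one needs to confirm that the operation "$\uv \mapsto \uv^q \Sm$" on kernel vectors matches the operation "$\Mm \mapsto \trsp{\Sm}\Mm^{(q)}\Sm$" on matrices (Proposition~\ref{prop: stable_frobenius} and Proposition~\ref{prop:at_once} already establish that $\Cmat(\cB)$ and the rank-$(rm-1)$ locus are preserved), so that the shift is well-defined on the relevant objects, and then that under the identification with diagonal blocks this really is a cyclic shift by one on $\Z/m\Z$. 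Once the shift-by-one claim is in hand, uniqueness of $l$ is immediate since $\Z/m\Z$ is a torsor under itself.
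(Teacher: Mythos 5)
Your proposal is correct, and it reaches the conclusion by a slightly different route than the paper. The paper's proof does not work directly on the vectors: it uses the (contextual) fact that $\uv_2$ generates the kernel of a rank-$(rm-1)$ matrix $\Bm\in\Cmat(\cB)$, writes $\Bm=\trsp{(\Pm^{-1})}\Am\Pm^{-1}$ with $\Am$ block-diagonal, and then, via Proposition~\ref{prop:at_once} and a chain of substitutions based on Proposition~\ref{prop: P^q}, shows that $\uv_2^{q^l}\Sm^l\trsp{\Pm}^{-1}$ lies in the kernel of $(\trsp{\Sm})^l\Am^{(q^l)}\Sm^l$, whose rank-deficient block sits at index $j_2+l \bmod m$; membership in $\GRS{r}{\xv}{\yv}^{(q^{j_2+l})}$ is read off from that. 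You instead bypass the matrices entirely: the identity $\Sm\,\trsp{(\Pm^{-1})}=\bigl(\trsp{(\Pm^{-1})}\bigr)^{(q)}\Sm$, which follows from $\Pm=\trsp{\Sm}\Pm^{(q)}\Sm$, gives $(\uv^{q}\Sm)\,\trsp{(\Pm^{-1})}=\bigl(\uv\,\trsp{(\Pm^{-1})}\bigr)^{(q)}\Sm$, so untwisting commutes with the Dickson shift up to advancing the Frobenius exponent, and the block support (hence the GRS index) moves by $+1$ modulo $m$; uniqueness of $l$ is then immediate. Your version has the merit of proving exactly the statement as written, without importing the kernel interpretation of $\uv_1,\uv_2$ from the attack, and it matches the paper's shift direction ($j_2\mapsto j_2+l$). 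The one point to make explicit when writing it up is the step ``$\uv_t\,\trsp{(\Pm^{-1})}\Hm_{\cA}\in\GRS{r}{\xv}{\yv}^{(q^{j_t})}$ iff $\uv_t\,\trsp{(\Pm^{-1})}$ is supported on the $j_t$-th length-$r$ block'': this uses that the rows of $\Hm_{\cA}$ form a basis and that the codes $\GRS{r}{\xv}{\yv}^{(q^i)}$ are in direct sum, the standard assumption the paper invokes elsewhere (e.g.\ in Propositions~\ref{prop: SC_aux} and~\ref{prop:capVperp}); with that stated, the ``direct substitution'' you defer is exactly the computation above and goes through.
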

To detect which shift of $\uv_2$ corresponds to the same GRS code of $\uv_1$, we rely on the following proposition.
\begin{restatable}{proposition}{propSCaux} \label{prop: SC_aux}
Let $\vv_1,\dots,\vv_{r-1},\uv_1,\uv_2\in \Fqm^{rm}$ be the generators of the kernels of $\Bm_1,\dots,\Bm_{r-1},\Bm',\Bm''\in \Cmat(\cB)$ respectively, for randomly sampled matrices of rank $rm-1$. Define 
\[
	\SC_{aux} \eqdef \Fqmspan{\vv_j^{q^l}\Sm^l \mid j \in \Iintv{1}{r-1}, l \in \Iintv{0}{m-1}}.
\] If the following conditions are satisfied:
\begin{itemize}
\item $\dim_{\Fqm} \SC_{aux} =(r-1)m$ \quad (\ie the $(r-1)m$ vectors that generate $\SC_{aux}$ are linearly independent);
\item $\dim_{\Fqm} \SC_{aux}+\Fqmspan{\uv_t} =(r-1)m+1$, $\quad t=1,2$;
\end{itemize}
then the two following statements are equivalent:
\begin{enumerate}
\item $\dim_{\Fqm} \SC_{aux}+\Fqmspan{\uv_1,\uv_2^{q^l} \Sm^l} =(r-1)m+1$;
\item $\uv_1$ and $ \uv_2^{q^l} \Sm^l$ correspond to the same GRS code with respect to $\cB$.
\end{enumerate}
\end{restatable}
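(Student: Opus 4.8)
The plan is to transport the whole configuration to the canonical basis $\cA$ of \eqref{eq:basis_choice}, with respect to which $\Cmat(\cA)$ is block-diagonal with $m$ blocks of size $r$ --- the ``crucial point'' recalled at the start of Section~\ref{sec:attack}, valid under the standing assumption that $\Alt{r}{\xv}{\yv}$ is square-distinguishable with $r<q+1$. For a row vector $\uv\in\Fqm^{rm}$ write $\uv^{\cA}\eqdef\uv\trsp{(\Pm^{-1})}$, an $\Fqm$-linear automorphism of $\Fqm^{rm}$, and let $B_b\subseteq\Fqm^{rm}$ denote the $b$-th coordinate block, so $\Fqm^{rm}=\bigoplus_{b=0}^{m-1}B_b$ (block indices are read modulo $m$). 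Since $\Hm_{\cB}=\Pm\Hm_{\cA}$ one has $\uv\trsp{(\Pm^{-1})}\Pm^{-1}\Hm_{\cB}=\uv^{\cA}\Hm_{\cA}$, and because the $b$-th block of rows of $\Hm_{\cA}$ is a basis of $\GRS{r}{\xv^{q^b}}{\yv^{q^b}}=\GRS{r}{\xv}{\yv}^{(q^b)}$ while these $m$ GRS codes lie in direct sum inside $\ext{\Alt{r}{\xv}{\yv}^\perp}{\Fqm}$ (Proposition~\ref{prop:dual_alt_fqm}), the first thing to observe is that $\uv$ corresponds to the GRS code of index $j$ if and only if $\uv^{\cA}\in B_j$. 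Thus ``$\uv_1$ and $\uv_2^{q^l}\Sm^l$ correspond to the same GRS code with respect to $\cB$'' just means ``$(\uv_1)^{\cA}$ and $(\uv_2^{q^l}\Sm^l)^{\cA}$ lie in the same block $B_b$''.

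Next I would record two compatibility facts. First, by Proposition~\ref{prop: congr_odd} we have $\Cmat(\cA)=\trsp{\Pm}\Cmat(\cB)\Pm$, so a rank-$(rm-1)$ matrix $\Bm\in\Cmat(\cB)$ yields a rank-$(rm-1)$ matrix $\Am\eqdef\trsp{\Pm}\Bm\Pm\in\Cmat(\cA)$; being block-diagonal with $r\times r$ blocks, $\Am$ has exactly one diagonal block of rank $r-1$ and $m-1$ blocks of rank $r$, hence $\ker(\Am)$ is one-dimensional and contained in a single $B_b$, and if $\uv$ generates $\ker(\Bm)$ then $\uv^{\cA}$ generates $\ker(\Am)$ (immediate from $\Am=\trsp{\Pm}\Bm\Pm$). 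Applied to $\Bm_1,\dots,\Bm_{r-1},\Bm',\Bm''$ this gives $\vv_j^{\cA}\in B_{p_j}$, $\uv_1^{\cA}\in B_{j_1}$, $\uv_2^{\cA}\in B_{j_2}$ for some indices $p_j,j_1,j_2$, so $\uv_t$ corresponds to the GRS code of index $j_t$. Second, the $r$-cyclic shift commutes with $\cdot^{\cA}$: iterating Proposition~\ref{prop: P^q} gives $\Pm^{(q^l)}=\Sm^l\Pm\Sm^{-l}$, hence $(\Pm^{-1})^{(q^l)}=\Sm^l\Pm^{-1}\Sm^{-l}$, and a short computation using $\trsp{\Sm}=\Sm^{-1}$ yields $(\uv^{q^l}\Sm^l)^{\cA}=(\uv^{\cA})^{q^l}\Sm^l$. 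The map $\phi_l\colon\wv\mapsto\wv^{q^l}\Sm^l$ (componentwise Frobenius followed by the $r$-cyclic shift, and $\Sm^{(q^l)}=\Sm$) is a bijection of $\Fqm^{rm}$ sending $B_b$ onto $B_{b+l}$; in particular $(\uv_2^{q^l}\Sm^l)^{\cA}=\phi_l(\uv_2^{\cA})\in B_{j_2+l}$, so statement~(2) is precisely the congruence $j_1\equiv j_2+l\pmod m$.

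Finally I would carry out the dimension count. Applying $\cdot^{\cA}$, $\SC_{aux}^{\cA}$ is spanned by the $(r-1)m$ vectors $\phi_l(\vv_j^{\cA})=(\vv_j^{\cA})^{q^l}\Sm^l$ with $j\in\Iintv{1}{r-1}$, $l\in\Iintv{0}{m-1}$, the one with parameters $(j,l)$ lying in $B_{p_j+l}$; hence $\SC_{aux}^{\cA}=\bigoplus_{b=0}^{m-1}S_b$ with $S_b\eqdef\SC_{aux}^{\cA}\cap B_b$ equal to the span of the $r-1$ generators coming from $j=1,\dots,r-1$ with $l\equiv b-p_j$. Using $\phi_{l'}\circ\phi_l=\phi_{l'+l}$ (which rests on $\Sm$ having $0/1$ entries), $\phi_l$ maps each generator of $S_b$ to one of $S_{b+l}$, so $\phi_l(S_b)=S_{b+l}$; since $\phi_l$ is a bijection, all $S_b$ have the same dimension, and combined with $\sum_b\dim S_b=\dim_{\Fqm}\SC_{aux}=(r-1)m$ and $\dim S_b\le r-1$ this forces $\dim S_b=r-1$ for every $b$, i.e.\ each $S_b$ is a hyperplane of $B_b$. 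The hypotheses $\dim_{\Fqm}(\SC_{aux}+\Fqmspan{\uv_t})=(r-1)m+1$ then say exactly that $\uv_t^{\cA}\notin S_{j_t}$, whence $\SC_{aux}^{\cA}+\Fqmspan{\uv_1^{\cA}}=\bigl(\bigoplus_{b\ne j_1}S_b\bigr)\oplus B_{j_1}$, of dimension $(r-1)m+1$. Now if $j_1\equiv j_2+l$, then $\phi_l(\uv_2^{\cA})\in B_{j_1}$ already lies in this space, so (1) holds; if $j_1\not\equiv j_2+l$, then $\phi_l(\uv_2^{\cA})\in B_{j_2+l}$ with $j_2+l\ne j_1$, and since $\uv_2^{\cA}\notin S_{j_2}$ while $\phi_l$ restricts to a bijection $B_{j_2}\to B_{j_2+l}$ carrying $S_{j_2}$ onto $S_{j_2+l}$, we get $\phi_l(\uv_2^{\cA})\notin S_{j_2+l}$, hence $\phi_l(\uv_2^{\cA})\notin\SC_{aux}^{\cA}+\Fqmspan{\uv_1^{\cA}}$, the dimension jumps to $(r-1)m+2$, and (1) fails. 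This proves (1)$\Leftrightarrow$(2).

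I expect the steps requiring the most care to be the translation of ``same GRS code'' into ``same coordinate block $B_b$'' --- which relies on the $b$-th block of rows of $\Hm_{\cA}$ being a basis of the $b$-th GRS summand and on the directness of the sum of these summands inside $\ext{\Alt{r}{\xv}{\yv}^\perp}{\Fqm}$ --- and the identity $\phi_l(S_b)=S_{b+l}$, which is what lets $\uv_2^{\cA}\notin S_{j_2}$ propagate to $\phi_l(\uv_2^{\cA})\notin S_{j_2+l}$ and which uses both the block-permuting action of $\Sm^l$ and the fact that $\Sm$ is fixed by the componentwise Frobenius. Everything else is bookkeeping once the block-diagonal structure of $\Cmat(\cA)$ is brought in.
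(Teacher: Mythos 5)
Your proof is correct and follows essentially the same route as the paper's: the paper pushes everything forward through $\trsp{(\Pm^{-1})}\Pm^{-1}\Hm_{\cB}$ and decomposes the image of $\SC_{aux}$ as a direct sum of $(r-1)$-dimensional codes $\GC_i\subseteq\GRS{r}{\xv}{\yv}^{(q^i)}$, which under the isomorphism $\wv\mapsto\wv\trsp{(\Pm^{-1})}\Hm_{\cA}$ are exactly your hyperplanes $S_b\subset B_b$, and the final dimension count is identical. Your version is merely phrased in the coordinate space $\Fqm^{rm}$ and makes explicit (via $\phi_l(S_b)=S_{b+l}$) a step the paper delegates to Propositions~\ref{prop:at_once} and~\ref{prop:corrGRS}.
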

We are therefore able to construct a space of dimension $r$ whose elements all correspond to a same GRS code. Then we use
\begin{restatable}{proposition}{propVperp} \label{prop:Vperp}
  Let $j \in \Iintv{0}{m-1}$.
Let $\VC_j$ be the $[rm, r]$ linear code generated by $r$ linearly independent vectors corresponding to the same GRS code $\GRS{r}{\xv}{\yv}^{(q^j)}$ with respect to $\cB$. Then the linear space $\VC_j^\perp$ orthogonal to $\VC_j$ is such that
\begin{equation} \label{eq:Vperp}
\VC_j^\perp \Hm_{\cB}=  \sum_{i \in \Iintv{0}{m-1}\setminus \{j\}} \GRS{r}{\xv}{\yv}^{(q^i)}.
\end{equation}
\end{restatable}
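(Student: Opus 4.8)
Proposition~\ref{prop:Vperp} is best proved by transporting everything to the canonical basis $\cA$ of \eqref{eq:basis_choice}, where the statement is transparent, and then pulling it back through the change of basis $\Pm$ (recall $\Hm_\cB=\Pm\Hm_\cA$). View $\Hm_\cA$ as $m$ horizontally stacked blocks $\Gm_0,\dots,\Gm_{m-1}$, the block $\Gm_l$ having rows $\av_1^{q^l},\dots,\av_r^{q^l}$. Since $x\mapsto x^{q^l}$ is a field automorphism of $\Fqm$, the vector $\xv^{q^l}$ still has pairwise distinct entries and $\yv^{q^l}$ still nonzero ones, so $\Gm_l$ is a full rank generator matrix of $\GRS{r}{\xv^{q^l}}{\yv^{q^l}}=\GRS{r}{\xv}{\yv}^{(q^l)}$. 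In the generic case the $rm$ rows of $\Hm_\cA$ are linearly independent, hence $\uv\mapsto\uv\Hm_\cA$ is an isomorphism from $\Fqm^{rm}$ onto $\ext{\Alt{r}{\xv}{\yv}^\perp}{\Fqm}=\sum_{l=0}^{m-1}\GRS{r}{\xv}{\yv}^{(q^l)}$ (Proposition~\ref{prop:dual_alt_fqm}), and this sum is \emph{direct}. Writing a coefficient vector in blocks $\wv=(\wv_0,\dots,\wv_{m-1})$, $\wv_l\in\Fqm^r$, we get $\wv\Hm_\cA=\sum_l\wv_l\Gm_l$ with $\wv_l\Gm_l\in\GRS{r}{\xv}{\yv}^{(q^l)}$; by directness of the sum and full rank of the $\Gm_l$ this yields that $\wv\Hm_\cA\in\GRS{r}{\xv}{\yv}^{(q^j)}$ iff $\wv$ lies in the $r$-dimensional block subspace $E_j\eqdef\{\wv:\wv_l=0\text{ for all }l\neq j\}$, and likewise $E_j^{\perp}\Hm_\cA=\sum_{i\neq j}\GRS{r}{\xv}{\yv}^{(q^i)}$, where $E_j^{\perp}=\{\wv:\wv_j=0\}$.

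Next I would unwind the definition of ``corresponding to the same GRS code with respect to $\cB$''. Since $\Pm^{-1}\Hm_\cB=\Hm_\cA$, the map appearing in that definition is $\uv\mapsto\uv\trsp{(\Pm^{-1})}\Hm_\cA$, a linear isomorphism onto $\ext{\Alt{r}{\xv}{\yv}^\perp}{\Fqm}$; by the first paragraph, $\uv$ corresponds to $\GRS{r}{\xv}{\yv}^{(q^j)}$ iff $\uv\trsp{(\Pm^{-1})}\in E_j$, i.e. iff $\uv\in E_j\trsp{\Pm}$. The set of such vectors is therefore an $r$-dimensional subspace, so any $r$ linearly independent vectors corresponding to $\GRS{r}{\xv}{\yv}^{(q^j)}$ span it, and $\VC_j=E_j\trsp{\Pm}$ regardless of the particular generators chosen.

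It then remains to compute the dual and apply $\Hm_\cB$. For $\ev\in E_j$ and $\wv\in\Fqm^{rm}$ one has $\wv\cdot(\ev\trsp{\Pm})=(\wv\Pm)\cdot\ev$, hence $\wv\in\VC_j^{\perp}$ iff $\wv\Pm\in E_j^{\perp}$, i.e. $\VC_j^{\perp}=E_j^{\perp}\Pm^{-1}$. Therefore
\[
\VC_j^{\perp}\Hm_\cB=E_j^{\perp}\Pm^{-1}\Pm\Hm_\cA=E_j^{\perp}\Hm_\cA=\sum_{i\in\Iintv{0}{m-1}\setminus\{j\}}\GRS{r}{\xv}{\yv}^{(q^i)},
\]
which is exactly \eqref{eq:Vperp}. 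Note the argument uses neither the characteristic of the field nor the rank of the matrices, so it covers both the odd and the even case. There is no real ``main obstacle'' beyond the bookkeeping of transposes when moving between coefficient vectors, the code and orthogonality, and keeping the change-of-basis convention consistent with the one fixed earlier for $\Cmat(\cA),\Cmat(\cB)$ and $\Hm_\cA,\Hm_\cB$; all the structural inputs (the GRS decomposition being direct and each block $\Gm_l$ being of full rank) are immediate consequences of the generic dimension hypothesis and of the fact that the Frobenius is a field automorphism.
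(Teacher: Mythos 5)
Your proof is correct and follows essentially the same route as the paper's: transport $\VC_j$ through $\trsp{(\Pm^{-1})}$ to the block coordinate subspace of the canonical basis, use the adjoint identity $\wv\cdot(\ev\trsp{\Pm})=(\wv\Pm)\cdot\ev$ to identify $\VC_j^{\perp}\Pm$ with the vectors vanishing on the $j$-th block, and apply $\Hm_{\cB}=\Pm\Hm_{\cA}$. The only cosmetic difference is that you establish the equality $\VC_j^{\perp}=E_j^{\perp}\Pm^{-1}$ directly (using directness of the GRS sum and full rank of each block), whereas the paper proves the inclusion and then concludes by a dimension count.
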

Given $\VC_j^\perp$, the other codes $\VC_i^\perp \Hm_{\cB}$ that are sums of $m-1$ GRS codes can be obtained according to the the following chain of equalities
\begin{align*}
&\sum_{i \in \Iintv{0}{m-1}\setminus \{j+l \mod m\}} \GRS{r}{\xv}{\yv}^{(q^i)}\\
=&\left(\sum_{i \in \Iintv{0}{m-1}\setminus \{j\}} \GRS{r}{\xv}{\yv}^{(q^i)}\right)^{(q^l)}=(\VC_j^\perp \Hm_{\cB})^{(q^l)}=(\VC_j^\perp)^{(q^l)} \Hm_{\cB}^{(q^l)}=(\VC_j^\perp)^{(q^l)} \Sm \Hm_{\cB}.
\end{align*}

After this, we are ready to compute a basis of a GRS code.
\begin{restatable}{proposition}{propcapVperp} \label{prop:capVperp}
Let $\VC_j^\perp$ be a linear space satisfying Equation~\eqref{eq:Vperp}, for all $j\in \Iintv{0}{m-1}$. Then with the standard assumption that all $\GRS{r}{\xv}{\yv}^{(q^j)}$ are in direct sum, we obtain, for any $j\in \Iintv{0}{m-1}$,
\[
\GRS{r}{\xv}{\yv}^{(q^j)}= \bigcap_{i \in \Iintv{0}{m-1}\setminus \{j\}} \VC_i^\perp \Hm_{\cB}.
\]
\end{restatable}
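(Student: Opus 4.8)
The plan is to reduce the statement to an elementary fact about internal direct sums of subspaces. Write $G_\ell \eqdef \GRS{r}{\xv}{\yv}^{(q^\ell)}$ for $\ell \in \Iintv{0}{m-1}$. By Proposition~\ref{prop:Vperp} applied with each index $i$, the hypothesis that $\VC_i^\perp$ satisfies Equation~\eqref{eq:Vperp} says exactly that
\[
\VC_i^\perp \Hm_{\cB} = \sum_{\ell \in \Iintv{0}{m-1}\setminus\{i\}} G_\ell .
\]
The ``standard assumption'' in the statement is that the sum $\sum_{\ell=0}^{m-1} G_\ell$ is direct; since this sum equals $\ext{\Alt{r}{\xv}{\yv}^\perp}{\Fqm}$ by Proposition~\ref{prop:dual_alt_fqm}, and since $\Hm_{\cB}$ has full row rank $rm$ so that $\vv \mapsto \vv \Hm_{\cB}$ is injective on $\Fqm^{rm}$, nothing is lost by working directly inside $\bigoplus_{\ell} G_\ell \subseteq \Fqm^n$. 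Thus $\VC_i^\perp \Hm_{\cB} = \bigoplus_{\ell \neq i} G_\ell$ for every $i$.

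First I would check the inclusion $G_j \subseteq \bigcap_{i \neq j} \VC_i^\perp \Hm_{\cB}$. This is immediate: for every $i \in \Iintv{0}{m-1}\setminus\{j\}$, the index $j$ lies in $\Iintv{0}{m-1}\setminus\{i\}$, hence $G_j \subseteq \bigoplus_{\ell \neq i} G_\ell = \VC_i^\perp \Hm_{\cB}$. For the reverse inclusion, I would take $x \in \bigcap_{i \neq j} \VC_i^\perp \Hm_{\cB}$. In particular $x$ lies in $\bigoplus_{\ell} G_\ell$, so by directness it has a unique decomposition $x = \sum_{\ell=0}^{m-1} x_\ell$ with $x_\ell \in G_\ell$. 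Fix any $i \neq j$: since $x \in \VC_i^\perp \Hm_{\cB} = \bigoplus_{\ell \neq i} G_\ell$, uniqueness of the decomposition in the big direct sum forces the component $x_i$ to vanish. Letting $i$ range over $\Iintv{0}{m-1}\setminus\{j\}$ gives $x_i = 0$ for all $i \neq j$, hence $x = x_j \in G_j$. Combining the two inclusions yields the claimed equality; one also sees that both sides have dimension $r$, so the intersection is computed with no dimension drop.

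As for difficulty, there is essentially no real obstacle here. The only points requiring care are (i) invoking Proposition~\ref{prop:Vperp} with the correct indexing, so that the hypothesis on each $\VC_i^\perp$ is read off as the stated sum of $m-1$ Frobenius-twisted GRS codes, and (ii) making explicit that the ``standard assumption'' is precisely what turns all the sums $\sum_{\ell \neq i} G_\ell$ into internal direct sums, which is what legitimizes the unique-decomposition argument. If one preferred to avoid mentioning the direct-sum hypothesis at all, the same conclusion can be obtained by a dimension count, using that $\dim \sum_{\ell} G_\ell = rm$ and $\dim G_\ell = r$; but the direct-sum route is shorter and cleaner.
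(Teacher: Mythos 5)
Your proof is correct. The first inclusion $\GRS{r}{\xv}{\yv}^{(q^j)} \subseteq \bigcap_{i \neq j} \VC_i^\perp \Hm_{\cB}$ is argued exactly as in the paper. For the reverse direction you diverge: the paper finishes with a dimension count, observing that under the direct-sum assumption $\dim_{\Fqm}\bigcap_{i \neq j} \VC_i^\perp = r(m-1)-r(m-2)=r$, so the inclusion of an $r$-dimensional space into an $r$-dimensional space is an equality; you instead prove the reverse inclusion head-on, decomposing an element of the intersection uniquely as $\sum_\ell x_\ell$ with $x_\ell \in G_\ell$ and killing each component $x_i$, $i \neq j$, by uniqueness. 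Both arguments rest on the same direct-sum hypothesis and are equally valid; yours is slightly more self-contained in that it never needs to justify the iterated dimension formula (which in the paper implicitly uses the modular law repeatedly), while the paper's is shorter on the page. One small caveat on your closing remark: the dimension-count route does not actually let you ``avoid mentioning the direct-sum hypothesis,'' since that hypothesis is exactly what makes $\dim \bigcap_{i\neq j}\VC_i^\perp$ come out to $r$ rather than something larger; but this aside does not affect the correctness of your main argument.
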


 \begin{remark}
 In the $q$ odd case, the only exception to what was said until now occurs for $r=3$. In this case a non-full rank diagonal block $\Bm_{j,j}$ becomes the null block, because there are no matrices of rank 1 or 2. In this case, the kernel of a rank $r(m-1)=3m-3$ matrix is a three-dimensional subspace, which immediately provides the subspace $\VC_j$ from which to recover the associated GRS codes.
 \end{remark}

\subsection*{How to sample matrices in $\Cmat(\cB)$ of rank $rm-1$}
This is the most costly part of the algorithm. We first address the case of odd characteristic, as the case of even characteristic needs an ad hoc discussion. 
It is not too difficult to estimate that the
density of rank $rm-1$ matrices inside $\Cmat(\cB)$ is of order $q^{-m}$ (see \ref{ss:app-estimate_rank}) and therefore it is desirable to have a better technique than just a brute force approach.
 We proceed instead as follows.
We take two matrices $\Dm_1,\Dm_2$ at random in $\Cmat(\cB)$ and solve over $\Fqm$ the equation
\[
\det(w\Dm_1+ \Dm_2)=0.
\]
The determinant $\det(w\Dm_1+ \Dm_2)$ is a univariate polynomial of degree $rm$ and since $w$ is taken over $\Fqm$ we can expect to have solutions with non-negligible probability. A root $w_0$ of $\det(w\Dm_1+ \Dm_2)$ determines a matrix $w_0 \Dm_1 + \Dm_2$ whose rank is strictly smaller than $rm$ but not necessarily equal to $rm-1$. However, 
the rank $rm-1$ is by far the most likely outcome.
 Repeating the process enough times ($\Theta(1)$ times on average) then provides a matrix of rank $rm-1$.

\subsection{Complexity}
The bottelneck of the attack is the
computation of rank $rm-1$ matrices in $\Cmat(\cB)$ which is explained
in the previous paragraph. The computation of the polynomial
$\det (w \Dm_1 + \Dm_2)$ can be done by choosing $rm$ distinct
elements $\alpha_1, \dots, \alpha_{rm}$ of $\Fqm$, compute the values
$\det (\alpha_1 \Dm_1 + \Dm_2), \dots, \det (\alpha_{rm} \Dm_1 +
\Dm_2)$ and then recover the polynomial $\det (w \Dm_1 + \Dm_2)$ by
interpolation.  This represents the calculation of
$rm = \mathcal{O}(n)$ determinants of $rm \times rm$ matrices and
hence a cost \( O(n^{\omega + 1}), \) where $\omega$ is the complexity
exponent of linear algebra. Once this polynomial (in the variable $w$)
is computed, the cost of the root--finding step is negligible compared
to that of the previous calculation.

Since the latter process should be repeated $\mathcal{O}(n)$ times, we get an
overall complexity of
\[
  \mathcal{O}(n^{\omega + 2})\  \text{operations\ in\ } \Fqm.
\]

\subsection{Even characteristic}
This case is treated in
Appendices~\ref{ss:app_attack_char_2},~\ref{ss:app-qeven-rodd}
and~\ref{ss:app-Vj}.

 \section{Conclusion}
\label{sec:conclusion}
\subsubsection*{A general methodology for studying the security of the McEliece cryptosystem with respect to key--recovery attacks.}
Trying to find an attack on the key of the McEliece scheme based on
Goppa codes,
has turned out over the years to be a formidable problem. The progress
on this issue has basically been non existent for many years and it
was for a long time judged that the McEliece scheme was immune against
this kind of attacks. This changed a little bit when many variants of
the original McEliece came out, either by turning to a slightly larger
class of codes namely the alternant codes which retain the main
algebraic structure of the Goppa code and/or adding additional
structure on it \cite{BCGO09,BBBCDGGHKNNPR17}, changing the alphabet
\cite{BLP10,BLP11a}, or going to extreme parameters \cite{CFS01}. This
has lead to devise many tools to attack these variants such as
algebraic modeling to recover the alternant stucture of a Goppa code
which is basically enough to recover its structure \cite{FOPT10},
using square code considerations \cite{COT14,COT17,BC18}, or trying to
solve a simpler problem which is to distinguish these algebraic codes
from random codes \cite{FGOPT11,FGOPT13,MT22}.  We actually believe
that in order to make further progress on this very hard problem, it
is desirable to move away now from studying particular schemes
proposed in the literature, by exploring and developing systematically
tools for solving this problem and study the region of parameters
(alphabet size $q$, code length $n$, degree $r$ of the code, extension
degree $m$) where these methods work.  We suggest the following
research plan
\begin{itemize}
\item Studying the slightly more general problem of attacking alternant codes might be the right way to go because it retains the essential algebraic features of Goppa codes and it allows to find attacks that might not work in the subcase of Goppa codes where the additional structure can be a nuisance. An example which is particularly enlightening here is the recent work \cite{BMT23} (attack on generic alternant codes in a certain parameter regime which amazingly does not work in the 
particular case of Goppa codes where the additional structure prevents the attack to work).
\item A particularly fruitful research thread is to study the potentially easier problem of finding a distinguisher for alternant/Goppa codes first. 
\item Turn later on this distinguisher into an attack (such as \cite{BMT23} for the distinguisher of \cite{FGOPT11}).
\end{itemize}
This is the research plan we have followed to some extent here.

\subsubsection*{A distinguisher in odd characteristic.} It is clear that any algebraic modeling for solving the symmetric MinRank problem for rank $3$ could be used to attack the problem in odd characteristic. The Support Minors modeling of \cite{BBCGPSTV20} would be for instance a good candidate for this. The difficulty is here to predict the complexity of system solving, since the fact that the matrices are symmetric gives many new linear dependencies that do not happen in the generic MinRank case. This is clearly a promising open problem.

\subsubsection*{Turning the distinguisher of \S \ref{sec:modeling} into an attack.}
The Pfaffian modeling for the distinguisher can be used in principle to attack the key-recovery problem as well. This problem is strictly harder than just distinguishing because of the algebraic structure in the code $\Cmat(\cA)$ that is much stronger than in $\Cmat(\cR)$ (random case). In particular, rank 2 matrices are found at a potentially larger degree than $\bar{d}$ at which the Hilbert function in the random case becomes 0. 
The fact that the solution space is very large, in particular it contains a rather large vector space (see Section \ref{sec:low_rank}), suggests though that we can safely specialize a rather large number of variables to speed up the system solving. Once a rank $2$ matrix is found, the attack is not finished yet, but it is tempting to conjecture that the main bottleneck is to find such a matrix first and that some of the tools developed in the attack given in Section~\ref{sec:attack} might be used to finish the job.

Indeed, since rank 2 matrices in $\Cmat(\cA)$ are identically zero outside the main block diagonal, we can consider a matrix subcode spanned by many of them, obtained by solving the Pfaffian system with different specializations. This subcode will have a block diagonal shape and that is why the attack of the last section is expected to apply on such subspace.

\newpage
\appendix

\section{Gr\"obner bases and Computing the Hilbert Series} \label{app: GB}

Gr\"obner basis techniques are the main tool at hand to solve multivariate polynomial systems and therefore to perform algebraic cryptanalysis. One crucial notion for this kind of computation and for the complexity analysis is the Macaulay matrix \cite{M16}. We give the definition that is relevant in the homogeneous case.

\begin{definition}[Macaulay Matrix \cite{M16}]
	Let $F=\{f_1,\dots,f_m\}\subset\K[\xv]$ be a homogeneous system such that $\deg(f_i)=d_i$. Let $d$ be a positive integer. The (homogeneous) Macaulay matrix $\Mac(F,d)$ of $F$ in degree $d$ is a matrix whose rows are each indexed by a polynomial $m_j f_i$, for any $f_i\in F$ and any monomial $m_j$ of degree $ d-d_i$, and whose columns are indexed by all the monomials of degree $d$. The entry corresponding to the row indexed by $m_j f_i$ and column indexed by $m_l$ is the coefficient of $m_l$ in $m_j f_i$. In particular, if $m_j f_i=\sum_{\alpha \in \N^n} a_\alpha \xv^\alpha$ and $m_l=\xv^\beta$, then the corresponding entry of $\Mac(F,d)$ is $a_\beta$:
	\[
	\begin{array}{lcc}
		& & \textcolor{gray}{m_l} \\
		\Mac(F,d)= & \textcolor{gray}{m_j f_i} & \begin{bmatrix}
			& \vdots & \\ \cdots & a_\beta & \cdots \\ & \vdots &
		\end{bmatrix}.
	\end{array}
	\] 
\end{definition}

A Gr\"obner basis can be computed using linear algebra, in particular \cite{L83} showed that it is enough to perform Gaussian elimination on a Macaulay matrix in degree equal to the degree of regularity $d_{reg}$. Several algorithms and variants are linear-algebra based, for instance F4 \cite{F99}, F5 \cite{F02} or XL \cite{CKPS00}. Differently from methods not exploiting the Macaulay matrix construction, this approach allows to derive complexity estimates for this task.
Computing the Hilbert series can also be done with these methods and this is the only result we need here, which is derived from \cite[Proposition 1]{BFS15} as
\propcomplexity*
Furthermore, these methods can take advantage of algorithms that benefit from matrix sparsity \cite{W86},\cite{CCNY12}. The cost of the XL Wiedemann algorithm to solve a Macaulay matrix in degree $d$ has been evaluated \cite[Proposition 3, p. 219]{DY09} with
\[
3 n_r\binom{n+d-1}{d}^2,
\]
where $n_r$ is the average weight of a row in $\Mac(F,d)$. \section{Proof related to Section \ref{sec:quadratic_form}}
\label{sec:app-quadratic}
Let us recall the proposition we prove.
\propcongr*
\begin{proof}
Let $\cA$ and $\cB$ be related as $\Hm_{\cB}=\Pm\Hm_{\cA}.$ where $\Hm_{\cA}$, resp. $\Hm_{\cB}$ is a matrix whose rows are the basis elements of $\cA$, resp. $\cB$. 
It will be helpful to view an element $\cv=(c_{i,j})_{1 \leq i \leq j \leq k}$ of $\Crel$ as a matrix $\Cm=(C_{i,j})_{\substack{1 \leq i \leq k\\1 \leq j \leq k}}$ where
$C_{ij}=c_{ij}$ for $i \leq j$ and $C_{ij}=0$ otherwise. We can write the matrix $\Mm_{\cv}$ of $\Cmat$ corresponding to $\cv$ as 
$\Mm_{\cv} = \Cm + \trsp{\Cm}$.
Consider an element $\Mm \in\Cmat(\cB)$. By definition of $\Cmat(\cB)$ there is an element $\cv=(c_{i,j})_{1 \leq i \leq j \leq k}$ of $\Crel(\cB)$ such 
that $\Mm=\Mm_{\cv}$. Consider the  matrix $\Cm$ corresponding to $\cv$ that we just introduced.
By definition of $\Crel(\cB)$ we have
\begin{equation}
\label{eq:cij}
 \sum_{1\le i\le j \le k} c_{i,j} \starp{\bv_i}{\bv_j}=0.
 \end{equation}
 We have for all $i$ in $\Iintv{1}{k}$:
 $
 \bv_i = \sum_{s=1}^k p_{is} \av_s,
 $
 where $p_{i,j}$ denotes the entry $(i,j)$ of $\Pm$. Therefore
 \begin{eqnarray*}
  \sum_{1\le i\le j \le k} c_{i,j} \starp{\bv_i}{\bv_j} &= & \sum_{1\le i\le j \le k} c_{i,j} \starp{\left(\sum_{s \in \Iintv{1}{k}} p_{i,s}\av_s\right)}{\left(\sum_{t \in \Iintv{1}{k}} p_{j,t}\av_t\right)}\\
   &= & \sum_{s,t \in \Iintv{1}{k}} \left(\sum_{1\le i\le j \le k} p_{i,s}  p_{j,t} c_{i,j}\right)\starp{\av_s}{\av_t} \\
   & = & \sum_{1\le s< t \le k} \left(\sum_{1\le i\le j \le k} (p_{i,s}  p_{j,t}+p_{i,t}  p_{j,s}) c_{i,j}\right)\starp{\av_s}{\av_t} \\
&&+\sum_{s \in \Iintv{1}{k}} \left(\sum_{1\le i\le j \le k} p_{i,s}  p_{j,s} c_{i,j}\right)\starp{\av_s}{\av_s} \\
 \end{eqnarray*}
 Let $\Dm = (d_{s,t})_{\substack{1 \leq s \leq k\\1 \leq t \leq k}}$ where
 \begin{eqnarray*}
 d_{s,t} & \eqdef &\sum_{1\le i\le j \le k} (p_{i,s}  p_{j,t}+p_{i,t}  p_{j,s}) c_{i,j}\;\;\text{ for $1 \leq s < t \leq k$}\\
 d_{s,s} & \eqdef & \sum_{1\le i\le j \le k} p_{i,s}  p_{j,s} c_{i,j} \;\;\text{ for $s \in \Iintv{1}{k}$}\\
d_{s,t} & \eqdef & 0 \;\;\text{ otherwise.}
 \end{eqnarray*}
 $\dv \eqdef (d_{i,j})_{1 \leq i \leq j \leq k}$ is because of \eqref{eq:cij} an element of $\Crel(\cA)$. 
 Now from the definition of $\Dm$ is clear that we have
 $\Dm + \trsp{\Dm} = \trsp{\Pm} \left( \Cm + \trsp{\Cm} \right) \Pm.$
 In other words, the matrix $\Mm_{\dv}$ in $\Cmat(\cA)$ corresponding to $\dv$ satisfies
 \begin{eqnarray*}
 \Mm_{\dv} & = & \Dm + \trsp{\Dm}\\
 & = & \trsp{\Pm} \left( \Cm + \trsp{\Cm} \right) \Pm\\
 & = & \trsp{\Pm} \Mm_{\cv}  \Pm.
 \end{eqnarray*}
 This holds for any $\cv$ in $\Crel(\cB)$. This leads to $\trsp{\Pm}\Cmat(\cB) \Pm \subseteq \Cmat(\cA)$.
Since $\Pm$ is invertible, this implies
$\Cmat(\cA)=\trsp{\Pm}\Cmat(\cB) \Pm$. \qed
\end{proof}
 \section{Proofs of some results given in Section \ref{sec:low_rank}}
\label{app-GV}
\subsection{Proofs of the results given in \S\ref{ss:low-rank}}
\label{ss:app-low-rank}

For all the proofs given here we recall that we have fixed the basis
$$\cA \eqdef \{\yv,\xv\yv,\dots,\xv^{r-1}\yv,\dots,\yv^{q^{m-1}},(\xv\yv)^{q^{m-1}},\dots,(\xv^{r-1}\yv)^{q^{m-1}}\}.$$
We will also consider the following block form of the matrices $\Mm \in \Cmat(\cB)$:
 \[
 \Mm_{\cv}=\begin{pmatrix} \Mm_{0,0} & \Mm_{0,1} & \dots & \Mm_{0,m-1} \\
 \Mm_{1,0} & \ddots & &  \\
 \vdots & & & \vdots \\
 \Mm_{m-1,0} & & \dots & \Mm_{m-1,m-1}
 \end{pmatrix},
 \]
 with $\Mm_{l,u}=(m^{(l,u)}_{i,j})_{\substack{0\le i\le r-1 \\ 0\le j\le r-1}}\in \Fqm^{r\times r}$. 

We are first going to prove Proposition \ref{prop: rel_binGoppa} which is given by
\proprelbinGoppa*
\begin{proof}
Recall from \cite{P75} that, if $\Goppa{\xv}{\Gamma}=\Alt{r}{\xv}{\yv}$ and $\Gamma$ is a square-free polynomial of degree $r$, then
\[
	 \Goppa{\xv}{\Gamma}=\Goppa{\xv}{\Gamma^2}=\Alt{2r}{\xv}{\yv^2}.
\]
Thus 
\[\xv^{i 2^l}\yv^{2^{(l+1 \mod m)}}\in \Goppa{\xv}{\Gamma}^\perp_{\Fqm},\]
 for all $i\in \Iintv{0}{2r-1}, l \in \Iintv{0}{m-1}$. Consequently each equation
 \begin{equation} \label{eq: rank2_squarefree}
 (\xv^a \yv)^{2^l}(\xv^b \yv)^{2^{l}}=(\xv^{a+b} \yv^2)^{2^{l-1}}(\xv^{a+b} \yv^2)^{2^{l-1}},
\end{equation}
with $l \in \Iintv{1}{m}, 0\le b < a < r$ corresponds to a codeword $\cv$ in $\Crel(\cA)$. Let us fix $(a,b,l)$. 
Since $(\xv^{a+b} \yv^2)^{2^{l-1}}(\xv^{a+b} \yv^2)^{2^{l-1}}$ is a square and the field characteristic is 2, the matrix $\Mm \in \Cmat(\cA)$ corresponding to the relation \eqref{eq: rank2_squarefree} is such that
 \[
\Mm_{u,v}=\zerov_{r\times r},\quad \text{if } (u,v)\ne (l,l)
\]
and
 \[
m^{(l,l)}_{i,j}= \begin{cases}
1&\text{if }(i,j)\in\{(a,b),(b,a)\}\\
0& \text{otherwise}
\end{cases},
\]
where $\Mm_{u,v}=(m^{(u,v)}_{i,j})\in \Fqm^{r\times r}$ is the block of $\Mm$ with row-column block index $(u,v)$.
Hence
\[
\rank(\Mm)=\rank(\Mm_{l,l})=2.
\]
It is trivial to check that the set of matrices obtained by any possible choice of $a,b$ and $l$ generates the space of all block-diagonal skew-symmetric matrices with $r\times r$ blocks. \qed
\end{proof}

Let us prove Proposition \ref{prop: rank2_subspaces} that we recall here
\propranktwosubspace*

\begin{proof}
	Let us consider the matrix subspace originated by choosing all the matrices corresponding to \eqref{eq: alt_rel} for a fixed $l=u$ and such that $c=d$, $a+b=2c$, $a$ and $b$ are even and one of them equals a fixed even value $j$ (alternatively one can choose $a,b$ both odd and one equal to an odd $j$). Any matrix $\Mm$ in this subspace is zero outside the union of the $(lr+j+1)$-th column and the $(lr+j+1)$-th row. Its rank is therefore upper bounded by 2. In other words, any such matrix $\Mm$ has the following shape
\begin{equation}
\Mm=\begin{bmatrix}
\zerov &  &  & & \\
 &  \ddots &  & \zerov & \\
	& & \Mm_{l,l} & & \\
	 & \zerov & & \ddots & \\
	  &  &  & & \zerov \\
\end{bmatrix}, \quad \text{with} \quad \Mm_{l,l}= \begin{blockarray}{ccccccccc}
\begin{block}{[cccccccc]c}
 \zerov &  & \makecell{\cellcolor{gray!30}*} & & & \zerov & & &\\
  &  & \makecell{\cellcolor{gray!30}0} & &  &  &  &  &\\
\makecell{\cellcolor{gray!30}*} & \makecell{\cellcolor{gray!30}0} & \makecell{\cellcolor{gray!50}0} & \makecell{\cellcolor{gray!30}0} & \makecell{\cellcolor{gray!30}*} & \makecell{\cellcolor{gray!30}0} & \makecell{\cellcolor{gray!30}*} & \makecell{\cellcolor{gray!30}0} &  \hspace{5pt} \leftarrow (j+1)\text{-th row} \\
   &  & \makecell{\cellcolor{gray!30}0} & &  &  &  & & \\
  &  & \makecell{\cellcolor{gray!30}*} &  &  &  &  & & \\
 \zerov &  & \makecell{\cellcolor{gray!30}0} & &  & \zerov &  & & \\
 &  & \makecell{\cellcolor{gray!30}*} &  &  &  &  &  &\\
    &  & \makecell{\cellcolor{gray!30}0} & &  &  &  & & \\
  \end{block}
  \end{blockarray},
\end{equation}	
where all the $*$'s in the $(j+1)$-th row of $\Mm_{l,l}$ can be chosen independently. 
Thus, the subspace dimension is $\floor{\frac{r-1}{2}}$, because each of the $\floor{\frac{r+1}{2}}$ odd entries of the $(j+1)$-th column of $\Mm_{l,l}$ is a $*$, with the exception of the $(j+1,j+1)$ entry, which is 0.

If $\Alt{r}{\xv}{\yv}$ is a Goppa code $\Goppa{\xv}{\Gamma}$, we
consider instead the matrix subspace originated by choosing all the
matrices corresponding to \eqref{eq: rank2_squarefree} for a fixed $l$
and such that one element among $a, b$ equals a fixed value $j$. Any
matrix $\Mm$ in this subspace is null outside the union of the
$(lr+j+1)$-th column and the $(lr+j+1)$-th row. Its rank is therefore
upper bounded by 2. In other words, any such matrix $\Mm$ has the
following shape
\begin{equation}
\Mm=\begin{bmatrix}
\zerov &  &  & & \\
 &  \ddots &  & \zerov & \\
	& & \Mm_{l,l} & & \\
	 & \zerov & & \ddots & \\
	  &  &  & & \zerov \\
\end{bmatrix}, \quad \text{with} \quad \Mm_{l,l}= \begin{blockarray}{ccccccccc}
\begin{block}{[cccccccc]c}
 \zerov &  & \makecell{\cellcolor{gray!30}*} & & & \zerov & & &\\
  &  & \makecell{\cellcolor{gray!30}*} & &  &  &  &  &\\
\makecell{\cellcolor{gray!30}*} & \makecell{\cellcolor{gray!30}*} & \makecell{\cellcolor{gray!50}0} & \makecell{\cellcolor{gray!30}*} & \makecell{\cellcolor{gray!30}*} & \makecell{\cellcolor{gray!30}*} & \makecell{\cellcolor{gray!30}*} & \makecell{\cellcolor{gray!30}*} &  \hspace{5pt} \leftarrow (j+1)\text{-th row} \\
   &  & \makecell{\cellcolor{gray!30}*} & &  &  &  & & \\
  &  & \makecell{\cellcolor{gray!30}*} &  &  &  &  & & \\
 \zerov &  & \makecell{\cellcolor{gray!30}*} & &  & \zerov &  & & \\
 &  & \makecell{\cellcolor{gray!30}*} &  &  &  &  &  &\\
    &  & \makecell{\cellcolor{gray!30}*} & &  &  &  & & \\
  \end{block}
  \end{blockarray},
\end{equation}	
where all the $*$'s in the $(j+1)$-th row of $\Mm_{l,l}$ can be chosen independently. Thus, the subspace dimension is $r-1$, because each of the $r$ entries of the $(j+1)$-th column of $\Mm_{l,l}$ is a $*$, with the exception of the $(j+1,j+1)$ entry, which is 0. \qed
\end{proof}

We will prove now Proposition \ref{prop:count-low-rank-alt} that we recall here
\propcountlowrankalt*
\begin{proof}
It directly follows from Lemmas~\ref{lemma: nrank2_rodd} and \ref{lemma: nrank2_reven} that we will give below. \qed
\end{proof}

To understand what is going on in this case, it is insightful to have a look at some examples first.
Let us fix a value $l=u \in \Iintv{0}{m-1}$ and consider the subspace of $\Cmat(\cA)$ spanned by all the matrices corresponding to a quadratic relation
\[\starp{(\xv^a \yv)^{q^l}}{(\xv^b \yv)^{q^l}}=\starp{(\xv^c \yv)^{q^l}}{(\xv^d \yv)^{q^l}},
\]
for any possible choice of $r-1\ge a>c\ge d>b\ge 0$. It follows from the analysis of the distinguisher in
\cite{FGOPT13} and \cite{MT22} that this space has dimension
$\binom{r-1}{2}$. Let $\Mm_{l,l}(\uv)$ be the generic diagonal block
matrix of such subspace, where $\uv=(u_1,\dots,u_{\binom{r-1}{2}})$ is
the vector of coefficients with respect to the basis. We give examples
for some small values of $r$.
\begin{example}
	\begin{itemize}
		\item For $r=3$:
		\begin{equation} \label{prop: countlowrankalt_r3}
		\Mm_{l,l}(\uv)=\begin{bmatrix}
			0 & 0& u_1 \\
			0 & 0& 0 \\
			u_1 & 0 & 0
		\end{bmatrix}.
		\end{equation}
		\item For $r=4$:
		\begin{equation} \label{prop: countlowrankalt_r4}
		\Mm_{l,l}(\uv)=\begin{bmatrix}
			0 & 0& u_1 & u_2 \\
			0 & 0& u_2 & u_3\\
			u_1 & u_2 & 0 & 0 \\
			u_2 & u_3 & 0 & 0
		\end{bmatrix}.
		\end{equation}
		\item For $r=5$:
		\begin{equation} \label{prop: countlowrankalt_r5}
		\Mm_{l,l}(\uv)=\begin{bmatrix}
			0 & 0& u_1 & u_2 & u_4 \\
			0 & 0& u_2 & u_3 & u_5\\
			u_1 & u_2 & 0 & u_5 & u_6 \\
			u_2 & u_3 & u_5 & 0 & 0 \\
			u_4 & u_5 & u_6 & 0 & 0 \\
		\end{bmatrix}.	
		\end{equation}
		\item For $r=6$:
		\begin{equation} \label{prop: countlowrankalt_r6}
		\Mm_{l,l}(\uv)=\begin{bmatrix}
			0 & 0& u_1 & u_2 & u_4 & u_7\\
			0 & 0& u_2 & u_3 & u_5+u_7 & u_8\\
			u_1 & u_2 & 0 & u_5 & u_6 & u_9 \\
			u_2 & u_3 & u_5 & 0 & u_9 & u_{10}\\
			u_4 & u_5+u_7 & u_6 & u_9 & 0 & 0\\
			u_7 & u_8 & u_9 & u_{10} & 0 & 0\\
		\end{bmatrix}.	
		\end{equation}
	\end{itemize}
\end{example}

Table~\ref{table: rank2} illustrates the experimental number of rank 2 matrices in $\Cmat(\cA)$ such that $\rank(M_{1,1})=2$ and all the other blocks are null, for small values of $r$ and over the field $\Fqm$.

\begin{table} 
	\begin{tabular}{ |c||c|c|c|c|c|c|  } 
		\hline
		Size $r\times r$& 3 & 4 & 5 & 6 & 7 & 8\\
		\hline
		n. of rank 2 matrices  & $q^m-1$ & $q^{2m}-1$ & $2q^{3m}-q^{2m}-1$ & $2q^{4m}-q^{2m}-1$ & $3q^{5m}-q^{4m}-q^{2m}-1$ & $3q^{6m}-q^{5m}-q^{2m}-1$\\
		\hline
	\end{tabular} 
	\caption{Number of rank-2 block matrices} \label{table: rank2}
\end{table}
Table~\ref{table: rank2} suggests that these blocks have a number of rank 2 specializations that roughly grows as $\floor{\frac{r-1}{2}}(q^m)^{r-2}$. We are now going to show the shape of a number of rank-2 matrices in the order of $(q^m)^{r-2}$. Despite not being all the rank-2 matrices, this is interesting in order to determine the dimension of the variety corresponding to a determinantal ideal, which indeed can be proved to be at least $r-2$.
The explanation can be split into odd and even matrix sizes.

From the matrices $\Mm_{l,l}(\uv)$ with odd size $r\times r$, by specializing some of the $\uv$ variables and selecting some row/column indexes, we can determine submatrices of size $\ceil{r/2}\times \ceil{r/2}$ that are skew-symmetric but without any other additional relation. Again, we first give examples for some small values of $r$.
\begin{example}
	\begin{itemize}
		\item For $r=3$, the submatrix of \eqref{prop: countlowrankalt_r3} obtained by taking row/column indexes in $\{1,3\}$ is
		\[
		\begin{bmatrix}
			0& u_1 \\
			u_1 & 0
		\end{bmatrix}.
		\]
		\item For $r=5$, the submatrix of \eqref{prop: countlowrankalt_r5} obtained by taking row/column indexes in $\{1,3,5\}$ is
		\[
		\begin{bmatrix}
			0 & u_1 & u_4 \\
			u_1 & 0 & u_6 \\
			u_4 & u_6 & 0 \\
		\end{bmatrix}.	
		\]
	\end{itemize}
\end{example}
More generally, it is enough to build the $\ceil{r/2}\times \ceil{r/2}=\frac{r+1}{2}\times \frac{r+1}{2}$ submatrix selecting the odd row/column indexes. 
By specializing all the $u_i$'s not appearing in the submatrix, this gives a lower bound on the number of matrices of rank 2. In particular

\begin{lemma}  \label{lemma: nrank2_rodd}
	The number of choices of $\uv$ for which $\Mm_{l,l}(\uv)$ ($r$ odd) has rank 2 is lower bounded by $N_0\left(\frac{r+1}{2},2\right)$, where 
	$N_0(s,r)$ stands for the number of skew-symmetric matrices of size $s$ over $\Fq$ and rank $r$.
\end{lemma}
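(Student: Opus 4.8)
The plan is to exhibit, inside the subspace of $\Cmat(\cA)$ spanned by the blocks $\Mm_{l,l}(\uv)$, a full copy of the space of $\tfrac{r+1}{2}\times\tfrac{r+1}{2}$ skew-symmetric matrices, placed on a set of rows and columns carrying no other support, so that the rank of that sub-block equals the rank of the whole matrix. Recall that for the canonical basis $\cA$ and a fixed $l$, each relation $\starp{(\xv^a\yv)^{q^l}}{(\xv^b\yv)^{q^l}}=\starp{(\xv^c\yv)^{q^l}}{(\xv^d\yv)^{q^l}}$ with $a+b=c+d$ yields a matrix of $\Cmat(\cA)$ whose only nonzero block is the $(l,l)$-block, an $r\times r$ skew-symmetric matrix; hence any matrix $M$ in their span $\mathcal{R}_l$ satisfies $\rank M=\rank M_{l,l}$, and $\Mm_{l,l}(\uv)$ is by definition the generic element of this $(l,l)$-block. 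I index the rows and columns of a block by $\Iintv{0}{r-1}$, and write $E_{i,j}$ for the matrix units.

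First I would check that, for every pair $0\le b<a\le r-1$ with $a+b$ even, the matrix $E_{a,b}+E_{b,a}$ lies in (the $(l,l)$-block of) $\mathcal{R}_l$. Indeed, with $c=d=\tfrac{a+b}{2}$, an integer lying strictly between $b$ and $a$, the relation $\starp{(\xv^a\yv)^{q^l}}{(\xv^b\yv)^{q^l}}=\sqb{(\xv^{c}\yv)^{q^l}}$ is exactly of the type appearing in the span, and in characteristic $2$ its associated polar-form matrix is $E_{a,b}+E_{b,a}$, by the computation recalled just before Fact~\ref{fact:simple}. Applying this to all pairs $(a,b)$ with $a$ and $b$ both even shows that the ``even-support'' subspace $\mathcal{E}\eqdef\Fqmspan{E_{a,b}+E_{b,a}\mid 0\le b<a\le r-1,\ a\text{ and }b\text{ even}}$ is contained in the $(l,l)$-block of $\mathcal{R}_l$.

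The crux is that $\mathcal{E}$ is precisely the space of skew-symmetric matrices whose odd-indexed rows and columns all vanish. Since $r$ is odd, there are $\tfrac{r+1}{2}$ even indices in $\Iintv{0}{r-1}$, so deleting the odd-indexed rows and columns defines a linear isomorphism $\mathcal{E}\cong\Skew\!\left(\tfrac{r+1}{2},\Fqm\right)$, and it preserves rank because the deleted rows and columns are zero. Therefore every rank-$2$ skew-symmetric matrix of size $\tfrac{r+1}{2}$ over $\Fqm$ is the image of a unique rank-$2$ matrix of $\mathcal{E}\subseteq\mathcal{R}_l$; as $\uv\mapsto\Mm_{l,l}(\uv)$ is a bijection onto the $(l,l)$-block, each such matrix corresponds to a distinct $\uv$ with $\rank\Mm_{l,l}(\uv)=2$. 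Since the number of rank-$2$ skew-symmetric matrices of a given size over $\Fqm$ is at least the number over $\Fq$, this produces at least $N_0\!\left(\tfrac{r+1}{2},2\right)$ such $\uv$, which is the claim.

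The only genuinely delicate step is the first one, namely the precise identification of the matrices occurring in $\mathcal{R}_l$; this is where I would either invoke the analysis of the relation space of \cite{FGOPT13,MT22}, or argue directly as above that the subfamily of relations with $c=d$ already produces all $E_{a,b}+E_{b,a}$ with $a+b$ even, hence spans $\mathcal{E}$. Everything after that---the identification of $\mathcal{E}$ with the even-support subspace, and the fact that restriction to even-indexed coordinates is a rank-preserving isomorphism---is elementary, and the hypothesis that $r$ is odd enters only through the count $\tfrac{r+1}{2}$ of even indices.
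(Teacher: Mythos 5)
Your proof is correct and follows essentially the same route as the paper: both arguments isolate the principal submatrix on the even-indexed (in your $0$-based convention) rows and columns of the block, observe that it is a full, unconstrained copy of $\Skew\left(\tfrac{r+1}{2},\Fqm\right)$ with no support elsewhere, and count its rank-$2$ elements. Your derivation of the containment of this even-support subspace from the elementary relations with $c=d$ is a slightly more self-contained justification than the paper's direct inspection of the generic block, but the construction and the count are the same.
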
 
Since (see Proposition \ref{prop: n_sym_givenrank_ch2})
\[
N_0\left(\frac{r+1}{2},2\right)=(q^m)^{2\frac{r+1}{2}-3}+o((q^m)^{2\frac{r+1}{2}-3})=q^{m(r-2)}+o(q^{m(r-2)}),
\]
and we have $m$ blocks, we expect that the number of solutions is at least in the order of $mq^{m(r-2)}$.
 Analogously for $\Mm_{l,l}(\uv)$ matrices with even size, we do not construct generic skew-symmetric submatrices but we provide specializations of $\Mm_{l,l}(\uv)$ related to such submatrices. We first give the examples for some small values of $r$.
\begin{example}
	\begin{itemize}
		\item For $r=4$:
		\[
		\begin{bmatrix}
			0 & 0& u_1 & \lambda u_1 \\
			0 & 0& \lambda u_1 & \lambda^2 u_1\\
			u_1 & \lambda u_1 & 0 & 0 \\
			\lambda u_1 & \lambda^2 u_1 & 0 & 0
		\end{bmatrix},
		\]
		\ie we take in \eqref{prop: countlowrankalt_r4}the specialization
		\[
		\begin{cases}
		u_2=\lambda u_1\\ u_3 =\lambda^2 u_1
		\end{cases},
		\]
		with the parameter $\lambda \in \Fqm$.
		\item For $r=6$:
		\[
		\begin{bmatrix}
			0 & 0& u_1 & \lambda u_1 &u_4  & \lambda u_4 \\
			0 & 0& \lambda u_1 & \lambda^2 u_1 & \lambda u_4  & \lambda^2 u_4 \\
			u_1 & \lambda u_1 & 0 & 0 & u_6 & \lambda u_6  \\
			\lambda u_1 & \lambda^2 u_1 & 0 & 0 & \lambda u_6  & \lambda^2 u_6 \\
			u_4 & \lambda u_4 & u_6 & \lambda u_6  & 0 & 0\\
			\lambda u_4  & \lambda^2 u_4  & \lambda u_6  & \lambda^2 u_6  & 0 & 0\\
		\end{bmatrix},	
		\]
		\ie we take in \eqref{prop: countlowrankalt_r6}the specialization
		\[
		\begin{cases}
		u_2=\lambda u_1\\ u_3 =\lambda^2 u_1 \\u_5=0 \\ u_7=\lambda u_4 \\ u_8=\lambda ^2 u_4 \\ u_9 = \lambda u_6 \\ u_{10} = \lambda^2 u_6
		\end{cases},
		\]
		with the parameter $\lambda \in \Fqm$.
      \end{itemize}
\end{example}
More generally we can replace each entry $u_i$ of a generic anti-symmetric matrix of size $\frac{r}{2}\times\frac{r}{2}$ with the $2\times 2$ block $\begin{bmatrix} u_i & \lambda u_i \\ \lambda u_i & \lambda^2 u_i\end{bmatrix}$ and each null element of the diagonal with the null $2\times 2$ block. It is clear that if the starting $\frac{r}{2}\times\frac{r}{2}$ matrix has rank 2, then the same occurs for the $r\times r$ block matrix. Moreover, the variable $\lambda$ adds one degree of freedom. Hence we have
\begin{lemma} \label{lemma: nrank2_reven}
	The number of choices for $u_i$'s such that the specialized $r\times r$ matrix $W^{(\uv)}$ ($r$ even) has rank 2 is lower bounded by $q^m \cdot N_0(\frac{r}{2},2)$.
\end{lemma}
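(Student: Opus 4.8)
The plan is to exhibit, for the fixed block index $l$, an explicit family of $q^m\cdot N_0\!\left(\frac{r}{2},2\right)$ pairwise distinct specialisations $\uv$ for which $\Mm_{l,l}(\uv)$ (the matrix called $W^{(\uv)}$ in the statement) has rank exactly $2$, obtained by inflating a generic rank-$2$ skew-symmetric matrix of size $\frac{r}{2}$ through a one-parameter Kronecker-product construction. First I would pin down the ambient space precisely. Since $\av_i^{q^l}\star\av_j^{q^l}=(\xv^{i+j}\yv^2)^{q^l}$ depends only on $i+j$, every relation $\av_a^{q^l}\star\av_b^{q^l}=\av_c^{q^l}\star\av_d^{q^l}$ with $a+b=c+d$ is valid, and a short combinatorial bookkeeping on the associated matrices shows that the subspace of $\Cmat(\cA)$ they span is exactly
\[
\Bigl\{(m_{i,j})_{0\le i,j\le r-1}\in\Skew(r,\Fqm)\ \Big|\ \sum_{\substack{0\le i<j\le r-1\\ i+j=s}}m_{i,j}=0\ \text{ for every odd }s\Bigr\},
\]
a space of dimension $\binom{r-1}{2}$, consistent with the analysis of \cite{FGOPT13,MT22}; hence $\uv\mapsto\Mm_{l,l}(\uv)$ is a linear bijection onto it.

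Next, for $\lambda\in\Fqm$ I would put $\word{w}\eqdef\trsp{(1,\lambda)}$ and, for a skew-symmetric $\Nm\in\Skew\!\left(\frac{r}{2},\Fqm\right)$, define $\phi_\lambda(\Nm)\eqdef\Nm\otimes\bigl(\word{w}\trsp{\word{w}}\bigr)$, that is, the $r\times r$ matrix obtained by replacing each entry $\Nm_{a,b}$ of $\Nm$ by the $2\times2$ block $\Nm_{a,b}\,\word{w}\trsp{\word{w}}$ (so the zero diagonal of $\Nm$ becomes zero blocks); this is exactly the block substitution illustrated in the examples. Then I would verify three points in turn. (i) $\phi_\lambda(\Nm)$ lies in the subspace above: an index pair $(i,j)$ on an odd anti-diagonal $\{i+j=s\}$ straddles two distinct positions inside its $2\times2$ blocks, so its entry is $\lambda\,\Nm_{a,b}$ with $a=\lfloor i/2\rfloor$, $b=\lfloor j/2\rfloor$ and $a+b=(s-1)/2$; summing over the anti-diagonal, each off-diagonal entry of $\Nm$ appears twice and the diagonal entries of $\Nm$ vanish, so the sum is $0$ in characteristic $2$. (ii) $\rank(\phi_\lambda(\Nm))=\rank(\Nm)$: from the factorisation $\phi_\lambda(\Nm)=(\Im_{r/2}\otimes\word{w})\,\Nm\,(\Im_{r/2}\otimes\trsp{\word{w}})$, where $\Im_{r/2}\otimes\word{w}$ has full column rank $\frac{r}{2}$, left-multiplication by this matrix and right-multiplication by its transpose preserve the rank; in particular $\rank(\Nm)=2$ forces $\rank(\phi_\lambda(\Nm))=2$. (iii) The map $(\Nm,\lambda)\mapsto\phi_\lambda(\Nm)$ is injective on $\{\Nm:\rank(\Nm)=2\}\times\Fqm$: such an $\Nm$ is nonzero, so some $2\times2$ block $(a_0,b_0)$ of $\phi_\lambda(\Nm)$ is nonzero, and from its $(1,1)$ and $(1,2)$ entries one recovers $\Nm_{a_0,b_0}$ and then $\lambda$, after which each $\Nm_{a,b}$ is read off as the $(1,1)$ entry of the block $(a,b)$. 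Combining (i)--(iii), $\{\phi_\lambda(\Nm):\rank(\Nm)=2,\ \lambda\in\Fqm\}$ is a set of $q^m\cdot N_0\!\left(\frac{r}{2},2\right)$ distinct rank-$2$ matrices in the ambient subspace, and pulling them back through the bijection $\uv\mapsto\Mm_{l,l}(\uv)$ yields at least $q^m\cdot N_0\!\left(\frac{r}{2},2\right)$ values of $\uv$ with $\rank(\Mm_{l,l}(\uv))=2$.

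The one step that genuinely requires care is (i): one must first determine, combinatorially, the exact linear equations cutting out the subspace that contains every $\Mm_{l,l}(\uv)$, and then certify that the inflated matrices satisfy them. The rank identity in (ii) and the injective count in (iii) are routine linear algebra over $\Fqm$; in particular (iii) is what makes precise the informal ``the variable $\lambda$ adds one degree of freedom'' of the surrounding discussion.
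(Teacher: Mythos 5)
Your construction is exactly the paper's: inflate a rank-$2$ skew-symmetric matrix $\Nm$ of size $r/2$ by replacing each entry $n_{a,b}$ with the block $n_{a,b}\,\word{w}\trsp{\word{w}}$ for $\word{w}=\trsp{(1,\lambda)}$, and let $\lambda$ range over $\Fqm$. The paper only asserts the three key facts (membership in the relation space, preservation of rank $2$, and the extra degree of freedom coming from $\lambda$), whereas you verify them explicitly --- the odd-antidiagonal characterization of the block of $\Cmat(\cA)$, the factorization $\phi_\lambda(\Nm)=(\Im_{r/2}\otimes\word{w})\,\Nm\,(\Im_{r/2}\otimes\trsp{\word{w}})$, and the injectivity of $(\Nm,\lambda)\mapsto\phi_\lambda(\Nm)$ on nonzero $\Nm$ --- and all three checks are correct.
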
 
Since 
\[
q^m\cdot N_0\left(\frac{r}{2},2\right)=(q^m)\cdot (q^m)^{2\frac{r}{2}-3}+o((q^m)\cdot (q^m)^{2\frac{r}{2}-3})=q^{m(r-2)}+o(q^{m(r-2)}),
\]
and we have $m$ blocks, we have proved that the number of rank-2 matrices in $\Cmat(\cA)$ is again at least in the order of $mq^{m(r-2)}$.

We are now going to prove a refinement of this counting for binary Goppa codes
\propcountlowrankGoppa*

\begin{proof}
	We have seen that each choice of $(a,b,l)$ from \eqref{eq: rank2_squarefree} leads to a different matrix $\Mm$ in $\Cmat(\cA)$ which is null outside the diagonal block $\Mm_{l,l}$ and such that $\rank(\Mm)=\rank(\Mm_{l,l})=2$. Furthermore, the block submatrix $\Mm_{l,l}$ is such that only one element below the diagonal is nonzero, \ie the entry $(a+1,b+1)$. Hence the set over all possible choices of $(a,b,l)$ of these matrices generates the full subspace of skew-symmetric block diagonal matrices. Therefore, by counting the rank-2 matrices in this subspace, the number of rank-2 matrices in $\Cmat(\cA)$ can be lower bounded by
	\[
m N_0(r,2)=m \frac{(q^{mr}-1)(q^{m(r-1)}-1)}{q^{2m}-1}.
\]
\qed
\end{proof}

\subsection{Proof of Proposition \ref{prop: GV}}
\label{ss:proof_prop: GV}

Let us first recall this proposition.
\propGV*

For this proof, we will need the following results giving the number of symmetric/skew-symmetric matrices of a given rank.
The number of symmetric matrices over a finite field of a given rank can be found in \cite{M69}.
\begin{proposition}[{\cite[Theorem 2]{M69a}}] \label{prop: n_sym_givenrank} 
	Let $N(t, r)$ denote the number of symmetric matrices of size $t\times t$, rank $r$, with entries in $\F_{q}$. Then
	\[
	N(t,2s)=\prod_{i=1}^{s}\frac{q^{2i}}{q^{2i}-1}\prod_{i=0}^{2s-1}(q^{t-i}-1),\quad 2s \le t
	\]
	\[
	N(t,2s+1)=\prod_{i=1}^{s}\frac{q^{2i}}{q^{2i}-1}\prod_{i=0}^{2s}(q^{t-i}-1),\quad 2s+1 \le t.
	\]	
\end{proposition}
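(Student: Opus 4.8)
The plan is to reduce the count to the number $N(n,n)$ of \emph{nonsingular} symmetric $n\times n$ matrices over $\F_q$, and then to obtain that quantity from the congruence action of $\GL_n(\F_q)$.

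\emph{Step 1 (reduction to the nonsingular case).} First I would observe that a symmetric matrix $\Am\in\F_q^{t\times t}$ of rank $r$ is the Gram matrix, in the standard basis of $V=\F_q^t$, of a symmetric bilinear form $B$ whose radical $W=\{v\in V:B(v,\cdot)=0\}$ is the kernel of $\Am$, of dimension $t-r$; and such a $B$ is the same thing as the datum of the subspace $W$ together with the induced form on $V/W$, which is automatically nondegenerate, every such pair occurring exactly once. Since fixing a basis of $V/W$ identifies nondegenerate forms on an $r$-dimensional space with nonsingular symmetric $r\times r$ matrices, whose number $N(r,r)$ is independent of $W$, this gives
\[
N(t,r)=\#\{W\le\F_q^t:\dim W=t-r\}\cdot N(r,r)=\binom{t}{t-r}_{q}\,N(r,r),
\]
where $\binom{t}{t-r}_{q}=\prod_{i=1}^{r}\frac{q^{t-r+i}-1}{q^{i}-1}$ is the Gaussian binomial counting $(t-r)$-dimensional subspaces. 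This step is purely formal and holds in every characteristic.

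\emph{Step 2 (the nonsingular count).} For $q$ odd I would then use that $\GL_n(\F_q)$ acts on nonsingular symmetric matrices by $\Am\mapsto\Pm\Am\trsp{\Pm}$ with, by the classification of nondegenerate symmetric bilinear forms over a finite field, exactly two orbits, distinguished by whether $\det\Am$ is a square in $\F_q^{*}$ (representatives: identity matrices with at most one diagonal entry replaced by a fixed nonsquare). The stabiliser of such a representative $\Am_\epsilon$ is in bijection with the orthogonal group $O_n^{\epsilon}(\F_q)$, so
\[
N(n,n)=|\GL_n(\F_q)|\left(\frac{1}{|O_n^{+}(\F_q)|}+\frac{1}{|O_n^{-}(\F_q)|}\right).
\]
Feeding in $|\GL_n(\F_q)|=q^{\binom{n}{2}}\prod_{i=1}^{n}(q^{i}-1)$, $|O_{2s+1}(\F_q)|=2q^{s^{2}}\prod_{i=1}^{s}(q^{2i}-1)$ and $|O_{2s}^{\pm}(\F_q)|=2q^{s(s-1)}(q^{s}\mp1)\prod_{i=1}^{s-1}(q^{2i}-1)$ and simplifying the $q$-products yields
\[
N(2s,2s)=q^{s(s+1)}\prod_{i=1}^{s}(q^{2i-1}-1),\qquad N(2s+1,2s+1)=q^{s(s+1)}\prod_{i=1}^{s+1}(q^{2i-1}-1).
\]

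\emph{Step 3 (assembling) and the obstacle.} Finally I would plug $N(r,r)$ back into Step~1 with $r=2s$ (resp. $r=2s+1$), split the factor $\prod_{i=1}^{r}(q^{i}-1)$ appearing in the denominator of the Gaussian binomial into $\prod_{i}(q^{2i}-1)\prod_{i}(q^{2i-1}-1)$, cancel the $\prod_{i}(q^{2i-1}-1)$ against $N(r,r)$, and reindex $\prod_{i=1}^{r}(q^{t-r+i}-1)=\prod_{i=0}^{r-1}(q^{t-i}-1)$ and $q^{s(s+1)}=\prod_{i=1}^{s}q^{2i}$; the product collapses to $\prod_{i=1}^{s}\frac{q^{2i}}{q^{2i}-1}\prod_{i=0}^{r-1}(q^{t-i}-1)$, which is exactly the claimed value for even, resp. odd, rank. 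The reduction and this bookkeeping are routine; the main obstacle is Step~2, and in particular establishing the orders of $O_n^{\pm}(\F_q)$, together with the fact that $q$ even needs separate handling there (in characteristic $2$ not every symmetric matrix comes from a quadratic form and one cannot complete the square: one classifies the nondegenerate symmetric forms directly — the alternating case occurring only in even size — and computes the corresponding isometry-group orders, or argues by a recursion on $n$ peeling off a nondegenerate $1$- or $2$-dimensional subspace, arriving at the same two formulas, which is all that is needed since the identity to be proved does not depend on the characteristic). This is, of course, a classical result of MacWilliams \cite{M69a} (see also \cite{M69}).
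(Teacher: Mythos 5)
Your proposal is correct, but it is worth saying up front that the paper contains no proof of this statement at all: Proposition~\ref{prop: n_sym_givenrank} is imported verbatim from MacWilliams \cite{M69a}, so there is no internal argument to compare against. Your route — first the exact reduction $N(t,r)=\binom{t}{r}_q N(r,r)$ via the radical of the bilinear form (which is indeed characteristic-free and bijective as you state), then orbit–stabilizer for the congruence action on nonsingular symmetric matrices — is the standard ``classification of forms'' derivation, and your bookkeeping checks out: with $|\GL_n(\F_q)|=q^{\binom{n}{2}}\prod_{i=1}^n(q^i-1)$ and the stated orders of $O_{2s+1}$ and $O_{2s}^{\pm}$ one gets $N(2s,2s)=q^{s(s+1)}\prod_{i=1}^{s}(q^{2i-1}-1)$ and $N(2s+1,2s+1)=q^{s(s+1)}\prod_{i=1}^{s+1}(q^{2i-1}-1)$, and substituting back and splitting $\prod_{i=1}^r(q^i-1)$ into odd- and even-index parts reproduces exactly the claimed formulas (note $\prod_{i=1}^s q^{2i}=q^{s(s+1)}$). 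This differs from MacWilliams's own treatment, which is an elementary recursive count not phrased through orbit–stabilizer and the orders of the classical groups; your version is shorter if one is willing to quote $|O_n^{\pm}(\F_q)|$, hers is self-contained. The only part of your sketch that is genuinely deferred is characteristic $2$: there the two congruence classes in even dimension are the alternating and non-alternating nondegenerate forms (only the non-alternating one in odd dimension), with isometry groups $Sp_{2s}(\F_q)$ and the characteristic-$2$ orthogonal group respectively, and one must supply those group orders (or run your peeling-off recursion) to see that $N(n,n)$ is given by the same two polynomials in $q$; since the paper states the result for arbitrary $q$ and uses the skew-symmetric companion (Proposition~\ref{prop: n_sym_givenrank_ch2}) in characteristic $2$, that branch should not be waved away entirely, but the claim you make there is true and the way you propose to establish it is sound.
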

When the field characteristic is 2, the number of skew-symmetric matrices has also been computed.
\begin{proposition}[{\cite[Theorem 3]{M69a}}]  \label{prop: n_sym_givenrank_ch2}  
	Let $N_0(t, r)$ denote the number of symmetric matrices of size $t\times t$, rank $r$, with entries in $\F_{q}$, $q=2^n$, and 0 on the main diagonal. Then
	\[
		N_0(t,2s)=\prod_{i=1}^{s}\frac{q^{2i-2}}{q^{2i}-1}\prod_{i=0}^{2s-1}(q^{t-i}-1),\quad 2s \le t
	\]
	\[
		N_0(t,2s+1)=0.
	\]	
\end{proposition}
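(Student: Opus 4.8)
The plan is to prove this classical enumeration of alternating forms over a finite field of characteristic~$2$ by stratifying such matrices according to their radical and reducing everything to the count of \emph{non-degenerate} alternating forms. Throughout I identify a symmetric $t\times t$ matrix with zero diagonal over $\Fq$ ($q$ even) with an alternating bilinear form $B$ on $V=\Fq^t$, i.e. one satisfying $B(v,v)=0$ for every $v$; in characteristic $2$ this is exactly the zero-diagonal condition, since $B(v,v)=\sum_i b_{ii}v_i^2$.

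First I would invoke the structure theorem for alternating forms: for any such $B$ there is a basis of $V$ in which the Gram matrix is block-diagonal with $s$ hyperbolic $2\times 2$ blocks $\left(\begin{smallmatrix}0&1\\1&0\end{smallmatrix}\right)$ followed by a zero block of size $t-2s$, where $2s=\rank(B)$ and the zero block spans the radical $\mathrm{rad}(B)=\{v\in V\mid B(v,\cdot)=0\}$. In particular the rank is even, which already yields $N_0(t,2s+1)=0$. For the even case I would set up the bijection between rank-$2s$ alternating forms on $V$ and pairs $(W,\bar B)$, where $W\subseteq V$ has dimension $t-2s$ and $\bar B$ is a non-degenerate alternating form on $V/W$: one direction sends $B$ to $(\mathrm{rad}(B),\bar B)$ with $\bar B$ the form induced on the quotient, the other pulls $\bar B$ back along $V\to V/W$, and checking that the radical of the pulled-back form is exactly $W$ shows the two maps are mutually inverse. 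Since the number of $(t-2s)$-dimensional subspaces of $V$ is the Gaussian binomial $\binom{t}{t-2s}_q=\binom{t}{2s}_q=\dfrac{\prod_{i=0}^{2s-1}(q^{t-i}-1)}{\prod_{j=1}^{2s}(q^j-1)}$, this gives $N_0(t,2s)=\binom{t}{2s}_q\cdot A_{2s}$, where $A_{2s}$ denotes the number of non-degenerate alternating forms on $\Fq^{2s}$.

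It remains to evaluate $A_{2s}$, which I regard as the only step that is not pure bookkeeping. There are two equivalent routes: (i) orbit--stabilizer --- $\GL_{2s}(\Fq)$ acts transitively on non-degenerate alternating forms (all symplectic spaces of a given dimension over a field are isometric) with stabilizer the symplectic group $\mathrm{Sp}_{2s}(\Fq)$, whence $A_{2s}=\frac{|\GL_{2s}(\Fq)|}{|\mathrm{Sp}_{2s}(\Fq)|}=\frac{q^{\binom{2s}{2}}\prod_{i=1}^{2s}(q^i-1)}{q^{s^2}\prod_{i=1}^{s}(q^{2i}-1)}$; (ii) a self-contained recursion --- fixing a hyperbolic plane $\langle e_1,w_1\rangle$ by choosing in turn the functional $B(e_1,\cdot)$, the vector $w_1$ and the functional $B(w_1,\cdot)$, and observing that $B$ then restricts to a non-degenerate alternating form on the complementary $(2s-2)$-dimensional space, yields $A_{2s}=q^{2s-2}(q^{2s-1}-1)A_{2s-2}$. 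Either route gives $A_{2s}=\prod_{i=1}^{s}q^{2i-2}(q^{2i-1}-1)$. Substituting into $N_0(t,2s)=\binom{t}{2s}_q A_{2s}$, writing $\prod_{j=1}^{2s}(q^j-1)=\bigl(\prod_{i=1}^{s}(q^{2i}-1)\bigr)\bigl(\prod_{i=1}^{s}(q^{2i-1}-1)\bigr)$ and cancelling the common factor $\prod_{i=1}^{s}(q^{2i-1}-1)$, one lands on $N_0(t,2s)=\prod_{i=1}^{s}\frac{q^{2i-2}}{q^{2i}-1}\prod_{i=0}^{2s-1}(q^{t-i}-1)$, as claimed. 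The main obstacle is thus establishing either the order of $\mathrm{Sp}_{2s}(\Fq)$ or the correctness of the recursion for $A_{2s}$; the reduction from $N_0$ to $A_{2s}$ and the final algebra are routine.
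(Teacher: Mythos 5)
Your proof is correct. Note that the paper does not prove this statement at all---it is imported verbatim as Theorem~3 of MacWilliams [M69a]---so any complete argument you give is necessarily ``different'' from what the paper contains. Your route (identify zero-diagonal symmetric matrices in characteristic $2$ with alternating bilinear forms, conclude parity of the rank and hence $N_0(t,2s+1)=0$ from the hyperbolic normal form, stratify the rank-$2s$ forms by their radical to get $N_0(t,2s)=\binom{t}{2s}_q\,A_{2s}$, and evaluate the number $A_{2s}$ of non-degenerate alternating forms on $\Fq^{2s}$ by orbit--stabilizer against $\mathrm{Sp}_{2s}(\Fq)$ inside $\GL_{2s}(\Fq)$) is the conceptual, structure-theoretic one; MacWilliams' original derivation is instead a direct combinatorial recursion on the matrix size, bordering a matrix of given rank by one row and column and counting the possible extensions of each rank, which avoids the classification of symplectic spaces at the cost of more bookkeeping. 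Your reduction and the final algebra check out: the Gaussian binomial contributes $\prod_{i=0}^{2s-1}(q^{t-i}-1)\big/\prod_{j=1}^{2s}(q^{j}-1)$, the odd-index factors $\prod_{i=1}^{s}(q^{2i-1}-1)$ cancel against those in $A_{2s}=\prod_{i=1}^{s}q^{2i-2}(q^{2i-1}-1)$, and what remains is exactly the stated product (a sanity check at $t=3$, $s=1$ gives $q^{3}-1$, consistent with every nonzero $3\times 3$ alternating matrix having rank exactly $2$ in characteristic $2$).
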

\begin{remark}
Proposition~\ref{prop: n_sym_givenrank_ch2} implies that skew-symmetric matrices defined over a field of characteristic 2 have always even rank. 
\end{remark}

We are ready now to give a proof of Proposition \ref{prop: GV}.
\begin{proof}[of Proposition  \ref{prop: GV}]
For a random code $\RC$ with basis $\cR$, $\dim(\Cmat(\cR))=\binom{rm+1}{2}-n$ is expected with probability $1-o(1)$ when $\binom{rm+1}{2}>n$ \cite{CCMZ15}. From Propositions~\ref{prop: n_sym_givenrank},\ref{prop: n_sym_givenrank_ch2} we have respectively
\begin{align*}
\card{B_d^{(\Sym)}}\sim &  N(rm, d)\\
 = &\prod_{i=1}^{\floor{d/2}}\frac{(q^m)^{2i}}{(q^m)^{2i}-1}\prod_{i=0}^{d-1}((q^m)^{rm-i}-1)\\
 \sim &\prod_{i=0}^{d-1}(q^m)^{rm-i}\\
 = &(q^m)^{drm-\binom{d}{2}}.
 \end{align*} 
and (in characteristic 2)
\begin{align*}
\card{B_d^{(\Skew)}}\sim &  N_0(rm, 2\floor{d/2})\\
 = &\prod_{i=1}^{\floor{d/2}}\frac{(q^m)^{2i-2}}{(q^m)^{2i}-1}\prod_{i=0}^{2\floor{d/2}-1}((q^m)^{rm-i}-1)\\
 \sim &(q^m)^{-2\floor{d/2}}\prod_{i=0}^{d-1}(q^m)^{rm-i}\\
 = &(q^m)^{2\floor{d/2}rm-\binom{2\floor{d/2}+1}{2}}.
 \end{align*} 

Therefore, from Gilbert-Varshamov bounds \eqref{eq: GV_Sym},\eqref{eq: GV_Skew} we get that rank-$d$ matrices belong to $\Cmat(\cR)$ with non negligible probability iff
\begin{itemize}
\item (for symmetric matrices)
 \begin{align*}
 &(q^m)^{\binom{rm+1}{2}-n} (q^m)^{drm-\binom{d}{2}} \ge (q^m)^{\binom{rm+1}{2}}\\
 \iff & \binom{rm+1}{2}-n+drm-\binom{d}{2} \ge \binom{rm+1}{2}\\
 \iff & n\le drm-\binom{d}{2}.
 \end{align*}
 \item  (for skew-symmetric matrices in characteristic 2)
 \begin{align*}
 &(q^m)^{\binom{rm+1}{2}-n} (q^m)^{d-\binom{d+1}{2}} \ge (q^m)^{\binom{rm}{2}}\\
 \iff & \binom{rm+1}{2}-n+drm-\binom{d+1}{2} \ge \binom{rm}{2}\\
 \iff & n\le (d+1)rm-\binom{d+1}{2}.
 \end{align*}
 \end{itemize} \qed
\end{proof}

 \section{Proofs and experimental evidence corresponding to Section \ref{sec:modeling}}
\label{sec:app_modeling}

\subsection{Proof of Proposition \ref{prop  : dimvar}} \label{sec:dim_var}
Let us first recall the Proposition.
\dimvar*
\begin{proof}
	We recall from Proposition~\ref{prop: HS_pfaffian} that the dimension of the variety of the generic Pfaffian ideal $\cP_2(\Mm)$ is $2s-3$, where $s$ is the matrix size. The result follows from the construction given in Appendix \S\ref{ss:app-low-rank}, for estimating the number of rank 2 matrices, where we have shown that $\Cmat(\cA)$ contains subspaces of matrices that are isomorphic to the full space of skew-symmetric matrices for some smaller size. This allows to lower bound $\dim \Vm(\cP_2^+(\Mm))$ in terms of $\dim \Vm(\cP_2(\Nm))$, where $\Nm$ is the generic skew-symmetric matrix of smaller size. More precisely:
	\begin{itemize}
		\item if $r$ is odd: let $\Nm$ be the generic skew-symmetric matrix of size $\frac{r+1}{2}\times \frac{r+1}{2}$. Then the construction explained before Lemma~\ref{lemma: nrank2_rodd} in Appendix \S\ref{ss:app-low-rank} implies that
		\[
		\dim\Vm(\cP_2^+(\Mm))\ge \dim \Vm(\cP_2(\Nm))=2\frac{r+1}{2}-3=r-2;
		\]
		\item if $r$ is even: this is the most subtle case, because we do not construct generic skew-symmetric matrices. Let $\Nm$ be the generic skew-symmetric matrix of size $\frac{r}{2}\times \frac{r}{2}$ and $\Nm'$ be the skew-symmetric matrix of size $r\times r$ with indeterminates given as in the construction explained before Lemma~\ref{lemma: nrank2_reven} in Appendix \S\ref{ss:app-low-rank}. We identify $n_{i,j}=n'_{2i-1,2j-1}$ and define the function  $f(i)=\begin{cases}
			0 & i \text{ odd} \\
			1& i \text{ even} \\
		\end{cases}. $
		Using the ideintification above, we can rewrite the generators of the Pfaffian ideal for $\Nm'$ in function of $n_{i,j}$'s and $\lambda$. If $i,j,k,l$ are such that there are not two consecutive indexes with the smallest being odd, then
		\begin{align*}
			& n'_{i,j}n'_{k,l}+n'_{i,k}n'_{j,l}+n'_{i,l}n'_{j,k}\\
			=& \lambda^{f(i)+f(j)}n_{\ceil{\frac{i}{2}},\ceil{\frac{j}{2}}}\lambda^{f(k)+f(l)}n_{\ceil{\frac{k}{2}},\ceil{\frac{l}{2}}}+\lambda^{f(i)+f(k)}n_{\ceil{\frac{i}{2}},\ceil{\frac{k}{2}}}\lambda^{f(j)+f(l)}n_{\ceil{\frac{j}{2}},\ceil{\frac{l}{2}}}\\&+\lambda^{f(i)+f(l)}n_{\ceil{\frac{i}{2}},\ceil{\frac{l}{2}}}\lambda^{f(j)+f(k)}n_{\ceil{\frac{j}{2}},\ceil{\frac{k}{2}}}\\
			=&\lambda^{f(i)+f(j)+f(k)+f(l)}(n_{\ceil{\frac{i}{2}},\ceil{\frac{j}{2}}}n_{\ceil{\frac{k}{2}},\ceil{\frac{l}{2}}}+n_{\ceil{\frac{i}{2}},\ceil{\frac{k}{2}}}n_{\ceil{\frac{j}{2}},\ceil{\frac{l}{2}}}+n_{\ceil{\frac{i}{2}},\ceil{\frac{l}{2}}}n_{\ceil{\frac{j}{2}},\ceil{\frac{k}{2}}}).
		\end{align*}
		Otherwise, if for instance $j=i+1$, $i$ odd, then
		\[
		n'_{i,j}n'_{k,l}+n'_{i,k}n'_{j,l}+n'_{i,l}n'_{j,k}= 0\cdot n'_{k,l}+ n'_{i,k}(\lambda n'_{i,l}) +  n'_{i,l}(\lambda n'_{i,k})=0
		\]
		Therefore $\cP_2(\Nm')=\cP_2(\Nm)$ seen as ideals in $\Fqm[(n_{i,j})_{i,j}, \lambda]$.
		Hence
		\[
		\dim\Vm(\cP_2^+(\Mm))\ge \dim\Vm(\cP_2^+(\Nm)')=1+\dim \Vm(\cP_2(\Nm))=1+2\frac{r}{2}-3=r-2,
		\]
		where the summand 1 corresponds to the free parameter $\lambda$ used in the construction.
	\end{itemize}
	\qed
\end{proof}

\subsection{Experiments about the Hilbert function convergence} \label{sec:HF_exps}
In Conjecture~\ref{conj: HF_asymptotic}, we claimed that $d_{0} \sim c\frac{s^2}{k}$ for some constant $c$. We experimentally verified this in the following way. We define $k=\floor{\beta s^\alpha}$ for several positive values of $\beta$ and $\alpha\in (1,2)$. We start from a value $s=2^i$ such that the parameters are above Gilbert-Varshamov bound and not distinguishable and then we let $s$ double each time and update $k$ accordingly. The ratio $\frac{d_{0}k}{s^2}$ is eventually a decreasing function and seems to converge to $c=\frac{1}{4}$ (or something very close to it) from above, even though with a different speed depending on $\alpha$.
In particular, let us choose $\beta=1$ and let us test the convergence for different values of $\alpha$ in Table~\ref{table: exps_conjecture}.
\begin{table} 
	\begin{tabular}{|c||c|c|c|c|c|c|c|c|}
		\hline
		$\alpha$ & 1.2 & 1.3 & 1.4 & 1.5 & 1.6 & 1.7 & 1.8 & 1.9 \\ \hline \hline
		$\frac{d_{0}k}{s^2}<0.28$ starting from & $s=2^{18}$ & $s=2^{14}$ & $s=2^{14}$ & $s=2^{15}$ & $s=2^{18}$ & $s=2^{24}$ & $s=2^{36}$ & $s=2^{72}$ \\ \hline
		$\frac{d_{0}k}{s^2}<0.255$ starting from & & $s=2^{21}$ & $s=2^{20}$ & $s=2^{23}$ & $s=2^{29}$ & $s=2^{38}$ & $s=2^{57}$ & $s=2^{114}$ \\ \hline
		$\frac{d_{0}k}{s^2}<0.252$ starting from  & & & $s=2^{23}$ & $s=2^{28}$ & $s=2^{34}$ & $s=2^{45}$ & $s=2^{67}$ & $s=2^{133}$ \\ \hline
		$\frac{d_{0}k}{s^2}<0.251$ starting from  & & & & & $s=2^{37}$ & $s=2^{50}$ & &  \\ \hline
	\end{tabular}
	\caption{Experiments for the convergence of the Hilbert function} \label{table: exps_conjecture}
\end{table}

\subsection{Proof of Theorem \ref{thm : lowerbound_HF_binGoppa}}
Let us first recall the Theorem.
\lowerboundHFBG*
\begin{proof}
	For our convenience we denote $R\eqdef \F_{2^m}[\mv]$. Define $\mv^{(l)}\eqdef(m_{i,j})_{lr+1\le i<j\le (l+1)r}$ and $\mv^{(\setminus l)}$ the sequence of monomials that are in $\mv$ but not in $\mv^{(l)}$, for all $l\in \Iintv{0}{m-1}$. Moreover, we define the sequence of variables $\mv^{(out)}$ that are not in any of the $\mv^{(l)}$'s. We consider the corresponding polynomial rings $R_l\eqdef \F_{2^m}[\mv^{(l)}]$, $R_{\setminus l}\eqdef\F_{2^m}[\mv^{(\setminus l)}]$ and $R_{out}\eqdef\F_{2^m}[\mv^{(out)}]$ and, with some abuse of notation, the monomial ideals over $\F_{2^m}[\mv]$ generated by these sequences of variables: $\cI^{(l)}=\cI(\mv^{(l)}), \cI^{(\setminus l)}=\cI(\mv^{(\setminus l)}), \cI^{(out)}=\cI(\mv^{(out)})$. Finally, we define the monomial ideal $\cI^{(quad)}$ generated by all possible quadratic monomials with two unknowns belonging to two different diagonal blocks. We recall from Proposition~\ref{prop: rel_binGoppa} that each skew-symmetric block diagonal matrix belongs to $\Cmat(\cA)$. Therefore, the homogeneous linear relations $L_i$'s such that $\cP_2^+(\Mm)=\cP_2(\Mm)+\langle L_i\rangle_i$ can be chosen in such a way that only the variables in $\mv^{(out)}$ can appear in them, \ie $L_j \in \cI^{(out)}$. 
	
	Let us take an element in the basis of $\cP_2(\Mm)$ as in \eqref{eq: pfaffian_ideal}:
	\[ Q_{a,b,c,d}=m_{a,b}m_{c,d}+m_{a,c}m_{b,d}+m_{a,d}m_{b,c}.\]
	We analyze two cases:
	\begin{itemize}
		\item If there exists $l\in \Iintv{0}{m-1}$ such that $lr+1\le a<b<c<d\le (l+1)r$, then $Q_{a,b,c,d}\in \cP_2(\Mm_{l,l})$, \ie the Pfaffian ideal corresponding to the $(l+1)$-th diagonal block submatrix. 
		\item Otherwise, the monomials $m_{a,b}m_{c,d},m_{a,c}m_{b,d}$ and $m_{a,d}m_{b,c}$ belong to either $\cI^{(out)}$ or $\cI^{(quad)}$.
	\end{itemize}
	In both cases we obtain
	\[
	Q_{a,b,c,d}\in \left(\sum_{i=0}^{m-1} \cP_2(\Mm_{l,l})\right) + \cI^{(out)}+\cI^{(quad)}.
	\]
	Hence
	\[
	\cP_2^+(\Mm)=\cP_2(\Mm)+\langle L_1,\dots , L_k\rangle \subseteq  \left(\sum_{l=0}^{m-1} \cP_2(\Mm_{l,l})\right) + \cI^{(out)}+\cI^{(quad)}.
	\]
	One can readily verify that, for any $l \in \Iintv{0}{m-1}$, the monomial ideal $\cI^{(\setminus l)}$ contains:
	\begin{itemize}
		\item  $\cI^{(out)}$;
		\item $\cI^{(quad)}$;
		\item $\cP_2(\Mm_{l',l'})$, for all $l' \in \Iintv{0}{m-1} \setminus \{l\}$.
	\end{itemize}
	This results in
	\[
	\left(\sum_{l=0}^{m-1} \cP_2(\Mm_{l,l})\right) + \cI^{(out)}+\cI^{(quad)}\subseteq \bigcap_{l \in \Iintv{0}{m-1}} \left( \cP_2(\Mm_{l,l})+\cI^{(\setminus l)}\right).
	\]
	Note now that, for any $\bar{l}\in \Iintv{1}{m-1}$,
	\begin{equation} \label{eq: rec_HSblocks}
		\bigcap_{l \in \Iintv{0}{\bar{l}-1}} \left( \cP_2(\Mm_{l,l})+\cI^{(\setminus l)}\right)+\cP_2(\Mm_{\bar{l},\bar{l}})+\cI^{(\setminus \bar{l})}=\langle \mv \rangle
	\end{equation}
	and 
	\[
	\HS_{R/\langle \mv \rangle}(z)=\HS_{\F_{2^m}}(z)=1.
	\]
	By applying recursively relation \eqref{eq: rec_HSblocks} on the quotient rings, we obtain
	\begin{align*}
		\HF_{R/\cP_2^+(\Mm)}(d)\ge & \HF_{R/\bigcap_{l \in \Iintv{0}{m-1}} \left( \cP_2(\Mm_{l,l})+\cI^{(\setminus l)}\right)}(d)\\
		=& \HF_{R/\bigcap_{l \in \Iintv{0}{m-2}} \left( \cP_2(\Mm_{l,l})+\cI^{(\setminus l)}\right)}(d)+\HF_{R/\left( \cP_2(\Mm_{m-1,m-1})+\cI^{(\setminus m-1)}\right)}(d)-\HF_{\F_{2^m}}(d)\\
		=& \HF_{R/\bigcap_{l \in \Iintv{0}{m-3}} \left( \cP_2(\Mm_{l,l})+\cI^{(\setminus l)}\right)}(d)+\HF_{R/\left( \cP_2(\Mm_{m-2,m-2})+\cI^{(\setminus m-2)}\right)}(d)\\ &+\HF_{R/\left( \cP_2(\Mm_{m-1,m-1})+\cI^{(\setminus m-1)}\right)}(d)-2\HF_{\F_{2^m}}(d)\\
		=&\dots\\
		=&\sum_{l=0}^{m-1}\HF_{R/\left( \cP_2(\Mm_{l,l})+\cI^{(\setminus l)}\right)}(d)-(m-1)\HF_{\F_{2^m}}(d)\\
		=&\sum_{l=0}^{m-1}\HF_{R_l/ \cP_2(\Mm_{l,l})}(d)-(m-1)\HF_{\F_{2^m}}(d)\\
		=&\begin{cases}
			m-(m-1)=1 &\quad \text{if } d=0\\
			m\left( \binom{r+d-2}{d}^2-\binom{r+d-2}{d+1}\binom{r+d-2}{d-1}\right) &\quad \text{if } d>0
		\end{cases}.
	\end{align*}
	\qed
\end{proof}

 \section{Proofs for some of the Results of Section \ref{sec:attack}}
\label {sec:app-attack}

\subsection{Proof of Proposition \ref{prop: stable_frobenius}}
Let us recall first the proposition

\propstablefrobenius*

\begin{proof}
  Let $\Bm = (b_{i,j})_{i,j} \in \Cmat(\mathcal B) \subseteq \Fqm^{rm \times rm}$.
  Then, by definition,
  \[
    \sum_{i< j} 2b_{i,j} \starp{\bv_i}{\bv_j}+\sum_i b_{i,i} \starp{\bv_i}{\bv_i}= 0.
  \]
    Then, by applying the Frobenius map $z\mapsto z^q$ component-wise,
  \[
   0=\sum_{i< j} 2^q b_{i,j}^q \starp{\bv_i^q}{\bv_j^q}+\sum_i b_{i,i}^q \starp{\bv_i^q}{\bv_i^q} =\sum_{i< j} 2 b_{i,j}^q \starp{\bv_i^q}{\bv_j^q}+\sum_i b_{i,i}^q \starp{\bv_i^q}{\bv_i^q}.
  \]
  From now on, the indexes are considered modulo $rm$. The structure
  of the basis $\mathcal B$ yields
  \[
   \sum_{i< j} 2 b_{i,j}^q \starp{\bv_{i+r}^q}{\bv_{j+r}^q}+\sum_i b_{i,i}^q \starp{\bv_{i+r}^q}{\bv_{i+r}^q}= 0  
  \]
  and hence
  \[
   \sum_{i< j} 2 b_{i-r,j-r}^q \starp{\bv_i^q}{\bv_j^q}+\sum_i b_{i-r,i-r}^q \starp{\bv_i^q}{\bv_i^q}= 0  
  \]
  The matrix $(b_{i-r,j-r}^q)_{i,j}$ is nothing but $\trsp{\Sm} \Bm^{(q)} \Sm$
  and hence $\trsp{\Sm} \Bm^{(q)} \Sm \in \Cmat (\mathcal B)$. \qed
\end{proof}

\subsection{Proof of Proposition \ref{prop:at_once}}
Let us recall this proposition
\propatonce*
\begin{proof}
Given $\Bm \in \Cmat(\cB)$, it follows from Proposition~\ref{prop: stable_frobenius} that
\[
  (\trsp{\Sm})^i \Bm^{(q^i)} (\Sm)^i \in \Cmat(\cB),\quad \text{for any } i \in \Iintv{0}{m-1}
\]
and all these matrices have the same rank, namely $rm-1$. Moreover, if $\vv$ generates the nullspace of $\Bm$, then 
$\vv^{q^{i}}\Sm^{i}$ is in the kernel of $(\trsp{\Sm})^i \Bm^{(q^i)} \Sm^i$ since
\begin{align*}
	& (\vv^{q^i}  \Sm^i ) \cdot (\trsp{\Sm})^i \Bm^{(q^i)} \Sm^i \\
	=& \vv^{q^i}\Bm^{(q^i)} \Sm^i \\
	=& (\vv\Bm)^{(q^i)} \Sm^i\\
	=&0.
\end{align*}
\qed
\end{proof}

\subsection{Proof of Proposition \ref{prop:corrGRS}}
The proposition states that
\propcorrGRS*
\begin{proof}
Let $\Bm\in \Cmat(\cB)$ such that $\uv_2$ generates $\ker(\Bm)$. By Proposition~\ref{prop:at_once}, we know that $\uv_2^{q^l}\Sm^{l}$ generates the kernel of $(\trsp{\Sm})^{l} \Bm^{(q^{l})} \Sm^l$.
We get
\begin{align*}
0=&\uv_2^{q^l} \Bm^{(q^{l})}&\\
=&\uv_2^{q^l} (\trsp{\Pm^{(q^l)}})^{-1} \Am^{(q^{l})} (\Pm^{(q^l)})^{-1}&\\
=&\uv_2^{q^l} (\trsp{\Pm^{(q^l)}})^{-1} (\Sm^l (\trsp{\Sm})^l) \Am^{(q^{l})} (\Sm^l (\trsp{\Sm})^l)(\Pm^{(q^l)})^{-1}&\\
=&(\uv_2^{q^l} (\trsp{\Pm^{(q^l)}})^{-1} \Sm^l) ((\trsp{\Sm})^l \Am^{(q^{l})} \Sm^l) (\trsp{\Sm})^l(\Pm^{(q^l)})^{-1}\\
=&(\uv_2^{q^l} ((\trsp{\Sm})^l\trsp{\Pm^{(q^l)}})^{-1}) ((\trsp{\Sm})^l \Am^{(q^{l})} \Sm^l) (\trsp{\Sm})^l(\Pm^{(q^l)})^{-1} \quad \text{by Proposition~\ref{prop: P^q}}\\
=&(\uv_2^{q^l} \Sm^l\trsp{\Pm}^{-1}) ((\trsp{\Sm})^l \Am^{(q^{l})} \Sm^l) (\trsp{\Sm})^l(\Pm^{(q^l)})^{-1},
\end{align*}
which implies
\begin{align*}
(\uv_2^{q^l} \Sm)\trsp{\Pm}^{-1}\Pm^{-1} \Hm_{\cB}=(\uv_2^{q^l} \Sm\trsp{\Pm}^{-1}) \Hm_{\cA}\in\GRS{r}{\xv}{\yv}^{(q^{j_2+l})},
\end{align*}
since the diagonal block of rank $r-1$ in $(\trsp{\Sm})^l \Am^{(q^{l})} \Sm^l$ is the one indexed by $j_2+l \mod m$. Therefore, $\uv_1$ and $\uv_2^{q^l}\Sm^{l}$ correspond to the same GRS code with respect to $\cB$ for the unique value $l\in \Iintv{0}{m-1}$ such that $j_1=j_2+l \mod m$. \qed
\end{proof}

\subsection{Proof of Proposition \ref{prop: SC_aux}}
This proposition says that
\propSCaux*
\begin{proof}
Let $j \in \Iintv{0}{r-1}$. For each $i\in \Iintv{0}{m-1}$, there exists a unique $\vv_j^{(q^l)} \Sm^l$, $l \in \Iintv{0}{m-1}$, such that
\[
\vv_j^{q^l} \Sm^l\trsp{(\Pm^{-1})} \Pm^{-1} \Hm_{\cB}\subseteq \GRS{r}{\xv}{\yv}^{(q^i)}. 
\]
As this holds for all $j \in \Iintv{0}{r-1}$, we obtain that
\[
\SC_{aux}  \trsp{(\Pm^{-1})} \Pm^{-1} \Hm_{\cB} = \sum_{i=0}^{m-1} \GC_i,
\]
where $\GC_i$ is an $[n,r-1]$ linear code contained into $\GRS{r}{\xv}{\yv}^{(q^i)}.$ From the standard assumption that all the codes $\GRS{r}{\xv}{\yv}^{(q^i)}$'s are in direct sum, we get $\SC_{aux}  \trsp{(\Pm^{-1})} \Pm^{-1} \Hm_{\cB} = \bigoplus_{i=0}^{m-1} \GC_i$. Analogously for $\uv_t$, $t=1,2$, we have
\[
\uv_t \trsp{(\Pm^{-1})}\Pm^{-1} \Hm_{\cB}\in \GRS{r}{\xv}{\yv}^{(q^{i_t})}. 
\]
The condition $\dim_{\Fqm} \SC_{aux}+\Fqmspan{\uv_t} =(r-1)m+1=\dim_{\Fqm} \SC_{aux}+1$ implies that
\[
\left(\SC_{aux}+\Fqmspan{\uv_t}\right)\trsp{(\Pm^{-1})} \Pm^{-1} \Hm_{\cB} = \left(\bigoplus_{i\in \Iintv{0}{m-1}\setminus \{i_t\}} \GC_i\right) \oplus \GC_{i_t}',
\]
with $\GC_{i_t}' \subseteq \GRS{r}{\xv}{\yv}^{(q^{i_t})}$. But
\[
\dim_{\Fqm} \GC_{i_t}'=\dim_{\Fqm} \SC_{aux}+\Fqmspan{\uv_t}-\dim_{\Fqm}\bigoplus_{i\in \Iintv{0}{m-1}\setminus \{i_t\}} \GC_i= (r-1)m+1-(r-1)(m-1)=r,
\]
hence $\GC_{i_t}'=\GRS{r}{\xv}{\yv}^{(q^{i_t})}$. Note that, with the same argument,
\begin{align*}
&\left(\SC_{aux}+\Fqmspan{\uv_2^{q^l}\Sm^l}\right)\trsp{(\Pm^{-1})} \Pm^{-1} \Hm_{\cB} \\=& \left(\bigoplus_{i\in \Iintv{0}{m-1}\setminus \{(i_2+l) \mod m\}} \GC_i\right) \oplus \GRS{r}{\xv}{\yv}^{(q^{i_2+l})}.
\end{align*}
We can conclude that 
\begin{align*}
&\left(\SC_{aux}+\Fqmspan{\uv_1,\uv_2^{q^l}\Sm^l}\right)\trsp{(\Pm^{-1})} \Pm^{-1} \Hm_{\cB} \\=& \left(\bigoplus_{i\in \Iintv{0}{m-1}\setminus \{i_1,(i_2+l) \mod m\}} \GC_i\right) \oplus \GRS{r}{\xv}{\yv}^{(q^{i_1})} \oplus \GRS{r}{\xv}{\yv}^{(q^{i_2+l})}.
\end{align*}
Hence
\begin{align*}
\dim_{\Fqm} \SC_{aux}+\Fqmspan{\uv_1, \uv_2^{q^l}\Sm^l} = \begin{cases} (r-1)m+1 &\quad \text{if } i_1=i_2+l \mod m\\ (r-1)m+2 &\quad \text{otherwise}\end{cases}.
\end{align*}
and the first case is equivalent to say that
\[
\Fqmspan{\uv_1, \uv_2^{q^l}\Sm^l} \trsp{(\Pm^{-1})} \Pm^{-1} \Hm_{\cB}\subseteq \GRS{r}{\xv}{\yv}^{(q^{i_1})},
\]
\ie $\uv_1$ and $ \uv_2^{q^l}\Sm^l$ correspond to the same GRS code with respect to $\cB$. \qed
\end{proof}

\subsection{Proof of Proposition \ref{prop:Vperp}}
Let us recall this proposition
\propVperp*
\begin{proof}
 Since $\dim_{\Fqm}(\VC_j)=r$ and each of its elements correspond to the $j$-th GRS code, a generator matrix of $\VC_j \trsp{(\Pm^{-1})}$ is
\[
[\zerom_{r\times r} \mid \dots \mid \zerom_{r\times r}  \mid \underbrace{\Im_r }_{j\text{-th block}} \mid \zerom_{r\times r}  \mid \dots \mid \zerom_{r\times r} ].
\]
Let us pick $\vv^\perp \in \VC_j^\perp$. For any $\vv \in \VC_j$, we can write
\begin{align*}
	0=& \langle \vv, \vv^\perp \rangle \\
	=&  \langle \vv \Im_{rm}, \vv^\perp \rangle \\
	=&  \langle \vv (\trsp{\Pm})^{-1} \trsp{\Pm}, \vv^\perp \rangle \\
	=&  \langle \vv\trsp{(\Pm^{-1})} , \vv^\perp \Pm \rangle .
\end{align*}
Therefore $\vv^\perp \Pm$ is zero on the $j$-th block. Hence
\[
 \VC_j^\perp \Hm_{\cB}= (\VC_j^\perp \Pm) \Hm_{\cA} \subseteq \sum_{i \in \Iintv{0}{m-1}\setminus \{j\}} \GRS{r}{\xv}{\yv}^{(q^i)},
\]
and since $\dim_{\Fqm}(\VC_j^\perp)=rm-\dim_{\Fqm}\VC_j=(r-1)m$,
\[
\VC_j^\perp \Hm_{\cB}=  \sum_{i \in \Iintv{0}{m-1}\setminus \{j\}} \GRS{r}{\xv}{\yv}^{(q^i)}.
\]
\qed
\end{proof}

\subsection{Proof of Proposition \ref{prop:capVperp}}
Let us recall this proposition
\propcapVperp*
\begin{proof}
Since $\GRS{r}{\xv}{\yv}^{(q^j)}\subset  \VC_i^\perp$ for all $i \ne j$, it follows that
\[
\GRS{r}{\xv}{\yv}^{(q^j)}\subseteq \bigcap_{i \in \Iintv{0}{m-1}\setminus \{j\}} \VC_i^\perp \Hm_{\cB}.
\]
On the other hand, since the GRS codes are in direct sum, we get
\[
\dim_{\Fqm}\bigcap_{i \in \Iintv{0}{m-1}\setminus \{j\}} \VC_i^\perp=r(m-1)-r(m-2)=r,
\]
which leads to the equality.
\qed
\end{proof}

\subsection{Estimate of matrices of rank $rm-1$ in $\Cmat(\cB)$}\label{ss:app-estimate_rank}
 We start by recalling that $\Cmat(\cA)$ and $\Cmat(\cB)$ have the same weight distribution, thus it is convenient to consider the block diagonal structure of $\Cmat(\cA)$. 
Let us also define the matrix space containing all possible block diagonal (with $m$ blocks of size $r\times r$) symmetric matrices $\DC\subset \Sym(rm, \Fqm)$. The ratio of rank $rm-1$ matrices in $\Cmat(\cD)$ is given by 
\[
\frac{N(r,r-1)\cdot N(r,r)^{m-1}}{(q^m)^{m\binom{r+1}{2}}},
\]
where $N$ is defined as in Proposition~\ref{prop: N}.
Note that, for $q^m\to \infty$, 
\[
N(t,s)\to\prod_{i=0}^{s-1} (q^m)^{t-i}=(q^m)^{\sum_{i=0}^{s-1} t-i}=(q^m)^{\binom{t+1}{2}-\binom{t-s+1}{2}}=(q^m)^{ts-s^2/2+s/2}.
\]
Hence the ratio above tends to
\[
\frac{N(r,r-1) N(r,r)^{m-1}}{(q^m)^{m\binom{r+1}{2}}}\to \frac{(q^m)^{r(r-1)-(r-1)^2/2+(r-1)/2}(q^m)^{(m-1)(r^2-r^2/2+r/2)}}{(q^m)^{m\binom{r+1}{2}}}=\frac{1}{q^m}.
\] The ratios of matrices of a given rank in $\Cmat(\cB)$ is not the same as for $\Cmat(\cD)$, and a more detailed analysis would be useful to derive the exact probability of sampling matrices of rank $rm-1$. However, we expect the distribution not to deviate too much from this behavior.
 We provide in Table~\ref{table: givenrank_block} the number of different diagonal blocks in $\Cmat(\cB)$ of a given rank in for small values of $q^m$ (odd case) and $r$. Note that the total number is given by $(q^m)^{\binom{r-1}{2}}$. 
 \begin{table}
   \center
 \begin{tabular}{|c|c|c|}
 \hline
 $r$ & $q^m$ & [rank 0, rank 1, ..., rank $r$] \\
 \hline \hline
 3 & 3 & [1, 0, 0, 2] \\
 3 & 5 & [1, 0, 0, 4] \\
 3 & 7 & [1, 0, 0, 6] \\
 3 & 9 & [1, 0, 0, 8] \\
 3 & 11 & [1, 0, 0, 10] \\
 \hline
 4 & 3 & [1, 0, 0, 8, 18] \\
 4 & 5 & [1, 0, 0, 24, 100] \\
 4 & 7 & [1, 0, 0, 48, 294] \\
 4 & 9 & [1, 0, 0, 80, 648] \\
 4 & 11 & [1, 0, 0, 120, 1210] \\
 \hline
 5 & 3 & [1, 0, 0, 44, 378, 306] \\
 5 & 5 & [1, 0, 0, 224, 5500, 9900] \\
 5 & 7 & [1, 0, 0, 636, 30870, 86142] \\
 5 & 9 & [1, 0, 0, 1376, 110808, 419256] \\
 5 & 11 & [1, 0, 0, 2540, 306130, 1462890] \\
 \hline
 6 & 3 & [1, 0, 0, 152, 4374, 18072, 36450] \\
 6 & 5 & [1, 0, 0, 1224, 157500, 1919400, 7687500] \\
 \hline
 7 & 3 & [1, 0, 0, 638, 55566, 587502, 4754538, 8950662] \\
 \hline
 \end{tabular}
 \caption{Experimental number of different diagonal blocks in $\Cmat(\cB)$ of a given rank ($q^m$ odd case).} \label{table: givenrank_block}
 \end{table}

\subsection{Characteristic 2}\label{ss:app_attack_char_2}
Recall that skew-symmetric matrices in characteristic 2 can only have even rank. This immediately invalidates the search arguments explained before: either rank $rm$ or $rm-1$ do not exist in $\Cmat(\cA)$ and $\Cmat(\cB)$. The same constraint occurs for the $r\times r$ diagonal blocks with respect to the canonical basis $\cA$, on which we focus now. However, the previous strategy can be adapted to even characteristic by limiting the search to even-rank matrices. Indeed, in our setting, the maximum rank achievable in $\Cmat(\cA)$ is $2\floor{\frac{r}{2}}m$, because for each $r\times r$ diagonal block the rank is at most the largest even integer bounded by $r$, \ie $2\floor{\frac{r}{2}}$. Consequently, the second largest rank achievable by a matrix in $\Am\in\Cmat(\cA)$ with the block diagonal structure as in \eqref{eq: A_diagblock} is 
\[
2\floor{\frac{r}{2}}m-2.
\]

\begin{itemize}
\item In the case where $r$ is even, $2\floor{\frac{r}{2}}m-2=rm-2$, and $\Am$ as in \eqref{eq: A_diagblock} is such that there exists a unique $j \in \Iintv{1}{m}$ for which $\rank(\Am_{j,j})=r-2$ and $\rank(\Am_{j,j})=r$ otherwise. Indeed, the parity constraint on the skew-symmetric matrices prohibits having two diagonal blocks of rank $r-1$. This time, the nullspace of $\Am$ is generated by two linearly independent vectors $\uv$ and $\vv$, and these vectors are zero outside the same $j$-th length-$r$ blocks:
\[
\vv=(\zerov,\dots,\zerov,\vv_i,\zerov,\dots,\zerov)
\] 
and
\[
\uv=(\zerov,\dots,\zerov,\uv_i,\zerov,\dots,\zerov).
\] 
With similar arguments to the $q$ odd case, it is therefore possible to retrieve a basis of a GRS code $\GRS{r}{\xv}{\yv}^{q^j}$. We only need to give an estimate of the ratio of rank $rm-2$ matrices in $\Cmat(\cA)$, to ensure that we can find them with non-negligible probability.

We consider the $rm\times rm$ matrix space $\DC\subset \Skew(rm,\Fqm)$ containing all possible block diagonal (with $m$ blocks of size $r\times r$) skew-symmetric matrices. The ratio of rank $rm-2$ matrices in $\DC$ is given by 
\begin{equation} \label{eq: ratio_reven}
\frac{N_0(r,r-2)\cdot N_0(r,r)^{m-1}}{(q^m)^{m\binom{r}{2}}}.
\end{equation}
Note that for $q\to \infty$, 
\[
N_0(t,2s)\to\prod_{i=0}^{s} \frac{1}{q^2} \prod_{i=0}^{2s-1} (q^m)^{t-i}=(q^m)^{-2s+\sum_{i=0}^{2s-1} t-i}=(q^m)^{\binom{t+1}{2}-\binom{t-2s+1}{2}-2s}=(q^m)^{s(2t-2s-1)}.
\]
Hence the ratio above tends to
\[
\frac{N_0(r,r-2)\cdot N_0(r,r)^{m-1}}{(q^m)^{m\binom{r}{2}}}\to \frac{(q^m)^{\frac{(r+1)(r-2)}{2}}(q^m)^{(m-1)\binom{r}{2}}}{(q^m)^{m\binom{r}{2}}}=\frac{1}{q^m},
\]
\ie the same as for rank $rm-1$ matrices in the $q$ odd case. The approach for finding matrices of rank $rm-1$ described above is therefore expected to work with high probability in this case as well. 
\begin{remark}
In the case of a binary Goppa code with a square-free Goppa polynomial, we have shown in Proposition~\ref{prop: rel_binGoppa} that $\Cmat(\cA)$ contains the space of block-diagonal skew-symmetric matrices with $r\times r$ blocks. Under the condition that $r<q-1$, these matrices generates $\Cmat(\cA)$, \ie $\Cmat(\cA)=\DC$. Therefore, in this special case, \eqref{eq: ratio_reven} provides the exact ratio of rank $rm-2$ matrices in $\Cmat(\cA)$ (or $\Cmat(\cB)$). 
\end{remark}

Similarly to what done for the $q$ odd case, we provide in Table~\ref{table: givenrank_block_qeven_reven} the number of different diagonal blocks in $\Cmat(\cB)$ (when the latter is not originated by a binary Goppa code with square-free polynomial)  of a given rank in for small values of $q^m$ (even case) and $r$ (even case). The total number is given by $(q^m)^{\binom{r-1}{2}}$, as before. 
\begin{table}
  \center
\begin{tabular}{|c|c|c|}
\hline
$r$ & $q^m$ & [rank 0, rank 1, ..., rank $r$] \\
\hline \hline
4 & 2 &  [ 1, 0, 3, 0, 4 ] \\
4 & 4 & [ 1, 0, 15, 0, 48 ] \\
4 & 8 & [ 1, 0, 63, 0, 448] \\
\hline
6 & 2 & [ 1, 0, 27, 0, 612, 0, 384 ]\\
6 & 4 & [ 1, 0, 495, 0, 286224, 0, 761856 ] \\
\hline
8 & 2 &  [ 1, 0, 171, 0, 51348, 0, 1181376, 0, 864256 ]\\
\hline
\end{tabular}
\caption{Experimental number of different diagonal blocks in $\Cmat(\cB)$ of a given rank ($q^m$ even, $r$ even case).} \label{table: givenrank_block_qeven_reven}
\end{table}
\begin{remark}
In all instances where a filtration has been initially applied, $r=q$ is even, therefore they fall in this case.
\end{remark}
The case $q$ even and $r$ even requires only small changes in Algorithm~\ref{alg: attack}. At lines~\ref{row:v},\ref{row:u1} and \ref{row:uj}, the vectors $\vv$, $\uv_1$ and $\uv_j$ respectively are defined as generators of kernels of rank $rm-1$ matrices. However, the nullspace of a square matrix of rank $rm-2$ and size $rm$ is generated by two linearly independent elements. In this case, it suffices to define such vectors as any non-zero element in the kernel and the algorithm still works correctly. It is even possible to exploit the knowledge that two linearly independent generators of a kernel correspond to the same GRS code and roughly halve the number of matrices of rank $rm-2$ that need to be sampled. 
\item In the case where $r$ is even, $2\floor{\frac{r}{2}}m-2=(r-1)m-2$ and a similar computation shows that the ratio of rank $r(m-1)-2$ matrices in $\DC$ is given by
\begin{equation} \label{eq: ratio_rodd}
\frac{N_0(r,r-3)\cdot N_0(r,r-1)^{m-1}}{(q^m)^{m\binom{r}{2}}}
\end{equation}
and, for $q\to \infty$,
\[
\frac{N_0(r,r-3)\cdot N_0(r,r-1)^{m-1}}{(q^m)^{m\binom{r}{2}}}\to \frac{(q^m)^{\frac{(r+2)(r-3)}{2}}(q^m)^{(m-1)\binom{r}{2}}}{(q^m)^{m\binom{r}{2}}}= \frac{1}{(q^m)^3}.
\]
\begin{remark}
Similarly to the $r$ even case, for binary Goppa codes with square-free Goppa polynomials, \eqref{eq: ratio_rodd} provides the exact ratio of rank $(r-1)m-2$ matrices in $\Cmat(\cA)$ (or $\Cmat(\cB)$). 
\end{remark}
We provide in Table~\ref{table: givenrank_block_qeven_rodd} the number of different diagonal blocks in $\Cmat(\cB)$ (when the latter is not originated by a binary Goppa code with square-free polynomial) of a given rank in for small values of $q^m$ (even case) and $r$ (odd case). The total number is given by $(q^m)^{\binom{r-1}{2}}$, as before. 

\begin{table}
  \center
\begin{tabular}{|c|c|c|}
\hline
$r$ & $q^m$ & [rank 0, rank 1, ..., rank $r$] \\
\hline \hline
3 & 2 & [ 1, 0, 1, 0 ] \\
3 & 4 & [ 1, 0, 3, 0 ] \\
3 & 8 & [ 1, 0, 7, 0 ] \\
\hline
5 & 2 & [ 1, 0, 11, 0, 52, 0 ] \\
5 & 4 & [ 1, 0, 111, 0, 3984, 0 ] \\
5 & 8 & [ 1, 0, 959, 0, 261184, 0 ] \\
\hline
7 & 2 &  [ 1, 0, 75, 0, 5748, 0, 26944, 0 ] \\
\hline
\end{tabular}
\caption{Experimental number of different diagonal blocks in $\Cmat(\cB)$ of a given rank ($q^m$ even, $r$ odd case).} \label{table: givenrank_block_qeven_rodd}
\end{table}
Rank $rm-3$ matrices are therefore less probable to be sampled. This issue can be overcome at an asymptotic cost of a factor $q^{3m}$. Furthermore, a Gr\"obner basis approach leads in practice to an even better complexity. More specifically, we can generalize the argument for the previous cases by sampling at random $\Bm_1,\Bm_2,\Bm_3,\Bm_4 \in \Cmat(\cB)$ and solving the trivariate affine polynomial $\det(w_1\Bm_1+w_2 \Bm_2+w_3 \Bm_3+\Bm_4)$ with Gr\"obner basis techniques. As the number of variables is small and constant, this approach seems to be much more efficient than brute force. However, there is another more problematic issue. The nullspace of a matrix $\Am \in\Cmat(\cA)$ has in this case dimension $rm-((r-1)m-2)=m+2$ and its generators are not all zero outside a length-$r$ block. Therefore the strategy explained before does not apply directly here. We treat this case in \ref{ss:app-qeven-rodd}.
\end{itemize}

\subsection{The attack for $q$ even and $r$ odd}\label{ss:app-qeven-rodd}
As already mentioned, the case where $q$ is even and $r$ is odd raises the additional problem that the nullspace of a matrix $\Am \in\Cmat(\cA)$ of rank $(r-1)m-2$ is not zero outside a length-$r$ block. The key idea to adapt the attack is that such nullspace is still ``unbalanced'' with respect to the $m$ blocks. Indeed, let us consider $\Bm=\trsp{(\Pm^{-1})} \Am (\Pm^{-1}) \in\Cmat(\cB)$ of rank $(r-1)m-2$. Since $\rank(\Bm)=\rank(\Am)=\sum_{l=0}^{m-1}\rank(\Am_{l,l})$ and for any $l$, $\rank(\Am_{l,l})\le r-1$ and it is even, we have that
\[
\exists! l \in \Iintv{0}{m-1} \text{ s.t. } \rank(\Am_{l,l})= r-3 \land \forall i \in \Iintv{0}{m-1}\setminus \{l\},\; \rank(\Am_{i,i})= r-1.
\]
Therefore, the kernel can be written as
\[
\ker \Bm= \langle \vv_0,\dots,\vv_{l-1},\vv_{l,1},\vv_{l,2},\vv_{l,3},\vv_{l+1},\dots,\vv_{m-1}\rangle
\]
so that for all $i \in \{1,2,3\}$
\[
\vv_{l,i} \trsp{(\Pm^{-1})} \Pm^{-1} \Hm_{\cB}=\vv_{l,i} \trsp{(\Pm^{-1})} \Hm_{\cA}\in \GRS{r}{\xv}{\yv}^{(q^l)}, 
\]
	and for all $j \in \Iintv{0}{m-1}\setminus \{l\}$
\[
	\vv_j \trsp{(\Pm^{-1})} \Pm^{-1} \Hm_{\cB}= \vv_j \trsp{(\Pm^{-1})} \Hm_{\cA} \in \GRS{r}{\xv}{\yv}^{(q^j)}.	 
\]
We do not know how to identify such vectors, though. Assume, however, that we are able to determine different matrices $\Bm_1,\dots,\Bm_{s}$ of rank $(r-1)m-2$ in $\Cmat(\cB)$ such that their counterparts in $\Cmat(\cA)$ have the rank-$(r-3)$ block indexed by the same $j\in \Iintv{0}{m-1}$, for some value $s<r$ that we are going to determine later. We will see how to achieve this in \ref{ss:app-Vj}. We define the $[rm, \le s(m-1)+\min(3s,r)]$ linear code $\VC_j\eqdef\sum_{i=1}^s \ker \Bm_i$. This construction can be seen as an adaptation of the definition given in Proposition~\ref{prop:Vperp}, where $\VC_j$ is spanned by $r$ vectors, each generating the nullspace of a matrix of rank $rm-1$. If the matrices $\Bm_i$'s have been sampled independently, as is the case, we expect, with a non-negligible probability, that a generator matrix of the code $\VC_j \trsp{(\Pm^{-1})}$ is the block diagonal matrix
\[
\begin{bmatrix}
\Gm_{0,0} &  &  & \\
	 & \Gm_{1,1} & & \zerov\\
	 \zerov & & \ddots & \\
	  &  &  & \Gm_{m-1,m-1} \\
\end{bmatrix}
\] 
where $\Gm_{j,j}$ has $\min(3s,r)$ rows while $\Gm_{i,i}$ has $s$ rows (and they all have $r$ columns). This is equivalent to say that $\dim_{\Fqm}\VC_j=s(m+2)$. Hence, by sampling $s\ge \ceil{r/3}$, we ensure $\Gm_{j,j}=\Im_r$ with non-negligible probability. From now on, we then assume that $\dim_{\Fqm} \VC_j=s(m-1)+r$. We define the $[rm, (r-s)(m-1)]$ dual code $\VC_j^\perp$ and, by repeating the computation made in the proof of Proposition~\ref{prop:Vperp}, we get that, for any $\vv^\perp \in \VC_j^\perp$, $\vv^\perp \Pm$ is zero on the $j$-th block. However, this time we can only assert that
\[
 \VC_j^\perp \Hm_{\cB}= (\VC_j^\perp \Pm) \Hm_{\cA} \subseteq \sum_{i \in \Iintv{1}{m}\setminus \{j\}} \GRS{r}{\xv}{\yv}^{(q^i)}
\]
with $\dim_{\Fqm}(\VC_j^\perp\Hm_{\cB})=\dim_{\Fqm}(\VC_j^\perp)=(r-s)(m-1)$. In order to obtain the code $\sum_{i \in \Iintv{1}{m}\setminus \{j\}} \GRS{r}{\xv}{\yv}^{(q^i)}$ it is then enough to repeat the process and analogously compute other linear codes $\VC_j',\VC_j'',\dots$ such that $(\VC_j')^\perp \Hm_{\cB}\subseteq \sum_{i \in \Iintv{1}{m}\setminus \{j\}} \GRS{r}{\xv}{\yv}^{(q^i)}$ as well (by sampling each time different matrices of rank $(r-1)m-2$). Since all these codes are constructed independently, we expect at some point
\[
 (\VC_j+\VC_j'+\VC_j''+\dots)^\perp \Hm_{\cB}=  \sum_{i \in \Iintv{1}{m}\setminus \{j\}} \GRS{r}{\xv}{\yv}^{(q^i)}.
\]
Since $(\VC_j+\VC_j'+\VC_j''+\dots)^\perp \Hm_{\cB}\subseteq  \sum_{i \in \Iintv{1}{m}\setminus \{j\}} \GRS{r}{\xv}{\yv}^{(q^i)}$, one can put $\dim_{\Fqm}(\VC_j+\VC_j'+\dots)^\perp=(r-1)m$ as an exit condition for the construction of such codes.
\begin{remark}
A good choice for $s$ is $\frac{r-1}{2}$. In this way, $\Gm_{j,j}=\Im_r$ with very high probability and at the same time, since $2(rm-s(m-1)-\min(3s,r))\ge 2(r-s)(m-1)\ge r(m-1)$, computing just two codes $\VC_j$ and $\VC_j'$ is typically enough to recover $\sum_{i \in \Iintv{1}{m}\setminus \{j\}} \GRS{r}{\xv}{\yv}^{(q^i)}$.
\end{remark}
Once the codes $\sum_{i \in \Iintv{1}{m}\setminus \{j\}} \GRS{r}{\xv}{\yv}^{(q^i)}$ have been retrieved for any $j\in \Iintv{0}{m-1}$, a  GRS block code can be obtained from intersections as done previously according to Proposition~\ref{prop:capVperp}.
\subsection{Computing $\VC_j$}\label{ss:app-Vj}
In this technical subsection, we tackle the problem of determining, given two matrices $\Bm_1,\Bm_2\in\Cmat(\cB)$ of rank $(r-1)m-2$, which blockwise Dickson shift of $\trsp{\Pm} \Bm_2 \Pm$ has the diagonal block of rank $r-3$ for the same index $l$ as $\Bm_1$. This represents the basic step to produce elements in $\VC_j$ in this case.
\begin{remark} Note that, in the $q$ odd case, this would be equivalent to determining which shift of $\vv_2$ corresponds to the same GRS code of $\vv_1$ where $\vv_1$ and $\vv_2$ are the generators of the kernels of two matrices $\Bm_1$ and $\Bm_2$ respectively of rank $rm-1$. In the case we examine now, however, the dimension of the nullspace is larger than 1 and not all its elements belong to the same GRS code. This explains why we need to move to the matrix formalism. Analogously, in the $q$ odd case, vectors corresponding to the same GRS code were identified by making use of an auxiliary linear code $\SC_{aux}$ spanned by kernel generators of a set of matrices. We will see that here we directly employ a set of auxiliary matrices $\Bm_{aux,i}$'s instead.
\end{remark}
 Analogously to what shown in Proposition~\ref{prop:at_once}, we still have that if $\vv \Bm=0$, then
\[
(\vv^{q^i}  \Sm^i ) \cdot (\trsp{\Sm})^i \Bm^{(q^i)} \Sm^i =(\vv \Bm)^{(q^i)} \Sm^i =0,
\]
therefore the nullspaces of blockwise Dickson shift matrices can be easily computed from the others.
Let $r_1,r_2$ be the unique integers such that $r=r_1(m+2)+r_2$ with $r_2\in \Iintv{1}{m+2}$. We split the analysis into different cases:
\begin{itemize}
\item \textbf{Case $4\le r_2\le m+2$.} Let $\Bm_1,\Bm_2\in\Cmat(\cB)$ of rank $(r-1)m-2$. Let us first consider the case $r_1=0$. Consider the linear code
\[
\left(\sum_{i=0}^{r-4} \ker \left((\trsp{\Sm})^i \Bm_2^{(q^i)} \Sm^i\right)\right).
\]
A generator matrix of  $\left(\sum_{i=0}^{r-4} \ker \left((\trsp{\Sm})^i \Bm_2^{(q^i)} \Sm^i\right)\right)\trsp{(\Pm^{-1})}$ can be written as
\begin{equation} \label{eq: genmat_sum B2^qi}
\begin{bmatrix}
\Gm_{0,0} &  &  & \\
	 & \Gm_{1,1} & & \zerov\\
	 \zerov & & \ddots & \\
	  &  &  & \Gm_{m-1,m-1} \\
\end{bmatrix}
\end{equation}
where $r-3$ cyclically consecutive diagonal blocks have $r-1$ rows while the others have $r-3$ rows (and they all have $r$ columns). 
A generator matrix of $\ker(\Bm_1)\trsp{(\Pm^{-1})}$ is instead given by
\begin{equation} \label{eq: genmat_B1}
\begin{bmatrix}
\vv_1 &  & \zerov  & & \zerov \\
	 & \ddots & & \\
 & & \vv_{l,1} & \\
	 \zerov & & \vv_{l,2} & & \zerov \\
 & & \vv_{l,3} & & \\
	 	 & & & \ddots &  \\
	\zerov &  & \zerov &  & \vv_m \\
\end{bmatrix}
\end{equation}
for some $l\in \Iintv{0}{m-1}$. 
If $\rank(\Gm_{l,l})=r-1$, then 
\begin{align*}
\dim_{\Fqm} \left(\ker(\Bm_1)+\sum_{i=0}^{r-4} \ker \left((\trsp{\Sm})^i \Bm_2^{(q^i)} \Sm^i\right)\right)=& \dim_{\Fqm} \left(\left(\ker(\Bm_1)+\sum_{i=0}^{r-4} \ker \left((\trsp{\Sm})^i \Bm_2^{(q^i)} \Sm^i\right)\right)\trsp{(\Pm^{-1})}\right) \\ \le & (r-4)((r-1)+1)+(r)+(m-r+3)((r-3)+1)\\
=& rm+2r-2m-6.
\end{align*}
On the other hand, if $\rank(\Gm_{l,l})=r-3$, then we expect with good probability that the dimension of $\ker(\Bm_1)+\sum_{i=0}^{r-4} \ker \left((\trsp{\Sm})^i \Bm_2^{(q^i)} \Sm^i\right)$ attains
\begin{align*}
&(r-3)((r-1)+1)+(m-r+2)((r-3)+1)+((r-3)+3)\\
=& rm+2r-2m-4.
\end{align*}
Therefore, by computing the dimension of $\ker(\Bm_1)+\sum_{i=0}^{r-4} \ker \left((\trsp{\Sm})^i \Bm_2^{(q^i)} \Sm^i\right)$ we determine whether the rank-$(r-3)$ block of $\Bm_1$ corresponds to a rank-$(r-3)$ block of some of the $\left((\trsp{\Sm})^i \Bm_2^{(q^i)} \Sm^i\right)$'s for some $i \in \Iintv{0}{r-4}$. By replacing $\Iintv{0}{r-4}$ with other subsets of $\Iintv{0}{m-1}$ of cardinality $r-3$ and repeating the process, we finally detect the sought $\left((\trsp{\Sm})^i \Bm_2^{(q^i)} \Sm^i\right)$. 
In the case where $r_1>0$, we need to sample independent rank-$((r-1)m-2)$ matrices $\Bm_{aux,1},\dots,\Bm_{aux,r_1}\in \Cmat(\cB)$. In this case a generator matrix of 
\[
\left(\sum_{j=1}^{r_1} \sum_{i=0}^{m-1} \ker \left((\trsp{\Sm})^i \Bm_{aux,j}^{(q^i)} \Sm^i\right)+\sum_{i=0}^{r_2-4} \ker \left((\trsp{\Sm})^i \Bm_2^{(q^i)} \Sm^i\right)\right)\trsp{(\Pm^{-1})}
\]
is with non-negligible probability as in \eqref{eq: genmat_sum B^qi}, where $r_2-3$ cyclically consecutive diagonal blocks have $r_1(m+2)+r_2-1=r-1$ rows while the others have $r_1(m+2)+r_2-3=r-3$ rows (and they all have $r$ columns). Hence, the computation of
\begin{align*}
& \dim_{\Fqm} \ker(\Bm_1)+\sum_{j=1}^{r_1} \sum_{i=0}^{m-1} \ker \left((\trsp{\Sm})^i \Bm_{aux,j}^{(q^i)} \Sm^i\right)+\sum_{i=0}^{r_2-4} \ker \left((\trsp{\Sm})^i \Bm_2^{(q^i)} \Sm^i\right)\\
& \begin{cases}
\le & (r_2-4)((r-1)+1)+(r)+(m-r_2+3)((r-3)+1)\\ &=rm+2r_2-2m-6 \quad \text{ if }\rank(\Gm_{l,l})=r-1,\\
= & (r_2-3)((r-1)+1)+(m-r_2+2)((r-3)+1)+((r-3)+3)\\& = rm+2r_2-2m-4 \quad \text{otherwise (with high probability)}
\end{cases}
\end{align*}
reveals again whether the rank-$(r-3)$ block of $\Bm_1$ corresponds or not to one of the rank-$(r-3)$ blocks of the $\ker \left((\trsp{\Sm})^i \Bm_2^{(q^i)} \Sm^i\right)$'s.
\item \textbf{Case $r_2=1$ and $r_1\ge 1$.} The reasoning is very similar to the one in the previous case. An analogous computation shows that 
\begin{align*}
&\dim_{\Fqm} \ker(\Bm_1)+\sum_{j=1}^{r_1-1} \sum_{i=0}^{m-1} \ker \left((\trsp{\Sm})^i \Bm_{aux,j}^{(q^i)} \Sm^i\right)+\sum_{i=0}^{m-2} \ker \left((\trsp{\Sm})^i \Bm_2^{(q^i)} \Sm^i\right)\\
& \begin{cases}
\le & (m-2)((r-2)+1)+(r)+((r-4)+1)\\ &=rm-m-1 \quad \text{ if }\rank(\Gm_{l,l})=r-2,\\
= & (m-1)((r-2)+1)+((r-4)+3)\\& = rm-m \quad \text{otherwise (with high probability)}
\end{cases},
\end{align*}
for the index $l\in \Iintv{0}{m-1}$ such that a generator matrix of $\ker(\Bm_1)\trsp{(\Pm^{-1})}$ is as in \eqref{eq: genmat_B1} and a generator matrix of \\$\left(\sum_{j=1}^{r_1-1} \sum_{i=0}^{m-1} \ker \left((\trsp{\Sm})^i \Bm_{aux,j}^{(q^i)} \Sm^i\right)+\sum_{i=0}^{m-2} \ker \left((\trsp{\Sm})^i \Bm_2^{(q^i)} \Sm^i\right)\right)\trsp{(\Pm^{-1})}$ is as in \eqref{eq: genmat_sum B2^qi}, with $m-1$ cyclically consecutive diagonal blocks having $r-2$ rows while the other having $r-4$ rows (and they all have $r$ columns). Hence we can distinguish the case $\rank(\Gm_{l,l})=r-2$ from $\rank(\Gm_{l,l})=r-4$.
\item \textbf{Case $r_2=2$ and $r_1\ge 1$.} In this case, we take two consecutive blockwise Dickson shifts of $\Bm_1$, \ie $\Bm_1$ and  $\trsp{\Sm} \Bm_1^{(q)} \Sm$. Therefore a generator matrix of $\left(\ker(\Bm_1)+\ker(\trsp{\Sm} \Bm_1^{(q)} \Sm)\right)\trsp{(\Pm^{-1})}$ is given by either
\[
\begin{bmatrix}
\Vm_{0,0} &  &  & \\
	 & \Vm_{1,1} & & \zerov\\
	 \zerov & & \ddots & \\
	  &  &  & \Vm_{m-1,m-1} \\
\end{bmatrix}
\]
where $2$ cyclically consecutive diagonal blocks have $4$ rows while the others have $2$ rows (and they all have $r$ columns). Let us say that the two blocks with 4 rows are indexed by $l$ and $l+1 \mod m$, for some $l \in \Iintv{0}{m-1}$. Then we get
\begin{align*}
&\dim_{\Fqm} \ker(\Bm_1)+\ker(\trsp{\Sm} \Bm_1^{(q)} \Sm)+\sum_{j=1}^{r_1-1} \sum_{i=0}^{m-1} \ker \left((\trsp{\Sm})^i \Bm_{aux,j}^{(q^i)} \Sm^i\right)+\sum_{i=0}^{m-2} \ker \left((\trsp{\Sm})^i \Bm_2^{(q^i)} \Sm^i\right)\\
& \begin{cases}
\le & (m-3)((r-3)+2)+(2r)+((r-5)+2)\\ &=rm-m \quad \text{ if }\rank(\Gm_{l,l})=r-3 \land \rank(\Gm_{l+1 \mod m,l+1 \mod m})=r-3,\\
= & (m-2)((r-3)+2)+(r)+((r-5)+3+1)\\& = rm-m+1 \quad \text{otherwise (with high probability)}
\end{cases}
\end{align*}
where a generator matrix of \\$\left(\sum_{j=1}^{r_1-1} \sum_{i=0}^{m-1} \ker \left((\trsp{\Sm})^i \Bm_{aux,j}^{(q^i)} \Sm^i\right)+\sum_{i=0}^{m-2} \ker \left((\trsp{\Sm})^i \Bm_2^{(q^i)} \Sm^i\right)\right)\trsp{(\Pm^{-1})}$ is as in \eqref{eq: genmat_sum B2^qi}, with $m-1$ cyclically consecutive diagonal blocks having $r-3$ rows while the other having $r-5$ rows (and they all have $r$ columns). Hence we can distinguish the case $\rank(\Gm_{l,l})=r-3 \land \rank(\Gm_{l+1 \mod m,l+1 \mod m})=r-3$ from $\rank(\Gm_{l,l})=r-5 \lor \rank(\Gm_{l+1 \mod m,l+1 \mod m})=r-5$. Repeating the process at most $m$ times for different pairs of consecutive diagonal block shifts of $\Bm_1$, solves our problem.
\item \textbf{Case $r_2=3$.} If $r_1=0$, the kernel of a single matrix $\Bm$ already defines $\VC_j$ (note that we can choose $s=\ceil{\frac{r}{3}}=\frac{r-1}{2}=1$). Otherwise, let $\Bm_1$ and $\Bm_2$ such that generator matrices of $\ker(\Bm_1) \trsp{(\Pm^{-1})}$ and $\ker(\Bm_2) \trsp{(\Pm^{-1})}$ respectively are given by
\begin{equation} 
\begin{bmatrix}
\vv_1 &  & \zerov  & & \zerov \\
	 & \ddots & & \\
 & & \vv_{l_1,1} & \\
	 \zerov & & \vv_{l_1,2} & & \zerov \\
 & & \vv_{l_2,3} & & \\
	 	 & & & \ddots &  \\
	\zerov &  & \zerov &  & \vv_m \\
\end{bmatrix}
\quad \text{and} \quad 
\begin{bmatrix}
\uv_1 &  & \zerov  & & \zerov \\
	 & \ddots & & \\
 & & \uv_{l_2,1} & \\
	 \zerov & & \uv_{l_2,2} & & \zerov \\
 & & \uv_{l_2,3} & & \\
	 	 & & & \ddots &  \\
	\zerov &  & \zerov &  & \uv_m \\
\end{bmatrix}.
\end{equation}
A generator matrix of $\left(\sum_{j=1}^{r_1} \sum_{i=0}^{m-1} \ker \left((\trsp{\Sm})^i \Bm_{aux,j}^{(q^i)} \Sm^i\right)\right)\trsp{(\Pm^{-1})}$ is expected to be as in \eqref{eq: genmat_sum B2^qi}, with all the block have $r-3$ rows (and $r$ columns). Therefore
\begin{align*}
&\dim_{\Fqm} \ker(\Bm_1)+\ker((\trsp{\Sm})^l \Bm_2^{(q^l)} (\Sm)^l)+\sum_{j=1}^{r_1} \sum_{i=0}^{m-1} \ker \left((\trsp{\Sm})^i \Bm_{aux,j}^{(q^i)} \Sm^i\right)\\
& \begin{cases}
\le & (m-1)((r-3)+2)+(r)\\ &=rm-m+1 \quad \text{ if }l_1=l_2+l \mod m,\\
= & (m-2)((r-3)+2)+(2r)+((r-5)+3+1)\\& = rm-m+2 \quad \text{otherwise (with high probability)}
\end{cases}
\end{align*}
Repeating the process at most $m$ times for different values of $l$ solves our problem.
\end{itemize}

\newcommand{\etalchar}[1]{$^{#1}$}

\end{document}